\documentclass[11pt,reqno]{amsart}

\usepackage[utf8]{inputenc}
\usepackage[T1]{fontenc}
\usepackage{lmodern}
\usepackage{microtype}         % Better typography

\usepackage[letterpaper,margin=1in]{geometry}

\usepackage{setspace}

\usepackage{xcolor}

\usepackage{amsmath, amssymb, amsthm}  % Core math tools
\usepackage{mathtools}        % Enhancements to amsmath
\usepackage{mathrsfs}         % For \mathscr
\usepackage{stmaryrd}         % Special brackets and symbols
\usepackage{bm}               % Bold math symbols
\usepackage{enumitem}         % Customizable lists

\usepackage{graphicx}         % Include graphics
\usepackage{tikz}             % Drawing diagrams
\usetikzlibrary{arrows.meta, calc, intersections, patterns, positioning}

\usepackage{algorithm}
\usepackage{algpseudocode}
 
\usepackage{aliascnt}
\usepackage[colorlinks=true,linkcolor=blue,citecolor=blue,urlcolor=blue]{hyperref}
\usepackage[capitalize, nameinlink]{cleveref} 

\usepackage{appendix}
\crefname{appsec}{Appendix}{Appendices}

\theoremstyle{plain}

\newtheorem{theorem}{Theorem}[section]
\crefname{theorem}{theorem}{theorems}
\Crefname{theorem}{Theorem}{Theorems}

\newaliascnt{lemma}{theorem}
\newtheorem{lemma}[lemma]{Lemma}
\aliascntresetthe{lemma}
\crefname{lemma}{lemma}{lemmas}
\Crefname{lemma}{Lemma}{Lemmas}

\newaliascnt{corollary}{theorem}
\newtheorem{corollary}[corollary]{Corollary}
\aliascntresetthe{corollary}
\crefname{corollary}{corollary}{corollaries}
\Crefname{corollary}{Corollary}{Corollaries}

\newaliascnt{proposition}{theorem}
\newtheorem{proposition}[proposition]{Proposition}
\aliascntresetthe{proposition}
\crefname{proposition}{proposition}{propositions}
\Crefname{proposition}{Proposition}{Propositions}

\theoremstyle{definition}

\newaliascnt{definition}{theorem}
\newtheorem{definition}[definition]{Definition}
\aliascntresetthe{definition}
\crefname{definition}{definition}{definitions}
\Crefname{definition}{Definition}{Definitions}

\theoremstyle{remark}

\newaliascnt{remark}{theorem}
\newtheorem{remark}[remark]{Remark}
\aliascntresetthe{remark}
\crefname{remark}{remark}{remarks}
\Crefname{remark}{Remark}{Remarks}

\newaliascnt{example}{theorem}

\aliascntresetthe{example}
\crefname{example}{example}{examples}
\Crefname{example}{Example}{Examples}

\newaliascnt{claim}{theorem}

\aliascntresetthe{claim}
\crefname{claim}{claim}{claims}
\Crefname{claim}{Claim}{Claims}

\theoremstyle{definition}
\newaliascnt{assumption}{theorem}
\newtheorem{assumption}[assumption]{Assumption}
\aliascntresetthe{assumption}
\crefname{assumption}{assumption}{assumptions}
\Crefname{assumption}{Assumption}{Assumptions}

\newcommand{\set}[1]{\left\{#1\right\}}

\newcommand{\floor}[1]{\left\lfloor#1\right\rfloor}

\DeclareMathOperator{\poly}{poly}

\DeclareMathOperator{\supp}{supp}
\newcommand{\transpose}{\mathsf T}

\newcommand{\eps}{\varepsilon}

\newcommand{\vect}[1]{\boldsymbol{#1}}

\newcommand{\neighbor}[1]{N(#1)}
\newcommand{\neighborCl}[1]{N^{+}(#1)}
\newcommand{\size}[1]{|#1|}
\DeclareMathOperator{\E}{\mathbb{E}}
\DeclareMathOperator{\Var}{Var}

\DeclareMathOperator*{\argmin}{arg\,min}

\newcommand{\Cov}{\operatorname{Cov}}
\DeclareMathOperator*{\argmax}{arg\,max}

\newcommand{\ignore}[1]{}

\newcommand{\Pcol}{P_{\mathrm{col}}}

\newcommand{\OmegaG}{\Omega_G}
\newcommand{\Pdist}{\mathcal P}
\newcommand{\Ppos}{\mathcal P_+}

\newcommand{\lambdac}{\lambda_c(\Delta)}
\newcommand{\alphac}{\alpha_c(\Delta)}

\newcommand{\dHam}{d_{\mathrm H}}
\newcommand{\TV}{\mathrm{TV}}

\newcommand{\AG}{A_G}

\newcommand{\transK}{\mathcal K}

\newcommand{\Kmf}{\mathcal K_{\mathrm{mf}}}
\newcommand{\Kss}{\mathcal K_{\mathrm{ss}}}

\newcommand{\kernelQ}{\mathcal Q}
\newcommand{\Qmf}{\mathcal Q^{\mathrm{mf}}}
\newcommand{\Qss}{\mathcal Q^{\mathrm{ss}}}

\newcommand{\Tmf}{T_{\mathrm{mf}}}
\newcommand{\Tss}{T_{\mathrm{ss}}}

\newcommand{\given}{\,|\,}

\newcommand{\OmegaNa}{\Omega^{\mathrm{mf}}_{N,\a}}
\newcommand{\OmegaNm}{\OmegaNa} 
\newcommand{\OmegaNr}{\Omega^{\mathrm{ss}}_{N,\vec r}} 

\newcommand{\muNm}{\mu^{\mathrm{mf}}_{N,\alpha}} 
\newcommand{\muNr}{\mu^{\mathrm{ss}}_{N,\vec r}}

\newcommand{\PmfN}{P^{\mathrm{mf}}_{N}}
\newcommand{\PmfNM}{\PmfN}

\newcommand{\PssNr}{P^{\mathrm{ss}}_{N}}

\newcommand{\cD}{\ensuremath{\mathcal D}} 
\newcommand{\cE}{\ensuremath{\mathcal E}}

\newcommand{\cI}{\ensuremath{\mathcal I}}

\newcommand{\cP}{\ensuremath{\mathcal P}} 
\newcommand{\cQ}{\ensuremath{\mathcal Q}} 
 
\newcommand{\cS}{\ensuremath{\mathcal S}}

\newcommand{\bbE}{{\ensuremath{\mathbb E}} }

\newcommand{\bbN}{{\ensuremath{\mathbb N}} } 
 
\newcommand{\bbP}{{\ensuremath{\mathbb P}} } 
 
\newcommand{\bbR}{{\ensuremath{\mathbb R}} }

\DeclareMathOperator{\Ent}{Ent}

\let\a=\alpha \let\b=\beta   \let\d=\delta  \let\e=\varepsilon
 \let\g=\gamma       \let\l=\lambda
\let\m=\mu   \let\n=\nu         
  \let\s=\sigma \let\t=\tau   
  
\let\D=\Delta      \let\The=\Theta
\let\O=\Omega

\title{Nonlinear Exchange Dynamics for Independent Sets}

\author[M.~Abbaszadeh Minab]{Mehrad Abbaszadeh Minab}
\address{School of Computer Science, Georgia Institute of Technology, Atlanta, GA, USA}
\email{mminab3@gatech.edu}

\author[P.~Caputo]{Pietro Caputo}
\address{
Department of Mathematics and Physics, Roma Tre University, Largo San Murialdo 1, 00146 Roma, Italy.}
\email{pietro.caputo@uniroma3.it}
\thanks{PC was supported in part by the Italian Ministry of Foreign Affairs and International Cooperation, grant number BR26GR05}
\author[Z.~Chen]{Zongchen Chen}
\address{School of Computer Science, Georgia Institute of Technology, Atlanta, GA, USA}
\email{chenzongchen@gatech.edu}

\author[M.~Morellini]{Mario Morellini}
\address{
Department of Mathematics and Physics, Roma Tre University, Largo San Murialdo 1, 00146 Roma, Italy}
\email{mario.morellini@uniroma3.it}

\author[A.~Sinclair]{Alistair Sinclair}
\address{Computer Science Division, University of California Berkeley, Berkeley, CA, USA}
\email{sinclair@cs.berkeley.edu}
\thanks{AS was supported in part by NSF grant CCF-223109.}

\date{}

\begin{document}

\begin{abstract}
In recent years, nonlinear dynamics derived from kinetic theory have gained attention in the context of sampling configurations of spin systems such as the Ising model. In this paper, we focus on nonlinear dynamics for the \emph{hard-core model}, a canonical spin system with hard constraints that specifies a distribution over independent sets in a graph, weighted by their sizes. We explore two distinct types of nonlinear dynamics: the \emph{mean-field} dynamics, which preserves the \emph{density} (or average size) of independent sets, and the \emph{single-site} dynamics, which preserves the \emph{marginal vector} (i.e., the occupancy probabilities of the vertices). These dynamics are natural stochastic processes for sampling from the hard-core model with a specified density or marginal vector, respectively, both of which are canonical instances of maximum entropy distributions that have been studied in various contexts.

In contrast to linear Markov chains, there is a significant lack of a fundamental theoretical framework for nonlinear dynamics. In this paper, we develop foundational theoretical tools for analyzing nonlinear dynamics within the context of the hard-core model. We establish almost linear convergence of both the mean-field and single-site dynamics at sufficiently low density through novel coupling arguments. We also establish exponential decay of relative entropy for the mean-field dynamics all the way up to the critical density. Additionally, we design new algorithms for sampling from the hard-core distribution with either a specified density or a specified marginal vector. These algorithms are based on a related \emph{linear} Markov chain, called the particle-system dynamics and inspired by the so-called Kac's program, that approximates the associated nonlinear dynamics.  As we demonstrate in the paper, they are comparable in time complexity, but simpler to implement, than traditional approaches based on learning parameter values.

\end{abstract}

\maketitle

\thispagestyle{empty}
\newpage

\tableofcontents
\thispagestyle{empty}

\newpage
\setcounter{page}{1}

\section{Introduction}
Sampling from discrete high-dimensional distributions is a fundamental problem in computer science. Many important applications, such as statistical estimation, causal inference, and experimental simulation, rely on massive samples drawn independently and approximately from a complex target distribution. 
Developing fast samplers for general distributions is thus a crucial task.

%We examine distributions on the discrete Boolean hypercube $\{0,1\}^n$. 
In this paper we focus on a fundamental discrete distribution that has been widely studied, namely the \emph{hard-core model}. Let $G = (V,E)$ be a graph and $\lambda > 0$ be a parameter called the \emph{fugacity}. The Gibbs distribution for the hard-core model is supported on the family of independent sets of $G$, denoted by $\OmegaG$, where each independent set $\sigma \in \{0,1\}^V$ (represented as a Boolean indicator vector) is assigned a probability mass
\begin{align*}
    \mu_{G,\lambda}(\sigma) = \frac{\lambda^{|\sigma|}}{Z_G(\lambda)},
\end{align*}
where $|\sigma| = \sum_{v \in V} \sigma_v$ is the size of the independent set and $Z_G(\lambda) = \sum_{\sigma \in \OmegaG} \lambda^{|\sigma|}$ is the normalizing constant for the distribution, known as the \emph{partition function}.

One of the most popular methods for generating samples approximately from a target distribution is the Markov Chain Monte Carlo (MCMC) method. This involves simulating a suitable Markov chain for a sufficiently long period and outputting the final state. Among the various MCMC methods, \emph{Glauber dynamics}, also known as the \emph{Gibbs sampler}, is a standard approach widely used in many applications. 
At each step, this dynamics picks a random coordinate and updates it according to the conditional marginal distribution with all other coordinates fixed. 

The Glauber dynamics for the hard-core model has a simple description. If the current state is $\sigma \in \OmegaG$, we pick a vertex $v \in V$ uniformly at random and let $\sigma' := \sigma \cup \{v\}$. If $\sigma' \notin \OmegaG$, then we set the new state $\tau = \sigma$; otherwise, we set $\tau = \sigma'$ with probability $\frac{\lambda}{1+\lambda}$, and $\tau = \sigma' \setminus \{v\}$ with the remaining probability $\frac{1}{1+\lambda}$. It is easy to verify from elementary Markov chain theory that this process is ergodic (i.e., irreducible and aperiodic) and converges to the Gibbs distribution $\mu_{G,\lambda}$ as the number of updates tends to infinity. Its utility as an algorithm for sampling from~$\mu_{G,\lambda}$ thus depends on its rate of convergence.

Let $\Delta$ be the maximum degree of the graph. Recent research has determined the exact threshold $\lambdac = \frac{(\Delta-1)^{\Delta-1}}{(\Delta-2)^\Delta}$ for the phase transition between fast and slow convergence of the Glauber dynamics. Specifically, the Glauber dynamics converges rapidly whenever $\lambda \le \lambdac$ \cite{ALO21,CFYZ22,CE25,CCYZ25}, and can be exponentially slow when $\lambda > \lambdac$ \cite{MWW09}.\footnote{This computational phase transition actually coincides exactly with a physical phase transition on the $\D$-regular tree, defined by the onset of long-range correlations between vertices in the independent set; see, e.g., \cite{Weitz06}.} Here, the speed of convergence is characterized by the \emph{mixing time}, which is the time until the distribution of the chain is close to $\mu_{G,\lambda}$ in total variation distance. Thus, Glauber dynamics provides an efficient sampling algorithm for the hard-core model at all fugacities $\lambda \le \lambdac$.

While fugacity is the classical parameterization of the hard-core model in statistical physics, many modern applications actually call for alternative parameterizations in terms of the density or occupancy probabilities of vertices in the independent set.
\ignore{
While Glauber dynamics is efficient at low fugacity and easy to implement, it requires knowledge of the fugacity~$\lambda$ as a parameter of the Gibbs distribution.
In certain applications, it is beneficial to consider alternative parameterizations of the hard-core model.  For example, }
The \emph{density} of the hard-core model, defined as
\begin{align*}
    \alpha = \alpha(\mu_{G,\lambda}) := \frac{1}{n} \E_{\mu_{G,\lambda}}[|\sigma|],
\end{align*}
describes the average (normalized) size of an independent set drawn from the hard-core model.\footnote{Throughout, we write $n=|V|$ for the number of vertices in~$G$.} The map
$\lambda \mapsto \alpha(\mu_{G, \lambda})$ is strictly increasing and gives a one-to-one correspondence between the fugacity $\lambda \in (0,\infty)$ and the density $\alpha \in (0,i_{\mathrm{max}}/n)$, where $i_{\mathrm{max}}$ is the size of a largest independent set; see \Cref{app:bisection-prescribed-density}. 
Thus, we can alternatively parameterize the hard-core model using $\alpha$ and represent its Gibbs distribution as $\mu_G^{\alpha}$. 
This approach is applicable when we want to sample at a specific density but do not have access to the corresponding fugacity.

We can extend the density to a finer parameterization in terms of the vector of marginal occupation probabilities of the vertices.
To introduce this, we first consider a generalized version of the hard-core model in which each vertex has a distinct fugacity. 
Given a vector of positive fugacities
$\vect\lambda=(\lambda_v)_{v\in V}\in(0,\infty)^V$, we define the Gibbs distribution as
\[
\mu_{G,\vect\lambda}(\sigma)
=
\frac{1}{Z_G(\vect\lambda)}
\prod_{v:\sigma_v=1}\lambda_v,
\qquad \text{where}\quad
Z_G(\vect\lambda)
:=
\sum_{\sigma\in\OmegaG}
\prod_{v:\sigma_v=1}\lambda_v .
\]
If $\lambda_v=\lambda$ for all $v\in V$, we recover the hard-core model $\mu_{G,\lambda}$ with uniform fugacity~$\lambda$.
We now define the marginal vector $\vect m =  (m_v)_{v\in V} \in (0,1)^V$ as
\begin{align*}
    m_v = \Pr_{\sigma\sim\mu_{G,\lambda}}(\sigma_v = 1).
\end{align*}
Analogously to the bijection between fugacity $\lambda$ and density $\alpha$ mentioned above, there is also a one-to-one correspondence between the fugacity vector $\vect{\lambda} \in (0,\infty)^V$ and the marginal vector $\vect{m} \in \mathcal{P}_{\mathrm{IS}}^{\circ}$; see \Cref{app:gradient-descent-prescribed-marginals}. Here, $\mathcal{P}_{\mathrm{IS}}$ denotes the independent set polytope (i.e., the convex hull of vectors in $\Omega_G$), and $\mathcal{P}_{\mathrm{IS}}^\circ$ its interior. Therefore, the hard-core model can alternatively be parameterized using the marginal vector $\vect{m}$, with associated Gibbs distribution denoted by $\mu_G^{\vect m}$. 

In this paper, we study the sampling problem when we are given either the density $\alpha$ or the marginal vector $\vect m$, while the corresponding fugacity~$\lambda$ or, respectively, fugacity vector $\vect\lambda$ are unknown to us.  
The Gibbs distribution $\mu_G^{\alpha}$ or $\mu_G^{\vect m}$ is known as the \emph{maximum entropy distribution}: it is the unique distribution supported on $\OmegaG$ that achieves maximum entropy, when either the density or the marginal vector is fixed~\cite{Jaynes1, Jaynes2}.
Maximum entropy distributions have been extensively studied for their applications in exponential families, machine learning, and combinatorial optimization: see, e.g., \cite{Kapur,WainwrightJordan2008,O-GSS,SV14,AGMOS17,SV19,CKV20,MSP22} for representative examples. 

As a concrete toy example, consider a social network in which edges between two people represent mutual friendships. We want to select a committee from a distribution that is ``as random as possible'' (to prevent predictability or manipulation), subject to the constraints that no two selected individuals are friends and that each individual $v$ is chosen with a given probability~$m_v$ (e.g, one important scenario is to set $m_v = c$ uniformly across individuals, thereby ensuring fairness of the selection). To achieve this, we should sample from the maximum entropy distribution over independent sets with marginal vector~$\vect m$, which corresponds to the Gibbs distribution~$\mu_G^{\vect m}$.  

It is important to note that, without access to the fugacity parameter(s) $\lambda$ or $\vect\lambda$, we cannot implement the Glauber dynamics for the hard-core model specified by a density $\alpha$ or a marginal vector $\vect m$. Indeed, the classical approach to this problem is to first estimate the fugacity parameter(s) given the density or marginals, and then use Glauber dynamics. This approach is known as solving the ``inverse problem'', and is well studied; see~\cite{NZB17} for a survey and \cite{SV14,BGS14,Montanari15} for related results.
Solving the inverse problem is in general a complex learning task relying on methods such as bisection or gradient descent, and involves tuning multiple parameters and the use of sampling or inference as subroutines.
In the appendix, we sketch these approaches in the context of the hard-core model and compare them with our new algorithms.

This discussion motivates the central question we aim to address in this paper, namely:
\begin{itemize}  
    \item \emph{Is there an efficient and easy-to-implement stochastic process that samples directly from the hard-core model with a given density or marginal vector?}  
\end{itemize}

\subsection{Nonlinear exchange dynamics}
To address this question we consider {\it nonlinear exchange dynamics},
a natural nonlinear version of Glauber dynamics that possesses the additional feature of supporting conserved quantities (in our case, the density~$\a$ or marginal probabilities~$\vect m$).  These dynamics can be seen as a combinatorial version of the Boltzmann equation from kinetic theory~\cite{cercignani1988boltzmann,villani08} and a generalization of the Hardy-Weinberg model of population genetics~\cite{Hardy, Weinberg}.  They were first introduced in the theoretical CS community in~\cite{RSW92}, and have been explored in the context of population genetics~\cite{RRS98,CS18} and most recently the Ising
model~\cite{caputo2024nonlineardynamicsisingmodel,caputo2025kac}.\footnote{Those papers obtain an analogous reparameterization of the Ising model in terms of the magnetization vector, though the emphasis was less algorithmic than in the current paper.}
Computationally, they are inspired by genetic algorithms (see, e.g., \cite{Goldberg,Mitchell}). 

In a Markov chain, we use a transition matrix, or kernel,~$P$ to describe the transition from one state to another. Specifically, $P(\tau \given \sigma)$ represents the probability of moving from state $\sigma$ to state $\tau$ in a single step.  This gives rise to a linear evolution on probability distributions over the states: if the current distribution over states is~$p$, then the distribution after one step of the Markov chain is given by the map $p\mapsto T(p)$, where $[T(p)](\tau)
=\sum_\sigma p(\sigma)P(\tau \given \sigma)$.

In our nonlinear dynamics, the transition involves an interaction, or ``collision'', between a {\it pair\/} of states, $(\sigma, \sigma')$, leading to a new state~$\tau$. We specify the probability of transitioning from the pair $(\sigma, \sigma')$ to~$\tau$ via the {\it collision kernel} $Q(\tau \given \sigma, \sigma')$.  The overall evolution is governed by the so-called {\it mass-action principle}, under which the probability that the pair $(\sigma, \sigma')$ interact in the distribution~$p$ is the product $p(\sigma)p(\sigma')$; in other words, we can think of $\sigma$ and $\sigma'$ drawn independently from the current distribution $p$.  Thus, the one-step evolution is specified by the map
\begin{equation}\label{eqn:massaction}
   [T(p)](\tau) = \sum_{\sigma,\sigma'} p(\sigma)p(\sigma')Q(\tau \given \sigma, \sigma').
\end{equation}   
This evolution is clearly nonlinear (actually quadratic).

To apply nonlinear dynamics to the hard-core model, we need to specify the associated collision kernel~$Q$.  We give two similar but different kernels. The first, which we call the \emph{mean-field} exchange kernel, is suited to the density parameterization, while the second, called the \emph{single-site} exchange kernel,  is suited to the marginal vector parameterization.  Both of these kernels take a pair of independent sets \( (\sigma, \sigma') \in \OmegaG \times \OmegaG \) and produce a pair \( (\tau, \tau')\) of independent sets by attempting to exchange the bits \(\sigma_v\) and \(\sigma'_u\) for two randomly selected vertices \(u\) and \(v\). The exchange is accepted if it results in two independent sets, otherwise it is rejected.  Thus, the collision $(\sigma,\sigma')\mapsto (\tau,\tau')$ changes the initial pair $(\sigma,\sigma')$ only when the chosen bits satisfy $\sigma_v \neq \sigma'_u$,  and the attempted exchange still produces two independent sets.  The output of the kernel is then the first element~$\tau$ of the resulting pair; notice that $\tau$ differs at at most one vertex from $\sigma$, but can be very different from $\sigma'$. We refer to \Cref{sec:prelim} for precise definitions, and \Cref{fig:mean-field-exchange} for an illustration.

\begin{figure}[t]	
\centering
\includegraphics[scale=0.11]{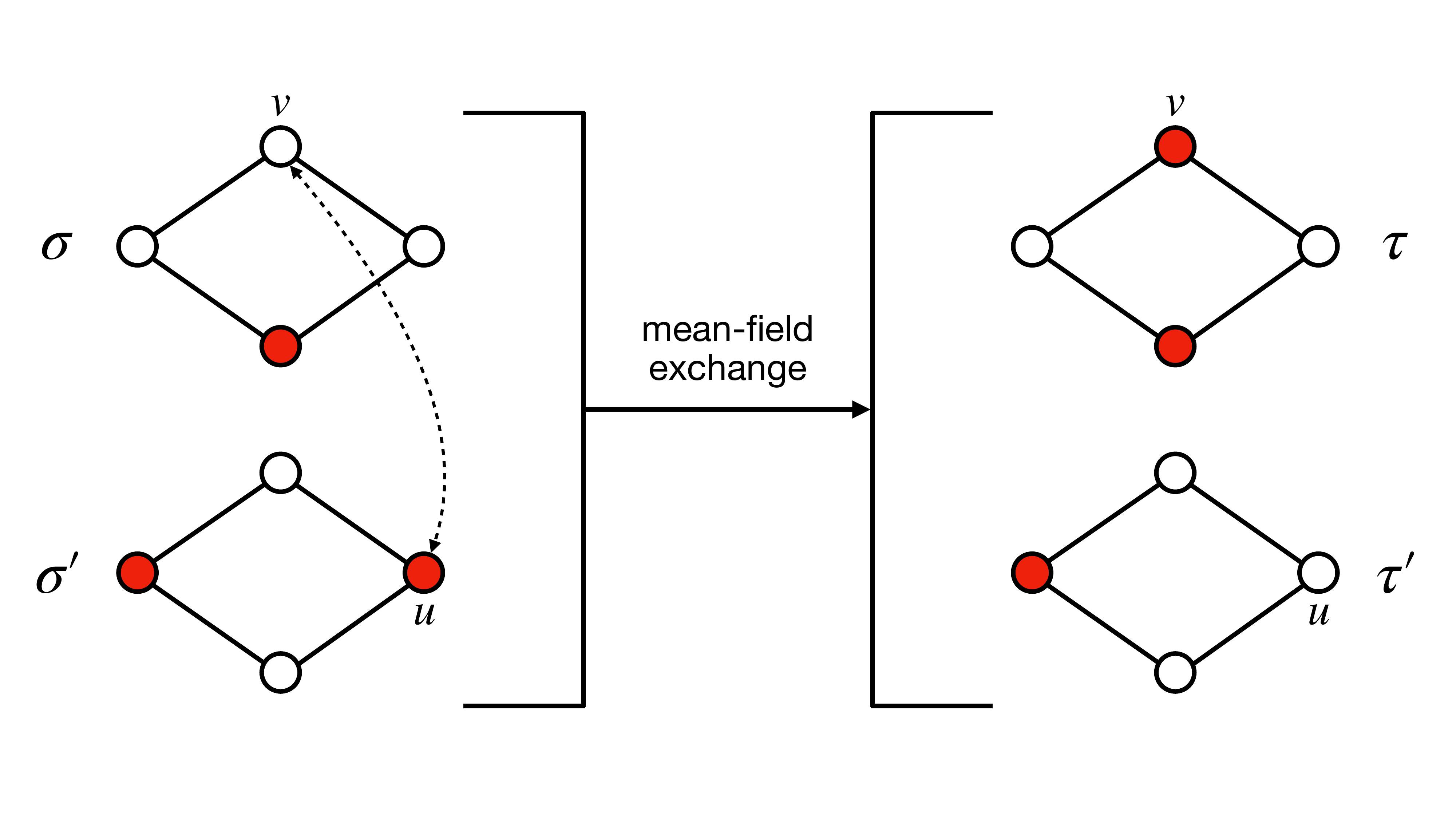}
\qquad
\includegraphics[scale=0.11]{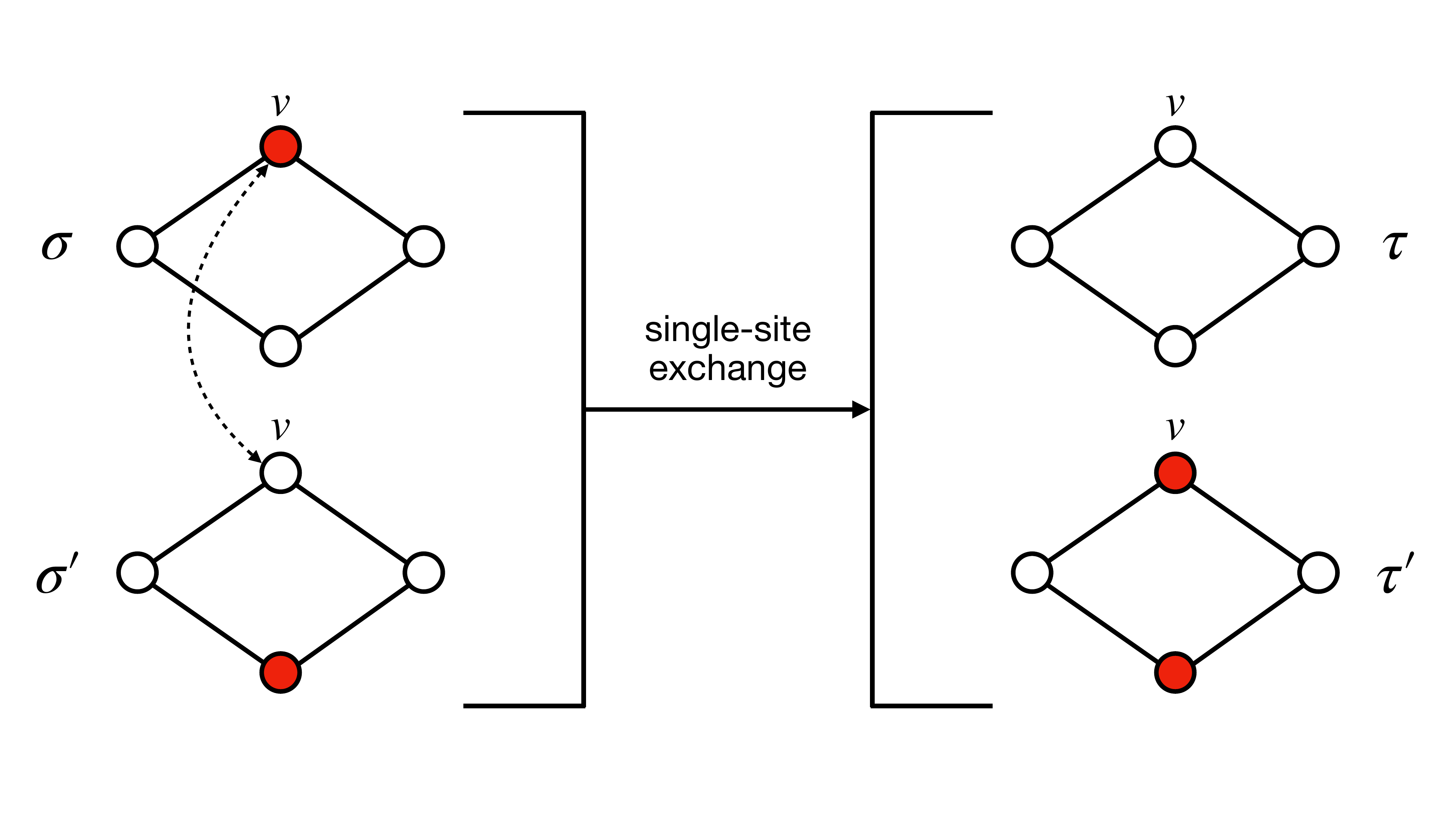}
\caption{An illustration of the mean-field (left) and single-site (right) exchange kernels.}
\label{fig:mean-field-exchange}
\end{figure}

In the mean-field exchange kernel, we choose the two vertices $u,v$ independently and uniformly at random. In the single-site exchange kernel, we pick the {\it same\/} vertex $u=v$ uniformly at random. We refer to the associated nonlinear dynamics (defined as in~\eqref{eqn:massaction}) as the 
\emph{mean-field dynamics} 
and the \emph{single-site dynamics}, respectively, and denote
their respective one-step maps by $\Tmf$ and $\Tss$.

\ignore{
To formally define the transition of nonlinear dynamics and the kernel \( Q(\cdot \mid \sigma, \sigma') \), we first introduce the associated \emph{collision kernel}. A collision between two independent sets \( (\sigma, \sigma') \in \OmegaG \times \OmegaG \) occurs when we exchange \(\sigma_u\) and \(\sigma'_v\) for two randomly selected vertices \(u\) and \(v\), under the assumption that the resulting states after the swap remain independent sets. It is important to note that a collision is considered non-trivial only when \(\sigma_u \neq \sigma'_v\); otherwise, the pair remains unchanged.

In the \emph{mean-field} exchange kernel, we choose two vertices $u,v$ independently and uniformly at random.
Specifically, for a configuration $\sigma \in \{0,1\}^V$, a vertex $u \in V$, and a value $a \in \{0,1\}$, we define $S_u(\sigma,a)$ to be the configuration where we set $\sigma_u$ to be $a$ while keeping all other vertices the same. 
Then, for two independent sets $\sigma,\sigma' \in \OmegaG$ and two vertices $u,v \in V$ such that $\sigma_u \neq \sigma'_v$ and $(\tau,\tau') := (S_u(\sigma,\sigma'_v), S_v(\sigma',\sigma_u)) \in \OmegaG \times \OmegaG$, we have
\begin{equation}
\Qmf(\sigma,\sigma';\tau,\tau') = \frac1{n^2}.
\end{equation}

In the \emph{single-site} exchange kernel, we pick the same vertex $u=v$ uniformly at random. Specifically, for two independent sets $\sigma,\sigma' \in \OmegaG$ and a vertex $v \in V$ such that $\sigma_v \neq \sigma'_v$ and $(\tau,\tau') = (S_v(\sigma,\sigma'_v), S_v(\sigma',\sigma_v)) \in \OmegaG \times \OmegaG$, we have
\begin{equation}
\Qss(\sigma,\sigma';\tau,\tau')
=
\frac1n.
\end{equation}

The collision kernel $\Qmf$ or $\Qss$ then induces a Markov kernel by
\begin{equation}
Q(\tau\given \sigma,\sigma')
:=
\sum_{\tau'\in\OmegaG}
\kernelQ(\sigma,\sigma';\tau,\tau').
\end{equation}
Namely, given two states $\sigma,\sigma'$ at the current time, we generate the next state by taking the first output configuration of the collision.

We can now define the nonlinear dynamics, where one step of the dynamics is defined by
\begin{equation}
T: p\mapsto p\circ p
:=
\sum_{\sigma,\sigma'\in\OmegaG}
p(\sigma)p(\sigma')Q(\cdot\given\sigma,\sigma'),
\end{equation}
where $p$ denotes the distribution before the update and $T(p) = p\circ p$ is the one afterwards.
In particular, for every $\tau\in\OmegaG$,
\[
(p\circ p)(\tau)
=
\sum_{\sigma,\sigma'\in\OmegaG}
p(\sigma)p(\sigma')Q(\tau\given\sigma,\sigma')
=
\sum_{\sigma,\sigma',\tau'\in\OmegaG}
p(\sigma)p(\sigma')
\kernelQ(\sigma,\sigma';\tau,\tau').
\]
In words, suppose we have two independent samples $\sigma,\sigma'$ from $p$, then we obtain a new configuration $\tau$ by applying the Markov kernel $Q(\cdot \given \sigma,\sigma')$, and the law of $\tau$ is denoted by $p\circ p$.

When $\kernelQ=\Qmf$, we call the corresponding nonlinear dynamics the \emph{mean-field dynamics}, and denote its one-step map by $\Tmf$.
Similarly, when $\kernelQ=\Qss$, we call the corresponding nonlinear dynamics the \emph{single-site dynamics}, and denote its one-step map by $\Tss$.
}

We can now explain why nonlinear dynamics are well suited to the sampling problem we want to solve.  First, updates in nonlinear dynamics do not require knowledge of fugacities.  Second, as can be easily verified from the definitions of the relevant collision kernels, the mean-field dynamics {\it preserves the density}, i.e., \(\alpha(\Tmf(p)) = \alpha(p)\) for all distributions~$p$ over~$\OmegaG$, while the single-site dynamics {\it preserves the marginal vector}, i.e., \(\vect m(\Tss(p)) = \vect m(p)\).  
Since it can be shown that both dynamics converge to a Gibbs distribution (which is a non-trivial result), when initialized with the desired density or marginal vector, they both converge to the desired maximum entropy distributions.

While nonlinear dynamics in principle provide a simple and elegant solution to our sampling problem, a number of substantial technical challenges remain if we are to turn this idea into an efficient algorithm.
First, in contrast to linear Markov chains, which have a well-established theoretical foundation and extensive literature, research on nonlinear dynamics is very limited. 
For instance, we have a comprehensive understanding of the convergence of Markov chains, under simple conditions, at an exponential rate. However, for nonlinear Markov chains, there are no general conditions that ensure convergence, at \emph{any} rate, to a unique stationary distribution.\footnote{Notable exceptions for general nonlinear Markov chains are~\cite{AJDM} and~\cite{Butkovsky}, but the conditions for convergence that these works require are far too restrictive to apply in our setting.  A famous open problem known as the ``Global Attractor Conjecture''~\cite{Feinberg} postulates the convergence of any nonlinear dynamics under fairly mild conditions.} 
Additionally, many techniques, such as coupling and functional inequalities, that have been successfully applied to the analysis 
of the speed of convergence, or mixing time, of Markov chains do not have immediate analogs in the nonlinear case.

Furthermore, while nonlinear dynamics may be straightforward to describe, their efficient implementation is not trivial. As is clear from~\eqref{eqn:massaction}, each iteration of the dynamics requires two independent samples from the current state. Consequently, to obtain a sample after \( t \) iterations, \( 2^t \) independent samples from the initial distribution are needed, which is computationally prohibitive. In contrast, a linear Markov chain can be simulated by following the trajectory of a single sample.
%\Mario{It could be worth noting somewhere that the block dynamics achieves logarithmic mixing time (see \cite{caputo2024nonlineardynamicsisingmodel}). This would address the difficulty of sampling directly via the nonlinear dynamics: if $T_{\mathrm{mix}} \approx \log n$, then $2^{T_{\mathrm{mix}}} \approx \mathrm{poly}(n)$. The point is that the obstruction arises from updating one bit (single-site exchange) at a time, a block update should circumvent it. To be checked.}

%In this paper, we will address these issues in the context of the hard-core model, and demonstrate the benefits of nonlinear dynamics for sampling for a specified density or marginal vector.  
In this paper, we will address these issues in the context of the hard-core model, and show how ideas from nonlinear dynamics can be leveraged for sampling with a specified density or marginal vector.
We first establish some general structural properties 
of the nonlinear dynamics, concerning invariant states and qualitative convergence to stationarity. We then address {\it quantitative\/} bounds on the speed of convergence of these evolutions under suitable assumptions. These results are a significant improvement and extension of recent analyses of nonlinear dynamics~\cite{caputo2024nonlineardynamicsisingmodel, caputo2025kac}, which apply only to systems with soft constraints (such as the Ising model)\footnote{The hard-core model has a hard constraint because in an independent set all neighbors of an occupied vertex must be unoccupied.}, require more severe restrictions on the parameters, and use technically more involved techniques.  We believe that our new methods are of independent interest.  We then move on to the algorithmic applications. These will be based on the fruitful idea, dating back to Mark Kac~\cite{kac56} and often referred in the kinetic theory literature as \emph{propagation of chaos}, that the nonlinear evolution is well approximated by a suitable {\it linear\/} Markov chain, which we refer to as the \emph{particle system dynamics}. In addition to providing an algorithmization of the nonlinear dynamics, the particle system dynamics also allows us to obtain refined results on entropic convergence to stationarity of the nonlinear dynamics. 
\ignore{Finally, by implementing a strong version of the Kac's program~\cite{carlen2010entropy,mischler2013kac}, we show how the particle system dynamics can be used to obtain refined results on entropic convergence to stationarity for the nonlinear mean field dynamics.}

\subsection{Convergence of nonlinear dynamics}
Our first results show that both the mean-field and single-site dynamics, when appropriately initialized, converge to the Gibbs distribution of the corresponding hard-core model, for a suitable regime of parameter values. Unlike traditional Markov chains, such a qualitative convergence is by no means immediate.

\begin{proposition}[informal versions of \Cref{prop:qual:mean-field-convergence} and \Cref{prop:qual:single-site-convergence}]
\label{thm:intro-convergence}
    Consider the mean-field and single-site nonlinear dynamics on independent sets in a graph $G=(V,E)$. Suppose the maximum degree of $G$ is $\Delta$.
    
    \begin{enumerate}[label=(\arabic*)]
        \item If $p$ is a distribution on $\OmegaG$ with density $\alpha = \alpha(p)$ satisfying $0 < \alpha < \frac{1}{\Delta+1}$, then
        \begin{align*}
            \lim_{t \to \infty} \Tmf^t(p) = \mu_G^{\alpha}.
        \end{align*}
        
        \item If $p$ is a distribution on $\OmegaG$ with marginal vector $\vect m = \vect m (p)$ satisfying $0 < m_v < \frac{1}{\Delta+1}$ for all $v \in V$, then
        \begin{align*}
            \lim_{t \to \infty} \Tss^t(p) = \mu_G^{\vect m}.
        \end{align*}   
    \end{enumerate}
\end{proposition}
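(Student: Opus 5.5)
The plan is to prove qualitative convergence by a Lyapunov (entropy) argument together with compactness, and to use the density restriction $\alpha<\frac{1}{\Delta+1}$ (resp.\ $m_v<\frac1{\Delta+1}$) to guarantee that the limit points are supported on all of $\OmegaG$ and are fixed points of the right kind. I treat both cases in parallel, writing $T$ for $\Tmf$ or $\Tss$ and $\kernelQ$ for the corresponding collision kernel.

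\textbf{Step 1: symmetry of the kernel and an H-theorem.} The first thing to check is that the pair collision $(\sigma,\sigma')\mapsto(\tau,\tau')$ is a reversible, measure-preserving involution. Each accepted exchange is undone by the same exchange, and it has the same probability $1/n^2$ (resp.\ $1/n$). Hence $\kernelQ(\sigma,\sigma';\tau,\tau')=\kernelQ(\tau,\tau';\sigma,\sigma')$. Moreover, an exchange preserves $|\sigma|+|\sigma'|$, and in the single-site case it preserves $\sigma_v+\sigma'_v$ for every $v$.

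I would then show that the entropy $H(p)=-\sum_\sigma p(\sigma)\log p(\sigma)$ is nondecreasing along the dynamics. The symmetrized pair law $\tilde p=(p\otimes p)\kernelQ$ satisfies $H(\tilde p)\ge H(p\otimes p)=2H(p)$, because $\kernelQ$ is a doubly stochastic kernel on pairs. Subadditivity then gives $H(\tilde p)\le H(T p)+H(T' p)$, where both marginals of $\tilde p$ equal $T(p)$ by symmetry. So $H(T(p))\ge H(p)$. For the equality case: if $H(T(p))=H(p)$, then $\tilde p$ is a product measure and $\kernelQ$ does not increase the entropy of $p\otimes p$. Strict convexity then forces $p(\sigma)p(\sigma')=p(\tau)p(\tau')$ for every allowed collision with positive support.

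\textbf{Step 2: compactness and identification of limit points.} Since $\OmegaG$ is finite, the orbit $T^t(p)$ has limit points. $H$ is bounded and monotone, and $T$ and $H$ are continuous, so every limit point $q$ satisfies $H(T(q))=H(q)$. Hence $q(\sigma)q(\sigma')=q(\tau)q(\tau')$ for all allowed exchanges between configurations in $\supp q$. The conserved quantity (density or marginal vector) also passes to $q$.

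\textbf{Step 3: full support, the main obstacle.} I need $q$ to charge every independent set. Given this, the detailed-balance relations with the pair $(\sigma,\emptyset)$ give, via the exchange that moves a single occupied site from $\sigma$ to $\emptyset$, relations of the form $q(\sigma)q(\emptyset)=q(\sigma\setminus v)q(\{u\})$. Iterating these yields $q(\sigma)\propto\lambda^{|\sigma|}$, or $\prod_{v\in\sigma}\lambda_v$ in the single-site case. The conserved density or marginals then identify $q=\mu_G^\alpha$ (resp.\ $\mu_G^{\vect m}$) by the bijections quoted earlier. Uniqueness of the limit point gives convergence of the whole sequence.

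This is where $\frac1{\Delta+1}$ should enter. I would show that one step already produces mass on $\emptyset$ and on every set reachable by single-site removals and additions, with probability bounded below uniformly in $t$. For the lower bound I would track $\Pr(\sigma_v=0 \text{ and } N(v)\cap\sigma=\emptyset)$. By a union bound this is at least $1-(\Delta+1)\max_u m_u>0$ in the single-site case, and in the mean-field case at least $1-(\Delta+1)\alpha>0$ for some $v$ on average. It guarantees that an insertion at $v$ is accepted with probability bounded below. Combined with removals, which are always accepted whenever the partner has a $0$ at the chosen site, this should give irreducibility over $\OmegaG$. Thus after a bounded number of steps $T^k(p)$ is uniformly bounded away from zero on $\OmegaG$, and this persists to limit points.

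A subtlety to handle is that a limit point could lose support even though the iterates have full support. I would control this by a uniform-in-$t$ lower bound obtained from $T^{k}$ applied to the (uniformly constrained) laws $T^{t-k}(p)$, using only the conserved density or marginals.
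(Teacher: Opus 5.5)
Your Steps 1--2 are valid and take a different route from the paper. The paper symmetrizes $\kernelQ$ and then invokes two results from \cite{caputo2024nonlineardynamicsisingmodel}: the general convergence theorem \cite[Theorem~2.8]{caputo2024nonlineardynamicsisingmodel}, which says the limit exists and is a positive stationary state, and \cite[Proposition~2.10]{caputo2024nonlineardynamicsisingmodel}, which says positive stationary states satisfy detailed balance. You instead prove the discrete H-theorem $2H(T(p))\ge H((p\otimes p)\kernelQ)\ge 2H(p)$ and run a LaSalle argument. This gives $q(\sigma)q(\sigma')=q(\tau)q(\tau')$ directly at every limit point $q$. It is self-contained, needs no symmetrization, and gives monotonicity of $H(T^t(p)\mid\mu_G^\alpha)$ as a by-product (it differs from $-H(T^t(p))$ by a constant, since the conserved quantity is fixed). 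It also reduces everything to showing that limit points have full support. Your identification of the limit is the same as the paper's: use the exchanges $(\sigma,\emptyset)\mapsto(\sigma\setminus\{v\},\{u\})$, then the injectivity of $\lambda\mapsto\alpha(\mu_{G,\lambda})$ and $\vect\lambda\mapsto\vect m(\mu_{G,\vect\lambda})$.

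One correction to Step 1. Write $x=(\sigma,\sigma')$ and $y=(\tau,\tau')$. Equality in Jensen only says that $q\otimes q$ is constant on each set $\{x:\kernelQ(x;y)>0\}$. To conclude $(q\otimes q)(x)=(q\otimes q)(y)$ whenever $\kernelQ(x;y)>0$, you also need $\kernelQ(y;y)>0$; a pure swap preserves entropy without being detailed-balanced. This is the positive diagonal of \Cref{lem:qual:positive-diagonal} when $E\neq\emptyset$, and the edgeless case is elementary.

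The genuine gap is in Step 3, which is the crux of the proof (it is \Cref{lem:qual:mean-field-irreducibility,lem:qual:single-site-irreducibility} in the paper). Your sketch puts the hypothesis on the wrong move. You claim removals ``are always accepted whenever the partner has a $0$ at the chosen site.'' This is false. Exchanging $\sigma_v=1$ with $\sigma'_u=0$ hands the partner a $1$ at $u$, so the exchange is admissible only if $\sigma'_{\neighborCl{u}}=0$.

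So removals are exactly where the $\frac{1}{\Delta+1}$ hypothesis enters, through a union bound applied to the \emph{partner}. Write $\max_w m_w\le\frac{1-\delta}{\Delta+1}$ (resp.\ $\alpha\le\frac{1-\delta}{\Delta+1}$). Then $\Pr(\sigma'_{\neighborCl{v}}=0)\ge\delta$ in the single-site case, and $\frac1n\sum_u\Pr(\sigma'_{\neighborCl{u}}=0)\ge\delta$ in the mean-field case.

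Insertions do not need this hypothesis. If you build a target $\eta$ from $\emptyset$ by adding the vertices of $\eta$ one at a time, the receiving configuration is always a subset of $\eta$, so every such insertion is admissible on both sides. You only need the partner to carry a $1$ at the chosen site, which has probability $\alpha$ (averaged over the site) or $m_v$. This is where $\alpha>0$, resp.\ $\min_v m_v>0$, is used. As written, your argument gives no valid lower bound on the mass of $\emptyset$, so full support of limit points is not established.

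The repair also needs more than ``one step'': a single step changes one coordinate. Because each step uses two independent draws from the previous law, you cannot follow a single trajectory of removals. You need an induction such as
\[
\Pr_{\tau\sim T^{m}(p)}(\tau_{V_{m+1}}=0)\;\ge\;\tfrac1n\,\Pr_{\sigma\sim T^{m-1}(p)}(\sigma_{V_m}=0)\,\Pr_{\sigma'\sim T^{m-1}(p)}(\sigma'_{\neighborCl{v_{m+1}}}=0),
\]
where in the mean-field case you average the last factor over the partner's site. This gives mass at least $c(n,\delta)>0$ on $\emptyset$ at time $n-1$. Then $|\eta|$ insertion steps follow, each costing a factor $\alpha/n$ (resp.\ $\min_v m_v/n$). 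All these bounds use only the conserved quantities, so you can apply them to $T^{s}(p)$ in place of $p$. This yields $T^t(p)(\eta)\ge\eps>0$ for all $t\ge 2n-1$ and all $\eta\in\OmegaG$, which is exactly what your LaSalle argument needs.
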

\noindent %Note that the regime $0 < \alpha < \frac{1}{\Delta+1}$ for the density is essentially optimal in the sense that, if $ \alpha > \frac{1}{\Delta+1}$, then there exist graphs $G$ with maximum degree $\D$ such that $\O_G=\emptyset$, as one can see by taking, e.g., the complete graph with $\D+1$ vertices.  
Note that the regime $0 < \alpha < \frac{1}{\Delta+1}$ for the density, or $0 < m_v < \frac{1}{\Delta+1}$ for marginal occupation probabilities, is necessary, so that $\mu_G^{\alpha}$ or $\mu_G^{\vect m}$ is well-defined on any graph $G$ of maximum degree $\Delta$. In contrast, the complete graph with $\D+1$ vertices cannot achieve density $\alpha \ge \frac{1}{\Delta+1}$ or uniform marginal probabilities $m_v = c \ge \frac{1}{\Delta+1}$ at finite fugacity.

The proof of this result is based on a general convergence theorem for nonlinear dynamics from \cite[Theorem~2.8]{caputo2024nonlineardynamicsisingmodel}. The main condition to check in order to apply that theorem is an \emph{irreducibility} property, namely that after some finite time the nonlinear dynamics $T^t(p)$ puts strictly positive mass on every independent set in~$\OmegaG$.
The %easily verified 
facts that the mean-field dynamics preserves the density~$\alpha(p)$, and that the single-site dynamics preserves the marginal vector~$\vect m(p)$, then identify the limiting stationary distribution as the unique Gibbs measure with the prescribed density or marginal vector.

\Cref{thm:intro-convergence} does not provide any information on the rate of convergence. Indeed, in contrast with the classical Markov chain setting, it remains open whether convergence is exponentially fast, even with a possibly very small dimension-dependent rate. In particular, there is no general contractive coupling available in the nonlinear setting. Moreover, the total variation distance to equilibrium need not even be monotonically decreasing with 
time~\cite{caputo2025cutoff}.

%The proof of this qualitative convergence result is based on a general convergence theorem for nonlinear dynamics from \cite[Theorem~2.8]{caputo2024nonlineardynamicsisingmodel}. This theorem shows that, under suitable conditions on the associated collision kernel $\kernelQ$ and the initial distribution, the nonlinear dynamics converges to a stationary distribution. After symmetrizing our kernels in the two output independent sets, which is harmless because the symmetrized kernels induce the same nonlinear dynamics as the original ones, our kernels satisfy the required structural conditions. The main remaining ingredient is to verify the required property of the initial distribution $p$, referred to as irreducibility: after some time, all distributions along the nonlinear evolution $\{T^t(p)\}_{t\ge 0}$ put uniformly positive mass on every independent set in $\OmegaG$. We prove that the stated density and marginal assumptions imply this irreducibility property for the mean-field and single-site dynamics, respectively. Thus, \cite[Theorem~2.8]{caputo2024nonlineardynamicsisingmodel} implies convergence. The fact that the mean-field dynamics preserves the density $\alpha(p)$, while the single-site dynamics preserves the marginal vector $\vect m(p)$, then identifies the limiting stationary distribution as the unique Gibbs measure with the prescribed density or marginal vector.

Our next result shows that, assuming the density is sufficiently low, 
a contractive coupling for both the mean-field and single-site dynamics exists, and exponential convergence to equilibrium occurs at a rate $\d/n$ where $\d>0$ is a constant.

\begin{theorem}[informal version of \Cref{thm:quant:mean-field-rapid-convergence} and \Cref{thm:quant:single-site-rapid-convergence}]
\label{th:informal1}
%    Consider the mean-field dynamics on independent sets in a graph $G=(V,E)$. Suppose the maximum degree of $G$ is $\Delta$. If $p$ is a distribution on $\OmegaG$ with density $0 < \alpha(p) < \frac{1}{3(\Delta+1)}$, then for $t = O(n \log n)$ we have
 %       \begin{align*}
  %          \|\Tmf^t(p) - \mu_{G,\lambda} \|_{\TV} \le \frac{1}{4},
   %     \end{align*}
    %    where $\lambda > 0$ is the unique fugacity such that $\alpha(\mu_{G,\lambda}) = \alpha(p)$.
Consider the mean-field and single-site nonlinear dynamics on independent sets in a graph $G=(V,E)$ on $n$ vertices. Suppose the maximum degree of $G$ is $\Delta$. Let $\delta \in (0,1)$. 
    \begin{enumerate}[label=(\arabic*)]
        \item If $p$ is a distribution on $\OmegaG$ with density $\alpha = \alpha(p)$ satisfying $ \alpha \le \frac{1-\d}{3(\Delta+1)}$, then for all $t\in\bbN$,
        \begin{align*}
          \|\Tmf^t(p) - \mu_G^{\alpha} \|_{\TV} \le n\,e^{-\d\,t/n}.
        \end{align*}
        
        \item If $p$ is a distribution on $\OmegaG$ with marginal vector $\vect m = \vect m (p)$ satisfying $ m_v \le \frac{1-\d}{3(\Delta+1)}$ for all $v \in V$, then for all $t\in\bbN$,
        \begin{align*}
            \|\Tss^t(p) - \mu_G^{\vect m} \|_{\TV} \le n\,e^{-\d\,t/n}.
        \end{align*}   
    \end{enumerate}
    In particular, in both cases, after $t=O(n\log(n/\varepsilon))$ steps the system is within variation distance $\varepsilon$ from stationarity, for any $\varepsilon>0$. 
\end{theorem}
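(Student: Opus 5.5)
The plan is to build a contractive coupling in the Wasserstein sense, adapted to the nonlinear (quadratic) evolution. Fix the target $\mu:=\mu_G^{\alpha}$ (resp.\ $\mu_G^{\vect m}$). By \Cref{thm:intro-convergence} and conservation of density (resp.\ marginals), $\mu$ is a fixed point: $T(\mu)=\mu$. For distributions $p,q$ on $\OmegaG$, let
\[
W(p,q)=\inf_{\pi}\E_{\pi}[\dHam(\sigma,\eta)]
\]
be the Hamming-Wasserstein distance over couplings $\pi$ of $(p,q)$. The goal is the one-step contraction
\[
W(T(p),\mu)\le \Bigl(1-\frac{\d}{n}\Bigr)W(p,\mu).
\]
Iterating this, and using $W(p,\mu)\le n$ together with $\|\cdot\|_{\TV}\le W$ (a nonzero Hamming distance is at least $1$), gives $\|T^t(p)-\mu\|_{\TV}\le n(1-\d/n)^t\le n e^{-\d t/n}$.

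To get the contraction, take an optimal coupling $\pi$ of $(p,\mu)$. Draw two independent copies $(\sigma,\eta)$ and $(\sigma',\eta')$ from $\pi$, so that $\sigma,\sigma'\sim p$ and $\eta,\eta'\sim\mu$. Use the same random vertices $u,v$ in both collisions (with $u=v$ in the single-site case), and apply the kernel to $(\sigma,\sigma')$ and to $(\eta,\eta')$. The outputs $\tau,\xi$ are then coupled samples of $T(p)$ and $T(\mu)=\mu$, so it suffices to bound $\E[\dHam(\tau,\xi)]$.

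Only coordinate $v$ of the first state can change. The distance at $v$ drops when the exchanged values agree, $\sigma'_u=\eta'_u$, and both moves are accepted. The distance can grow in three ways:
\begin{enumerate}[label=(\roman*)]
\item $\sigma'_u\ne\eta'_u$, i.e.\ $u$ is a disagreement in the second pair;
\item the acceptance decisions differ because some neighbour of $v$ in $\sigma$ and $\eta$ disagrees;
\item the acceptance decisions differ because the second states differ in the neighbourhood of $u$ (the companion insertion/removal $\sigma'_u\leftarrow\sigma_v$ may be blocked in one copy and not the other).
\end{enumerate}

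The per-step accounting runs as follows, with $D=\E\,\dHam(\sigma,\eta)=W(p,\mu)$.
\begin{itemize}
\item \emph{Gain.} Each disagreement at $v$ is resolved with probability $\frac1n\Pr[\text{exchange accepted}]$. The exchange fails only if $\sigma'_u=1$ is being inserted next to an occupied neighbour, or the reverse side is blocked. Both events have probability $O(\D\a)$, since under the density bound the expected number of occupied neighbours is at most $(\D+1)\a$.
\item \emph{Loss from (i).} The second pair contributes $\frac1n\cdot D$ in expectation in the mean-field case, since $u$ is uniform and independent of everything.
\item \emph{Loss from (ii) and (iii).} Each disagreement can flip an acceptance only when a $1$ is being inserted at an adjacent position. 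That happens with probability at most the density, so each disagreement contributes at most $\frac{\D\a}{n}$ per step, and similarly for the neighbourhood of $u$.
\end{itemize}

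The obstacle is that the naive version of this accounting nets roughly $(1-\tfrac1n)+\tfrac1n=1$: the gain from resolving $v$ is exactly matched by the new disagreement imported from the second pair. One expects this, because a linearization around $\mu$ with a single copy leaves mass conserved. I would therefore exploit the symmetry of the collision. Taking the second output $\tau'$ instead of $\tau$ gives the same dynamics, so one can symmetrize and count disagreements of the pair $(\tau,\tau')$ versus $(\xi,\xi')$. In the pair picture the exchange merely moves bits: the imported disagreement at $v$ is exactly the one removed from $u$ in the second state. The total pairwise distance is therefore preserved up to rejection effects.

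I do not think this alone yields strict contraction, since pure swaps are an isometry. The real contraction must come from the independence of $\sigma'$ from $\sigma$, combined with the fact that $\mu$ is fixed. I would try to bound $\E[\dHam]$ after $n$ steps via a path argument. Concretely, I would track the probability that coordinate $v$ still carries ``original'' information, i.e.\ the chain of copies through which a disagreement at $v$ was imported; this behaves like a branching/genealogy process. A disagreement survives only if every import along its ancestry lands on a disagreement. The per-step growth factor of this process is bounded by $1+O(\D\a)$ relative to the death rate $1$, so the expected number of surviving ancestral disagreements decays at rate $1-3(\D+1)\a\ge\d$. The constant $3$ comes from the three sources (i)--(iii), each costing $(\D+1)\a$. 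The single-site case is identical, with $\a$ replaced by $\max_v m_v$ for the neighbourhood counts. Making this genealogical bound rigorous, with the factor $3$ landing exactly, is the step I expect to be hardest.
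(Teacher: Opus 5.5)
There is a genuine gap, and it is exactly at the step you flag as hardest. With your coupling, in which the auxiliary pair $(\sigma',\eta')$ is a fresh copy from the optimal coupling $\pi$, there is no contraction at all. Already for $E=\emptyset$ every exchange is admissible, so $\tau_v=\sigma'_u$ and $\xi_v=\eta'_u$, and
\[
\E\,\dHam(\tau,\xi)=D-\frac{D}{n}+\frac{D}{n}=D .
\]
The same exact conservation holds if you push this coupling through a derivation tree whose leaves are coupled independently by $\pi$: the ancestry map from root coordinates to leaf coordinates is doubly stochastic. So no bookkeeping of this coupling, one-step or genealogical, can give decay. The import channel (i) has rate $1$, not $O(\Delta\alpha)$, so your claimed rate $1-3(\Delta+1)\alpha$ contradicts your own per-step accounting. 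The contraction does not come from the independence of $\sigma'$ and $\sigma$. It comes from the conservation law, which your argument uses only to know that $\mu$ is a fixed point.

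The missing idea is to couple the imported bit exactly, using equality of densities. In the mean-field case, if the current state has a $0$ at $v$, the first output depends on the auxiliary configuration only through its bit at $u$, since the reverse move $S_u(\cdot,0)$ is always admissible. After averaging over the uniform $u$, this bit is Bernoulli$(\alpha)$ under both $p$ and $\mu$. So whenever at least one of $\sigma_v,\eta_v$ is $0$, you can feed the \emph{same} auxiliary configuration to both chains, drawn from the law of the chain that has a $1$ at $v$ (or either law if neither does). You then need $\pi$ for the auxiliary pair only on $\{\sigma_v=\eta_v=1\}$.

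With this coupling, channel (i) disappears and the accounting becomes:
\begin{itemize}
\item auxiliary disagreements matter only through $\neighborCl{u}$ (admissibility of placing a $1$ at $u$), and only on an event whose probability averaged over $v$ is at most $\alpha$, costing at most $(\Delta+1)\alpha D/n$;
\item disagreements in $\neighbor{v}$ matter only when the shared bit inserts a $1$, costing at most $\Delta\alpha D/n$;
\item the resolution terms at $v$ contribute at least $(1-\Delta\alpha)D/n$.
\end{itemize}
Altogether
\[
\E\,\dHam(\tau,\xi)\le\Bigl(1-\frac{1-(3\Delta+1)\alpha}{n}\Bigr)D\le\Bigl(1-\frac{\delta}{n}\Bigr)D,
\]
which is the paper's \Cref{lem:quant:mean-field-wasserstein-contraction}. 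The constant is $3\Delta+1$, not three copies of $\Delta+1$. In the single-site case the analogue is to force $\xi^p_v=\xi^q_v$, drawing the second auxiliary configuration from $q$ conditioned on its value at $v$ (both have mean $m_v$). Again you fall back on $\pi$ only when both states have a $1$ at $v$.
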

The proof is based on a one-step contraction in the Wasserstein metric associated with the Hamming distance. Although the coupling constructions differ in the two settings, they are conceptually similar. Take the mean-field case as an example. Our goal is to show that, if $p$ and $q$ have the same density $\alpha$, then one step of the dynamics contracts their Wasserstein distance by a factor $1-\frac{\delta}{n}$. To this end, we introduce a procedure (\Cref{alg:quant:mean-field-coupling}) that lifts any coupling of $p$ and $q$ to a coupling of $\Tmf(p)$ and $\Tmf(q)$, while decreasing the expected Hamming distance.  In one update of the mean-field dynamics, a vertex $v$ is selected as the site to be updated, while a second vertex $u$ is selected to provide the auxiliary spin used in the possible exchange. The key observation is that, whenever the current spin at $v$ is zero, the auxiliary configuration is needed only through its value at $u$. Since $p$ and $q$ have the same density $\a$, and $u$ is uniformly random, this single bit can be coupled directly between the two updates.
Using  this idea, we show that if $\a$ is sufficiently low, then disagreements at the updated site $v$ are resolved with reasonably high probability, whereas agreements at $v$ give rise to new disagreements only with fairly small probability. Quantifying these competing effects yields the desired contraction in expected Hamming distance.
%At a high level, this argument may be seen as a nonlinear analog of the well known path coupling argument for Markov chains \cite{bubley1997path}. The construction is however more involved due to the quadratic nature of the evolution.

We emphasize that our methods for constructing couplings are novel and differ from existing coupling techniques used for linear Markov chains. For instance, the well-known path coupling argument for Markov chains~\cite{bubley1997path} cannot be applied directly to nonlinear dynamics. This is because creating a path between two independent sets either requires the inclusion of non-independent sets along the path, or alters the density or marginal vector of the intermediate independent sets, which poses considerable challenges for analysis. 
Additionally, the traditional coupling approach requires showing that the expected one-step contraction in Hamming distance occurs for two variables \(X_0\) and \(Y_0\) generated from any pair of coupled initial distributions \(p\) and \(q\). Thus, it is sufficient to consider \emph{fixed} initializations, such that \(X_0 = x_0\) and \(Y_0 = y_0\) (in other words, \(p\) and \(q\) are both delta measures). However, for our analysis, both \(p\) and \(q\) must have the specified density or marginal vector, and our coupling analysis critically depends on this fact.

While the threshold of $\frac{1}{3(\D+1)}$ in \Cref{th:informal1} is presumably suboptimal, it is considerably tighter than the analogous rate of convergence results for the Ising model (and other models with soft constraints) proved using more sophisticated recursive constructions in~\cite{caputo2024nonlineardynamicsisingmodel}; those results hold for values of the analogous ``temperature'' parameter up to $\frac{c}{\D}$, where $c$ is a small constant.  Indeed, 
the ideas in the proof of \Cref{th:informal1} can be used to provide a simpler proof of one of the main results in \cite{caputo2024nonlineardynamicsisingmodel} (cf.\ Theorem 1.2 there). We point out also that techniques used in~\cite{RRS98} to prove lower bounds imply that the convergence rate of order $1/n$ in \Cref{th:informal1} is optimal up to a constant. 

We also provide precise results on the {\it decay of relative entropy\/} for the nonlinear dynamics.  These are more conveniently formulated for a continuous-time version of the nonlinear mean-field dynamics, which we define naturally as follows, where $p_t$ denotes the distribution of the dynamics at time $t\ge 0$:
\begin{equation}\label{eq:pre:continuous}
\frac{d p_t}{dt}
=
\Tmf(p_t)-p_t,
\qquad p_0=p.
\end{equation}
%The dynamical system \eqref{eq:pre:continuous} is well posed, and for any initial distribution $p$, the solution $p_t$ admits an explicit probabilistic representation; see, e.g., \cite{caputo2025kac}. 
As in classical kinetic theory, the relative entropy of $p_t$ with respect to $\mu_G^\alpha$, which we denote by $H(p_t \mid \mu_G^\alpha)$, provides a natural Lyapunov functional, decreasing monotonically in time and quantifying convergence to equilibrium.

We establish decay of relative entropy for the continuous-time mean-field dynamics for all densities less than the \emph{critical density} $\a_c(\D)$, which is defined by
\[
\alphac
=
\frac{\lambdac}{1+(\Delta+1)\lambdac},
\]
where $\lambdac = \frac{(\Delta-1)^{\Delta-1}}{(\Delta-2)^\Delta}$ denotes the critical fugacity for the $\D$-regular tree.
This critical density $\alpha_c(\Delta)$ was first introduced by Davies and Perkins~\cite{davies2023approximatelycountingindependentsets}, marking a computational phase transition for approximately sampling and counting fixed-size independent sets. Specifically, for independent sets of a given size $\alpha n$ in an $n$-vertex graph of maximum degree $\Delta$, there exist efficient sampling and counting algorithms if $\alpha < \alpha_c(\Delta)$, while approximate counting (and thus sampling) is intractable if $\alpha > \alpha_c(\Delta)$ under widely believed complexity conjectures (namely, $\text{NP} \neq \text{RP}$).

\begin{theorem}[Informal version of \Cref{thm-entropydecaymf}]
\label{thm-informal-ct-ent}
Let $G$ be a graph on $n$ vertices with maximum degree $\D \ge 3$. For any
$\d\in(0,1)$, and any initial distribution $p$ over $\O_G$ with density $\alpha(p)=\alpha$ satisfying
$\d\le \a \le (1-\d)\a_c(\D)$, the continuous-time version of the mean-field dynamics~$p_t$ defined in~\eqref{eq:pre:continuous} satisfies
\[
  H(p_t \mid \mu_G^\alpha) \,\le\, e^{-r\,t/n}\,H(p \mid \mu_G^\alpha),\qquad t\ge0,
\]
for some constant $r = r(\d,\D) > 0$ independent of $n$.
\end{theorem}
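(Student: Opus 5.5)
The approach is to implement Kac's program. We pass through the linear particle-system dynamics on $N$ copies of $G$, prove a uniform-in-$N$ entropy contraction for it, and then transfer this to the nonlinear evolution as $N\to\infty$ via propagation of chaos.

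\emph{Setup.} Consider the particle system on $\OmegaG^N$. At rate $1/N$ per ordered pair $(i,j)$ of particles, we apply the mean-field exchange between $\sigma^{(i)}$ and $\sigma^{(j)}$ at uniformly random sites $(v,u)$. This is a reversible linear Markov chain. Its reversible measures, restricted to the conserved total occupancy $\sum_i|\sigma^{(i)}| = \lfloor \alpha nN\rfloor$, are uniform over $N$-tuples of independent sets with prescribed total size. Equivalently, it is the hard-core model on the disjoint union $G^{\sqcup N}$ at fixed size $\alpha nN$, i.e.\ the measure $\muNm$ on $\OmegaNm$. The exchange moves correspond to ``down-up'' moves: remove an occupied vertex and add an unoccupied vertex, anywhere in $G^{\sqcup N}$, subject to the independence constraint.

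\emph{Step 1: uniform entropy contraction.} We would show that this chain satisfies a modified log-Sobolev inequality with constant $r/n$, uniformly in $N$, whenever $\alpha\le(1-\delta)\alphac$. Since $G^{\sqcup N}$ still has maximum degree $\Delta$, we invoke the fixed-size independent set results in the spirit of Davies--Perkins and its follow-ups. The chosen-size hard-core measure below $\alphac$ is the projection of the grand-canonical measure at fugacity $\lambda<\lambdac$, which is spectrally independent. Local-to-global arguments (entropic independence for the down-up walk on the $k$-uniform complex with $k=\alpha nN$) then give entropy contraction for the down-up walk at rate $\Omega(1/k)$ per step. In continuous time with rate of order $N\cdot n$ total moves per unit time, this yields rate $r/n$. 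The lower bound $\alpha\ge\delta$ keeps constants controlled: $k\asymp nN$ and the exchange rate is comparable to the down-up rate.

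\emph{Step 2: propagation of chaos.} Starting the particle system from $p^{\otimes N}$ conditioned on the conserved quantity, or a suitable product approximation, we show that the $1$-particle marginal at time $t$ converges to $p_t$ as $N\to\infty$, for fixed $t$. This is standard for bounded quadratic collision kernels on a finite state space: one uses the BBGKY/tree expansion, or couples the $N$-particle process with $N$ independent copies of the McKean process until a collision tree repeats a particle, an event of probability $O(t^2e^{ct}/N)$.

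\emph{Step 3: transfer of entropy.} Use the subadditivity $\frac1N H(\text{law}_t^N\mid\muNm)\ge H(\text{marginal}_t\mid \cdot)$-type relations, together with the equivalence of ensembles. The one-particle marginal of $\muNm$ tends to $\mu_G^\alpha$, and $\frac1N H(p^{\otimes N}|_{\text{cond}}\mid \muNm)\to H(p\mid\mu_G^\alpha)$. The lower limit requires lower semicontinuity combined with the chaos from Step 2, which gives $\liminf\frac1N H(\text{law}_t^N\mid\muNm)\ge H(p_t\mid\mu_G^\alpha)$. Combining with Step 1 gives $H(p_t\mid\mu_G^\alpha)\le e^{-rt/n}H(p\mid\mu_G^\alpha)$. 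The main obstacle is this step: one needs entropic chaos rather than just weak chaos, plus the conditioning on exact total density (a local CLT for $\sum_i|\sigma^{(i)}|$ under $p^{\otimes N}$ costs only $O(\log N)$ entropy, negligible after dividing by $N$). Step 1's uniformity in $N$ is the other delicate point, but it should follow from known results since $G^{\sqcup N}$ inherits the degree bound.
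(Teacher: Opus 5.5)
Your argument is correct, but it runs the \emph{dynamic} form of Kac's program, whereas the paper runs the \emph{static} one. The paper never evolves the particle system in time. It rewrites the claim as the nonlinear MLSI $\cD_{\mu_{G,\lambda}}(f,f)\ge \frac{r}{n}\Ent_{\mu_{G,\lambda}}[f]$ on $\cS_\lambda$. It then inserts the chaotic test function $F_N=\nu^{\otimes N}(\cdot\mid \Omega^{\mathrm{mf}}_{N,\alpha_N})/\mu^{\mathrm{mf}}_{N,\alpha_N}$, with $\nu=f\mu_{G,\lambda}$, into the particle LSI. Passing to the limit uses entropic chaos $\frac1N\Ent[F_N]\to\Ent[f]$ and Fisher chaos $\cE_{N,\alpha_N}(F_N,\log F_N)\to 2\cD(f,f)$, and decay then follows from the H-theorem.

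You use the same particle input. Your Step 1 is exactly the Jain--Pham--Vuong bound plus comparison with the down-up walk; an MLSI transfers under that edge-by-edge comparison, since $\cE(F,\log F)$ is a sum of nonnegative terms. What you change is the limit: you replace Fisher chaos, the more delicate limit that the paper imports from \cite{caputo2025kac}, by propagation of chaos at fixed $t$, which is routine for bounded kernels on a finite space. You combine it with the bound $\frac1N H(\nu^N_t\mid\muNm)\ge H(P_1\nu^N_t\mid\mu_G^\alpha)+\frac1N\log(\mu_G^\alpha)^{\otimes N}(\OmegaNm)$, where $\nu^N_t$ is the law of your particle system at time $t$. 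The last term is $O(\log N/N)$ by the local CLT.

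The paper's route yields the functional inequality directly and needs nothing about the flow beyond the H-theorem. Yours trades the entropy-production limit for a trajectory-wise one. The two outputs are equivalent: differentiating your decay bound at $t=0$ (for $f>0$) recovers the NMLSI.

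Three small points for the write-up.
(i) The particle clock must be matched to $\dot p=p\circ p-p$. With rate $1/N$ per ordered pair, each particle is updated at total rate about $2$, as first and as second coordinate, which by the mild exchange symmetry give the same update law. So the limit would be $\dot p=2(p\circ p-p)$. Use rate $1/(2N)$, i.e.\ about $N/2$ attempted exchanges per unit time. Then the per-step contraction $\Omega(1/(Nn))$ gives $\frac N2\cdot\Omega(1/(Nn))=\Omega(1/n)$. Your ``order $N\cdot n$ moves per unit time'' would instead give rate of order $1$ for a sped-up limit equation.
(ii) For $p$ without full support, $\OmegaNm$ can be $p^{\otimes N}$-null for lattice reasons, e.g.\ $|\sigma|$ a.s.\ even and $\lfloor Nn\alpha\rfloor$ odd. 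So the conditioned initial datum need not exist.
(iii) To handle (ii), prove the bound first for $p_\e=(1-\e)p+\e\mu_G^\alpha$, which has the same density. Then let $\e\to0$, using continuity of $p_0\mapsto p_t$ and of $H(\cdot\mid\mu_G^\alpha)$. This mirrors the paper's $f_\e$ step.
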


\Cref{thm-informal-ct-ent} offers improvements over \Cref{th:informal1} for mean-field dynamics in two respects. First, while \Cref{th:informal1} is applicable only for densities below $\frac{1}{3(\Delta+1)}$, \Cref{thm-informal-ct-ent} extends this to densities up to $\alpha_c(\Delta) \approx \frac{e}{(1+e)\Delta}$. 
We suspect that rapid convergence of the mean-field dynamics fails when $\a > \a_c(\D)$ since, as we will see in \Cref{thm:prescribed-density-particle-system-sampler,thm:hardness-prescribed-density}, the critical density $\alpha_c(\Delta)$ characterizes the computational phase transition for sampling from~$\mu_G^\a$.
Second, the exponential decay of relative entropy in \Cref{thm-informal-ct-ent} is stronger than the bounds on total variation distance, and can be interpreted as a \emph{modified log-Sobolev inequality} for nonlinear dynamics. Specifically, by applying Pinsker's inequality, we can translate \Cref{thm-informal-ct-ent} to bounds on variation distance between $p_t$ and~$\mu_G^\a$.

Our proof of \Cref{thm-informal-ct-ent} proceeds via Kac's program based on particle systems discussed in the next subsection, as developed in \cite{carlen2010entropy,mischler2013kac}. 
Interestingly, in sharp contrast with the classical setting of kinetic theory, the program can be carried out here in a much stronger form, yielding decay rates that are independent of the initial
distribution~$p$. A key ingredient in our proof is tight control 
of the entropy production in the particle system, which in our setting can be inferred from the recent works \cite{jain2023optimalmixingdownupwalk,JainPhamVuong_EntropicSparseLocalization}. 

\subsection{Algorithms via particle systems}

As noted earlier, the direct implementation of nonlinear dynamics can be prohibitive because generating a single sample from \( T(p) \) requires two independent samples from \( p \). Consequently, producing an output from \( T^t(p) \) after \( t \) iterations necessitates \( 2^t \) independent samples from the initial distribution \( p \). Given that our mixing time in \Cref{th:informal1}  is \( t = O(n \log n) \), the overall runtime becomes exponential.

We can overcome this obstacle using the idea of {\it particle systems}, a concept from kinetic theory dating back to Mark Kac~\cite{kac56}.  The idea is to approximate the nonlinear evolution via a finite population of $N$ particles $\sigma^{(i)}$, $i=1,\dots,N$, each represented by an independent set $\sigma^{(i)}\in\O_G$, which undergo pairwise collisions 
%under the same kernel~$Q$ and the mass action principle (i.e., colliding pairs $(\sigma,\sigma')$ are chosen with probability $p(\sigma)p(\sigma')$).  
\[
(\sigma^{(i)},\sigma^{(j)})\mapsto(\tau^{(i)},\tau^{(j)})
\]
where the pair $(\tau^{(i)},\tau^{(j)})$ is obtained from the pair $(\sigma^{(i)},\sigma^{(j)})$ through the exchange of two bits $\sigma^{(i)}_v$ and $\sigma^{(j)}_u$, according to a choice of sites $u,v$ that matches the behavior of the nonlinear collision kernel. This evolution is actually a (linear) Markov chain on the set~$\OmegaG^N$; crucially, it is efficiently implementable (provided $N$ is not too large).  Moreover, one expects that if the number of particles $N$ is sufficiently large, then the particle system should approximate the evolution of the nonlinear dynamics.  In particular, since the nonlinear dynamics converges to the desired Gibbs distribution, then after sufficiently many steps the law of a typical particle should be close to this target measure.

This suggests the following general algorithmic template:
\begin{enumerate}
    [label = (\arabic*)]
    \item Initialize $N$ particles so that the relevant conserved quantity (density $\alpha$ or marginal vector $\vect m$) has the desired value.
    \item Run the corresponding particle-system chain for $t$ steps, where $t$ is large enough.
    \item Output one particle.
\end{enumerate}
The analysis of this algorithm involves two main steps. First, we show that, for $N$ sufficiently large, the one-particle marginal of the stationary distribution of the particle system is close to the desired hard-core measure. Second, we bound the mixing time of the particle-system chain, which determines how large $t$ should be. Combining these two steps shows that the output particle is approximately distributed according to the target measure in polynomial time.

For the mean-field dynamics, we carry out this program for densities up to the critical density $\a_c(\D)$ and obtain a new sampling algorithm for the hard-core Gibbs distribution parameterized by the density.

\begin{theorem}
    [informal version of \Cref{thm:prescribed-density-particle-system-sampler}]
    \label{thm:intro-alg}
There is a sampling algorithm (\Cref{alg:prescribed-density-particle-system-sampler}) such that, for any graph $G$ on $n$ vertices with maximum degree $\Delta \ge 3$, any density $\d\le \alpha \le (1-\delta)\alpha_c(\Delta)$ where $\delta \in (0,1)$, and any approximation error $\eps \in (0,1)$, the algorithm runs in time\footnote{Hereafter, we use $\widetilde O(\cdot)$ to suppress dependencies on $\poly(\log n, \log(1/\eps))$.} $\widetilde O(\frac{n}{\eps})$ and outputs an independent set from a distribution $\mu_{\mathrm{alg}}$ on $\OmegaG$ satisfying
        \begin{align*}
\|\mu_{\mathrm{alg}} - \mu_G^{\alpha} \|_{\TV} \le \eps.
        \end{align*}
%        The running time is $\widetilde O(\frac{n}{\eps^2})$. 
\end{theorem}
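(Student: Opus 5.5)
The plan follows the two-step template described above: show that the stationary one-particle marginal of the mean-field particle system is close to $\mu_G^\alpha$, then bound the particle system's mixing time. The particle system has $N$ particles $\sigma^{(1)},\dots,\sigma^{(N)}\in\OmegaG$ and is initialized deterministically with total occupancy exactly $\alpha nN$. Each step picks an ordered pair $(i,j)$ and vertices $u,v$ uniformly, and performs the mean-field exchange of $\sigma^{(i)}_v$ and $\sigma^{(j)}_u$ if the result stays in $\OmegaG^N$. This chain preserves the total number of occupied sites $K=\sum_i|\sigma^{(i)}|$. The exchange is symmetric, so the chain is reversible with respect to the uniform distribution on $\{(\sigma^{(i)})_i\in\OmegaG^N : \sum_i|\sigma^{(i)}|=K\}$.

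\textbf{Step 1: the stationary marginal.} The uniform distribution above is exactly the uniform distribution over independent sets of size $K$ in the disjoint union $G^{\sqcup N}$ of $N$ copies of $G$. Its one-copy marginal is $\mu_{G,\lambda}$ conditioned on the total size being $K$ across copies, for any $\lambda$. This is the setting of equivalence of ensembles. Choose $\lambda$ so that $\alpha(\mu_{G,\lambda})=\alpha$. The marginal on copy $1$ is $\mu_{G,\lambda}(\sigma)\cdot \Pr(S_{N-1}=K-|\sigma|)/\Pr(S_N=K)$, where $S_m$ is a sum of $m$ i.i.d.\ copies of $|\sigma|$ under $\mu_{G,\lambda}$. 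A local central limit theorem for $S_m$ then gives a TV error of order $\sqrt{\Var(|\sigma|)}\cdot O(n)/\sqrt{N\Var}$, or more carefully $O(\sqrt{n/N})$ after using $\Var(|\sigma|)=\Theta(n)$ below $\alpha_c$. The variance bound should follow from spectral independence / zero-freeness in this regime. It is cleaner to use a chi-square bound on the ratio, which gives error $O(n/(N\Var))^{1/2}$. Hence $N=\widetilde O(1/\eps^2)$ or better suffices. To reach the stated $\widetilde O(n/\eps)$ runtime I would instead aim for an error of $O(1/N)$ via an Edgeworth-type correction, using that the first-order term cancels after averaging. This would let me take $N=O(1/\eps)$.

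\textbf{Step 2: mixing of the particle system.} I view the particle system as a down-up (exchange) walk on size-$K$ independent sets of $G^{\sqcup N}$: remove a uniformly chosen occupied vertex and add an allowed one. The exchange chain differs from the standard down-up walk only by a constant-factor, $O(1)$-comparable reweighting (picking $u,v$ and particles uniformly versus picking occupied/unoccupied sites). The graph $G^{\sqcup N}$ has maximum degree $\Delta$ and density $\alpha\le(1-\delta)\alpha_c(\Delta)$. The results of \cite{jain2023optimalmixingdownupwalk} therefore give optimal modified log-Sobolev / mixing time $O(K\log K)$ down-up steps, with constants depending only on $\delta,\Delta$. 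Since the exchange chain has a constant rejection-adjusted speed relative to the down-up walk, about $O(nN\log(nN/\eps))$ steps suffice. Each step costs $O(\Delta)$ time to check independence. With $N=O(1/\eps)$ the total is $\widetilde O(n/\eps)$.

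\textbf{Step 3: output.} Output $\sigma^{(1)}$ (or a random particle) and combine the two errors by the triangle inequality. The main obstacle I expect is Step 1 at the sharp rate $O(1/N)$, which needs a local limit theorem for $|\sigma|$ under $\mu_{G,\lambda}$ with uniform control up to $\alpha_c$. I would try to obtain it from zero-freeness of $Z_G$ near the positive axis, using Lee–Yang/Barvinok-type arguments valid for $\lambda<\lambda_c$. A secondary concern is making the comparison between the exchange chain and the down-up walk rigorous with constant loss.
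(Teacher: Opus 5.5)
Your architecture is the paper's, and Step~2 is essentially its argument. The paper compares the Dirichlet form of $\PmfN$ with that of the down-up walk on size-$M$ independent sets of $G^{\sqcup N}$. It then imports a log-Sobolev inequality with constant $\Omega(1/(Nn))$ from \cite{jain2023optimalmixingdownupwalk,JainPhamVuong_EntropicSparseLocalization}; this is valid because $M=\Omega(Nn)$, one of the places where the lower bound $\alpha\ge\delta$ enters. Finally, it passes to discrete-time entropy contraction using $\PmfN(\vect\sigma,\vect\sigma)\ge\beta^2+(1-\beta)^2\ge\frac12$, where $\beta=M/(Nn)$. It is this functional inequality, not a bare mixing-time bound, that survives the comparison.

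The gap is in Step~1, on which the whole $\widetilde O(n/\eps)$ claim rests. The only bound you actually derive is $O(1/\sqrt N)$. That forces $N\asymp\eps^{-2}$ and running time $\widetilde O(n/\eps^2)$, no better than the bisection baseline; the $O(1/N)$ rate is only stated as a goal. Your chi-square estimate is also off. For your ratio $R(y)=\Pr(S_{N-1}=K-y)/\Pr(S_N=K)$, the Gaussian part of $R(y)-1$ is already $O\bigl(1/N+(y-n\alpha)^2/(Nn)\bigr)$, whose $\mu$-average is $O(1/N)$. The real bottleneck on your route is the LCLT error $e_{\mathrm{LCLT}}$, which enters as $e_{\mathrm{LCLT}}\sqrt{Nn}$ after dividing by $\Pr(S_N=K)\asymp(Nn)^{-1/2}$. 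A standard Gaussian LCLT with $e_{\mathrm{LCLT}}\asymp(Nn)^{-1}$ only gives $O\bigl(1/N+(Nn)^{-1/2}\bigr)$. So you would need a uniform Edgeworth expansion, including the skewness term, with error $O((Nn)^{-3/2})$, and you do not supply one.

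The paper avoids pointwise Edgeworth expansions. Set $\mu=\mu_G^\alpha$, $Y=|\sigma|$, $v_1^2=\Var_\mu(Y)\asymp n$, $X_N=\sum_i|\sigma^{(i)}|$ under $\mu^{\otimes N}$, $a_N=Nn\alpha$, $v_N^2=Nv_1^2$, and let $\psi$ be the characteristic function of $(Y-n\alpha)/v_1$. The paper writes the error as $\mathcal N(f)/\mu^{\otimes N}(X_N=M)$. It uses the LCLT only to bound the denominator below by $c_1/\sqrt{Nn}$, for which any error $o((Nn)^{-1/2})$ would do. The numerator is expanded by Fourier inversion of $\mathbf 1\{X_N=M\}$:
\[
\mathcal N(h)=\frac{1}{2\pi v_N}\int_{-\pi v_N}^{\pi v_N}e^{-it(a_N-M)/v_N}\,\psi(t/\sqrt N)^{N-1}\,\Cov_\mu\bigl(h,e^{it(Y-n\alpha)/v_N}\bigr)\,dt .
\]
The first-order term $\frac{it}{v_N}\Cov_\mu(f,Y)$ is exactly what produces a $1/\sqrt N$ bound. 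Rather than relying on a cancellation, the paper splits $f=\widetilde f+g$ with $g=\beta_f(|\sigma|-n\alpha)$ and $\beta_f=\Cov_\mu(f,Y)/v_1^2$.

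For $\widetilde f$, with $\theta=t/v_N$, the covariance equals $\Cov_\mu\bigl(\widetilde f,e^{i\theta(Y-n\alpha)}-1-i\theta(Y-n\alpha)\bigr)$. This is $O(\|f\|_\infty t^2/N)$ by Cauchy--Schwarz and $\E_\mu[(Y-n\alpha)^4]=O(n^2)$, which comes from cumulant bounds. Integrating against $|\psi(t/\sqrt N)|^{N-1}\le e^{-cnt^2/v_1^2}$ gives $|\mathcal N(\widetilde f)|=O(\|f\|_\infty/(N\sqrt{Nn}))$, hence an $O(1/N)$ contribution. The linear part is evaluated exactly by exchangeability: $\E_{\muNm}|\sigma^{(1)}|=M/N$ with $|M-Nn\alpha|\le 1$, contributing $|\beta_f|/N=O(\|f\|_\infty/(N\sqrt n))$.

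This argument needs Gaussian decay of $\psi$ on the whole range $|t|\le\pi v_1$, which the paper takes from \cite[Lemma~3.5]{jain2021approximatecountingsamplinglocal}. Zero-freeness of $Z_G$ near the positive real axis, your proposed source, controls $\E_\mu e^{isY}$ only for small frequencies $s$. You would still need a separate bound $|\E_\mu e^{isY}|\le e^{-cn}$ for $s$ bounded away from $0$.
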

We note that the critical threshold $\alpha_c(\Delta)$ in \Cref{thm:intro-alg} is the best possible, beyond which the sampling problem becomes computationally hard. Specifically, in \Cref{thm:hardness-prescribed-density}, we show that there is no polynomial-time algorithm to sample from $\mu_{G}^{\alpha}$ for any fixed $\alpha \in (\alphac, \frac{1}{\Delta + 1})$, assuming $\mathrm{NP} \neq \mathrm{RP}$. This result follows from an adaptation of the approach used in \cite{davies2023approximatelycountingindependentsets} for proving the hardness of sampling from $\mu_{G,\lambda}(\cdot \mid |\sigma| = \alpha n)$ under the same regime (that is, uniformly random independent sets of size $\alpha n$).

The proof of \Cref{thm:intro-alg} follows the two-step strategy mentioned above. For the first step, let $\muNm$ be the stationary distribution of the mean-field particle-system chain, where $M=\lfloor Nn\alpha\rfloor$ is the prescribed total occupation over all particles. This distribution is uniform over all $N$-particle configurations with total occupation $M$. Equivalently, it can be viewed as $N$ independent samples from $\mu_G^{\alpha}$ conditioned on the event that the total number of occupied vertices over all the particles is exactly $M$. We prove that this conditioning does not significantly change the law of a typical particle. The intuition is that $M$ is chosen close to the expected total occupation under the product measure~$\bigl(\mu_G^\alpha\bigr)^{\otimes N}$, and fixing one particle~$\sigma$ only changes the required total occupation of the remaining particles by $|\sigma|$, which is at most~$n$. When $N$ is large, this change is small compared to the fluctuations of the total occupation over all particles. We make this intuition precise using local central limit theorem estimates, which show that the one-particle marginal of $\muNm$ is within variation distance $\eps$ of $\mu_G^{\alpha}$ when $N = \widetilde O(\frac{1}{\eps})$. For the second step, we use entropy contraction for the mean-field particle system to show that the particle-system chain converges to $\muNm$ within time~$\widetilde O(Nn)$. This step relies crucially on a recent result from \cite{jain2023optimalmixingdownupwalk,JainPhamVuong_EntropicSparseLocalization}, which allows us to obtain a tight entropy production bound for the particle system at all subcritical densities. Therefore, outputting one particle from the resulting configuration gives a sample close to $\mu_G^{\alpha}$. 

We go on to provide an analogous analysis for the particle system simulation of the single-site dynamics.

\begin{theorem}[informal version of \Cref{thm:particle-system-sampler}]
\label{thm:intro-ss}
There is a sampling algorithm (\Cref{alg:prescribed-marginal-particle-system-sampler}) such that, for any graph $G$ on $n$ vertices with maximum degree $\Delta \ge 0$, any marginal vector $\vect m \in (0,1)^V$ such that $\delta \le m_v \le \frac{1-\delta}{2(\Delta+1)}$ for all $v \in V$ where $\delta \in (0,1)$, and any approximation error $\eps \in (0,1)$, the algorithm runs in time $\widetilde O(\frac{n^3}{\eps^4})$ and outputs an independent set from a distribution $\mu_{\mathrm{alg}}$ on $\OmegaG$ satisfying
\begin{align*}
\|\mu_{\mathrm{alg}} - \mu_G^{\vect m} \|_{\TV} \le \eps.
\end{align*}
%The running time is $\widetilde O(\frac{n^3}{\eps^4})$. 
\end{theorem}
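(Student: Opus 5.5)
The plan follows the two-step template described above for \Cref{thm:intro-alg}, now for the single-site particle system. \Cref{alg:prescribed-marginal-particle-system-sampler} starts $N$ particles $\sigma^{(1)},\dots,\sigma^{(N)}\in\OmegaG$ so that, for every $v\in V$, the number of particles occupying $v$ equals $M_v:=\lfloor N m_v\rceil$. Such an initialization is easy to build: assign occupations vertex by vertex greedily. This works because $m_v\le\frac{1-\delta}{2(\Delta+1)}$, so at most a $\frac{1}{2}$ fraction of particles are ever blocked at $v$ by its neighbors. The algorithm then runs the single-site exchange chain for $t$ steps and outputs a uniformly random particle.

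\textbf{Step 1: stationary measure.} The single-site exchange chain is symmetric on the fibre $\OmegaNr$ of $N$-tuples with prescribed column counts $\vec r=(M_v)_v$. Its stationary measure $\muNr$ is therefore uniform on that fibre, i.e.\ $(\mu_{G,\vect\lambda})^{\otimes N}$ conditioned on $\sum_i \sigma^{(i)}_v = M_v$ for all $v$. Here $\vect\lambda$ is the fugacity vector matching $\vect m$.

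\textbf{Step 2: one-particle marginal.} We need to show that the one-particle marginal of $\muNr$ is $\eps$-close to $\mu_G^{\vect m}$. Fixing particle $1$ to be $\sigma$ shifts the required counts of the other $N-1$ particles by the vector $\sigma\in\{0,1\}^V$. The one-particle marginal is $\mu_G^{\vect m}(\sigma)$ times the ratio
\[
\Pr\Big[S_{N-1}=\vec r-\sigma\Big]\Big/\Pr\Big[S_N=\vec r\Big],
\qquad S_k=\sum_{i\le k}\sigma^{(i)}.
\]
This requires a multivariate local CLT in dimension $n$ for the sum of i.i.d.\ vectors $\sigma^{(i)}\sim\mu_G^{\vect m}$. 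The mean is $N\vect m$, so $\vec r$ sits at the mode up to rounding. The covariance must be controlled: eigenvalues bounded below by $\Omega(\delta)$ and above by $O(1)$. Both hold at densities below $\frac{1}{2(\Delta+1)}$, where correlations decay fast (a Dobrushin-type regime). The ratio is then $1+O(n\,\cdot\,\text{poly}/N)$ to leading order, plus Edgeworth correction terms. Since the shift has norm up to $\sqrt n$, the Gaussian exponent changes by $O(\langle\sigma-\vect m,\Sigma^{-1}\cdot\rangle)$ terms, which average out. Summing gives a total variation error of order $\sqrt{n/N}$ up to logs, so $N=\widetilde O(n/\eps^2)$ suffices. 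This is the \emph{main obstacle}: a lattice local CLT in growing dimension $n$ with explicit error. If the full LCLT is too delicate, I would fall back on a Fourier-analytic argument, bounding the characteristic function $|\E e^{i\langle\theta,\sigma\rangle}|$ coordinatewise via conditional independence given the neighborhood, which is available when $m_v$ is small.

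\textbf{Step 3: mixing of the particle chain.} I would adapt the coupling from \Cref{th:informal1}(2) to the linear chain on $\OmegaNr$, using the Hamming distance summed over all particles. Each step picks a vertex $v$ and two particles $i,j$, and attempts to swap their bits at $v$. Two copies with the same counts can be coupled so that disagreements at the updated column are removed with probability comparable to the creation rate times $(1+\delta)$. The margin comes from the factor $2(\Delta+1)$ in place of $3(\Delta+1)$, which is stronger than needed here. This would give contraction $1-\Omega(\delta/(nN^{2}))$ per step, counting pair selections. Initial distance is at most $nN$, so $t=\widetilde O(nN^2)$ steps suffice, each costing $O(\Delta)$ time. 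Total time is $\widetilde O(n\cdot n^2/\eps^4)=\widetilde O(n^3/\eps^4)$, matching the statement.
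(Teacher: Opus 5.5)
Your initialization (greedy per vertex, using that fewer than $N/2$ particles are blocked at $v$) is fine. So is your identification of the stationary law as $(\mu_G^{\vect m})^{\otimes N}$ conditioned on the fibre $\OmegaNr$. Both substantive steps, however, have genuine gaps.

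\textbf{Step 2} rests entirely on an $n$-dimensional lattice local CLT that you do not supply. The fibre probability $(\mu_G^{\vect m})^{\otimes N}(\OmegaNr)$ is of order $N^{-n/2}$, so you need \emph{relative} error control in dimension $n$ with only $N=\widetilde O(n/\eps^2)$ summands. In that regime neither the Edgeworth corrections nor the off-origin part of the Fourier integral is small without a real argument: bounds on $|\E e^{i\langle\theta,\sigma\rangle}|$ obtained by conditioning on a color class lose factors exponential in $n$ against the main term. Your figures ($1+O(n/N)$ for the ratio, then $\sqrt{n/N}$ in TV) are not derived. Note also that if this step held, combining it with the correct $\widetilde O(Nn)$ mixing time would give an $\widetilde O(n^2/\eps^2)$ algorithm. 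That is precisely the improvement the paper leaves open because the LCLT route did not give good enough bounds.

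The paper instead uses a cruder but elementary argument. By the chain rule and convexity,
\[
D_{\mathrm{KL}}(P_1\muNr\,\|\,\mu_G^{\vect m})\le \tfrac1N\log\tfrac{1}{(\mu_G^{\vect m})^{\otimes N}(\OmegaNr)}.
\]
The fibre probability is then lower bounded in two pieces. First, Hoeffding plus a union bound show the box $\|\vec r-N\vect m\|_\infty\le L:=\sqrt{\tfrac N2\log(4n)}$ carries mass at least $\tfrac12$. Second, adjacent fibres $\vec r$ and $\vec r'$ (differing by one at a single $v$) are compared by double counting in a bipartite ``add/remove $v$ in one particle'' graph. This gives $|\Omega^{\mathrm{ss}}_{N,\vec r}|/|\Omega^{\mathrm{ss}}_{N,\vec r'}|\in[\gamma/2,2/\delta]$, and together with $\gamma\le\lambda_v\le 1/\delta$ every fibre in the box has mass at least $\tfrac12(2L+1)^{-n}(\delta\gamma/2)^{2nL}$. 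Hence the TV error is $O((n^2\log n/N)^{1/4})$ and $N=\widetilde O(n^2/\eps^4)$ suffices.

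\textbf{Step 3} also fails as stated. The coupling you want to adapt was proved, for the nonlinear dynamics, only when $m_v\le\frac{1-\delta}{3(\Delta+1)}$. The hypothesis here, $m_v\le\frac{1-\delta}{2(\Delta+1)}$, permits \emph{larger} marginals, so it is weaker, not stronger; there is no extra margin.

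Your rate is also internally inconsistent. A single disagreement at $(i,u)$ spawns new ones in each column $v\sim u$ at rate $\Theta(\Delta m_v/(nN))$. This is because any of the $\approx Nm_v$ particles holding $v$ may try to pass it to particle $i$ and be blocked in only one copy. Your accounting (contraction $\delta/(nN^2)$ ``counting pair selections'') suggests a disagreement is repaired only when the specific matching pair is selected, at rate $\Theta(1/(nN^2))$. That cannot beat creation; a partner-matching coupling would be needed, and you do not give one.

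The idea you are missing is the paper's reduction to colorings. Let $H_{\vec r}$ be the blow-up of $G$: a clique $C_v$ of size $r_v$ for each $v$, with complete joins between $C_u$ and $C_v$ along edges. A proper $N$-coloring induces a particle configuration via $\sigma^{(i)}=\{v:\ i\text{ is used on }C_v\}$, and every element of $\OmegaNr$ has exactly $\prod_v r_v!$ preimages. Then $\PssNr$ is the exact projection of a constant-factor lazy version of Glauber dynamics on $\mathcal C_N(H_{\vec r})$. Since $\Delta(H_{\vec r})\le(\Delta+1)\max_v r_v\le\frac{(1-\delta)N}{2}$, Jerrum's $q>2\Delta$ coupling applies; this is exactly where the factor $2(\Delta+1)$ in the hypothesis comes from. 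It gives mixing in $O(Nn\log(Nn/\zeta))$ steps, and with $N=\widetilde O(n^2/\eps^4)$ this yields the stated $\widetilde O(n^3/\eps^4)$. Your matching total instead comes from multiplying an unproven $N=\widetilde O(n/\eps^2)$ by an unjustified $\widetilde O(nN^2)$ mixing bound.
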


\begin{remark}\label{rem:coloptbound}
    In fact, our sampling algorithm works as long as all marginals are less than $\frac{1}{c_{\mathrm {col}}(\Delta + 1)}$, where $c_{\mathrm {col}}$ is the smallest constant such that the Glauber dynamics for sampling $q$-colorings on any graph of maximum degree $\Delta$ (not necessarily bounded) mixes in nearly linear time whenever $q \ge c_{\mathrm {col}} \Delta$.
    Under the widely believed conjecture that Glauber dynamics has optimal mixing when $q \ge \Delta+2$, we would achieve the marginal bound $\frac{1-\delta}{\Delta + 1}$ for any $\delta \in (0,1)$, matching the constraint in \Cref{thm:intro-convergence}.
    This would not contradict the hardness derived from~\cite{davies2023approximatelycountingindependentsets} as discussed immediately following \Cref{thm:intro-alg}, since that hardness corresponds to constraints on the global density rather than individual marginals.
\end{remark}

The proof of \Cref{thm:intro-ss} is again based on the two steps as before, but the ingredients are different from the density case in \Cref{thm:intro-alg}. Let $\vec r=(r_v)_{v\in V}$ be an integer vector with $r_v = \floor {Nm_v}$, and let $\muNr$ be the stationary distribution of the single-site particle-system chain. This distribution is uniform over all $N$-particle configurations in which exactly $r_v$ particles contain vertex $v$, for every $v\in V$. Equivalently, it can be viewed as $N$ independent samples from $\mu_G^{\vect m}$ conditioned on having these exact vertex counts. For the first step, we show that this conditioning does not significantly change the law of a typical particle. By a standard relative entropy argument, it is enough to show that the exact-count event has sufficiently large probability under $\bigl(\mu_G^{\vect m}\bigr)^{\otimes N}$. To lower bound this probability, we first use concentration inequalities and a union bound to show that the vertex counts are close to their expected values with high probability, and then compare the probabilities of nearby count classes. This shows that, for $N=\widetilde O(\frac{n^2}{\eps^4})$, the one-particle marginal of $\muNr$ is close enough to $\mu_G^{\vect m}$ with $\eps$ total variation error. For the second step, we prove rapid mixing of the single-site particle-system chain by comparing it with Glauber dynamics for proper colorings of an auxiliary graph constructed from $G$ and $\vec r$. This gives mixing after $\widetilde O(Nn)$ many steps. Combining these two steps shows that, after running the chain for its mixing time and outputting one particle, the output distribution is close to $\mu_G^{\vect m}$.

We remark that, while our algorithms based on particle systems are crucially inspired by the nonlinear dynamics, their analysis does not depend on the convergence rate of those nonlinear dynamics, nor do they imply such a rate.  See \Cref{sec:open} for a further discussion of this point.

\medskip

We conclude the introduction by comparing our algorithms based on nonlinear dynamics with more classical approaches that involve solving the inverse problem (i.e., learning the relevant fugacities given the density or marginal vector).  
The formal statements and proofs are included in \Cref{app:gradient-descent-prescribed-marginals,app:bisection-prescribed-density}, and we summarize them here.

To sample from \(\mu_G^{\alpha}\) with a specified density \(\alpha \leq (1 - \delta)\alphac\), we can begin by utilizing the bisection method (as outlined in \Cref{alg:app:bisection-sampler}) to estimate the corresponding fugacity \(\lambda\) of the hard-core model. Then, we can generate samples using efficient sampling algorithms, such as Glauber dynamics, based on this value of~$\lambda$. The running time of such an algorithm is $\widetilde O(\frac{n}{\eps^2})$, whereas our algorithm achieves a faster running time $\widetilde O(\frac{n}{\eps})$.

To sample from the distribution $\mu_G^{\vect m}$ with a specified marginal vector $\vect m$, we can utilize the gradient descent method (see \Cref{alg:app:gd-sampler}) to determine the fugacity vector $\vect \lambda$. Once we have (an estimate of) $\vect \lambda$, we can sample from $\mu_{G,\vect \lambda}$ using Glauber dynamics. 
The running time for this algorithm is $\widetilde O(\frac{n^2}{\eps^2})$, which is more efficient than what we claimed in \Cref{thm:intro-ss} for our algorithm. 
It is thus interesting to see whether our analysis can be improved to match this quadratic running time. One possibility is to apply strong local central limit theorems as we did in the mean-field case; however, this becomes technically much more challenging in the single-site case and the bounds we have been able to obtain so far are not sufficient to establish the desired result.

The gradient descent algorithm works when all marginal probabilities are less than $\alphac$. 
We note that our algorithm is potentially effective over the wider range of marginals up to $\frac{1}{\Delta + 1}$, assuming the widely accepted mixing conjecture on colorings (see \Cref{rem:coloptbound}). This raises the interesting question of whether the sampler based on gradient descent could also work efficiently in this regime.

In summary, we see that in the prescribed-density case, our nonlinear dynamics based algorithm has a better performance guarantee than the traditional learning method. In the prescribed-marginal case, however, the performance guarantee for our algorithm is slightly worse than that of the gradient descent approach. Nevertheless, in both cases, our algorithms have the advantage of being very simple to implement, involving only the direct simulation of a particle system of suitable size with no additional computation.  Moreover, they represent a potentially powerful new computational paradigm whose full power is far from fully understood. 

\ignore{
\subsection{Entropy contraction for the nonlinear mean-field dynamics}
In the spirit of Kac's original program, the particle system dynamics can be also used to provide precise decay to equilibrium estimates for a continuous-time version of the nonlinear dynamics in terms of relative entropy. This requires  a tight control of the entropy production in the particle system. For our models the needed estimates can be inferred from the recent works \cite{jain2023optimalmixingdownupwalk,JainPhamVuong_EntropicSparseLocalization}. Our main findings can be summarized as follows. Let $p_t$, $t\ge 0$ denote the  continuous-time nonlinear mean field dynamics defined by the equation \begin{equation}\label{eq:pre:continuous}
\frac{d p_t}{dt}
=
\Tmf(p_t)-p_t,
\qquad p_0=p,
\end{equation}
where the initial datum $p$ is a probability distribution over $\O_G$. Note that this defines a dynamical system with values in the set of probability distributions over $\O_G$.
\begin{theorem}[Informal version of \Cref{thm-entropydecaymf}]
\label{thm-informal-ct-ent}
Let $G$ be a graph on $n=|V|$ vertices with maximum degree $\D \ge 3$. For any
$\d\in(0,1)$, and any initial distribution $p$ over $\O_G$, with density
$\d\le \a(p) \le (1-\d)\a_c(\D)$, the solution $p_t$
of the systrem \eqref{eq:pre:continuous} satisfies
\[
  H(p_t \mid \m_{G,\l}) \,\le\, e^{-r\,t/n}\,H(p \mid \m_{G,\l}),\qquad t\ge0,
\]
for some constant $r = r(\d,\D) > 0$ independent of $n$, where
$\l$ is the unique fugacity matching the
density of $p$.
\end{theorem}
Using Pinsker's inequality one can derive bounds on the variation distance between $p_t$ and the equilbrium $\mu_{G,\l}$. We prove \Cref{thm-informal-ct-ent} by following Kac's program, as developed in \cite{carlen2010entropy,mischler2013kac}. Interestingly, in sharp contrast with the classical setting of kinetic theory, the program can be carried out here in a much stronger form, yielding decay rates that are independent of the initial distribution $p$.
}

\section{Preliminaries and basic properties}\label{sec:prelim}

\subsection{Graphs, independent sets, and hard-core measures}

Let $G=(V,E)$ be a finite graph with vertex set $V$, edge set $E$,
and maximum degree $\Delta$. We write $n=|V|$. For $v\in V$, let
\[
\neighbor{v}:=\{u\in V:\{u,v\}\in E\},
\qquad
\neighborCl{v}:=\neighbor{v}\cup\{v\}.
\]

An \emph{independent set} of $G$ is a set $I\subseteq V$ such that
no two vertices of $I$ are adjacent. We identify an independent set
with its indicator vector $\sigma\in\{0,1\}^V$, where $\sigma_v=1$
means that $v$ is occupied (in the set) and $\sigma_v=0$ means that $v$ is vacant (not in the set).
Thus
\[
\OmegaG
:=
\left\{
\sigma\in\{0,1\}^V:
\sigma_u\sigma_v=0 \text{ for every } \{u,v\}\in E
\right\}
\]
is the set of all independent sets of $G$. We use set notation and
vector notation interchangeably; for example, $v\in\sigma$ means
$\sigma_v=1$, and
\[
|\sigma|:=\sum_{v\in V}\sigma_v
\]
denotes the size of the independent set.

We write $\Pdist(\OmegaG)$ for the set of probability measures on
$\OmegaG$, and
\[
\Ppos(\OmegaG)
:=
\{p\in \Pdist(\OmegaG): p(\sigma)>0 \text{ for all } \sigma\in\OmegaG\}
\]
for the set of strictly positive probability measures. For
$p\in\Pdist(\OmegaG)$, we denote its support by
\[
\supp(p):=\{\sigma\in\OmegaG:p(\sigma)>0\}.
\]

For a fugacity $\lambda>0$, the hard-core Gibbs measure with uniform
fugacity $\lambda$ is the probability measure $\mu_{G,\lambda}$ on
$\OmegaG$ given by
\[
\mu_{G,\lambda}(\sigma)
=
\frac{\lambda^{|\sigma|}}{Z_G(\lambda)},
\qquad
Z_G(\lambda)
:=
\sum_{\sigma\in\OmegaG}\lambda^{|\sigma|}.
\]
More generally, for a vector of positive fugacities
$\vect\lambda=(\lambda_v)_{v\in V}\in(0,\infty)^V$, we define
\[
\mu_{G,\vect\lambda}(\sigma)
=
\frac{1}{Z_G(\vect\lambda)}
\prod_{v:\sigma_v=1}\lambda_v,
\qquad
Z_G(\vect\lambda)
:=
\sum_{\sigma\in\OmegaG}
\prod_{v:\sigma_v=1}\lambda_v .
\]
The uniform-fugacity measure $\mu_{G,\lambda}$ is the special case
$\lambda_v=\lambda$ for all $v\in V$.

\subsection{Density, marginal vectors, and fugacity parametrizations}\label{subsec:ab}

The \emph{density} of $p\in\Pdist(\OmegaG)$ is defined as
\[
\alpha(p)
:=
\frac{1}{n}\E_{\sigma\sim p}[|\sigma|].
\]
For each vertex $v\in V$, the $v$-th marginal of $p$ is
\[
m_v(p)
:=
\Pr_{\sigma\sim p}[\sigma_v=1].
\]
We write 
\[
\vect m(p) = (m_v(p))_{v \in V}
\]
for the \emph{marginal vector} of $p$.
Thus
\[
\alpha(p)=\frac1n\sum_{v\in V}m_v(p).
\]
When a marginal vector is prescribed independently of a distribution, we
denote it by
\[
\vect m=(m_v)_{v\in V}.
\]
For the uniform hard-core measure, we have
\[
\alpha(\mu_{G, \lambda})
=
\frac{\lambda}{n}\frac{ d}{ d\lambda}
\log Z_G(\lambda).
\]

We will also use log-fugacity coordinates. For
$\vect\theta=(\theta_v)_{v\in V}\in\mathbb R^V$, define (with slight abuse of notation)
\[
Z_G(\vect\theta)
:=
\sum_{\sigma\in\OmegaG}
\exp\left(\langle \vect\theta,\sigma\rangle\right),
\qquad
\AG(\vect\theta):=\log Z_G(\vect\theta),
\]
and let
\[
\mu_{G,\vect\theta}(\sigma)
=
\exp\left(
\langle\vect\theta,\sigma\rangle-\AG(\vect\theta)
\right),
\qquad \sigma\in\OmegaG.
\]
This is the same as the fugacity parametrization with
$\lambda_v=e^{\theta_v}$. In these coordinates,
\[
\nabla \AG(\vect\theta)
=
\E_{\mu_{G,\vect\theta}}[\sigma]
= \vect m(\mu_{G,\vect\theta})
 \;\;\;\hbox{\rm and} \;\;\; \nabla^2\AG(\vect \theta) = \Cov_{\mu_{G, \vect\theta}}(\sigma).
\]

\subsection{The uniqueness threshold and the critical density}

For $\Delta\ge 3$, the hard-core uniqueness threshold on the
$\Delta$-regular tree is
\[
\lambdac
=
\frac{(\Delta-1)^{\Delta-1}}{(\Delta-2)^\Delta}.
\]
We define the corresponding critical density by
\begin{equation}\label{eq:crit_dens}
\alphac
:=
\frac{\lambdac}{1+(\Delta+1)\lambdac}.
\end{equation}
Equivalently, $\alphac$ is the occupancy ratio of the complete graph
$K_{\Delta+1}$ at fugacity $\lambdac$. The regime
$
\alpha < \alphac
$
will be referred to as the \emph{subcritical} density regime. We refer to \cite{davies2023approximatelycountingindependentsets} for a discussion of the critical density parameter.

\subsection{Distances and entropy}
\label{subsec:ac}
For $\sigma,\eta\in\OmegaG$, let
\[
\dHam(\sigma,\eta)
:=
|\{v\in V:\sigma_v\ne \eta_v\}|
\]
be the Hamming distance. For $p,q\in\Pdist(\OmegaG)$, the
$L^1$-Wasserstein distance induced by $\dHam$ is
\[
W_1(p,q)
:=
\inf_{\pi\in\Gamma(p,q)}
\E_{(\sigma,\eta)\sim\pi}
\left[
\dHam(\sigma,\eta)
\right],
\]
where $\Gamma(p,q)$ is the set of all couplings of $p$ and $q$.

The total variation distance is
\[
\|p-q\|_{\TV}
:=
\frac12\sum_{\sigma\in\OmegaG}|p(\sigma)-q(\sigma)|.
\]

For $p,q\in\Pdist(\OmegaG)$, the relative entropy of $p$ with respect
to $q$ is
\[
H(p\mid q)
:=
\begin{cases}
\displaystyle
\sum_{\sigma\in\OmegaG}
p(\sigma)\log\frac{p(\sigma)}{q(\sigma)},
& p\ll q,\\[1.2em]
+\infty, & \text{otherwise}.
\end{cases}
\]
We also write
\[
D_{\mathrm{KL}}(p\|q):=H(p\mid q)
\]
when convenient.
\subsection{The collision kernel}

A \emph{collision} between two independent sets $(\sigma,\sigma')\in\OmegaG\times\OmegaG$
consists of exchanging the value of $\sigma$ at one vertex with the value of
$\sigma'$ at another vertex. For $v\in V$, $\sigma\in\{0,1\}^V$, and
$a\in\{0,1\}$, let $S_v(\sigma,a)$ be the configuration defined by
\[
S_v(\sigma,a)_u
=
\begin{cases}
\sigma_u, & u\ne v,\\
a, & u=v.
\end{cases}
\]
Thus the exchange of the $v$-th coordinate of $\sigma$ with the $u$-th
coordinate of $\sigma'$ gives the pair
\begin{equation}\label{eq:pre:exchange-map}
(\sigma,\sigma')
\mapsto
(\sigma,\sigma')^{v,u}
:=
\bigl(S_v(\sigma,\sigma'_u),S_u(\sigma',\sigma_v)\bigr).
\end{equation}

This exchange is allowed only when it produces two independent sets. In this case we say the exchange is \emph{admissible}. We call
$(v,u)_{\sigma,\sigma'}$ the event that the exchange $(v, u)$ is admissible for the pair of independent sets $(\sigma, \sigma')$, and we write
$\widehat{(v, u)}_{\sigma,\sigma'}$ for its complement.

For fixed $v,u\in V$, define the local kernel $Q_{v,u}$ on
$\OmegaG\times\OmegaG$ by
\[
Q_{v,u}(\sigma,\sigma';\tau,\tau')
=
\mathbf 1\{(v,u)_{\sigma,\sigma'}\}
\mathbf 1\{(\tau,\tau')=(\sigma,\sigma')^{v,u}\}
+
\mathbf 1\{\widehat{(v,u)}_{\sigma,\sigma'}\}
\mathbf 1\{(\tau,\tau')=(\sigma,\sigma')\}.
\]
In words, the proposed exchange is performed if it is admissible, and
otherwise the pair is left unchanged.
Also, we define $\Phi_{v, u}(\sigma, \sigma')$ to be the unique pair $(\tau, \tau')$ such that $Q_{v,u}(\sigma, \sigma'; \tau, \tau') > 0$, i.e., 
\begin{equation}\label{eq:phi_def}
\Phi_{v,u}(\sigma,\sigma')
:=
\begin{cases}
(\sigma,\sigma')^{v,u}, & \text{if } (v,u)_{\sigma,\sigma'} \text{ holds},\\
(\sigma,\sigma'), & \text{otherwise}.
\end{cases}
\end{equation}

One can define a collision kernel associated with any
symmetric stochastic matrix $\transK$ on $V$:
\begin{equation}\label{eq:pre:transport-collision-kernel}
Q^{\transK}(\sigma,\sigma';\tau,\tau')
=
\frac1n
\sum_{v, u\in V}
\transK(v, u)Q_{v, u}(\sigma,\sigma';\tau,\tau').
\end{equation}

Equivalently, $v$ is chosen uniformly from $V$, then $u$ is chosen
according to $\transK(v,\cdot)$, and the exchange $(v, u)$ is attempted.

For fixed $(\sigma,\sigma')\in\OmegaG\times\OmegaG$,
the quantity $Q^{\transK}(\sigma,\sigma';\tau,\tau')$ is the probability
that the next pair is $(\tau,\tau')$. Thus $Q^{\transK}$ can be viewed as
a transition kernel for a Markov chain on the pair space
$\OmegaG\times\OmegaG$. When the choice of $\transK$ is not important, we
write $\kernelQ$ for a generic collision kernel on this pair space.

Whenever $\kernelQ=Q^{\transK}$ for a symmetric stochastic matrix
$\transK$, the kernel $\kernelQ$ satisfies the mild exchange symmetry
\begin{equation}\label{eq:pre:mild-exchange-symmetry}
\kernelQ(\sigma,\sigma';\tau,\tau')
=
\kernelQ(\sigma',\sigma;\tau',\tau),
\end{equation}
for all $\sigma,\sigma',\tau,\tau'\in\OmegaG$. Note that exchanging the two
input configurations and the two output configurations corresponds to
replacing the proposed exchange $(v,u)$ by $(u,v)$, and this has the same
weight since $\transK$ is symmetric. We will mostly consider the following two cases for the matrix $\transK$.

The \emph{mean-field} exchange kernel corresponds to
\[
\Kmf(v, u)=\frac1n
\qquad \forall v, u\in V,
\]
so that
\begin{equation}\label{eq:pre:mean-field-collision-kernel}
\Qmf(\sigma,\sigma';\tau,\tau')
=
\frac1{n^2}
\sum_{v, u \in V}
Q_{v, u}(\sigma,\sigma';\tau,\tau').  \end{equation}
The \emph{single-site} exchange kernel corresponds to
\[
\Kss(v,u)=\mathbf 1\{v = u\},
\]
so that
\begin{equation}\label{eq:pre:single-site-collision-kernel}
\Qss(\sigma,\sigma';\tau,\tau')
=
\frac1n
\sum_{v\in V}
Q_{v,v}(\sigma,\sigma';\tau,\tau').
\end{equation}

\subsection{The nonlinear dynamics}

We now describe how a collision kernel induces a nonlinear evolution on
probability measures. Let $\kernelQ$ be a collision kernel on $\OmegaG\times\OmegaG$. For fixed
$\sigma,\sigma'\in\OmegaG$, define the marginal of the first coordinate by
\begin{equation}\label{eq:pre:first-coordinate-marginal}
Q(\tau\given \sigma,\sigma')
:=
\sum_{\tau'\in\OmegaG}
\kernelQ(\sigma,\sigma';\tau,\tau').
\end{equation}
Thus $Q(\cdot\given \sigma,\sigma')$ is the law of the first output
configuration of the collision\footnote{In the introduction, we used the term ``collision kernel'' to denote this Markov kernel~$Q$.}.

We view the nonlinear dynamics as a dynamical system
$p\mapsto T^t(p)$, $t\in\bbN$, on $\Pdist(\OmegaG)$. Here $T^0(p)=p$ is the initial
distribution, $T^t(p)$ is the distribution after $t$ steps, and
\[
T^t(p)=T^{t-1}(p)\circ T^{t-1}(p).
\]
One step of the dynamics is defined by
\begin{equation}\label{eq:pre:one-step-map}
p\mapsto p\circ p
:=
\sum_{\sigma,\sigma'\in\OmegaG}
p(\sigma)p(\sigma')Q(\cdot\given\sigma,\sigma').
\end{equation}
Equivalently, for every $\tau\in\OmegaG$,
\[
(p\circ p)(\tau)
=
\sum_{\sigma,\sigma'\in\OmegaG}
p(\sigma)p(\sigma')Q(\tau\given\sigma,\sigma')
=
\sum_{\sigma,\sigma',\tau'\in\OmegaG}
p(\sigma)p(\sigma')
\kernelQ(\sigma,\sigma';\tau,\tau').
\]

More generally, for $p,q\in\Pdist(\OmegaG)$, we define the commutative
collision product by
\begin{equation}\label{eq:pre:commutative-collision-product}
(p\circ q)(\tau)
=
\frac12
\sum_{\sigma,\sigma',\tau'\in\OmegaG}
\bigl(p(\sigma)q(\sigma')+p(\sigma')q(\sigma)\bigr)
\kernelQ(\sigma,\sigma';\tau,\tau').
\end{equation}
Thus $p\circ q=q\circ p$, and \eqref{eq:pre:one-step-map} is recovered by
taking $q=p$. Therefore, a collision kernel $\kernelQ$ induces a nonlinear dynamics $\{T^t(p)\}_{t\geq0}$ as we just described. We use the notation $T_{\kernelQ}$ to clarify the exchange kernel that induces the one step map $T$ when it is not clear from the context.

When $\kernelQ=\Qmf$, we call the corresponding nonlinear dynamics the mean-field dynamics, and denote its one step map by $\Tmf$.
Similarly, when $\kernelQ=\Qss$, we call the corresponding nonlinear dynamics the single-site dynamics, and denote its one step map by $\Tss$.

We will mainly work with the discrete-time dynamics defined above. For our entropy
decay result, we also use the associated continuous-time evolution given by
\begin{equation}\label{eq:pre:continuous-nonlinear-dynamics}
\frac{d p_t}{dt}
=
p_t\circ p_t-p_t,
\qquad p_0=p.
\end{equation}
The existence and uniqueness for the Cauchy problem \eqref{eq:pre:continuous-nonlinear-dynamics} can be established by standard methods, and the solution $(p_t)_{t\ge0}$ admits an explicit probabilistic construction in terms of random trees. We refer to \cite{caputo2025kac}
and references therein for more details.  
\subsection{Conservation laws}
In the dynamics defined above, the map $p \mapsto p \circ p$ has a conservation property. The
mean-field dynamics preserves the density, while the single-site dynamics
preserves the marginal vector.

\begin{lemma}\label{lem:pre:conservation-laws}
Let $p,q\in\Pdist(\OmegaG)$.

\begin{enumerate}[label=(\roman*)]
    \item For the collision product associated with the mean-field kernel $\Qmf$,
    \begin{equation}\label{eq:pre:mean-field-conservation}
    \alpha(p\circ q)=\frac12\alpha(p)+\frac12\alpha(q).
    \end{equation}
    In particular,
    \[
    \alpha(\Tmf(p))=\alpha(p).
    \]

    \item For the collision product associated with the single-site kernel $\Qss$, for every $v\in V$,
    \begin{equation}\label{eq:pre:single-site-conservation}
    m_v(p\circ q)=\frac12 m_v(p)+\frac12 m_v(q).
    \end{equation}
    In particular,
    \[
    \vect m(\Tss(p))=\vect m(p).
    \]
\end{enumerate}
\end{lemma}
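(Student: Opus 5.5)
The plan is to work at the level of a single collision and use the mild exchange symmetry \eqref{eq:pre:mild-exchange-symmetry}. The key is a pair-level conservation identity. For a fixed pair $(\sigma,\sigma')$ and fixed sites $(v,u)$, let $(\tau,\tau')=\Phi_{v,u}(\sigma,\sigma')$. Then
\[
|\tau|+|\tau'|=|\sigma|+|\sigma'|.
\]
For the single-site case $u=v$ we have the finer identity $\tau_w+\tau'_w=\sigma_w+\sigma'_w$ for every $w\in V$. Both identities are immediate from \eqref{eq:pre:exchange-map}. If the exchange is rejected, nothing changes. If it is accepted, the bit $\sigma_v$ leaves $\sigma$ and enters $\sigma'$ at site $u$, while $\sigma'_u$ moves the other way. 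So the total number of occupied sites is unchanged. When $u=v$, the multiset of values at site $v$ is unchanged and all other sites are untouched.

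Next I write $\alpha(p\circ q)=\frac1n\sum_\tau(p\circ q)(\tau)|\tau|$ using \eqref{eq:pre:commutative-collision-product}:
\[
\alpha(p\circ q)=\frac{1}{2n}\sum_{\sigma,\sigma',\tau,\tau'}\bigl(p(\sigma)q(\sigma')+p(\sigma')q(\sigma)\bigr)\kernelQ(\sigma,\sigma';\tau,\tau')\,|\tau|.
\]
In the term with $p(\sigma')q(\sigma)$, I relabel $(\sigma,\sigma',\tau,\tau')\leftrightarrow(\sigma',\sigma,\tau',\tau)$ and apply \eqref{eq:pre:mild-exchange-symmetry}. This turns it into $\sum p(\sigma)q(\sigma')\kernelQ(\sigma,\sigma';\tau,\tau')|\tau'|$. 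Hence
\[
\alpha(p\circ q)=\frac1{2n}\sum p(\sigma)q(\sigma')\sum_{\tau,\tau'}\kernelQ(\sigma,\sigma';\tau,\tau')\bigl(|\tau|+|\tau'|\bigr).
\]
Since $\kernelQ(\sigma,\sigma';\cdot,\cdot)$ is a probability distribution supported on the outputs $\Phi_{v,u}(\sigma,\sigma')$, the pair identity replaces $|\tau|+|\tau'|$ by $|\sigma|+|\sigma'|$. Summing then gives $\frac1{2n}\bigl(\E_p|\sigma|+\E_q|\sigma'|\bigr)=\frac12\alpha(p)+\frac12\alpha(q)$.

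The single-site statement follows by the same computation with $|\tau|$ replaced by $\tau_w$, using the sitewise identity.

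The "in particular" claims follow by taking $q=p$, since $T(p)=p\circ p$.

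There is no real obstacle here. The only point needing care is the symmetrization step. The first-coordinate kernel $Q(\tau\mid\sigma,\sigma')$ alone does not conserve $|\sigma|$, because the mass shifts to the second configuration. The conservation therefore appears only after averaging over both orderings of the inputs, which is exactly what the symmetry of $\transK$ provides.
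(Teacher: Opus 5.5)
Your proof is correct and follows essentially the same route as the paper. The paper states the general observation that, for any $F$ with $F(\tau)+F(\tau')=F(\sigma)+F(\sigma')$ on the support of $\kernelQ$, mild exchange symmetry gives $\E_{p\circ q}[F]=\frac12\E_p[F]+\frac12\E_q[F]$, and it then applies this with $F(\sigma)=|\sigma|$ and $F(\sigma)=\sigma_v$, using the same relabeling of the second term as you do.
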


\begin{proof}
We use the following observation. Let $\kernelQ$ be a collision kernel satisfying
the mild exchange symmetry \eqref{eq:pre:mild-exchange-symmetry}, and let
$F:\OmegaG\to\mathbb R$ be such that, whenever
$\kernelQ(\sigma,\sigma';\tau,\tau')>0$,
\[
F(\tau)+F(\tau')=F(\sigma)+F(\sigma').
\]
Then
\begin{equation} \label{eqn:obs}
\E_{\tau\sim(p\circ q)}[F(\tau)]
=
\frac12\E_{\sigma\sim p}[F(\sigma)]
+
\frac12\E_{\sigma'\sim q}[F(\sigma')].
\end{equation}
Indeed, by the definition of the commutative collision product in
\eqref{eq:pre:commutative-collision-product},
\begin{align*}
\E_{\tau\sim(p\circ q)}[F(\tau)]
&=
\sum_{\tau\in\OmegaG}F(\tau)(p\circ q)(\tau) \\
&=
\frac12
\sum_{\sigma,\sigma',\tau,\tau'\in\OmegaG}
p(\sigma)q(\sigma')
\kernelQ(\sigma,\sigma';\tau,\tau')F(\tau) \\
&\qquad+
\frac12
\sum_{\sigma,\sigma',\tau,\tau'\in\OmegaG}
p(\sigma')q(\sigma)
\kernelQ(\sigma,\sigma';\tau,\tau')F(\tau).
\end{align*}
For the second term, exchanging the variables $\sigma$ and $\sigma'$, using
\eqref{eq:pre:mild-exchange-symmetry}, and then exchanging the variables $\tau$
and $\tau'$ gives
\[
\frac12
\sum_{\sigma,\sigma',\tau,\tau'\in\OmegaG}
p(\sigma)q(\sigma')
\kernelQ(\sigma,\sigma';\tau,\tau')F(\tau').
\]
Therefore,
\begin{align*}
\E_{\tau\sim(p\circ q)}[F(\tau)]
&=
\frac12
\sum_{\sigma,\sigma',\tau,\tau'\in\OmegaG}
p(\sigma)q(\sigma')
\kernelQ(\sigma,\sigma';\tau,\tau')
\bigl(F(\tau)+F(\tau')\bigr) \\
&=
\frac12
\sum_{\sigma,\sigma'\in\OmegaG}
p(\sigma)q(\sigma')
\bigl(F(\sigma)+F(\sigma')\bigr)
\sum_{\tau,\tau'\in\OmegaG}
\kernelQ(\sigma,\sigma';\tau,\tau') \\
&=
\frac12\E_{\sigma\sim p}[F(\sigma)]
+
\frac12\E_{\sigma'\sim q}[F(\sigma')].
\end{align*}

We now apply this observation to the two kernels. For the mean-field kernel
$\Qmf$, every allowed exchange preserves the total number of occupied vertices
in the pair, and if the exchange is not admissible, the pair is left unchanged.
Hence, whenever $\Qmf(\sigma,\sigma';\tau,\tau')>0$,
\[
|\tau|+|\tau'|=|\sigma|+|\sigma'|.
\]
Applying~\eqref{eqn:obs} with $\kernelQ=\Qmf$ and $F(\sigma)=|\sigma|$ gives
\[
\E_{\tau\sim(p\circ q)}[|\tau|]
=
\frac12\E_{\sigma\sim p}[|\sigma|]
+
\frac12\E_{\sigma'\sim q}[|\sigma'|].
\]
Dividing by $n$ proves \eqref{eq:pre:mean-field-conservation}. Taking
$p=q$ implies $\alpha(\Tmf(p))=\alpha(p)$.

For the single-site kernel $\Qss$, fix $v\in V$. The value at vertex $v$
is exchanged only with the value at the same vertex in the other independent
set. Therefore, whenever $\Qss(\sigma,\sigma';\tau,\tau')>0$,
\[
\tau_v+\tau'_v=\sigma_v+\sigma'_v.
\]
Applying~\eqref{eqn:obs} with $\kernelQ=\Qss$ and $F(\sigma)=\sigma_v$ gives
\[
m_v(p\circ q)=\frac12 m_v(p)+\frac12 m_v(q).
\]
This proves \eqref{eq:pre:single-site-conservation}. Taking $q=p$ gives
$\vect m(\Tss(p))=\vect m(p)
$.
\end{proof}

\subsection{Reversibility and Gibbs stationary measures}

Let $\kernelQ$ be a collision kernel on $\OmegaG\times\OmegaG$. We say that
a measure $\mu\in\Pdist(\OmegaG)$ is \emph{reversible} with respect to $\kernelQ$ if $\mu \otimes \mu$ satisfies the following detailed balance condition
\begin{equation}\label{eq:pre:detailed-balance}
\mu(\sigma)\mu(\sigma')
\kernelQ(\sigma,\sigma';\tau,\tau')
=
\mu(\tau)\mu(\tau')
\kernelQ(\tau,\tau';\sigma,\sigma')
\end{equation}
for all $\sigma,\sigma',\tau,\tau'\in\OmegaG$. Equivalently, we also say
that $\kernelQ$ is reversible with respect to $\mu$. In this case,
$\mu\otimes\mu$ is reversible for the Markov chain on the pair space
$\OmegaG\times\OmegaG$ with transition kernel $\kernelQ$.

We say that $\mu\in\Pdist(\OmegaG)$ is \emph{stationary} for the nonlinear dynamics
associated with $\kernelQ$ if
\[
\mu\circ\mu=\mu.
\]

If $\mu$ is reversible with respect to $\kernelQ$, then $\mu$ is stationary
for the nonlinear dynamics associated with $\kernelQ$. Indeed, reversibility
implies that $\mu\otimes\mu$ is stationary for the Markov chain on
$\OmegaG\times\OmegaG$ with transition kernel $\kernelQ$. Therefore, for
every $\tau\in\OmegaG$,
\[
(\mu\circ\mu)(\tau)
=
\sum_{\tau'\in\OmegaG} \sum_{\sigma, \sigma' \in \OmegaG}
\mu(\sigma)\mu(\sigma')
\kernelQ(\sigma,\sigma';\tau,\tau')
=
\sum_{\tau'\in\OmegaG}
\mu(\tau)\mu(\tau')
=
\mu(\tau).
\]
Thus $\mu\circ\mu=\mu$.\\

In the next lemma, we prove that the relevant hard-core Gibbs measures are stationary for our nonlinear dynamics.
\begin{lemma}\label{lem:pre:gibbs-stationary-measures}
The following hold.
\begin{enumerate}[label=(\roman*)]
    \item For every $\lambda>0$, the measure $\mu_{G,\lambda}$ is reversible
    with respect to $\Qmf$. In particular, $\mu_{G,\lambda}$ is stationary
    for the mean-field dynamics.

    \item For every $\vect\lambda\in(0,\infty)^V$, the measure
    $\mu_{G,\vect\lambda}$ is reversible with respect to $\Qss$.
    In particular, $\mu_{G,\vect\lambda}$ is stationary for the
    single-site dynamics.
\end{enumerate}
\end{lemma}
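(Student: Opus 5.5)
The plan is to verify detailed balance \eqref{eq:pre:detailed-balance} term by term. It suffices to check it for each local kernel, paired appropriately. Since $\Qmf=\frac1{n^2}\sum_{v,u}Q_{v,u}$ and $\Qss=\frac1n\sum_v Q_{v,v}$, I will prove the following for a product-form weight $\mu(\sigma)\propto\prod_{w:\sigma_w=1}\lambda_w$: for every $v,u$,
\begin{equation*}
\mu(\sigma)\mu(\sigma')Q_{v,u}(\sigma,\sigma';\tau,\tau')=\mu(\tau)\mu(\tau')Q_{v,u}(\tau,\tau';\sigma,\sigma').
\end{equation*}
This needs $\lambda_v=\lambda_u$ when $v\ne u$, which covers case (i). In case (ii) only $u=v$ occurs, so arbitrary $\vect\lambda$ is allowed. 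Summing over $(v,u)$ with the weights then gives reversibility of $\Qmf$ and $\Qss$ respectively. Stationarity follows from the remark preceding the lemma: reversibility implies $\mu\circ\mu=\mu$.

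For the local identity I split into cases according to $\Phi_{v,u}$.

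\textbf{Trivial transitions.} If $(\tau,\tau')=(\sigma,\sigma')$, both sides coincide.

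\textbf{Nontrivial transitions.} Otherwise $Q_{v,u}(\sigma,\sigma';\tau,\tau')=1$ requires the exchange to be admissible, $\sigma_v\ne\sigma'_u$, and $(\tau,\tau')=(\sigma,\sigma')^{v,u}$. The key structural fact is that the exchange map is an involution: $((\sigma,\sigma')^{v,u})^{v,u}=(\sigma,\sigma')$. The new pair $(\tau,\tau')$ consists of independent sets, and exchanging back yields $(\sigma,\sigma')$, which are independent sets. So $(v,u)$ is admissible for $(\tau,\tau')$, and $Q_{v,u}(\tau,\tau';\sigma,\sigma')=1$ as well. I also need that no other pair can be reached with positive probability in a way that breaks symmetry. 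This holds because $Q_{v,u}(\sigma,\sigma';\cdot)$ is a point mass at $\Phi_{v,u}(\sigma,\sigma')$, and $\Phi_{v,u}$ is an involution on $\OmegaG\times\OmegaG$ (the identity on non-admissible pairs). Hence $Q_{v,u}(\sigma,\sigma';\tau,\tau')=Q_{v,u}(\tau,\tau';\sigma,\sigma')$ for all arguments.

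\textbf{Equality of weights.} It remains to show $\mu(\sigma)\mu(\sigma')=\mu(\tau)\mu(\tau')$ whenever the transition is nontrivial. Writing $\mu(\sigma)\mu(\sigma')\propto\prod_w\lambda_w^{\sigma_w+\sigma'_w}$, the exchange moves a single occupied bit, say from site $v$ of $\sigma$ to site $u$ of $\tau'$, or conversely. The product therefore changes by the factor $\lambda_u/\lambda_v$ or its inverse. This factor equals $1$ in the uniform case, or trivially when $u=v$ as in the single-site case. The normalizing constants $Z_G$ cancel on both sides.

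I do not expect a genuine obstacle. The only point needing care is the involution property of $\Phi_{v,u}$: admissibility must be symmetric under reversing the move, so that the reverse transition has the same kernel weight rather than being rejected.
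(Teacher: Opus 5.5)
Your proposal is correct and follows essentially the same route as the paper. The paper likewise shows that each local kernel $Q_{v,u}$ is symmetric because the exchange is its own inverse, then uses conservation of $|\sigma|+|\sigma'|$ (mean-field) or of each $\sigma_x+\sigma'_x$ (single-site) to get $\mu(\sigma)\mu(\sigma')=\mu(\tau)\mu(\tau')$, and sums the local detailed-balance identities; your factor $\lambda_u/\lambda_v$ is just an equivalent way of expressing that conservation.
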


\begin{proof}
We first observe that, for every $v,u\in V$, the local exchange kernel is
symmetric on the pair space:
\[
Q_{v,u}(\sigma,\sigma';\tau,\tau')
=
Q_{v,u}(\tau,\tau';\sigma,\sigma').
\]
Note that the exchange is its own inverse:
$(\tau,\tau')=(\sigma,\sigma')^{v,u}$ if and only if
$(\sigma,\sigma')=(\tau,\tau')^{v,u}$. So, if either side of the equality is positive, the other side is also positive and has the same value, and hence the equality follows.

For the mean-field kernel, if $Q_{v,u}(\sigma,\sigma';\tau,\tau')>0$, then
\[
|\sigma|+|\sigma'|=|\tau|+|\tau'|.
\]
Hence, for the uniform hard-core measure,
\[
\mu_{G,\lambda}(\sigma)\mu_{G,\lambda}(\sigma')
=
\mu_{G,\lambda}(\tau)\mu_{G,\lambda}(\tau').
\]
Together with the symmetry of $Q_{v,u}$, this gives
\[
\mu_{G,\lambda}(\sigma)\mu_{G,\lambda}(\sigma')
Q_{v,u}(\sigma,\sigma';\tau,\tau')
=
\mu_{G,\lambda}(\tau)\mu_{G,\lambda}(\tau')
Q_{v,u}(\tau,\tau';\sigma,\sigma').
\]
By the definition of $\Qmf$ in \eqref{eq:pre:mean-field-collision-kernel} we have
\begin{align*}
\mu_{G,\lambda}(\sigma)\mu_{G,\lambda}(\sigma')
\Qmf(\sigma,\sigma';\tau,\tau')
&=
\frac{1}{n^2}
\sum_{v,u\in V}
\mu_{G,\lambda}(\sigma)\mu_{G,\lambda}(\sigma')
Q_{v,u}(\sigma,\sigma';\tau,\tau') \\
&=
\frac{1}{n^2}
\sum_{v,u\in V}
\mu_{G,\lambda}(\tau)\mu_{G,\lambda}(\tau')
Q_{v,u}(\tau,\tau';\sigma,\sigma') \\
&=
\mu_{G,\lambda}(\tau)\mu_{G,\lambda}(\tau')
\Qmf(\tau,\tau';\sigma,\sigma').
\end{align*}

This proves that $\mu_{G,\lambda}$ is reversible with respect to $\Qmf$, and so it is stationary for the mean-field dynamics.

For the single-site kernel, if $Q_{v,v}(\sigma,\sigma';\tau,\tau')>0$, then
the total occupation at every vertex is preserved across the pair:
\[
\sigma_x+\sigma'_x=\tau_x+\tau'_x,
\qquad x\in V.
\]
Therefore, for the nonuniform hard-core measure,
\[
\mu_{G,\vect\lambda}(\sigma)
\mu_{G,\vect\lambda}(\sigma')
=
\mu_{G,\vect\lambda}(\tau)
\mu_{G,\vect\lambda}(\tau').
\]
Together with the symmetry of $Q_{v,v}$, this gives
\[
\mu_{G,\vect\lambda}(\sigma)
\mu_{G,\vect\lambda}(\sigma')
Q_{v,v}(\sigma,\sigma';\tau,\tau')
=
\mu_{G,\vect\lambda}(\tau)
\mu_{G,\vect\lambda}(\tau')
Q_{v,v}(\tau,\tau';\sigma,\sigma').
\]
By the definition of $\Qss$ in \eqref{eq:pre:single-site-collision-kernel}, we have
\begin{align*}
&\mu_{G,\vect\lambda}(\sigma)
\mu_{G,\vect\lambda}(\sigma')
\Qss(\sigma,\sigma';\tau,\tau') \\
&\qquad =
\frac1n
\sum_{v\in V}
\mu_{G,\vect\lambda}(\sigma)
\mu_{G,\vect\lambda}(\sigma')
Q_{v,v}(\sigma,\sigma';\tau,\tau') \\
&\qquad =
\frac1n
\sum_{v\in V}
\mu_{G,\vect\lambda}(\tau)
\mu_{G,\vect\lambda}(\tau')
Q_{v,v}(\tau,\tau';\sigma,\sigma') \\
&\qquad =
\mu_{G,\vect\lambda}(\tau)
\mu_{G,\vect\lambda}(\tau')
\Qss(\tau,\tau';\sigma,\sigma').
\end{align*}
This proves that $\mu_{G,\vect\lambda}$ is reversible with respect to
$\Qss$, and hence it is stationary for the single-site dynamics.
\end{proof}

\section{Qualitative Convergence}

\subsection{A general convergence criterion}

We first state a general convergence criterion for nonlinear dynamics of
the form defined in the previous section. Let $\kernelQ$ be a collision kernel on
$\OmegaG\times\OmegaG$, and let $T$ be the corresponding one step map.

We say that $\kernelQ$ is \emph{balanced} with respect to
$\mu\in\Ppos(\OmegaG)$ if $\mu\otimes\mu$ is stationary for the Markov chain
on $\OmegaG\times\OmegaG$ with transition kernel $\kernelQ$, that is,
\begin{equation}\label{eqn:balanced}
\sum_{\sigma,\sigma'\in\OmegaG}
\mu(\sigma)\mu(\sigma')
\kernelQ(\sigma,\sigma';\tau,\tau')
=
\mu(\tau)\mu(\tau')
\qquad
\forall \tau,\tau'\in\OmegaG .
\end{equation}
In particular, if $\kernelQ$ is reversible with respect to $\mu$, then
$\kernelQ$ is balanced with respect to $\mu$.

We say that $p\in\Pdist(\OmegaG)$ is \emph{irreducible} for the nonlinear
dynamics induced by $\kernelQ$ if there exist $t_0\ge 0$ and $\eps>0$ such
that
\[
T^t(p)(\sigma)\ge \eps
\qquad
\forall t\ge t_0,\ \forall \sigma\in\OmegaG .
\]

We will use the following convergence criterion from
\cite[Theorem~2.8]{caputo2024nonlineardynamicsisingmodel}. 

\begin{theorem}\label{thm:qual:general-convergence}
Let $\kernelQ$ be a collision kernel on $\OmegaG\times\OmegaG$. Assume that
$\kernelQ$ is balanced with respect to some $\mu\in\Ppos(\OmegaG)$, and that
$\kernelQ$ satisfies
\begin{equation}\label{eq:qual:strong-pair-symmetry}
\kernelQ(\sigma,\sigma';\tau,\tau')
=
\kernelQ(\sigma,\sigma';\tau',\tau)
=
\kernelQ(\sigma',\sigma;\tau,\tau')
\end{equation}
for all $\sigma,\sigma',\tau,\tau'\in\OmegaG$. Assume also that
\begin{equation}\label{eq:qual:positive-diagonal}
\kernelQ(\sigma,\sigma';\sigma,\sigma')>0
\qquad
\forall \sigma,\sigma'\in\OmegaG .
\end{equation}
Then, for every irreducible initial distribution $p\in\Pdist(\OmegaG)$,
the sequence $\{T^t(p)\}_{t\ge 0}$ converges in total variation distance to a measure
$\nu\in\Ppos(\OmegaG)$ which is stationary for the nonlinear dynamics.
\end{theorem}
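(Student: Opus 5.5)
The plan is to use relative entropy with respect to a balanced measure as a Lyapunov functional, obtained by lifting one step of the nonlinear dynamics to one step of the linear Markov chain $\kernelQ$ on the pair space. For any $\rho\in\Ppos(\OmegaG)$ such that $\kernelQ$ is balanced with respect to $\rho$, I would prove the one-step inequality
\[
H(T(p)\mid\rho)\le H(p\mid\rho),
\qquad p\in\Pdist(\OmegaG).
\]
Write $\Pi:=(p\otimes p)\kernelQ$ for the law of the output pair $(\tau,\tau')$ after one collision, where $(\sigma,\sigma')\sim p\otimes p$.

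The argument has two inequalities.
\begin{enumerate}[label=(\alph*)]
\item Since $\rho\otimes\rho$ is stationary for $\kernelQ$, the data-processing inequality gives
\[
H(\Pi\mid\rho\otimes\rho)\le H(p\otimes p\mid\rho\otimes\rho)=2H(p\mid\rho).
\]
\item By the symmetry \eqref{eq:qual:strong-pair-symmetry}, both marginals of $\Pi$ equal $T(p)$. Relative entropy with respect to a product measure dominates the sum of the marginal relative entropies, so
\[
2H(T(p)\mid\rho)\le H(\Pi\mid\rho\otimes\rho).
\]
\end{enumerate}
Combining (a) and (b) gives the claim. Applied with $\rho=\mu$, this shows that $h_t:=H(T^t(p)\mid\mu)$ is nonincreasing, and hence converges.

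Next I would identify the limit points. Since $\Pdist(\OmegaG)$ is compact, $T^{t_k}(p)\to\nu$ along some subsequence. Irreducibility gives $\nu(\sigma)\ge\eps$ for all $\sigma$, so $\nu\in\Ppos(\OmegaG)$. The map $T$ is continuous, and $H(\cdot\mid\mu)$ is continuous on $\Pdist(\OmegaG)$ because $\mu>0$ on the finite space $\OmegaG$. Since $h_{t_k}$ and $h_{t_k+1}$ have the same limit, we get $H(T(\nu)\mid\mu)=H(\nu\mid\mu)$, so equality holds in both (a) and (b) at $\nu$.

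The key step, which I expect to be the main obstacle, is turning equality in (a) into stationarity. Let $h:=(\nu\otimes\nu)/(\mu\otimes\mu)$ on $\OmegaG\times\OmegaG$. Equality in data processing for the strictly convex function $x\log x$ forces $h$ to be constant on all sources feeding a given target, i.e.\ $h(\sigma,\sigma')=h(\tau,\tau')$ whenever $\kernelQ(\sigma,\sigma';\tau,\tau')>0$. The positive diagonal \eqref{eq:qual:positive-diagonal} is what makes this usable: every target $(\tau,\tau')$ is fed by itself, so the common value of $h$ over the sources of $(\tau,\tau')$ is $h(\tau,\tau')$. Then balance gives
\[
\sum_{\sigma,\sigma'}\nu(\sigma)\nu(\sigma')\kernelQ(\sigma,\sigma';\tau,\tau')
=\sum_{\sigma,\sigma'}\mu(\sigma)\mu(\sigma')h(\sigma,\sigma')\kernelQ(\sigma,\sigma';\tau,\tau')
=h(\tau,\tau')\,\mu(\tau)\mu(\tau')
=\nu(\tau)\nu(\tau').
\]
Hence $\nu\otimes\nu$ is stationary for $\kernelQ$. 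Taking the first marginal gives $\nu\circ\nu=\nu$, and in addition $\kernelQ$ is balanced with respect to $\nu$.

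Finally, I would upgrade subsequential convergence to convergence of the whole sequence. The same computation shows $\kernelQ$ is balanced with respect to the positive measure $\nu$, so the one-step inequality applies with $\rho=\nu$: $H(T^t(p)\mid\nu)$ is nonincreasing in $t$. Along the subsequence $t_k$ it tends to $H(\nu\mid\nu)=0$, so it tends to $0$ along the full sequence. Pinsker's inequality then gives $T^t(p)\to\nu$ in total variation distance, where $\nu\in\Ppos(\OmegaG)$ is stationary. This argument does not need uniqueness of stationary measures; the conserved quantities are only used later to identify $\nu$.
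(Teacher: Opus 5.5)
Your proof is correct. There is no in-paper argument to compare it with: the paper does not prove this statement but imports it from \cite[Theorem~2.8]{caputo2024nonlineardynamicsisingmodel}, so your proposal supplies what the paper takes on citation.

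The steps check out as follows.
\begin{itemize}
\item The one-step bound $H(T(p)\mid\rho)\le H(p\mid\rho)$ follows from data processing against the $\kernelQ$-invariant measure $\rho\otimes\rho$. You combine this with the fact that relative entropy with respect to $\rho\otimes\rho$ dominates the sum of the two marginal relative entropies, and both marginals equal $T(p)$ by the output symmetry.
\item In the equality step, you write $(\nu\otimes\nu)\kernelQ/(\mu\otimes\mu)$ as the average of $h$ under the time-reversed kernel $\mu(\sigma)\mu(\sigma')\kernelQ(\sigma,\sigma';\tau,\tau')/(\mu(\tau)\mu(\tau'))$. Strict convexity of $x\log x$ then forces $h$ to be constant on the sources of each target, and the positive diagonal identifies that constant as $h(\tau,\tau')$.
\item Re-running the monotonicity with $\rho=\nu$ correctly upgrades subsequential convergence to convergence of the whole sequence, with no uniqueness input.
\end{itemize}
You only use the output half of \eqref{eq:qual:strong-pair-symmetry}. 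You use irreducibility only to guarantee $\nu\in\Ppos(\OmegaG)$, which is what makes $H(\cdot\mid\nu)$ finite and continuous in the last step.

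Beyond making the section self-contained, your route also gives something the paper obtains from a second citation. For any stationary $p$, the identity $H(T(p)\mid\mu)=H(p\mid\mu)$ holds trivially. Your equality analysis, applied to the symmetrized kernel $\overline{\kernelQ}$, then gives $p(\sigma)p(\sigma')/(\mu(\sigma)\mu(\sigma'))=p(\tau)p(\tau')/(\mu(\tau)\mu(\tau'))$ whenever $\overline{\kernelQ}(\sigma,\sigma';\tau,\tau')>0$. The Gibbs measures used here satisfy $\mu(\sigma)\mu(\sigma')=\mu(\tau)\mu(\tau')$ along admissible exchanges, so this is exactly the identity \eqref{eq:qual:stationary-multiplicative-identity}. \Cref{lem:qual:positive-stationary-states} derives that identity by invoking \cite[Proposition~2.10]{caputo2024nonlineardynamicsisingmodel}.
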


Our exchange kernels satisfy the mild exchange symmetry
\eqref{eq:pre:mild-exchange-symmetry}, rather than the stronger symmetry
\eqref{eq:qual:strong-pair-symmetry}. However, this is enough. Given such a
kernel $\kernelQ$, define
\[
\overline{\kernelQ}(\sigma,\sigma';\tau,\tau')
:=
\frac12
\left(
\kernelQ(\sigma,\sigma';\tau,\tau')
+
\kernelQ(\sigma',\sigma;\tau,\tau')
\right).
\]
Then, using the mild exchange symmetry
\eqref{eq:pre:mild-exchange-symmetry}, one can easily show that
$\overline{\kernelQ}$ satisfies \eqref{eq:qual:strong-pair-symmetry}.

Moreover, this symmetrization keeps the kernel balanced. Indeed, if
$\kernelQ$ is balanced with respect to $\mu$, then for every
$\tau,\tau'\in\OmegaG$,
\begin{align*}
&\sum_{\sigma,\sigma'\in\OmegaG}
\mu(\sigma)\mu(\sigma')
\overline{\kernelQ}(\sigma,\sigma';\tau,\tau') \\
&\qquad =
\frac12
\sum_{\sigma,\sigma'\in\OmegaG}
\mu(\sigma)\mu(\sigma')
\kernelQ(\sigma,\sigma';\tau,\tau') \\
&\qquad\quad+
\frac12
\sum_{\sigma,\sigma'\in\OmegaG}
\mu(\sigma)\mu(\sigma')
\kernelQ(\sigma',\sigma;\tau,\tau') \\
&\qquad =
\sum_{\sigma,\sigma'\in\OmegaG}
\mu(\sigma)\mu(\sigma')
\kernelQ(\sigma,\sigma';\tau,\tau')
=
\mu(\tau)\mu(\tau'),
\end{align*}
where in the third equality we exchange the variables $\sigma$ and
$\sigma'$ in the second sum. Also, if $\kernelQ$ satisfies the positive
diagonal condition \eqref{eq:qual:positive-diagonal}, then so does
$\overline{\kernelQ}$.

Finally, $\kernelQ$ and $\overline{\kernelQ}$ induce the same nonlinear map:
\[
T_{\overline{\kernelQ}}(p)=T_{\kernelQ}(p),
\qquad p\in\Pdist(\OmegaG).
\]
The reason is that for every $\tau\in\OmegaG$,
\begin{align*}
T_{\overline{\kernelQ}}(p)(\tau)
&=
\sum_{\sigma,\sigma',\tau'\in\OmegaG}
p(\sigma)p(\sigma')
\overline{\kernelQ}(\sigma,\sigma';\tau,\tau') \\
&=
\frac12
\sum_{\sigma,\sigma',\tau'\in\OmegaG}
p(\sigma)p(\sigma')
\kernelQ(\sigma,\sigma';\tau,\tau') \\
&\qquad+
\frac12
\sum_{\sigma,\sigma',\tau'\in\OmegaG}
p(\sigma)p(\sigma')
\kernelQ(\sigma',\sigma;\tau,\tau') \\
&=
\sum_{\sigma,\sigma',\tau'\in\OmegaG}
p(\sigma)p(\sigma')
\kernelQ(\sigma,\sigma';\tau,\tau')
=
T_{\kernelQ}(p)(\tau),
\end{align*}
where in the third equality we exchange the variables $\sigma$ and
$\sigma'$ in the second sum. Thus we may apply
\Cref{thm:qual:general-convergence} to the symmetrized kernel without
changing the nonlinear dynamics.

\subsection{Stationary distribution and positive diagonal}

In order to apply \Cref{thm:qual:general-convergence} to our dynamics, we need to show that our exchange kernels satisfy the conditions of this theorem. We first check the positive diagonal condition in
\eqref{eq:qual:positive-diagonal}.

\begin{lemma}\label{lem:qual:positive-diagonal}
Assume that $E\ne\emptyset$. Then both $\Qmf$ and $\Qss$ satisfy the positive
diagonal condition:
\[
\Qmf(\sigma,\sigma';\sigma,\sigma')>0,
\qquad
\Qss(\sigma,\sigma';\sigma,\sigma')>0
\]
for all $\sigma,\sigma'\in\OmegaG$. Consequently, the corresponding
symmetrized kernels also satisfy the positive diagonal condition.
\end{lemma}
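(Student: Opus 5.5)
The plan is to exhibit, for each pair $(\sigma,\sigma')$, at least one exchange $(v,u)$ carrying positive weight in the kernel whose outcome $\Phi_{v,u}(\sigma,\sigma')$ equals $(\sigma,\sigma')$. Since $\Qmf$ and $\Qss$ are averages of the local kernels $Q_{v,u}$ with weights $1/n^2$ and $1/n$, a single such term already makes the diagonal entry positive.

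An exchange returns $(\sigma,\sigma')$ in two ways. Either the exchange is admissible but trivial, which happens when $\sigma_v=\sigma'_u$, since then $(\sigma,\sigma')^{v,u}=(\sigma,\sigma')$. Or the exchange is not admissible, in which case $Q_{v,u}$ leaves the pair unchanged by definition. So it suffices to find a pair $(v,u)$ with positive kernel weight and with $\sigma_v=\sigma'_u$ or the exchange inadmissible.

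For the \emph{single-site} kernel, consider any $v\in V$. If $\sigma_v=\sigma'_v$ we are done. Otherwise, say $\sigma_v=1$ and $\sigma'_v=0$. Here the hypothesis $E\neq\emptyset$ helps: pick an edge $\{x,y\}$. Since $\sigma$ is an independent set, at most one of $x,y$ lies in $\sigma$, and likewise for $\sigma'$. I would then show directly that some vertex $w$ has $\sigma_w=\sigma'_w$, or else that the exchange is inadmissible.

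\begin{itemize}
\item If $\sigma_x=\sigma'_x=0$, take $w=x$.
\item Otherwise, suppose every vertex disagrees. Then for the edge $\{x,y\}$ we have $\sigma_x\ne\sigma'_x$ and $\sigma_y\ne\sigma'_y$. Since $\sigma_x\sigma_y=0$ and $\sigma'_x\sigma'_y=0$, the only options are $(\sigma_x,\sigma_y)=(1,0)$ with $(\sigma'_x,\sigma'_y)=(0,1)$, or the symmetric case.
\item In that case, take $v=x$. Exchanging at $x$ would make $\tau'_x=1$ while $\tau'_y=\sigma'_y=1$, which is not an independent set. So the exchange is inadmissible and leaves the pair fixed.
\end{itemize}

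In every case some $w$ has $Q_{w,w}(\sigma,\sigma';\sigma,\sigma')=1$, so $\Qss(\sigma,\sigma';\sigma,\sigma')\ge 1/n$.

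For the \emph{mean-field} kernel, all pairs $(v,u)$ have weight $1/n^2$, and the same single-site witness $(w,w)$ works, since $Q_{w,w}$ appears in $\Qmf$. One could alternatively pick $v,u$ with $\sigma_v=\sigma'_u$ whenever both values occur, but reusing the single-site argument is cleanest. This gives $\Qmf(\sigma,\sigma';\sigma,\sigma')\ge 1/n^2$.

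For the symmetrized kernels, we have
\[
\overline{\kernelQ}(\sigma,\sigma';\sigma,\sigma')\ge \tfrac12\kernelQ(\sigma,\sigma';\sigma,\sigma')>0,
\]
because $\overline{\kernelQ}$ is half of $\kernelQ$ plus half of a nonnegative term.

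The only delicate step is the all-disagree case, where an edge is needed to force inadmissibility. Without edges, for example with $\sigma=\mathbf 1$ and $\sigma'=\mathbf 0$, every single-site exchange is admissible and nontrivial.
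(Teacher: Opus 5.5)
Your proof is correct and follows the paper's argument essentially step for step. If $\sigma$ and $\sigma'$ agree at some vertex, the single-site exchange there is trivial. In the complementary case $\sigma'=\mathbf 1-\sigma$, an edge $\{x,y\}$ makes the exchange at $x$ inadmissible. $\Qmf$ then inherits the bound through its $Q_{w,w}$ terms, and the symmetrized kernels through $\overline{\kernelQ}\ge\frac12\kernelQ$.

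The only blemish is presentational. Your bullets would read more cleanly as the exhaustive dichotomy ``some vertex agrees'' versus ``every vertex disagrees''. Your first bullet is just a special case of the former, and it omits $\sigma_x=\sigma'_x=1$, which is harmless once the split is stated that way.
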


\begin{proof}
We first prove the positive diagonal condition for the single-site kernel.
Fix $\sigma,\sigma'\in\OmegaG$. If there exists $v\in V$ such that
$\sigma_v=\sigma'_v$, then the single-site exchange at $v$ leaves the pair
unchanged. Hence
\[
Q_{v,v}(\sigma,\sigma';\sigma,\sigma')=1,
\]
and so $\Qss(\sigma,\sigma';\sigma,\sigma')>0$.

Otherwise, $\sigma_v\ne\sigma'_v$ for every $v\in V$, so the two independent
sets are complements of each other. Since $E\ne\emptyset$, choose an edge
$\{v,u\}\in E$. As both $\sigma$ and $\sigma'$ are independent sets, one
endpoint of this edge belongs to $\sigma$ and the other belongs to $\sigma'$.
Assume, without loss of generality, that $v\in\sigma$ and $u\in\sigma'$.
Then exchanging the value at vertex $v$ would put $v$ into the second
independent set, which already contains $u$. This would violate independence.
Thus the exchange at $v$ is not admissible, so the pair is left unchanged:
\[
Q_{v,v}(\sigma,\sigma';\sigma,\sigma')=1.
\]
Again $\Qss(\sigma,\sigma';\sigma,\sigma')>0$.

The mean-field case follows from the same argument, since the mean-field
kernel includes the single-site exchanges. More precisely, in both cases
above there exists $v\in V$ such that
\[
Q_{v,v}(\sigma,\sigma';\sigma,\sigma')=1.
\]
Therefore,
\[
\Qmf(\sigma,\sigma';\sigma,\sigma')
=
\frac1{n^2}
\sum_{u,w\in V}
Q_{u,w}(\sigma,\sigma';\sigma,\sigma')
\ge \frac1{n^2}>0.
\]

Finally, if a kernel $\kernelQ$ satisfies the positive diagonal condition,
then its symmetrization satisfies it as well, since
\[
\overline{\kernelQ}(\sigma,\sigma';\sigma,\sigma')
\ge
\frac12\kernelQ(\sigma,\sigma';\sigma,\sigma')>0.
\]
This completes the proof.
\end{proof}

Another condition in \Cref{thm:qual:general-convergence} is the balance property~\eqref{eqn:balanced}.
By \Cref{lem:pre:gibbs-stationary-measures}, the kernel
$\Qmf$ is reversible with respect to $\mu_{G,\lambda}$ for every
$\lambda>0$, and $\Qss$ is reversible with respect to
$\mu_{G,\vect\lambda}$ for every
$\vect\lambda\in(0,\infty)^V$. Hence these kernels are balanced with
respect to the corresponding Gibbs measures. Moreover, by the symmetrization
argument in the previous subsection, the same balancedness property holds for
the corresponding symmetrized kernels.

The next lemma identifies the stationary distributions with full support of the two nonlinear dynamics. This will allow us, after proving irreducibility, to
identify the limit given by \Cref{thm:qual:general-convergence} as the
hard-core Gibbs measure selected by the conserved quantity.

\begin{lemma}\label{lem:qual:positive-stationary-states}
Assume that $E \neq \emptyset$. Then the following hold.
\begin{enumerate}[label=(\roman*)]
    \item If $p\in\Ppos(\OmegaG)$ is stationary for the mean-field dynamics,
    then $p=\mu_{G,\lambda}$ for some $\lambda>0$.

    \item If $p\in\Ppos(\OmegaG)$ is stationary for the single-site dynamics,
    then $p=\mu_{G,\vect\lambda}$ for some
    $\vect\lambda\in(0,\infty)^V$.
\end{enumerate}
\end{lemma}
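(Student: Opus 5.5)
We will show that a fully supported stationary state $p$ has zero ``entropy production''. This forces the detailed-balance identity $p(\sigma)p(\sigma')=p(\tau)p(\tau')$ for every admissible exchange. We then read off the Gibbs form by colliding with the empty set. Since $p\in\Ppos(\OmegaG)$, the function $f:=\log p$ is finite everywhere on $\OmegaG$, so all the manipulations below are legitimate.

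\emph{Step 1 (entropy production vanishes).} Let $\kernelQ$ be $\Qmf$ or $\Qss$, and define
\[
D(p):=\sum_{\sigma,\sigma',\tau,\tau'\in\OmegaG}p(\sigma)p(\sigma')\kernelQ(\sigma,\sigma';\tau,\tau')\bigl(f(\sigma)+f(\sigma')-f(\tau)-f(\tau')\bigr).
\]
Each local kernel $Q_{v,u}$ is symmetric on pair space, as shown in the proof of \Cref{lem:pre:gibbs-stationary-measures}. Hence so is $\kernelQ$, and we may average $D(p)$ with its relabelling $(\sigma,\sigma')\leftrightarrow(\tau,\tau')$. This gives
\[
D(p)=\frac12\sum_{\sigma,\sigma',\tau,\tau'}\kernelQ(\sigma,\sigma';\tau,\tau')\bigl(p(\sigma)p(\sigma')-p(\tau)p(\tau')\bigr)\log\frac{p(\sigma)p(\sigma')}{p(\tau)p(\tau')}\ \ge 0 .
\]
Equality holds if and only if $p(\sigma)p(\sigma')=p(\tau)p(\tau')$ whenever $\kernelQ(\sigma,\sigma';\tau,\tau')>0$.

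On the other hand, the first half of $D(p)$ equals $2\E_p[f]$, because $\sum_{\tau,\tau'}\kernelQ(\sigma,\sigma';\tau,\tau')=1$. For the second half, the contribution of $f(\tau)$ is $\E_{p\circ p}[f]$. By the mild exchange symmetry \eqref{eq:pre:mild-exchange-symmetry}, used exactly as in the proof of \Cref{lem:pre:conservation-laws}, the second output $\tau'$ has the same law $p\circ p$ as $\tau$. So the second half equals $2\E_{p\circ p}[f]$. Stationarity $p\circ p=p$ then gives $D(p)=0$. I expect this step to be the main point of the argument; it is the analogue of the Boltzmann H-theorem. The only care needed is to check that the symmetry of $Q_{v,u}$ survives the averaging over $(v,u)$, which it does because the weights $\transK(v,u)/n$ do not depend on the configurations.

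\emph{Step 2 (Gibbs form, mean-field).} Fix $\sigma\in\OmegaG$, a vertex $v\in\sigma$ and any $u\in V$, and collide $\sigma$ with $\sigma'=\emptyset$ via the exchange $(v,u)$. The output $(\sigma\setminus\{v\},\{u\})$ consists of two independent sets, so the exchange is admissible and has positive probability under $\Qmf$. Step 1 therefore gives
\[
f(\sigma)+f(\emptyset)=f(\sigma\setminus\{v\})+f(\{u\}).
\]
Taking $\sigma=\{v\}$ shows that $f(\{u\})=f(\{v\})$ for all $u,v$; call this common value $f_1$. Induction on $|\sigma|$ then yields $f(\sigma)=f(\emptyset)+|\sigma|\,(f_1-f(\emptyset))$. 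Hence $p=\mu_{G,\lambda}$ with $\lambda=p(\{v\})/p(\emptyset)>0$.

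\emph{Step 3 (Gibbs form, single-site).} Repeat Step 2 with $u=v$. The exchange $(v,v)$ between $\sigma\ni v$ and $\emptyset$ produces $(\sigma\setminus\{v\},\{v\})$, which is admissible. This gives $f(\sigma)=f(\sigma\setminus\{v\})+f(\{v\})-f(\emptyset)$. Iterating over the vertices of $\sigma$ yields
\[
p(\sigma)=p(\emptyset)\prod_{v\in\sigma}\lambda_v,\qquad \lambda_v:=\frac{p(\{v\})}{p(\emptyset)}>0,
\]
so $p=\mu_{G,\vect\lambda}$. This completes the plan for both parts of the statement.
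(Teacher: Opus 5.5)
Your proof is correct. Its second half coincides with the paper's: you collide with $\emptyset$ via the exchange $(v,v)$ to peel off one vertex at a time, and via $(v,u)$ on the pair $(\{v\},\emptyset)$ to equalize the $\lambda_v$. The genuine difference lies in how you obtain the identity $p(\sigma)p(\sigma')=p(\tau)p(\tau')$ for admissible exchanges.

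The paper does not derive this identity directly. It passes to the symmetrized kernel $\overline{\kernelQ}$ and checks the strong pair symmetry, balancedness/reversibility, and the positive-diagonal condition. That last check is the only place where $E\neq\emptyset$ enters, via \Cref{lem:qual:positive-diagonal}. It then invokes \cite[Proposition~2.10]{caputo2024nonlineardynamicsisingmodel} as a black box to conclude that $p\otimes p$ is reversible for $\overline{\kernelQ}$.

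You instead prove detailed balance from scratch with an H-theorem computation. You write the entropy production $D(p)$ in two ways:
\begin{itemize}
\item as $2\E_p[\log p]-2\E_{p\circ p}[\log p]$, which vanishes at stationarity once mild exchange symmetry identifies the law of $\tau'$ with $p\circ p$;
\item in the symmetrized form $\tfrac12\sum\kernelQ\,(x-y)\log(x/y)\ge 0$, which is valid because each $Q_{v,u}$ is the indicator of the involution $\Phi_{v,u}$, so $\kernelQ$ is symmetric on pair space.
\end{itemize}
Both computations check out, and full support of $p$ gives exactly the equality case you need.

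Your route buys a self-contained, elementary argument that needs neither symmetrization nor the positive-diagonal condition. In fact it proves the lemma without the hypothesis $E\neq\emptyset$; the statement is also true when $E=\emptyset$. It relies on reversibility of the collision kernel: here with respect to the uniform measure $\mu_{G,1}$, which is why $\log p$ is the right test function. The paper's route is shorter on the page and reuses the same abstract framework as \Cref{thm:qual:general-convergence}, at the cost of depending on an external result whose hypotheses must be verified.
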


\begin{proof}
We first state a consequence of stationarity that will be used in both
cases. Let $\overline{\kernelQ}$ denote the symmetrized
kernel corresponding to one of $\Qmf$ or $\Qss$. Since
symmetrization does not change the nonlinear map, a stationary distribution for
the original dynamics is also stationary for the symmetrized dynamics. By the symmetrization argument above and by
\Cref{lem:qual:positive-diagonal}, the symmetrized kernel satisfies the
symmetry and positive diagonal assumptions of
\cite[Proposition~2.10]{caputo2024nonlineardynamicsisingmodel}. Moreover,
balancedness is preserved under symmetrization. We also note that the symmetrized kernel is reversible with respect to the
corresponding Gibbs measures. Indeed, the original kernel is reversible by
\Cref{lem:pre:gibbs-stationary-measures}, and the mild exchange symmetry
\eqref{eq:pre:mild-exchange-symmetry} implies that the same detailed balance
identity holds after symmetrization.

Thus, by \cite[Proposition~2.10]{caputo2024nonlineardynamicsisingmodel},
applied to the corresponding symmetrized kernel, $p\otimes p$ is reversible
with respect to that symmetrized kernel and it satisfies the detailed balance condition \eqref{eq:pre:detailed-balance}. Now observe that whenever an admissible exchange sends
$(\sigma,\sigma')$ to $(\tau,\tau')$, the symmetrized kernel satisfies
\[
\overline{\kernelQ}(\sigma,\sigma';\tau,\tau')
=
\overline{\kernelQ}(\tau,\tau';\sigma,\sigma')>0.
\]
The positivity follows because the original kernel assigns positive
probability to this admissible exchange, and the equality follows from the
mild exchange symmetry of the kernel.

Since $p\otimes p$ is reversible with respect to $\overline{\kernelQ}$, we have
\[
p(\sigma)p(\sigma')\overline{\kernelQ}(\sigma,\sigma';\tau,\tau')
=
p(\tau)p(\tau')\overline{\kernelQ}(\tau,\tau';\sigma,\sigma').
\]
Cancelling the common positive transition probability gives
\begin{equation}\label{eq:qual:stationary-multiplicative-identity}
    p(\sigma)p(\sigma') = p(\tau)p(\tau').
\end{equation}

Now, we prove the statement for the single-site dynamics. For each
$v\in V$, define
\[
\lambda_v:=\frac{p(\{v\})}{p(\emptyset)}.
\]
This is well-defined since $p\in\Ppos(\OmegaG)$.

Let $\sigma\in\OmegaG$, and write $\sigma=\{v_1,\ldots,v_k\}$. Define
\[
\sigma^{(0)}:=\sigma,
\qquad
\sigma^{(i)}:=\sigma^{(i-1)}\setminus\{v_i\},
\qquad 1\le i\le k.
\]
For each $i=0,\ldots,k-1$, the single-site exchange at $v_{i+1}$ is
admissible and
\[
\Phi_{v_{i+1},v_{i+1}}(\sigma^{(i)},\emptyset)
=
(\sigma^{(i+1)},\{v_{i+1}\}),
\] 
where we use the notation from \eqref{eq:phi_def}. Applying
\eqref{eq:qual:stationary-multiplicative-identity}, we get
\[
p(\sigma^{(i)})p(\emptyset)
=
p(\sigma^{(i+1)})p(\{v_{i+1}\}).
\]
Multiplying these identities gives
\[
p(\sigma)
=
p(\emptyset)\prod_{v\in\sigma}\frac{p(\{v\})}{p(\emptyset)}
=
p(\emptyset)\prod_{v\in\sigma}\lambda_v.
\]
Since $p$ is a probability measure,
\[
1
=
\sum_{\sigma\in\OmegaG}p(\sigma)
=
p(\emptyset)
\sum_{\sigma\in\OmegaG}\prod_{v\in\sigma}\lambda_v
=
p(\emptyset)Z_G(\vect\lambda).
\]
Therefore $p(\emptyset)=1/Z_G(\vect\lambda)$, and hence
\[
p(\sigma)
=
\frac{1}{Z_G(\vect\lambda)}
\prod_{v\in\sigma}\lambda_v
=
\mu_{G,\vect\lambda}(\sigma).
\]
This proves the single-site statement.

We now prove the mean-field statement. The argument above also applies to
the mean-field dynamics, since the mean-field kernel includes all
single-site exchanges. Thus there exist positive numbers
$(\lambda_v)_{v\in V}$ such that
\[
p(\sigma)
=
p(\emptyset)\prod_{v\in\sigma}\lambda_v.
\]
It remains to show that the numbers $\lambda_v$ are all equal. Fix
$v,u\in V$. For the mean-field exchange, note that
\[
\Phi_{v,u}(\{v\},\emptyset)
=
(\emptyset,\{u\}).
\]
Therefore, by
\eqref{eq:qual:stationary-multiplicative-identity},
\[
p(\{v\})p(\emptyset)
=
p(\emptyset)p(\{u\}).
\]
Since $p(\emptyset)>0$, we get $p(\{v\})=p(\{u\})$. Hence
$\lambda_v=\lambda_u$ for all $u,v\in V$. Writing this common value as
$\lambda>0$, we obtain
\[
p(\sigma)
=
p(\emptyset)\lambda^{|\sigma|}.
\]
Normalizing gives $p(\emptyset)=1/Z_G(\lambda)$, and therefore
$p=\mu_{G,\lambda}$. This proves the mean-field statement.
\end{proof}

\subsection{Convergence of the mean-field dynamics}

Throughout this subsection, we write $T=\Tmf$. 
\\The remaining condition in
\Cref{thm:qual:general-convergence}---and the most non-trivial---is irreducibility. The following lemma
verifies this condition for the mean-field dynamics.

\begin{lemma}\label{lem:qual:mean-field-irreducibility}
Let $p\in\Pdist(\OmegaG)$ have density
$\alpha=\alpha(p)$ satisfying
\[
0<\alpha < \frac{1}{\Delta+1}.
\]
Then $p$ is irreducible for the mean-field dynamics. More precisely, there
exists $\eps=\eps(n,\Delta,\alpha)>0$ such that
\[
T^t(p)(\eta)\ge \eps
\qquad
\forall t\ge 2n-1,\ \forall \eta\in\OmegaG .
\]
\end{lemma}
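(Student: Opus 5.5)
The plan is to show that, after $t_0=O(n)$ steps, every independent set $\eta$ has mass bounded below, by tracking explicit collision paths. The basic tool is the elementary lower bound
\[
T(p)(\tau)\;\ge\; p(\sigma)\,p(\sigma')\,\frac{1}{n^2}
\]
whenever $\Phi_{v,u}(\sigma,\sigma')$ has first coordinate $\tau$. I will also use the trivial bound
\[
T(p)(\sigma)\ge p(\sigma)\,\Pr[\text{first output is }\sigma],
\]
which is positive because, for example, the exchange $(v,u)$ with $\sigma_v=\sigma'_u$ leaves $\sigma$ unchanged. The key difficulty is that $p$ is arbitrary with only its density fixed, so I first need some configurations guaranteed to carry mass, and this is where $\alpha<\frac1{\Delta+1}$ enters.

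\textbf{Step 1: the empty set gets mass.} Since $\alpha>0$, some $\sigma$ with $|\sigma|\ge1$ has $p(\sigma)\ge c(n)>0$. I would show that mass on the empty set appears and then persists. A collision of $\sigma$ with any $\sigma'$ at a pair $(v,u)$ with $\sigma_v=1$ and $\sigma'_u=0$ removes $v$ from $\sigma$, provided $u$ can receive the particle in $\sigma'$, meaning $u\notin\sigma'$ and $N(u)\cap\sigma'=\emptyset$. Here the density bound is used: for any $\sigma'$, the closed neighbourhoods of occupied vertices cover at most $(\Delta+1)|\sigma'|$ vertices. Since $\E_p|\sigma'|=\alpha n<\frac{n}{\Delta+1}$, with probability bounded below (depending on $n$) a $p$-sample $\sigma'$ has $(\Delta+1)|\sigma'|<n$, and hence has a vertex $u$ free to receive a particle. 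I would track the $T^t(p)$-mass of the sets
\[
A_k=\{\sigma:|\sigma|\le k\}\quad\text{and}\quad B=\{\sigma':\sigma'\text{ has a free vertex}\}.
\]
By density conservation (Lemma~\ref{lem:pre:conservation-laws}) and Markov's inequality, $T^t(p)(B)\ge\beta(n,\Delta,\alpha)>0$ holds for every $t$, because $\alpha(T^tp)=\alpha$. Iterating, a set of size $k$ loses one particle per step with probability at least $\beta/n^2$. So after $k\le n$ steps, $T^t(p)(\emptyset)$ is bounded below. One needs a version uniform in $t$: I would argue that at every time $t\ge n$, the configuration at time $t$ descends through a chain of $\le n$ removals from any configuration at time $t-n$. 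This gives $T^t(p)(\emptyset)\ge\eps_1$ for all $t\ge n-1$.

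\textbf{Step 2: build $\eta$ from $\emptyset$.} Symmetrically, I need donors. Since $\alpha>0$, the set $C=\{\sigma':|\sigma'|\ge1\}$ has $T^t(p)(C)\ge\alpha$ by conservation, because $|\sigma'|\le n$. To add vertex $w$ to a current $\tau\subseteq\eta$, I collide $\tau$ (first coordinate, site $v=w$) with some $\sigma'\in C$ at an occupied vertex $u$. The move is admissible since $\tau\cup\{w\}\subseteq\eta$ is independent, and removing a particle from $\sigma'$ always keeps it independent. This happens with probability at least $\alpha/n^2\cdot\frac1n$, roughly. Starting from $\emptyset$ at time $t-|\eta|$ and adding the vertices of $\eta$ one by one yields
\[
T^t(p)(\eta)\ge \eps_1\,(c\alpha/n^3)^{|\eta|}
\]
for $t\ge 2n-1$, consistent with the stated $2n-1$.

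\textbf{Main obstacle.} The hard step is Step 1: making the lower bound on free-receiver configurations rigorous and uniform in time. Density conservation gives only an average, so Markov's inequality must be applied to $|\sigma'|$ under $T^t(p)$. This yields $\Pr[|\sigma'|<n/(\Delta+1)]\ge 1-\alpha(\Delta+1)\cdot\frac{n}{\lceil n/(\Delta+1)\rceil}$, which need not be positive because of integer rounding. If it fails, I would instead use $\Pr\bigl[|\sigma'|\le\lfloor\alpha n\rfloor\bigr]>0$, bounded below by $1/(n+1)$ via the reverse Markov argument on $\{0,\dots,n\}$, together with $(\Delta+1)\lfloor\alpha n\rfloor<n$. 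A free vertex always exists when $(\Delta+1)|\sigma'|<n$, so this suffices.
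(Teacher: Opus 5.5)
Your proof is correct and follows the same two-phase plan as the paper. First you drive mass onto $\emptyset$ within $n-1$ steps by colliding with partners whose chosen site has a vacant closed neighbourhood; such partners are available because $(\Delta+1)\alpha<1$ and the density is conserved. Then you build $\eta$ from $\emptyset$ one vertex per step using occupied donor sites, at a cost of order $\alpha/n$ per vertex; your Step~2 is essentially the paper's Step~3.

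The difference lies in the first phase. You follow particle removals from a fixed configuration along the linearized, time-inhomogeneous chain. The paper instead inducts on $\mathbb{E}_{A_m}\bigl[\Pr_{\tau\sim T^{m-1}(p)}(\tau_{A_m}=0)\bigr]$ for a uniformly random $m$-subset $A_m$. It uses the inclusion $\{\tau_A=0\}\supseteq\{\sigma_{A\setminus\{v\}}=0\}\cap\{\sigma'_{N^+(w)}=0\}$ for $v\in A$, together with the union bound $\mathbb{E}_w\Pr(\sigma'_{N^+(w)}=0)\ge 1-(\Delta+1)\alpha$.

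This buys two simplifications. Averaging the partner's site over $V$ replaces your ``partner has some free vertex'' event and its extra factor $1/n$. Tracking vacancies rather than particles removes the need for a separate holding estimate at $\emptyset$, since clearing an already vacant site is also a valid move. Uniformity in $t$ is also easier than your kernel-product argument, though that argument is valid. Apply the fixed-horizon bound at time $2n-1$ to $T^s(p)$, which has the same density $\alpha$, and use $T^t(p)=T^{2n-1}(T^s(p))$ with $s=t-2n+1$.

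Your ``main obstacle'' is not an obstacle. Markov's inequality gives $\Pr[|\sigma'|\ge\lceil n/(\Delta+1)\rceil]\le \alpha n/\lceil n/(\Delta+1)\rceil\le(\Delta+1)\alpha<1$, so rounding up only helps. Your displayed bound carries a spurious factor $\Delta+1$.

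Your fallback claim $\Pr[|\sigma'|\le\lfloor\alpha n\rfloor]\ge 1/(n+1)$ is false in general. Put almost all mass on sets of size $k+1$ and a little on size $k$, so the mean sits just below $k+1$. Only a positive bound depending on $n$ and $\alpha$ survives; that would still suffice for the lemma, but it is not needed.
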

\begin{proof}
Informally, we first show that there is a positive probability of reaching the empty configuration after $n-1$ steps of the  evolution, and then show that, starting from the empty configuration, any prescribed configuration can be reached with positive probability in $n$ further steps.

 %Informally, in Step~2 we show that there is a uniformly positive probability of reaching the empty configuration after $n-1$ steps of the discrete evolution. In Step~3 we then show that, starting from the empty configuration, any prescribed configuration can be reached with uniformly positive probability in $n$ further steps.

We fix the distribution \(p\in\cP(\O_G)\) with density 
\[
\alpha(p)=\alpha \le \frac{1-\delta}{\Delta+1},
\]
for some  $\delta\in(0,1)$.
For every $m \in \{1,\dots,n\}$, let $A_m$ be a uniformly random subset of $V$ of cardinality $m$, and write $\bbE_{A_m}$ for the expectation with respect to $A_m$. We write $\t_A=(\t_v,v\in A)$ for $A\subseteq V$. 
The proof is divided into three steps.

\medskip

\noindent \textbf{Step 1.} For every $t\geq 0$
\begin{equation}
\bbE_{A_1}\bigl[\bbP_{\tau \sim T^t(p)}(\tau_{N^+(A_1)} = 0)\bigr] \ge \delta.
\end{equation}
By the conservation law, $\alpha(T^t(p)) = \alpha$, and therefore
\begin{equation}
\bbE_{A_1}\bigl[\bbP_{\tau \sim T^t(p)}(\exists\, u \in N^+(A_1) : \tau_u = 1)\bigr] 
\le \frac{1}{n}\sum_{v \in V} \sum_{u \in N^+(v)} \bbP_{\tau \sim T^t(p)}(\tau_u = 1) 
\le (\Delta + 1)\alpha \le 1 - \delta,
\end{equation}
and the claim follows by taking complements.

\medskip
\noindent \textbf{Step 2.} For all $m \in \{1,\dots,n\}$,
\begin{equation}\label{eq:0irr}
\bbE_{A_m}\bigl[\bbP_{\tau \sim T^{m-1}(p)}(\tau_{A_m} =0 )\bigr] 
\ge \Bigl(\frac{\delta}{n}\Bigr)^{m-1} m!\, \frac{\Delta + \delta}{\Delta + 1}.
\end{equation}
In particular, for $m = n$ we have $A_n = V$, then
\begin{equation}\label{eq:0irr-full}
\bbP_{\tau \sim T^{n-1}(p)}(\tau = 0) 
\ge \Bigl(\frac{\delta}{n}\Bigr)^{n-1} n!\, \frac{\Delta + \delta}{\Delta + 1}.
\end{equation}

\noindent We prove~\eqref{eq:0irr} by induction on $m$. For $m = 1$, %let 
%$\tau \sim p$ with $\alpha(p) = \alpha$; then
\[
\bbE_{A_1}\bigl[\bbP_{\tau\sim p}(\tau_{A_1} = 0)\bigr] = 1 - \alpha 
\ge \frac{\Delta + \delta}{\Delta + 1},
\]
which is~\eqref{eq:0irr} for $m=1$.

\noindent Suppose now that $\tau \sim T_1(p)=p\circ p$, and let
$\sigma \sim p$ and $\sigma' \sim p$ be drawn independently. By definition of the kernel
$\Qmf$, the configuration
$\tau$ is obtained from $\sigma$ by replacing $\sigma_v$ with
$\sigma'_w$, where $v,w\in V$ are chosen uniformly at random. The replacement is performed only if it is admissible; otherwise $\tau=\sigma$.

\medskip

\noindent Fix $A\subseteq V$. If $v\in A$ and $w\in V$, then the
event $\{\tau_A=0\}$ contains
$\{\sigma_{A\setminus\{v\}}=0\}\cap\{\sigma'_{N^+(w)}=0\}$. Therefore, averaging over $v,w$,
%Averaging over
%$v\in A$ and $w\in V$, both chosen uniformly at random in $[n]$ and
%independently of $\sigma$ and $\sigma'$, we obtain
\begin{equation}
\bbP_{\tau\sim T_1(p)}(\tau_A = 0) \ge \frac{1}{n}\sum_{v \in A}
\bbP_{\sigma\sim p}(\sigma_{A \setminus \{v\}} = 0)\,
\bbE_{A_1}\bigl[\bbP_{\sigma'\sim p}(\sigma'_{N^+(A_1)} = 0)\bigr].
\end{equation}
By Step~1,
\begin{equation}\label{eq:1irr}
\bbP_{\tau\sim T_1(p)}(\tau_A = 0) \ge \frac{\delta}{n}\sum_{v \in A}
\bbP_{\sigma\sim p}(\sigma_{A \setminus \{v\}} = 0).
\end{equation}

\medskip

\noindent We now combine~\eqref{eq:1irr} with the inductive hypothesis.
Fix $m<n$ and assume that~\eqref{eq:0irr} holds at level $m$. Let
$q=T^{m-1}(p)$. By the conservation law, $\alpha(q)=\alpha$. Let
$\tau\sim T^1(q)=T^m(p)$. Applying~\eqref{eq:1irr} with $q$ in place of
$p$ and with $A=A_{m+1}$, and then taking expectations, we obtain
\[
\bbE_{A_{m+1}}\bigl[
\bbP_{\tau\sim T^m(p)}(\tau_{A_{m+1}}=0)
\bigr]
\ge
\delta\,\frac{m+1}{n}\,
\bbE_{A_{m+1}}\!\left[
\frac{1}{m+1}
\sum_{v\in A_{m+1}}
\bbP_{\sigma\sim T^{m-1}(p)}
(\sigma_{A_{m+1}\setminus\{v\}}=0)
\right].
\]
Removing a uniformly random element from a uniformly random subset of size
$m+1$ yields a uniformly random subset of size $m$. Hence
\[
\bbE_{A_{m+1}}\bigl[
\bbP_{\tau\sim T^m(p)}(\tau_{A_{m+1}}=0)
\bigr]
\ge
\frac{\delta}{n}(m+1)\,
\bbE_{A_m}\bigl[
\bbP_{\sigma\sim T^{m-1}(p)}(\sigma_{A_m}=0)
\bigr].
\]
By the inductive hypothesis, this is at least
\[
\frac{\delta}{n}(m+1)
\Bigl(\frac{\delta}{n}\Bigr)^{m-1}
m!\,
\frac{\Delta+\delta}{\Delta+1}
=
\Bigl(\frac{\delta}{n}\Bigr)^m
(m+1)!\,
\frac{\Delta+\delta}{\Delta+1}.
\]
This is~\eqref{eq:0irr} at level $m+1$, and completes the induction.

\medskip
\noindent \textbf{Step 3.} For all $\eta \in \Omega_G$,
\begin{equation}\label{eq:step3}
T^{2n-1}(p)(\eta) > n! \left(\frac{\alpha\delta}{n^2}\right)^n\frac{\Delta + \delta}{\Delta + 1}.
\end{equation}

%\noindent Step~2 shows that the empty configuration is reached after $n-1$ steps with positive probability. We now show that, starting from the empty configuration, any target $\eta$ can be reached recursively with uniformly positive probability.

Since $p \in \cP(\Omega_G)$ has density $\alpha$, for every time $t\geq 1$ and for all \(\eta\in\O_G\)
\begin{equation}\label{eq:term1}
T^t(p)(S_u(\eta,1))
\ge
\frac{\alpha}{n} T^{t-1}(p)(\eta),
\qquad
\forall u\in V:\ S_u(\eta,1)\in\OmegaG,\ \eta_u=0.
\end{equation}
To see this, let \(\s,\s'\) be sampled independently from \(T^{t-1}(p)\), and let \(v_1,v_2\) be two vertices chosen independently and uniformly at random in \(V\); then \(\tau=S_{v_1}(\s,\s'_{v_2})\) has distribution \(T^{t}(p)\). The event \(\{\t=S_u(\eta,1)\}\) contains the event \(\{v_1=u,\ \s=\eta,\ \s_{v_2}'=1\}\) as a special case, which gives the desired lower bound.
 
Iteration of~\eqref{eq:term1} shows that %implies Step~3. The
the probability of obtaining a given independent  set \(\eta\) at time \(t=2n-1\) can be estimated by 
   \[
   T^{2n-1}(p)(\eta)\ge \left(\frac{\alpha}{n}\right)^{|\eta|}T^{2n-1-|\eta|}(p)(0).
   \]
Indeed, by removing one element at a time, each removal has a cost of at most \(\alpha/n\), until all vertices of \(\eta\) have been removed after \(|\eta|\) steps.
%We conclude by using the semigroup property \(T^{t+s}(p)=T^t(T^s(p))\) together with 
On the other hand, using Step~2:
\[
T^{2n-1-|\eta|}(p)(0)=T^{n-1}(T^{n-|\eta|}(p))(0)\ge \Bigl(\frac{\delta}{n}\Bigr)^{n-1} n!\, \frac{\Delta + \delta}{\Delta + 1}.
\] 
Summing up, we conclude the proof of Step~3 by combining the previous inequalities:
\[
T^{2n-1}(p)(\eta)
\ge
\left(\frac{\alpha}{n}\right)^{|\eta|}
\Bigl(\frac{\delta}{n}\Bigr)^{n-1}
n!\, \frac{\Delta+\delta}{\Delta+1}
>
n!\left(\frac{\alpha\delta}{n^2}\right)^n
\frac{\Delta+\delta}{\Delta+1}.
\]

To prove the lemma, it remains to extend the above bound from time $2n - 1$ to all other times. Let $t \geq 2n - 1$, and set $s := t - (2n - 1)$. Since 
$\alpha(T^s(p))=\alpha(p)=\alpha$,
we may apply
Step~3 with \(T^s(p)\) in place of \(p\) and therefore,
%using the semigroup property,
%we obtain, 
for every \(\eta\in\OmegaG\),
\[
T^t(p)(\eta)
=
T^{2n-1}(T^s(p))(\eta)
>
n!\left(\frac{\alpha\delta}{n^2}\right)^n
\frac{\Delta+\delta}{\Delta+1}.
\]
Thus, the irreducibility condition holds with
\[
\eps
:=
n!\left(\frac{\alpha\delta}{n^2}\right)^n
\frac{\Delta+\delta}{\Delta+1}.
\]
This concludes the proof.
\end{proof}

We can now prove the qualitative convergence result for the mean-field
dynamics.

\begin{proposition}\label{prop:qual:mean-field-convergence}
%Let $\delta\in(0,1)$, and let 
Let $p\in\Pdist(\OmegaG)$ have density
$\alpha=\alpha(p)$ satisfying
\[
0<\alpha < \frac{1}{\Delta+1}
%\le \frac{1-\delta}{\Delta+1}.
\]
Then there exists a unique $\lambda>0$ such that
$\alpha(\mu_{G,\lambda})=\alpha$,
and the mean-field
dynamics satisfies
\[
\lim_{t\to\infty}
\|T^t(p)-\mu_{G,\lambda}\|_{\TV}
=0.
\]
\end{proposition}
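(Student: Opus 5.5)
The plan is to combine the general convergence criterion \Cref{thm:qual:general-convergence} with the structural lemmas already established.

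\textbf{Existence and uniqueness of $\lambda$.} The map $\lambda\mapsto\alpha(\mu_{G,\lambda})=\frac{\lambda}{n}\frac{d}{d\lambda}\log Z_G(\lambda)$ is continuous. Its derivative in $\theta=\log\lambda$ equals $\frac1n\Var_{\mu_{G,\lambda}}(|\sigma|)>0$, so the map is strictly increasing. As $\lambda\to0$ it tends to $0$, and as $\lambda\to\infty$ it tends to $i_{\max}/n$. Since $K_{\Delta+1}$-type obstructions do not apply to a general $G$, I will check directly that $i_{\max}\ge n/(\Delta+1)$: a greedy independent set has this size. Hence every $\alpha\in(0,\frac1{\Delta+1})$ lies in $(0,i_{\max}/n)$, and the intermediate value theorem gives a unique $\lambda$. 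This is also referenced in \Cref{app:bisection-prescribed-density}.

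\textbf{Dispose of the edgeless case.} If $E=\emptyset$ (so $\Delta=0$), \Cref{lem:qual:positive-diagonal} and \Cref{lem:qual:positive-stationary-states} do not apply. In this case I would handle the dynamics by a separate remark, or note that the statement in that setting reduces to product measures. The simplest route is to observe that the irreducibility proof and the positive-diagonal condition still hold, since $\sigma_v=\sigma'_v$ for some $v$ unless $\sigma,\sigma'$ are complementary. In the complementary case one can use a pair $(v,u)$ with $\sigma_v=\sigma'_u$ when $n\ge2$. I would therefore assume $E\neq\emptyset$ in the main argument and treat the degenerate case briefly.

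\textbf{Main argument.} Replace $\Qmf$ by its symmetrization $\overline{\kernelQ}$. This does not change $T$, it satisfies \eqref{eq:qual:strong-pair-symmetry}, and it has positive diagonal by \Cref{lem:qual:positive-diagonal}. It is balanced with respect to $\mu_{G,\lambda}\in\Ppos(\OmegaG)$, because reversibility from \Cref{lem:pre:gibbs-stationary-measures} is preserved under symmetrization. By \Cref{lem:qual:mean-field-irreducibility}, $p$ is irreducible; its proof applies with $\delta:=1-(\Delta+1)\alpha>0$.

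\Cref{thm:qual:general-convergence} then gives $T^t(p)\to\nu$ in total variation for some stationary $\nu\in\Ppos(\OmegaG)$. By \Cref{lem:qual:positive-stationary-states}(i), $\nu=\mu_{G,\lambda'}$ for some $\lambda'>0$.

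Finally, the density is a linear functional, and $\OmegaG$ is finite, so it is continuous under total variation convergence. Together with \Cref{lem:pre:conservation-laws}, this gives $\alpha(\nu)=\lim_t\alpha(T^t(p))=\alpha$. Uniqueness from the first step then forces $\lambda'=\lambda$.

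\textbf{Main obstacle.} The only real subtlety is the first step, namely guaranteeing that $\alpha<\frac1{\Delta+1}$ is attainable on every $G$. The greedy bound $i_{\max}\ge n/(\Delta+1)$ settles this, with a strict inequality $\alpha<i_{\max}/n$. Everything else is bookkeeping of the earlier lemmas.
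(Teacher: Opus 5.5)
For $E\neq\emptyset$ your argument is the paper's: symmetrize, get irreducibility from \Cref{lem:qual:mean-field-irreducibility} with $\delta=1-(\Delta+1)\alpha$, apply \Cref{thm:qual:general-convergence}, then identify the limit via \Cref{lem:qual:positive-stationary-states} and conservation of density. Your separate existence step, using the greedy bound $i_{\max}\ge n/(\Delta+1)$ and the intermediate value theorem, is correct. The paper instead gets existence from the limit itself and uses strict monotonicity only for uniqueness.

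The gap is the edgeless case. It is genuinely part of the statement, since there $\Delta=0$ and $\alpha$ ranges over all of $(0,1)$. Your claim that the positive-diagonal condition still holds there is false.

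Take $E=\emptyset$, $n\ge 2$, $\sigma=V$ (every vertex occupied, which is an independent set) and $\sigma'=\emptyset$. For every $(v,u)$ we have $\sigma_v=1\neq 0=\sigma'_u$, and the exchange is admissible. It produces $(V\setminus\{v\},\{u\})\neq(V,\emptyset)$. Likewise $(\emptyset,V)$ is sent to $(\{v\},V\setminus\{u\})\neq(V,\emptyset)$. Hence $\overline{\Qmf}(V,\emptyset;V,\emptyset)=0$.

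So hypothesis \eqref{eq:qual:positive-diagonal} of \Cref{thm:qual:general-convergence} fails. The route through \Cref{lem:qual:positive-stationary-states} fails too, because its proof relies on positive diagonal. Your remark that a pair $(v,u)$ with $\sigma_v=\sigma'_u$ exists in the complementary case only holds when $\sigma$ is non-constant. Your alternative phrase, that the setting "reduces to product measures," is not an argument.

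The missing idea is the one the paper uses. When $E=\emptyset$ every exchange is admissible, so the first output of a collision is $\sigma$ with the bit at a uniform $v$ replaced by $\sigma'_u$. With $u$ uniform and $\sigma'\sim q$, this bit is an independent Bernoulli$(\alpha(q))$ variable.

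Since $\alpha(T^t(p))=\alpha$ for all $t$, the nonlinear trajectory equals $pP^t$ for the fixed linear chain $P$ that refreshes a uniformly random coordinate with a Bernoulli$(\alpha)$ bit. For $0<\alpha<1$ this chain is irreducible and aperiodic on $\{0,1\}^V$. Hence $T^t(p)$ converges to the product Bernoulli$(\alpha)$ measure, which is $\mu_{G,\lambda}$ with $\lambda=\alpha/(1-\alpha)$.
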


\begin{proof}
If $E = \emptyset$, then $\OmegaG = \{0, 1\}^V$, and every proposed exchange is admissible. In this case, for any distribution $q$, one step of the mean-field dynamics has the same law as the following update: take a sample $\sigma \sim q$, choose a uniformly random vertex $v$, and update $\sigma_v$ with an independent Bernoulli bit with parameter $\alpha(q)$. This is exactly one step of an ordinary linear Markov chain starting from the distribution $q$. Since $\alpha(T^{t}(p)) = \alpha$ for all $t\geq 0$, we conclude that the evolution $\{T^{t}(p)\}_{t \geq 0}$ is the evolution of this linear chain with fixed parameter $\alpha$ starting from $p$. Since $0 < \alpha <1$, this linear chain converges to its unique stationary distribution, which is the product Bernoulli measure of density $\alpha$. This is exactly $\mu_{G, \lambda}$ with $\lambda = \frac{\alpha}{1-\alpha}$ and thus the theorem holds when $E = \emptyset$.

We may therefore assume that $E \neq \emptyset$. By \Cref{lem:qual:mean-field-irreducibility}, the initial distribution $p$
is irreducible for the mean-field dynamics. We apply
\Cref{thm:qual:general-convergence} to the symmetrized mean-field kernel
$\overline{\Qmf}$. As explained above, symmetrization does not change the
nonlinear map. Moreover, the symmetrized kernel is balanced with respect to
$\mu_{G,\lambda_0}$ for any fixed $\lambda_0>0$, satisfies the required
symmetry condition, and satisfies the positive diagonal condition by
\Cref{lem:qual:positive-diagonal}. Therefore there exists
$\nu\in\Ppos(\OmegaG)$ such that
\[
\lim_{t\to\infty}\|T^t(p)-\nu\|_{\TV}=0,
\]
and $\nu$ is stationary for the mean-field dynamics.

By \Cref{lem:qual:positive-stationary-states}, every everywhere positive
stationary distribution for the mean-field dynamics is a hard-core Gibbs
measure with uniform fugacity. Hence there exists $\lambda>0$ such that
\[
\nu=\mu_{G,\lambda}.
\]

It remains to show that $\lambda$ is uniquely determined by $\alpha$. By
\Cref{lem:pre:conservation-laws},
\[
\alpha(T^t(p))=\alpha(p)=\alpha
\qquad \forall t\ge 0.
\]
Since $T^t(p)\to \mu_{G,\lambda}$ in total variation and the function
$\sigma\mapsto |\sigma|/n$ is bounded above by one, we obtain
\[
\alpha(\mu_{G,\lambda})
=
\lim_{t\to\infty}\alpha(T^t(p))
=
\alpha.
\]

Finally, this fugacity is unique. The reason is that the map
$\lambda\mapsto \alpha_G(\lambda)=\alpha(\mu_{G,\lambda})$ is strictly
increasing on $(0,\infty)$, since
\[
\frac{d}{d\lambda}\alpha(\mu_{G, \lambda})
=
\frac{1}{n\lambda}
\Var_{\mu_{G,\lambda}}(|\sigma|)
>0.
\]
Thus there is at most one $\lambda>0$ with
$\alpha(\mu_{G,\lambda})=\alpha$. This completes the proof.
\end{proof}

\subsection{Convergence of the single-site dynamics}
Throughout this subsection, we write $T = \Tss$.\\
The following lemma considers the irreducibility condition for the single-site dynamics. 

\begin{lemma}[Irreducibility for single-site]\label{lem:qual:single-site-irreducibility}
Let $p\in\Pdist(\OmegaG)$ have marginal vector $\vect m(p)$ satisfying 
\[ 0< m_v(p) < \frac{1}{\Delta+1} \qquad \forall v \in V. 
\] 
%\[ \gamma\le \min_{v\in V} m_v(p) \qquad\text{and}\qquad \max_{v\in V}m_v(p)\le \frac{1-\delta}{\Delta+1}. \]
Then $p$ is irreducible for the single-site dynamics. More precisely, there
exists $\eps=\eps(n,\Delta,\vect m(p))>0$ such that
\[
T^t(p)(\eta)\ge \eps
\qquad
\forall t\ge 2n-1,\ \forall \eta\in\OmegaG .
\]
\end{lemma}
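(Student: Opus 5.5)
We follow the three-step strategy of \Cref{lem:qual:mean-field-irreducibility}, adapting each step to the single-site kernel and to the conserved marginal vector. Fix $\delta\in(0,1)$ with $\max_v m_v(p)\le \frac{1-\delta}{\Delta+1}$, and let $\gamma:=\min_v m_v(p)>0$. By \Cref{lem:pre:conservation-laws}, $\vect m(T^t(p))=\vect m(p)$ for all $t$.

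\textbf{Step 1} (a vertex-wise analogue of Step~1 in the mean-field proof). For every $v\in V$ and $t\ge0$, a union bound over $N^+(v)$ gives
\[
\bbP_{\tau\sim T^t(p)}(\tau_{N^+(v)}=0)\ge 1-(\Delta+1)\max_u m_u\ge\delta .
\]
This is stronger than in the mean-field case, since it holds for each $v$ rather than on average over a random vertex. This matters because in the single-site kernel the auxiliary configuration $\sigma'$ is read at the \emph{same} vertex $v$.

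\textbf{Step 2} (reaching the empty set). Take $\tau\sim T(q)$ with $\sigma,\sigma'\sim q$ independent and $v$ uniform. For any $A\subseteq V$ and $v\in A$, the event $\{\tau_A=0\}$ contains $\{\sigma_{A\setminus\{v\}}=0\}\cap\{\sigma'_{N^+(v)}=0\}$. Indeed, if $\sigma_v=1$ and $\sigma'_v=0$, the exchange is admissible because $\sigma'$ gains $v$ while $\sigma'_{N(v)}=0$ and $\sigma$ loses $v$. If $\sigma_v=0$, then $\tau_v=0$ regardless of admissibility. Using independence and Step~1,
\[
\bbP_{T(q)}(\tau_A=0)\ge\frac{\delta}{n}\sum_{v\in A}\bbP_q(\sigma_{A\setminus\{v\}}=0).
\]
We prove by induction on $m$ that
\[
\bbE_{A_m}\bigl[\bbP_{T^{m-1}(p)}(\tau_{A_m}=0)\bigr]\ge(\delta/n)^{m-1}m!\,\delta .
\]
The base case is $\bbE_{A_1}[\bbP_p(\tau_{A_1}=0)]=1-\alpha\ge\delta$. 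The inductive step is verbatim the "remove a uniform element" argument of the mean-field proof. Taking $m=n$ gives $T^{n-1}(q)(0)\ge c_n>0$ for every $q$ satisfying the marginal bound, in particular for $q=T^s(p)$.

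\textbf{Step 3} (building $\eta$ from the empty set). We need the single-site analogue of \eqref{eq:term1}: if $\eta_u=0$ and $S_u(\eta,1)\in\OmegaG$, then
\[
T^t(p)(S_u(\eta,1))\ge \frac1n\, T^{t-1}(p)(\eta)\,\bbP_{\sigma'\sim T^{t-1}(p)}\bigl(\sigma'_u=1\bigr)\ge\frac{\gamma}{n}T^{t-1}(p)(\eta).
\]
Here we choose vertex $u$, take $\sigma=\eta$, and require $\sigma'_u=1$. This is the step I expect to be the main obstacle, since admissibility also requires the \emph{second} output $S_u(\sigma',0)$ to be an independent set. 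But removing a vertex from an independent set always leaves an independent set, and $S_u(\eta,1)\in\OmegaG$ by assumption, so the exchange is admissible. The main difference from the mean-field case is that we use the lower bound $m_u\ge\gamma$ at the specific vertex $u$, rather than the density $\alpha$.

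\textbf{Conclusion.} Iterating $|\eta|$ times from $T^{2n-1-|\eta|}(p)(0)=T^{n-1}(T^{n-|\eta|}(p))(0)\ge c_n$ gives
\[
T^{2n-1}(p)(\eta)\ge(\gamma/n)^{n}c_n=:\eps .
\]
Applying this bound to $T^s(p)$, which has the same marginals, extends it to all $t\ge 2n-1$.
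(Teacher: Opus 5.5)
Your proof is correct and follows the paper's three-step argument: a vertex-wise bound $\bbP_{T^t(p)}(\tau_{N^+(v)}=0)\ge\delta$, then reaching the empty configuration in $n-1$ steps, then adding the vertices of $\eta$ one at a time via $T^{t+1}(p)(S_u(\sigma,1))\ge\frac{\gamma}{n}T^t(p)(\sigma)$, with admissibility handled exactly as you do. The only difference is in Step~2: because the bound there holds vertex by vertex, the paper uses a fixed enumeration $V_m=\{v_1,\dots,v_m\}$ instead of your averaging over random subsets $A_m$. That averaging device is needed only in the mean-field case, where individual marginals are uncontrolled; here it is harmless and merely gains an extra factor $m!$.
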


\begin{proof}
The proof follows the same strategy as \Cref{lem:qual:mean-field-irreducibility}. 
Let $\gamma,\delta\in(0,1)$ be such that 
\[ \gamma\le \min_{v\in V} m_v(p) \qquad\text{and}\qquad \max_{v\in V}m_v(p)\le \frac{1-\delta}{\Delta+1}. \]

\medskip

\noindent \textbf{Step 1.} For every \(t\ge 0\) and every \(v\in V\),
\begin{equation}\label{eq:ellobo}
\bbP_{\tau\sim T^t(p)}(\tau_{\neighborCl{v}}=0)\ge \delta .
\end{equation}
Indeed, by the conservation law in \Cref{lem:pre:conservation-laws}, the
single-site dynamics preserves the marginal vector. Hence, for every
\(u\in V\),
$
m_u(T^t(p))=m_u(p)$.
Therefore,
\[
\bbP_{\tau\sim T^t(p)}
\bigl(\exists\, u\in \neighborCl{v}:\tau_u=1\bigr)
\le
\sum_{u\in \neighborCl{v}}
\bbP_{\tau\sim T^t(p)}(\tau_u=1)
=
\sum_{u\in \neighborCl{v}}m_u(p).
\]
Using the assumption on the marginal vector, 
\[
\sum_{u\in \neighborCl{v}}m_u(p)
\le
|\neighborCl{v}|\max_{u\in V}m_u(p)
\le
(\Delta+1)\frac{1-\delta}{\Delta+1}
=
1-\delta\,.
\]
The last two expressions  imply \eqref{eq:ellobo}.

\medskip

\noindent \textbf{Step 2.} Fix some enumeration of the vertex set \(V=\{v_1,\cdots, v_n\}\), where \(n=|V|\). For every \(m\in [n]\), set \(V_m=\{v_j, j\le m\}\). We claim that
\begin{equation}
    \bbP_{\tau\sim T^{m-1}(p)}(\tau_{V_m}=0)\ge \left(\frac{\delta}{n}\right)^{m-1}\frac{\D+\delta}{\Delta+1}.
\end{equation}
In particular, when \(m=n\), 
\begin{equation}
     \bbP_{\tau\sim T^{n-1}(p)}(\tau=0)\ge \left(\frac{\delta}{n}\right)^{n-1}\frac{\D+\delta}{\Delta+1}.
\end{equation}
This is proved by induction. When \(m=1\), it follows from the upper bound on the marginal vector: 
\[
\bbP_{\t\sim p}(\t_{v_1}=0)=1-m_{v_1}(p)\geq 1- \frac{1-\delta}{1+\D}=\frac{\D+\delta}{\D+1}.
\]
Now suppose the claim holds for \(m<n\), and sample \(\t\sim T^{m}(p)\) via the collision of \(\s,\s'\), where \(\s,\s'\) are  independent with distribution \(T^{m-1}(p)\). If \(\s_{V_{m}}=0\) and \(\s'_{N^+(v_{m+1})}=0\), and the update involves the site \(v_{m+1}\), then  \(\{\t_{V_{m+1}} = 0\}\). This yields the lower bound
\[ 
\bbP_{\tau\sim T^m(p)}(\tau_{V_{m+1}}=0) \ge \frac1n\, \bbP_{\sigma\sim T^{m-1}(p)}(\sigma_{V_m}=0)\, \bbP_{\sigma'\sim T^{m-1}(p)} (\sigma'_{\neighborCl{v_{m+1}}}=0). 
\]
which gives the claim upon applying Step~1 to \(\bbP_{\sigma'\sim T^{m-1}(p)} (\sigma'_{\neighborCl{v_{m+1}}}=0)\) and the inductive hypothesis to \(\bbP_{\s\sim T^{m-1}(p)}(\s_{V_{m}}=0)\).

\medskip

 \noindent \textbf{Step 3.} For all \(\eta\in\O_G\),
\begin{equation}\label{in-step3-ss}
   T^{2n-1}(p)(\eta)\ge \left(\frac{\delta\g}{n^2}\right)^{n-1}\frac{\D+\delta}{\D+1}.
\end{equation}

To prove this we proceed as in  Step~3 in the proof of \Cref{lem:qual:mean-field-irreducibility}. First, one proves the following: for all \(t\geq 0\) and all \(\sigma\in\O_G\),
\begin{align}\label{eq:ellobo1}
   &T^{t+1}(p)(S_v(\s,1))\ge \frac{\g}{n}T^t(p)(\s), \quad \forall v\in V:\,\, S_v(\s,1)\in\O_G,\quad \s_v=0,
   \end{align}
   where \(\g\) is the lower bound on the marginals. The inequality follows immediately because, by the conservation property, the marginal at time \(t\) at site \(v\in V\) is \(m_v(T^t(p))=m_v(p)\ge \g\); equivalently, the probability of finding a 1's at site \(v\) is at least \(\g\), which, multiplied by the probability \(\frac1n\) of selecting the vertex \(v\) in the exchange determined by the collision, gives the claim.
   
  Next, the probability of obtaining the independent set \(\eta\) at time \(t=2n-1\) can be estimated by removing its elements one by one, each removal being estimated via \eqref{eq:ellobo1}.
 Therefore,   \[
   T^{2n-1}(p)(\eta)\ge \left(\frac{\g}{n}\right)^{|\eta|}T^{2n-1-|\eta|}(p)(0).
   \]
We conclude using Step~2:
\[
T^{2n-1-|\eta|}(p)(0)
=
T^{n-1}(T^{n-|\eta|}(p))(0)
\ge
\left(\frac{\delta}{n}\right)^{n-1}
\frac{\Delta+\delta}{\Delta+1}.
\]
The claimed inequality \eqref{in-step3-ss}
 follows from the last two expressions using $|\eta|\le n$.

 To prove the lemma, we extend the bound \eqref{in-step3-ss}
 to all times \(t\ge 2n-1\). Set
$s=t-(2n-1)$, and note that by the  conservation law for the single-site dynamics,
$\vect m(T^s(p))=\vect m(p)$.
Hence \(T^s(p)\) satisfies the same assumptions as \(p\). Applying
\eqref{in-step3-ss} with \(T^s(p)\) in place of \(p\), we obtain
\[
T^t(p)(\eta)
=
T^{2n-1}(T^s(p))(\eta)
\ge
\left(\frac{\delta\g}{n^2}\right)^{n-1}
\frac{\D+\delta}{\D+1}.
\]
Thus the irreducibility condition holds with
\[
\eps
:=
\left(\frac{\delta\g}{n^2}\right)^{n-1}
\frac{\D+\delta}{\D+1}.
\]
This concludes the proof.
\end{proof}

We can now prove the qualitative convergence result for the single-site dynamics. 
\begin{proposition}\label{prop:qual:single-site-convergence} Let %$\gamma>0$ and $\delta\in(0,1)$. Let 
$p\in\Pdist(\OmegaG)$ have marginal vector $\vect m(p)$ satisfying 
\[ 0< m_v(p) < \frac{1}{\Delta+1} \qquad \forall v \in V. 
\] 
%\[ \gamma\le \min_{v\in V} m_v(p) \qquad\text{and}\qquad \max_{v\in V}m_v(p)\le \frac{1-\delta}{\Delta+1}. 
%\] 
Then there exists a unique vector $\vect\lambda=(\lambda_v)_{v\in V}\in(0,\infty)^V$ such that \[ \Pr_{\sigma\sim\mu_{G,\vect\lambda}}[\sigma_v=1]=m_v(p) \qquad \forall v\in V, \] and the single-site dynamics satisfies \[ \lim_{t\to\infty} \|T^t(p)-\mu_{G,\vect\lambda}\|_{\TV} =0. \] \end{proposition}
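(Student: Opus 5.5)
The plan follows the proof of \Cref{prop:qual:mean-field-convergence}, with the single-site ingredients substituted.

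\textbf{Edgeless case.} If $E=\emptyset$, then $\OmegaG=\{0,1\}^V$ and every single-site exchange is admissible. One step of the dynamics from any $q$ then amounts to the following: draw $\sigma\sim q$, pick a uniform vertex $v$, and resample $\sigma_v$ as an independent Bernoulli bit with parameter $m_v(q)$. By \Cref{lem:pre:conservation-laws}, $m_v(T^t(p))=m_v(p)$ for all $t$. Hence the evolution coincides with a fixed linear Markov chain started at $p$, which is a random-scan product chain. Since every $m_v(p)\in(0,1)$, this chain converges to the product Bernoulli measure with marginals $m_v(p)$. That measure is $\mu_{G,\vect\lambda}$ with $\lambda_v=m_v(p)/(1-m_v(p))$, and $\vect\lambda$ is clearly unique in this case.

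\textbf{Convergence when $E\neq\emptyset$.} We apply \Cref{thm:qual:general-convergence} to the symmetrized kernel $\overline{\Qss}$, which induces the same map $\Tss$. The hypotheses hold as follows.
\begin{itemize}
\item Balancedness holds with respect to $\mu_{G,\vect\lambda_0}$ for any fixed $\vect\lambda_0$, by \Cref{lem:pre:gibbs-stationary-measures} together with the symmetrization argument.
\item The strong pair symmetry \eqref{eq:qual:strong-pair-symmetry} holds by construction of $\overline{\Qss}$.
\item The positive diagonal condition holds by \Cref{lem:qual:positive-diagonal}.
\item Irreducibility of $p$ holds by \Cref{lem:qual:single-site-irreducibility}.
\end{itemize}
The theorem then gives $T^t(p)\to\nu$ in total variation for some stationary $\nu\in\Ppos(\OmegaG)$. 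By \Cref{lem:qual:positive-stationary-states}(ii), $\nu=\mu_{G,\vect\lambda}$ for some $\vect\lambda\in(0,\infty)^V$. The marginals are preserved by \Cref{lem:pre:conservation-laws}, and each $\sigma_v$ is bounded, so passing to the limit gives $m_v(\mu_{G,\vect\lambda})=\lim_t m_v(T^t(p))=m_v(p)$ for every $v$. This proves existence of $\vect\lambda$ and the convergence.

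\textbf{Uniqueness of $\vect\lambda$.} This is the only step not copied directly from the mean-field case; the scalar monotonicity argument used there is replaced by strict convexity. Work in log-coordinates $\theta_v=\log\lambda_v$. The map $\vect\theta\mapsto\nabla\AG(\vect\theta)=\vect m(\mu_{G,\vect\theta})$ is the gradient of $\AG$, whose Hessian is $\Cov_{\mu_{G,\vect\theta}}(\sigma)$. This covariance matrix is positive definite. Indeed, $x^\transpose\Cov(\sigma)x=\Var(\langle x,\sigma\rangle)$. Because $\mu_{G,\vect\theta}$ has full support on $\OmegaG$, which contains $\emptyset$ and every singleton $\{v\}$, the function $\langle x,\sigma\rangle$ takes both the value $0$ and the value $x_v$. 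So its variance is zero only if $x=0$.

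Hence $\AG$ is strictly convex, and its gradient map is injective. Explicitly, if $\nabla\AG(\vect\theta)=\nabla\AG(\vect\theta')$ with $\vect\theta\neq\vect\theta'$, then strict monotonicity of the gradient gives
\[
\langle \nabla\AG(\vect\theta)-\nabla\AG(\vect\theta'),\vect\theta-\vect\theta'\rangle>0,
\]
which is a contradiction. Therefore the fugacity vector with marginals $\vect m(p)$ is unique, which completes the plan.
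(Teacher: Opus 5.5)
Your proposal is correct and follows the paper's proof essentially step for step. You handle the edgeless case separately via the random-scan product chain. For $E\neq\emptyset$ you apply \Cref{thm:qual:general-convergence} to the symmetrized kernel, using \Cref{lem:qual:single-site-irreducibility}, \Cref{lem:qual:positive-diagonal} and balancedness. You identify the limit through \Cref{lem:qual:positive-stationary-states} together with conservation of marginals. Uniqueness comes from positive definiteness of $\Cov_{\mu_{G,\vect\theta}}(\sigma)$ and the resulting injectivity of $\nabla A_G$, and your explicit strict-monotonicity argument simply spells out a step the paper states in one line.
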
 \begin{proof} The proof is analogous to the proof of \Cref{prop:qual:mean-field-convergence}. If $E = \emptyset$, then $\OmegaG = \{0, 1\}^V$, and every proposed exchange is admissible. In this case, for any distribution $q$, one step of the single-site dynamics has the same law as the following update: take a sample $\sigma \sim q$, choose a uniformly random vertex $v$, and update the value of $\sigma_v$ with an independent Bernoulli bit with parameter $m_v(q)$. Since $\vect m(T^t(p)) = \vect m(p)$ for all $t \geq 0$, we conclude that the evolution $\{T^t(p)\}_{t \geq 0}$ is the evolution of the linear chain that refreshes each chosen vertex $v$ with a Bernoulli bit of fixed parameter $m_v(p)$. Since $0 < m_v(p) < 1$ for all $v \in V$, this linear chain converges to its unique stationary distribution which is the product Bernoulli measure with marginals $\vect m(p)$. This is exactly $\mu_{G, \vect \lambda}$ with $\lambda_v = \frac{m_v(p)}{1 - m_v(p)}$. Thus the theorem holds when $E = \emptyset$.

We may therefore assume that $E \neq \emptyset$. By \Cref{lem:qual:single-site-irreducibility}, the initial distribution $p$ is irreducible for the single-site dynamics. We apply \Cref{thm:qual:general-convergence} to the symmetrized single-site kernel $\overline{\Qss}$. As before, symmetrization does not change the nonlinear map, and the symmetrized kernel is balanced with respect to $\mu_{G,\vect\lambda_0}$ for any fixed $\vect\lambda_0\in(0,\infty)^V$. Moreover, it satisfies the required symmetry condition, and it satisfies the positive diagonal condition by \Cref{lem:qual:positive-diagonal}. Therefore there exists $\nu\in\Ppos(\OmegaG)$ such that \[ \lim_{t\to\infty}\|T^t(p)-\nu\|_{\TV}=0, \] and $\nu$ is stationary for the single-site dynamics. By \Cref{lem:qual:positive-stationary-states}, every everywhere positive stationary distribution for the single-site dynamics is a hard-core Gibbs measure with vertex dependent fugacities. Hence there exists $\vect\lambda\in(0,\infty)^V$ such that $\nu=\mu_{G,\vect\lambda}$. 
By \Cref{lem:pre:conservation-laws},
$m_v(T^t(p))=m_v(p)$,
for all $t\ge 0$, $v\in V$.
Since $T^t(p)\to\mu_{G,\vect\lambda}$ in total variation, we obtain
\[
\Pr_{\sigma\sim\mu_{G,\vect\lambda}}[\sigma_v=1]
=
\lim_{t\to\infty}m_v(T^t(p))
=
m_v(p)
\qquad
\forall v\in V.
\]

Finally, we prove that $\vect\lambda$ is uniquely determined by the
marginals $(m_v(p))_{v\in V}$. In log-fugacity coordinates
$\vect\theta=\log\vect\lambda$ (see \Cref{subsec:ab}), the marginal vector of
$\mu_{G,\vect\theta}$ is $\nabla\AG(\vect\theta)$. Moreover,
\[
\nabla^2\AG(\vect\theta)
=
\Cov_{\mu_{G,\vect\theta}}(\sigma).
\]
Let $x\in\mathbb R^V\setminus\{0\}$. Then
\[
x^{\transpose}\Cov_{\mu_{G,\vect\theta}}(\sigma)x
=
\Var_{\mu_{G,\vect\theta}}(x^{\transpose}\sigma).
\]
Choose $v\in V$ such that $x_v\ne 0$. Since $\mu_{G,\vect\theta}$ assigns
positive mass to both $\emptyset$ and $\{v\}$, the random variable
$x^{\transpose}\sigma$ is not constant under $\mu_{G,\vect\theta}$. Therefore
\[
\Var_{\mu_{G,\vect\theta}}(x^{\transpose}\sigma)>0.
\]
Thus $\nabla^2\AG(\vect\theta)$ is positive definite for every
$\vect\theta\in\mathbb R^V$. Hence $\AG$ is strictly convex, and
$\nabla\AG$ is injective. Therefore there is at most one positive fugacity
vector $\vect\lambda$ with the prescribed marginals. This completes the
proof.
\end{proof}

\section{Quantitative Convergence}

\subsection{Wasserstein contraction for the mean-field dynamics} 
In this subsection, we prove the following quantitative convergence theorem for the mean-field dynamics. 

\begin{theorem}\label{thm:quant:mean-field-rapid-convergence}
Consider the mean-field dynamics in a graph $G = (V, E)$ on $n$ vertices with maximum degree $\Delta \geq 0$. 
Let $\delta\in(0,1)$, and let $p\in\Pdist(\OmegaG)$ have density
$
\alpha(p)=\alpha
$
satisfying
\vspace{-0.5mm}
\[
0<\alpha\le \frac{1-\delta}{3(\Delta+1)}.
\]
Let $\lambda>0$ be the unique fugacity such that
$
\alpha(\mu_{G,\lambda})=\alpha.
$
Then, for every $t\ge 0$,
%$\eps\in(0,1)$, if
%$
%t\ge \frac{n}{\delta}\log\frac{n}%{\eps},
%$
%then
\[
\|\Tmf^t(p)-\mu_{G,\lambda}\|_{\TV}\le n\,e^{-\d\,t/n}.
\]
\end{theorem}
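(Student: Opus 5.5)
The plan is to prove a one-step contraction in $W_1$ and then iterate it. The target inequality is: for any $p,q\in\Pdist(\OmegaG)$ with $\alpha(p)=\alpha(q)=\alpha\le\frac{1-\delta}{3(\Delta+1)}$,
\[
W_1(\Tmf(p),\Tmf(q))\le \Bigl(1-\frac{\delta}{n}\Bigr)W_1(p,q).
\]
Take $q=\mu_{G,\lambda}$, which is stationary by \Cref{lem:pre:gibbs-stationary-measures} and has density $\alpha$. By \Cref{lem:pre:conservation-laws}, every iterate $\Tmf^t(p)$ also has density $\alpha$, so the contraction can be applied at every step. This gives
\[
W_1(\Tmf^t(p),\mu_{G,\lambda})\le (1-\delta/n)^t\,n\le n e^{-\delta t/n},
\]
and $\|\cdot\|_{\TV}\le W_1$ because $\dHam\ge\mathbf 1\{\sigma\ne\eta\}$.

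To prove the contraction, fix an optimal coupling $(\sigma,\eta)$ of $(p,q)$. I would build a coupling of one step as follows.

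\begin{itemize}
\item Use the same update site $v$ (uniform) and the same partner site $u$ (uniform) for both chains.
\item Draw the partner configurations $\sigma'\sim p$ and $\eta'\sim q$ with a special coupling. When $\sigma_v=\eta_v=0$, the update depends on $\sigma'$ only through the bit $\sigma'_u$ (and on $\eta'$ only through $\eta'_u$). Whether $S_v(\sigma,1)\in\OmegaG$ is a property of $\sigma$ alone, and the second output is discarded.
\item Because $u$ is uniform and $\E_u[\sigma'_u]=\alpha(p)=\alpha(q)=\E_u[\eta'_u]$, the pair $(u,\sigma'_u)$ has the same law of the bit as $(u,\eta'_u)$ after averaging over $u$. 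So I can draw a single $u$-free Bernoulli($\alpha$) bit $b$ for both chains. Concretely, the first output in this case is $S_v(\sigma,b)$ if admissible, with the $v$ for the second chain also uniform. This is the point where equal densities are essential.
\item When $\sigma_v=1$ or $\eta_v=1$, the exchange $0$-for-$1$ requires $\sigma'_u=0$, and admissibility in the partner requires $\sigma'_{N(u)}=0$. Here I would couple $(\sigma',u)$ and $(\eta',u)$ identically in the mixed sense: use the same $u$, and couple the events $\{\sigma'_{N^+(u)}=0\}$ and $\{\eta'_{N^+(u)}=0\}$ maximally. By Step~1 of \Cref{lem:qual:mean-field-irreducibility}, each has averaged probability at least $1-(\Delta+1)\alpha$.
\end{itemize}

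Next I would bound the change in expected Hamming distance site by site over the choice of $v$.

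\emph{Disagreement at $v$.} Say $\sigma_v=1$ and $\eta_v=0$. The $\sigma$ chain removes $v$ if $\sigma'_{N^+(u)}=0$, with averaged probability at least $1-(\Delta+1)\alpha$. The $\eta$ chain keeps $0$ unless $b=1$ and adding is admissible, which has probability at most $\alpha$. So the disagreement is resolved with probability at least about $1-(\Delta+2)\alpha$.

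\emph{Agreement at $v$ with $\sigma_v=\eta_v=0$.} A new disagreement arises only if $b=1$ and $v$ is blocked in exactly one of $\sigma,\eta$. That requires a disagreement in $N(v)$, and then costs at most $\alpha$. Summing over $v$, each existing disagreement $w$ is charged at most $\Delta\alpha/n$.

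\emph{Agreement with $\sigma_v=\eta_v=1$.} A new disagreement requires the partner events to differ. That happens with probability at most the difference of the neighbourhood-occupation probabilities, which is at most $(\Delta+1)\alpha$.

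Collecting terms, the expected Hamming distance changes by at most
\[
\frac{d}{n}\bigl(-(1-(\Delta+2)\alpha)+\Delta\alpha+(\Delta+1)\alpha\bigr)\le-\frac{d}{n}\bigl(1-3(\Delta+1)\alpha\bigr)\le-\frac{\delta d}{n},
\]
where $d$ is the current Hamming distance.

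The main obstacle is the last case. The discrepancy there comes from $p$ versus $q$ themselves, not from $(\sigma,\eta)$, so it is not obviously bounded by $d=\E\dHam(\sigma,\eta)$. To handle it, I would couple the partners $(\sigma',\eta')$ by the same optimal coupling, using the same $u$. Then a differing partner event costs at most $\E[\dHam(\sigma'_{N^+(u)},\eta'_{N^+(u)})]$, which after averaging over $u$ is at most $(\Delta+1)d/n$. It is also multiplied by $\Pr(\sigma_v=\eta_v=1)$, which averaged over $v$ is at most $\alpha$. So this case contributes $O(\alpha(\Delta+1)d/n)$. The disagreement-resolution count must then use the same coupled partner, and the bit trick is applied only in the $\sigma_v=\eta_v=0$ branch. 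Reconciling these two partner couplings, which depend on the values at $v$ that are themselves coupled, is what I expect to need the most care. I would first try to decide the partner coupling after revealing $(\sigma_v,\eta_v)$, which is legitimate because the partner is independent of $(\sigma,\eta)$.
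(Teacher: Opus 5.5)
Your final scheme is exactly the coupling the paper uses: reveal $(\sigma_v,\eta_v)$ first, use a common partner bit when the updated site is vacant, and use a fresh optimal coupling of the partners when $\sigma_v=\eta_v=1$. The reconciliation you worry about is not an issue, because the partners are independent of $(\sigma,\eta,v)$, and iterating against the stationary $\mu_{G,\lambda}$ then gives the theorem as the paper does. Your mixed-branch bound $\Pr(\text{remove})-\Pr(\text{insert})\ge 1-(\Delta+2)\alpha$ is cruder than the paper's (you get $-1+3(\Delta+1)\alpha$ instead of $-1+(3\Delta+1)\alpha$, because the paper charges a failed insertion to the neighbour-disagreement term), but it lands exactly on the stated threshold $\frac{1-\delta}{3(\Delta+1)}$, so the argument is correct.
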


The proof of \Cref{thm:quant:mean-field-rapid-convergence} is based on \Cref{lem:quant:mean-field-wasserstein-contraction}.
Recall the definition of the Wasserstein  distance $W_1$, see \Cref{subsec:ac}. Here we prove %The following theorem proves 
the one-step contraction of $W_1$ for the mean-field dynamics when the density is sufficiently small.  \begin{lemma} \label{lem:quant:mean-field-wasserstein-contraction} Let $\delta\in(0,1)$, and let $p,q\in\Pdist(\OmegaG)$ have the same density $\alpha(p)=\alpha(q)=\alpha. $ Assume that $ \alpha\le \frac{1-\delta}{3(\Delta+1)}. $  Then 
\[ W_1(\Tmf(p),\Tmf(q)) \le \left(1-\frac{\delta}{n}\right)W_1(p,q). 
\] 
\end{lemma}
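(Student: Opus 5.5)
Take an optimal coupling $\pi$ of $p$ and $q$, with $(X,Y)\sim\pi$ and an independent copy $(X',Y')\sim\pi$. I will lift it to a coupling of $\Tmf(p)$ and $\Tmf(q)$ whose expected Hamming distance is at most $(1-\delta/n)\E_\pi[\dHam(X,Y)]$. In one step of $\Tmf$ we pick $v,u$ uniformly and independently. The new configuration is $S_v(X,X'_u)$ if the exchange is admissible, and $X$ otherwise. Admissibility of the first output depends on more than the bit $X'_u$: if $X_v=0$ and $X'_u=1$, we need $X_{N(v)}=0$; if $X_v=1$ and $X'_u=0$, we need $X'_{N(u)}=0$.

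Case 1, where the current bit at $v$ is $0$ in both chains. Here $\sigma'$ enters only through the single bit at $u$. Since $p$ and $q$ have the same density and $u$ is uniform, $\Pr(X'_u=1)=\Pr(Y'_u=1)=\alpha$. So, instead of using $\pi$ for the auxiliary configuration, I couple the incoming bits identically: one $\mathrm{Bernoulli}(\alpha)$ bit $b$ serves both chains. Care is needed, because when $X_v\ne Y_v$ the two chains use different auxiliary rules. The cleanest approach is a hybrid construction. Draw $v$, $(X,Y)$, and $u$. If $X_v=Y_v=0$, use the shared bit $b$ in both chains. If $X_v=Y_v=1$, use $(X'_u,Y'_u)$ drawn from $\pi$ together with $(X'_{N(u)},Y'_{N(u)})$. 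In the mixed case, use the marginally correct rule in each chain, coupled to maximize agreement. This is legitimate because each chain marginally receives the correct law: for a vacant site, only the law of $X'_u$ matters.

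The accounting then goes as follows, with $d=\dHam(X,Y)$.

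(a) Choosing $v$ with $X_v\ne Y_v$ (probability $d/n$). Say $X_v=1$ and $Y_v=0$. The $X$-chain removes $v$ when $X'_u=0$ and $X'_{N(u)}=0$. The $Y$-chain adds $v$ when $b=1$ and $Y_{N(v)}=0$. The disagreement is repaired if both chains end with the same bit. Using the Step~1 estimate from \Cref{lem:qual:mean-field-irreducibility}, the event $\{X'_{N^+(u)}=0\}$ has probability at least $1-(\Delta+1)\alpha$. Coupling $b=0$ with this event, which is possible since $\alpha\le 1-(\Delta+1)\alpha$, gives repair with probability at least $1-(\Delta+1)\alpha-\alpha$.

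(b) Choosing $v$ with $X_v=Y_v=0$. A new disagreement arises only if $b=1$ and exactly one of $X_{N(v)},Y_{N(v)}$ is zero. That requires a disagreement inside $N(v)$, so summing over $v$ gives a contribution of at most $\alpha\Delta d/n$.

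(c) Choosing $v$ with $X_v=Y_v=1$. The site is vacated in one chain but not the other only if $(X'_u,X'_{N(u)})$ and $(Y'_u,Y'_{N(u)})$ differ. Averaging over $u$, this has probability at most $(\Delta+1)\E[\dHam(X',Y')]/n=(\Delta+1)d/n$ in expectation. This must be weighted by the probability that $v$ is occupied, about $\alpha$, giving roughly $\alpha(\Delta+1)d/n$ after summing over $v$ with $\Pr(X_v=1)$.

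Totalling, the expected new distance is at most $d(1-\frac1n[1-(\Delta+1)\alpha-\alpha-\alpha\Delta-\alpha(\Delta+1)])$. This is at most $d(1-\frac1n[1-3(\Delta+1)\alpha])\le d(1-\delta/n)$, which matches the constant $3(\Delta+1)$ in the statement.

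The main obstacle is case (c). There the auxiliary configuration matters beyond a single bit, so the factor $\alpha$ must come from the probability that $v$ is occupied, but that event is correlated with the current disagreements. I would handle this by exchanging the order of summation: sum over $v$ with $X_v=Y_v=1$, bound the sum by $n\alpha$ in expectation, and note that $(X',Y')$ is independent of $(X,Y)$, so the expectation factorizes. Finally, the theorem follows by iterating the lemma. Each $\Tmf^t(p)$ keeps density $\alpha$ by \Cref{lem:pre:conservation-laws}, and $\mu_{G,\lambda}$ is stationary by \Cref{lem:pre:gibbs-stationary-measures}. Combining this with $W_1\le n$ and $\|\cdot\|_{\TV}\le W_1$ gives the bound in \Cref{thm:quant:mean-field-rapid-convergence}.
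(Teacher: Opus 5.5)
Your proposal is correct and is essentially the paper's proof. It uses the same lifted coupling (optimal coupling of the current pair, a shared auxiliary bit at sites vacant in both chains justified by $\alpha(p)=\alpha(q)$, a fresh optimal coupling of the auxiliaries when $v$ is occupied in both, and the same factorization over the independent auxiliaries in that case); the only difference is bookkeeping: you bound the repair probability by $1-(\Delta+2)\alpha$ with a union bound, whereas the paper charges failed repairs to the $\alpha\Delta$ neighborhood term, so you obtain $1-3(\Delta+1)\alpha$ instead of the paper's $1-(3\Delta+1)\alpha$, which still suffices under the hypothesis.
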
 

\begin{proof} 
Let $\pi_{\mathrm{opt}}\in\Gamma(p,q)$ be an optimal coupling with respect to the expected Hamming distance. Recall the notation $\Phi_{u,v}$ from \eqref{eq:phi_def}. \Cref{alg:quant:mean-field-coupling} is a coupling of $\Tmf(p)$ and $\Tmf(q)$.

\begin{algorithm}[ht] \caption{Coupling for one step of the mean-field nonlinear dynamics} \label{alg:quant:mean-field-coupling} \begin{algorithmic}[1] \Require Distributions $p,q\in\Pdist(\OmegaG)$ with $\alpha(p)=\alpha(q)$; an optimal coupling $\pi_{\mathrm{opt}}\in\Gamma(p,q)$ with respect to the expected Hamming distance. \Ensure Coupled outputs $\tau^p\sim\Tmf(p)$ and $\tau^q\sim\Tmf(q)$. \State Sample $(\sigma^p,\sigma^q)\sim\pi_{\mathrm{opt}}$. \State Sample $v,u\in V$ uniformly and independently. \If{$\sigma^p_v=1$ \textbf{and} $\sigma^q_v=0$} \State Sample $\xi^p\sim p$. \State Set $\xi^q\gets \xi^p$. \ElsIf{$\sigma^p_v=0$} \State Sample $\xi^q\sim q$. \State Set $\xi^p\gets \xi^q$. \Else \Comment{$\sigma^p_v=\sigma^q_v=1$} \State Sample $(\xi^p,\xi^q)\sim\pi_{\mathrm{opt}}$. \EndIf \State Set $(\tau^p,\widehat{\tau}^p)\gets \Phi_{v,u}(\sigma^p,\xi^p)$ and output $\tau^p$. \State Set $(\tau^q,\widehat{\tau}^q)\gets \Phi_{v,u}(\sigma^q,\xi^q)$ and output $\tau^q$. \State \Return $(\tau^p,\tau^q)$. \end{algorithmic} \end{algorithm}  

We first explain why this construction gives a coupling of $\Tmf(p)$ and
$\Tmf(q)$. We check the marginal law of $\tau^p$; the argument for
$\tau^q$ is the same. By definition, to generate a sample from $\Tmf(p)$, one may first
sample $\sigma^p\sim p$, then sample $v,u\in V$ uniformly and independently,
then sample an auxiliary independent set with law $p$, and finally apply the
map $\Phi_{v,u}$.

%In the coupling above, the only point to check is the case
Suppose first $\sigma^p_v=0$. In this case, the first output independent set does not need
the full auxiliary independent set. The reason is that if $\sigma^p$ has an occupied
neighbor of $v$, then the first output is $\sigma^p$ regardless of the
auxiliary configuration. If $\sigma^p$ has no occupied neighbor of $v$, then
the first output is $S_v(\sigma^p,1)$ precisely when the auxiliary
independent set is occupied at the sampled vertex $u$, and otherwise the first
output is $\sigma^p$. Thus, after averaging over the uniform choice of $u$,
the only information needed from the auxiliary configuration is a Bernoulli
random variable with parameter equal to its density:
\[
\Pr_{\xi\sim p,\ u\sim \mathrm{Unif}(V)}[\xi_u=1]
=
\frac1n\sum_{u\in V}\Pr_{\xi\sim p}[\xi_u=1]
=
\alpha(p).
\]
Since $\alpha(p)=\alpha(q)$, in the branch $\sigma^p_v=0$ we may produce this
Bernoulli variable by sampling the auxiliary independent set from $q$ instead
of from $p$.

On the other hand, if $\sigma^p_v=1$, then the first output may depend on
the full auxiliary independent set. In both branches with $\sigma^p_v=1$,
the coupling samples $\xi^p$ with marginal law $p$: either directly from $p$,
or as the first coordinate of $\pi_{\mathrm{opt}}$. Hence the marginal law of
$\tau^p$ is exactly $\Tmf(p)$. The same argument shows that the marginal law
of $\tau^q$ is $\Tmf(q)$.

We now prove the one-step contraction in Wasserstein distance. For two independent sets $\sigma,\sigma'\in\OmegaG$, let \[ D_{\sigma,\sigma'}:=\{z\in V:\sigma_z\ne\sigma'_z\}. \] We claim that the following inequality holds with probability one: \begin{equation} \label{eq:quant:mean-field-coupling} \begin{aligned} |D_{\tau^p,\tau^q}|-|D_{\sigma^p,\sigma^q}| \le\;& -\mathbf 1\{\sigma^p_v=1 \land \sigma^q_v=0\} \mathbf 1\{\forall\, z\in\neighbor{u}:\xi^p_z=0\} \\[4pt] &- \mathbf 1\{\sigma^p_v=0 \land \sigma^q_v=1\} \mathbf 1\{\forall\, z\in\neighbor{u}:\xi^q_z=0\} \\[4pt] &+ \mathbf 1\{\xi^p_u=\xi^q_u=1\} \mathbf 1\{\exists\, z\in\neighbor{v}:\sigma^p_z\ne\sigma^q_z\} \\[4pt] &+ \mathbf 1\{\sigma^p_v=\sigma^q_v=1\} \mathbf 1\{\exists\, z\in\neighborCl{u}:\xi^p_z\ne\xi^q_z\}. \end{aligned} \end{equation} We consider four cases to prove \eqref{eq:quant:mean-field-coupling}.\\ \textbf{\emph{Case 1:}} $\sigma^p_v=\sigma^q_v=0$. Note that the first two terms on the right-hand side are $0$, so the right-hand side is nonnegative. Since the left-hand side is at most $1$, the only nontrivial case to consider is \[ |D_{\tau^p,\tau^q}|-|D_{\sigma^p,\sigma^q}|=1. \] In this case, $v\notin D_{\sigma^p,\sigma^q}$ and $v\in D_{\tau^p,\tau^q}$. This can happen only if, in exactly one of the two chains, the value at $v$ changes to $1$. Since the two chains use the same auxiliary independent set in this case, we must have $\xi^p_u=\xi^q_u=1$. Moreover, exactly one of the two chains accepts the insertion at $v$, so $\sigma^p$ and $\sigma^q$ must disagree on at least one neighbor of $v$. Therefore the third term on the right-hand side of \eqref{eq:quant:mean-field-coupling} is equal to one.\\ 
\textbf{\emph{Case 2:}} $\sigma^p_v=1$ and $\sigma^q_v=0$. In this case the left-hand side is at most $0$, and the right-hand side is at least $-1$. Hence the only nontrivial case to consider is \[ |D_{\tau^p,\tau^q}|-|D_{\sigma^p,\sigma^q}|=0. \] We consider two subcases. If $\xi^p_u=\xi^q_u=0$, then the discrepancy at $v$ remains only if the $p$-chain fails to change the value at $v$ from $1$ to $0$. This can happen only if there exists $z\in\neighbor{u}$ such that $\xi^p_z=1$. Hence \[ \mathbf 1\{\sigma^p_v=1 \land \sigma^q_v=0\} \mathbf 1\{\forall\, z\in\neighbor{u}:\xi^p_z=0\}=0, \] and the right-hand side is at least zero. If $\xi^p_u=\xi^q_u=1$, then the discrepancy at $v$ remains only if the $q$-chain fails to change the value at $v$ from $0$ to $1$. Since $\sigma^p_v=1$, all neighbors of $v$ are vacant in $\sigma^p$. Therefore this failure implies that there exists $z\in\neighbor{v}$ such that $\sigma^p_z\ne\sigma^q_z$. Thus \[ \mathbf 1\{\xi^p_u=\xi^q_u=1\} \mathbf 1\{\exists\, z\in\neighbor{v}:\sigma^p_z\ne\sigma^q_z\}=1, \] and the right-hand side is at least zero.\\ \textbf{\emph{Case 3:}} $\sigma^p_v=0$ and $\sigma^q_v=1$. This is symmetric to the previous case.\\
\textbf{\emph{Case 4:}} $\sigma^p_v=\sigma^q_v=1$. Similar to \emph{Case 1}, the only nontrivial case to consider is \[ |D_{\tau^p,\tau^q}|-|D_{\sigma^p,\sigma^q}|=1. \] Then $v\notin D_{\sigma^p,\sigma^q}$ and $v\in D_{\tau^p,\tau^q}$. Hence, in exactly one chain, the value at $v$ changes. This can happen only if the two auxiliary independent sets differ somewhere in $\neighborCl{u}$; otherwise both chains behave in the same way at $v$. Consequently, \[ \mathbf 1\{\sigma^p_v=\sigma^q_v=1\} \mathbf 1\{\exists\, z\in\neighborCl{u}:\xi^p_z\ne\xi^q_z\}=1, \] and so the right-hand side is at least $1$. This proves \eqref{eq:quant:mean-field-coupling}. 

So far, we have established the pointwise inequality \eqref{eq:quant:mean-field-coupling}. Now, we compute the expectation of each term in the right-hand side of the inequality, which gives us an upper bound on the expectation of the left-hand side.

\textbf{\emph{First and second terms:}} For the first term we have \begin{align*} &\E\!\left[ \mathbf 1\{\sigma^p_v=1 \land \sigma^q_v=0\} \mathbf 1\{\forall\, z\in\neighbor{u}:\xi^p_z=0\} \right] \\[4pt] ={}& \frac1{n^2} \sum_{v,u\in V} \Pr(\sigma^p_v=1 \land \sigma^q_v=0) \Pr\!\left(\forall\, z\in\neighbor{u}:\xi^p_z=0\right) \\[6pt] \ge{}& \frac1{n^2} \sum_{v,u\in V} \Pr(\sigma^p_v=1 \land \sigma^q_v=0) \left(1-\sum_{z\in\neighbor{u}}m_z(p)\right) \\[6pt] \ge{}& \frac{1-\Delta\alpha}{n} \sum_{v\in V} \Pr(\sigma^p_v=1 \land \sigma^q_v=0). \end{align*} Similarly, for the second term,  \[ \E\!\left[ \mathbf 1\{\sigma^p_v=0 \land \sigma^q_v=1\} \mathbf 1\{\forall\, z\in\neighbor{u}:\xi^q_z=0\} \right] \ge \frac{1-\Delta\alpha}{n} \sum_{v\in V} \Pr(\sigma^p_v=0 \land \sigma^q_v=1). \] Therefore, the bound on the first and second terms is 
\[ 
\frac{1 - \Delta \alpha}{n} \sum_{v \in V} \left[\Pr(\sigma^p_v = 1 \land \sigma^q_v = 0) + \Pr(\sigma^p_v = 0 \land \sigma^q_v = 1)\right] = 
\frac{1-\Delta\alpha}{n} \E\!\left[|D_{\sigma^p,\sigma^q}|\right]. 
\] 

\textbf{\emph{Third term:}} Note that, on the event $\neighbor{v}\cap D_{\sigma^p,\sigma^q}\ne\emptyset$, it cannot be the case that $\sigma^p_v=\sigma^q_v=1$. Hence, on this event, the coupling uses one of the first two branches and the auxiliary independent sets satisfy $\xi^p=\xi^q$. Moreover, the common auxiliary independent set has law either $p$ or $q$, and both have density $\alpha$. Thus \begin{align*} &\E\!\left[ \mathbf 1\{\xi^p_u=\xi^q_u=1\} \mathbf 1\{\exists\, z\in\neighbor{v}:\sigma^p_z\ne\sigma^q_z\} \right] \\[4pt] &= \frac1n \sum_{v\in V} \alpha\, \Pr\!\left(\neighbor{v}\cap D_{\sigma^p,\sigma^q}\ne\emptyset\right) \\[6pt] &= \alpha\, \E\!\left[ \frac1n \left| \bigcup_{z\in D_{\sigma^p,\sigma^q}}\neighbor{z} \right| \right] \\[6pt] &\le \frac{\Delta\alpha}{n} \E\!\left[|D_{\sigma^p,\sigma^q}|\right]. \end{align*} 

\textbf{\emph{Last term:}} On the event $\{\sigma^p_v=\sigma^q_v=1\}$, the coupling samples $(\xi^p,\xi^q)$ from the optimal coupling $\pi_{\mathrm{opt}}$. Hence \begin{align*} &\E\!\left[ \mathbf 1\{\sigma^p_v=\sigma^q_v=1\} \mathbf 1\{\exists\, z\in\neighborCl{u}:\xi^p_z\ne\xi^q_z\} \right] \\[4pt] &= \frac1n \sum_{v\in V} \Pr(\sigma^p_v=\sigma^q_v=1)\, \E_{(\xi^p,\xi^q)\sim\pi_{\mathrm{opt}}} \left[ \frac1n \left| \bigcup_{z\in D_{\xi^p,\xi^q}}\neighborCl{z} \right| \right] \\[6pt] &\le \frac1n \sum_{v\in V} m_v(p)\, \E_{(\xi^p,\xi^q)\sim\pi_{\mathrm{opt}}} \left[ \frac1n \left| \bigcup_{z\in D_{\xi^p,\xi^q}}\neighborCl{z} \right| \right] \\[6pt] &\le \alpha\cdot \frac{\Delta+1}{n} \E_{(\xi^p,\xi^q)\sim\pi_{\mathrm{opt}}} \left[ |D_{\xi^p,\xi^q}| \right] \\[6pt] &= \frac{(\Delta+1)\alpha}{n}W_1(p,q). \end{align*} Since $(\sigma^p,\sigma^q)\sim\pi_{\mathrm{opt}}$, we have \[ W_1(p,q)=\E\!\left[|D_{\sigma^p,\sigma^q}|\right]. \] Therefore, the expectation of the last term is at most \[ \frac{(\Delta+1)\alpha}{n} \E\!\left[|D_{\sigma^p,\sigma^q}|\right]. \] Putting together all the bounds, we get \[ \begin{aligned} \E\!\left[ |D_{\tau^p,\tau^q}|-|D_{\sigma^p,\sigma^q}| \right] &\le -\frac{1-\Delta\alpha}{n} \E\!\left[|D_{\sigma^p,\sigma^q}|\right] \\ &\quad+ \frac{\Delta\alpha}{n} \E\!\left[|D_{\sigma^p,\sigma^q}|\right] + \frac{(\Delta+1)\alpha}{n} \E\!\left[|D_{\sigma^p,\sigma^q}|\right] \\ &= \frac{-1+(3\Delta+1)\alpha}{n} \E\!\left[|D_{\sigma^p,\sigma^q}|\right]. \end{aligned} \] Since $\alpha\le \frac{1-\delta}{3(\Delta+1)}$, we have 
\[ 3(\Delta+1)\alpha\le 1-\delta. 
\] 
Hence 
\[ -1+(3\Delta+1)\alpha\le -1 + 3(\Delta + 1)\alpha \leq 
-\delta. 
\] 
Therefore, 
\[ \E\!\left[|D_{\tau^p,\tau^q}|\right] \le \left(1-\frac{\delta}{n}\right) \E\!\left[|D_{\sigma^p,\sigma^q}|\right]. 
\] 
Finally, 
\[ W_1(\Tmf(p),\Tmf(q)) \le \E\!\left[|D_{\tau^p,\tau^q}|\right] \le \left(1-\frac{\delta}{n}\right)W_1(p,q), 
\] 
which completes the proof. 
\end{proof}

\begin{proof}[Proof of \Cref{thm:quant:mean-field-rapid-convergence}]
By \Cref{lem:pre:conservation-laws}, the mean-field dynamics preserves the
density. Hence, for every $s\ge 0$,
\[
\alpha(\Tmf^s(p))=\alpha(p)=\alpha.
\]
Moreover, $\mu_{G,\lambda}$ is stationary for the mean-field dynamics by
\Cref{lem:pre:gibbs-stationary-measures}, and
\[
\alpha(\mu_{G,\lambda})=\alpha.
\]
Therefore, we may apply
\Cref{lem:quant:mean-field-wasserstein-contraction} iteratively to the pair
\((\Tmf^s(p),\mu_{G,\lambda})\). This gives
\[
W_1(\Tmf^t(p),\mu_{G,\lambda})
\le
\left(1-\frac{\delta}{n}\right)^t
W_1(p,\mu_{G,\lambda}).
\]
Since the Hamming distance is at most \(n\), we have
$
W_1(p,\mu_{G,\lambda})\le n.
$
Also, total variation distance is bounded above by the Hamming Wasserstein
distance. Hence
\[
\|\Tmf^t(p)-\mu_{G,\lambda}\|_{\TV}
\le
W_1(\Tmf^t(p),\mu_{G,\lambda})
\le
n\left(1-\frac{\delta}{n}\right)^t\le
n\exp\left(-\frac{\delta t}{n}\right).
\]
% The final claim follows from
% \[
% \left(1-\frac{\delta}{n}\right)^t
% \le
% \exp\left(-\frac{\delta t}{n}\right).
% \]
\end{proof}

\subsection{Wasserstein contraction for the single-site dynamics}

We now turn to the quantitative convergence theorem for the single-site dynamics.

\begin{theorem}\label{thm:quant:single-site-rapid-convergence} 
Consider the single-site dynamics in a graph $G = (V, E)$ on $n$ vertices with maximum degree $\Delta \geq 0$. Let $\delta\in(0,1)$, and let
$p\in\Pdist(\OmegaG)$ have marginal vector $\vect m(p)$ satisfying
\[
0 < m_v(p)\le \frac{1-\delta}{3(\Delta+1)}
\qquad \forall v\in V.
\]
Let $\vect\lambda\in(0,\infty)^V$ be the unique fugacity vector such that $\vect m(\mu_{G, \vect \lambda}) = \vect m(p)$. Then, for every $t\ge 0$,
% $\eps\in(0,1)$, if
% $
% t\ge \frac{n}{\delta}\log\frac{n}{\eps},
% $
% then
\[
\|\Tss^t(p)-\mu_{G,\vect\lambda}\|_{\TV}\le n\,e^{-\d\,t/n}.
\]
\end{theorem}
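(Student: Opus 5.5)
The plan mirrors the mean-field case. First I will prove a one-step Wasserstein contraction: if $p,q\in\Pdist(\OmegaG)$ have the same marginal vector $\vect m(p)=\vect m(q)=\vect m$ with $m_v\le\frac{1-\delta}{3(\Delta+1)}$ for all $v$, then
\[
W_1(\Tss(p),\Tss(q))\le\left(1-\frac{\delta}{n}\right)W_1(p,q).
\]
The theorem then follows exactly as in the proof of \Cref{thm:quant:mean-field-rapid-convergence}. By \Cref{lem:pre:conservation-laws}, $\vect m(\Tss^s(p))=\vect m(p)$ for every $s\ge0$. By \Cref{lem:pre:gibbs-stationary-measures}, $\mu_{G,\vect\lambda}$ is stationary, and its marginal vector is $\vect m(p)$. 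Iterating the contraction on the pair $(\Tss^s(p),\mu_{G,\vect\lambda})$ and using $\|\cdot\|_{\TV}\le W_1\le n$ gives $n(1-\delta/n)^t\le n e^{-\delta t/n}$.

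For the contraction, I adapt \Cref{alg:quant:mean-field-coupling} with $u=v$. Sample $(\sigma^p,\sigma^q)\sim\pi_{\mathrm{opt}}$ and a uniform vertex $v$. If $\sigma^p_v=0$, the first output needs the auxiliary configuration only through the single bit $\xi_v$, whose law is $\mathrm{Bernoulli}(m_v)$ under both $p$ and $q$. In that case I sample $\xi^q\sim q$ and set $\xi^p\gets\xi^q$; symmetrically, if $\sigma^p_v=1,\sigma^q_v=0$, I sample $\xi^p\sim p$ and set $\xi^q\gets\xi^p$. If $\sigma^p_v=\sigma^q_v=1$, I sample $(\xi^p,\xi^q)\sim\pi_{\mathrm{opt}}$. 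Then I apply $\Phi_{v,v}$ in both chains. The marginals are correct by the same argument as before: the only difference is that the relevant bit now sits at the fixed site $v$ rather than at a uniform $u$, so it is the equality of marginals $m_v(p)=m_v(q)$ that is used, not equality of densities.

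The case analysis gives the pointwise bound \eqref{eq:quant:mean-field-coupling} with $u$ replaced by $v$:
\begin{itemize}
\item a reward $-1$ on $\{v\in D_{\sigma^p,\sigma^q}\}\cap\{\xi_{\neighbor{v}}=0\}$, where $\xi$ is the shared auxiliary configuration;
\item a penalty on $\{\xi^p_v=\xi^q_v=1\}\cap\{\neighbor{v}\cap D_{\sigma^p,\sigma^q}\neq\emptyset\}$;
\item a penalty on $\{\sigma^p_v=\sigma^q_v=1\}\cap\{\neighborCl{v}\cap D_{\xi^p,\xi^q}\ne\emptyset\}$.
\end{itemize}
Taking expectations, and using that the auxiliary configuration is independent of $(\sigma^p,\sigma^q)$ given the branch, the terms bound as follows.
\begin{itemize}
\item The reward is at least $\frac1n\sum_v\Pr(v\in D)\bigl(1-\sum_{z\in\neighbor{v}}m_z\bigr)\ge\frac{1-\Delta m_{\max}}{n}\E|D|$.
\item The second term is at most $\frac1n\sum_v m_v\Pr(\neighbor{v}\cap D\neq\emptyset)\le\frac{\Delta m_{\max}}{n}\E|D|$.
\item The third term is at most $\frac1n\sum_v m_v\,\E|\neighborCl{v}\cap D_{\xi^p,\xi^q}|\le\frac{(\Delta+1)m_{\max}}{n}W_1(p,q)$.
\end{itemize}
The sum is at most $\frac{-1+3(\Delta+1)m_{\max}}{n}\E|D|\le-\frac{\delta}{n}W_1(p,q)$.

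The main obstacle is the third term. In the mean-field proof the neighbourhood $\neighborCl{u}$ was uniformly random and independent of $v$, so averaging over $u$ was immediate. Here the same site $v$ appears both in the conditioning event $\{\sigma^p_v=\sigma^q_v=1\}$ and in the neighbourhood $\neighborCl{v}$. I handle this by bounding $\Pr(\sigma^p_v=\sigma^q_v=1)\le m_v\le m_{\max}$ pointwise in $v$, using independence of $(\xi^p,\xi^q)$ from $(\sigma^p,\sigma^q)$, and then applying the double-counting identity $\sum_v|\neighborCl{v}\cap D|\le(\Delta+1)|D|$. This is exactly why the bound must be uniform in the marginals rather than a bound on the average density.
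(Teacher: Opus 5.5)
Your proposal is correct and follows the paper's proof essentially step for step: the same one-step $W_1$ contraction via the four-term pointwise inequality \eqref{eq:quant:single-site-coupling}, the same three expectation bounds summing to at most $-\delta/n$, and the same iteration against the stationary $\mu_{G,\vect\lambda}$. The only difference is minor and concerns the branch where $\sigma^p_v=0$ or $\sigma^q_v=0$. There you share one full auxiliary configuration, which is the mean-field rule with $u=v$, whereas \Cref{alg:quant:single-site-coupling} draws $\xi^p\sim p$ and then $\xi^q\sim q(\cdot\mid\xi^q_v=\xi^p_v)$. Both couplings are valid, because a chain whose current bit at $v$ is $0$ sees the auxiliary configuration only through its bit at $v$, which is $\mathrm{Bernoulli}(m_v)$ in both chains.
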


The proof of \Cref{thm:quant:single-site-rapid-convergence} is based on the following one-step Wasserstein contraction. The argument is conceptually similar to the mean-field case. However, the coupling used is different.

\begin{lemma}
\label{lem:quant:single-site-wasserstein-contraction}
   Let $\delta\in(0,1)$, and let $p,q\in\Pdist(\OmegaG)$ have the same
marginal vector. Assume that
\[
m_v(p)=m_v(q)\le \frac{1-\delta}{3(\Delta+1)}
\qquad \forall v\in V.
\]
Then
\[
W_1(\Tss(p),\Tss(q))
\le
\left(1-\frac{\delta}{n}\right)W_1(p,q).
\]
\end{lemma}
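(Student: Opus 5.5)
We will mimic the proof of \Cref{lem:quant:mean-field-wasserstein-contraction}. Fix an optimal coupling $\pi_{\mathrm{opt}}\in\Gamma(p,q)$, sample $(\sigma^p,\sigma^q)\sim\pi_{\mathrm{opt}}$ and a uniform vertex $v$, and produce auxiliary configurations $(\xi^p,\xi^q)$ to which we apply $\Phi_{v,v}$. The key difference from the mean-field case is that the partner spin now sits at the same vertex $v$, not at an independent uniform $u$. So when $\sigma^p_v=0$ the first output depends on the auxiliary configuration only through $\xi_v$, which is Bernoulli$(m_v(p))$. Since $m_v(p)=m_v(q)$, this bit can be shared between the two chains.

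\textbf{Construction of the coupling.} We first couple the auxiliary spins at $v$ by setting $\xi^p_v=\xi^q_v=b$ with $b\sim\mathrm{Bernoulli}(m_v)$, independent of $(\sigma^p,\sigma^q)$. We then complete $\xi^p,\xi^q$ according to the case:
\begin{itemize}[label=\textbullet]
\item if $\sigma^p_v=\sigma^q_v=0$, completions are irrelevant;
\item if exactly one chain has a $1$ at $v$, say $\sigma^p_v=1$, sample $\xi^p\sim p(\cdot\mid\xi_v=b)$ and take $\xi^q$ identical where needed; only $\xi^p$ matters for the $p$-chain, while the $q$-chain uses only $b$;
\item if $\sigma^p_v=\sigma^q_v=1$, sample $(\xi^p,\xi^q)$ from $\pi_{\mathrm{opt}}$.
\end{itemize}
Each marginal is then correct by the same reasoning as in the mean-field proof.

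\textbf{Pointwise inequality.} We establish the analogue of \eqref{eq:quant:mean-field-coupling}:
\begin{align*}
|D_{\tau^p,\tau^q}|-|D_{\sigma^p,\sigma^q}|\le{}& -\mathbf 1\{\sigma^p_v\ne\sigma^q_v\}\,\mathbf 1\{\forall z\in N(v):\xi^{*}_z=0\}\\
&+\mathbf 1\{b=1\}\,\mathbf 1\{\exists z\in N(v):\sigma^p_z\neq\sigma^q_z\}\\
&+\mathbf 1\{\sigma^p_v=\sigma^q_v=1\}\,\mathbf 1\{\exists z\in N^+(v):\xi^p_z\ne\xi^q_z\}.
\end{align*}
Here $\xi^*$ denotes the auxiliary configuration of the chain with $\sigma_v=1$. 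The argument uses the same four cases as before.
\begin{itemize}[label=\textbullet]
\item A disagreement at $v$ is repaired when $b=0$ and $\xi^*$ has no occupied neighbor of $v$.
\item When $b=1$, the $0$-chain can be blocked only by a neighbor of $v$ occupied in its own configuration. The other configuration has $\sigma_v=1$, so all neighbors of $v$ are vacant there, and the blocking neighbor is a disagreement.
\item A disagreement is created from agreement $0,0$ only when $b=1$ and the neighborhoods of $v$ differ. From agreement $1,1$ it is created only when the auxiliary configurations differ on $N^+(v)$.
\end{itemize}

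\textbf{Taking expectations.} We bound each term using only marginals.
\begin{itemize}[label=\textbullet]
\item \emph{Repair term.} Pay attention to the conditioning on $\xi_v$. Sampling $\xi^p\sim p(\cdot\mid\xi_v=b)$ and averaging over $b$ gives $\xi^p\sim p$, so $\Pr(\forall z\in N(v):\xi^p_z=0)\ge 1-\sum_{z\in N(v)}m_z\ge 1-\Delta m_{\max}$. Summed over $v$, the repair term therefore contributes at least $\frac{1-\Delta m_{\max}}{n}\E|D_{\sigma^p,\sigma^q}|$. If one does not insist on sharing $b$ in this case, the requirement is simply that $\xi^*\sim p$ marginally.
\item \emph{Creation from $0,0$.} Since $b$ is independent of $(\sigma^p,\sigma^q)$, this term is $\frac1n\sum_v m_v\Pr(N(v)\cap D\ne\emptyset)\le \frac{\Delta m_{\max}}{n}\E|D|$, by exchanging sums over $z\in D$ and $v\in N(z)$.
\item \emph{Creation from $1,1$.} This term is at most $\frac1n\sum_v m_v\,\E_{\pi_{\mathrm{opt}}}|N^+(v)\cap D_{\xi^p,\xi^q}|\le\frac{(\Delta+1)m_{\max}}{n}W_1(p,q)$. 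We use the bound $\Pr(\sigma^p_v=\sigma^q_v=1)\le m_v$, and again exchange sums, which now sum over the ball around each disagreement.
\end{itemize}
In total, the expected change is at most $\frac{-1+(3\Delta+1)m_{\max}}{n}W_1\le-\frac{\delta}{n}W_1$.

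\textbf{Main obstacle.} The main obstacle is checking that the case-dependent sampling of $\xi$ still yields the exact law $\Tss(p)$ for each chain. The worry is that conditioning on $b$, or on the other chain's state at $v$, might bias things. This is resolved by the independence of $(b,\xi)$ from $(\sigma^p,\sigma^q)$ within each chain, combined with the fact that a chain with $\sigma_v=0$ only reads $\xi_v$.
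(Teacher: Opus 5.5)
Your proposal is correct and follows the paper's proof essentially step for step. The paper's coupling (whenever some $\sigma_v=0$, draw $\xi^p\sim p$, set $b=\xi^p_v$, draw $\xi^q\sim q(\cdot\mid\xi^q_v=b)$; otherwise use $\pi_{\mathrm{opt}}$) is equivalent to yours, and its four-term pointwise inequality and expectation bounds coincide with yours; the only differences are that the paper plugs in $m_v\le\frac{1-\delta}{3(\Delta+1)}$ term by term rather than collecting $-1+(3\Delta+1)m_{\max}$, and that your opening line imposes $\xi^p_v=\xi^q_v=b$ even in the $(1,1)$ case, where you should simply discard $b$ (harmless, since the $b$-term vanishes when $\sigma^p_v=\sigma^q_v=1$).
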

\begin{proof}
It is immediate to check that \Cref{alg:quant:single-site-coupling} is  a coupling of $\Tss(p)$ and $\Tss(q)$.  
\begin{algorithm}[ht]
\caption{Coupling for one step of the single-site nonlinear dynamics}
\label{alg:quant:single-site-coupling}
\begin{algorithmic}[1]
\Require Distributions $p,q \in \Pdist(\OmegaG)$; an optimal coupling $\pi_{\mathrm{opt}}\in\Gamma(p,q)$ with respect to the expected Hamming distance.
\Ensure Coupled outputs $\tau^p \sim \Tss(p)$ and $\tau^q \sim \Tss(q)$.
\State Sample $(\sigma^p,\sigma^q)\sim \pi_{\mathrm{opt}}$.
\State Sample $v\in V$ uniformly at random.
\If{$\sigma^p_v=0$ \textbf{or} $\sigma^q_v=0$}
    \State Sample $\xi^p\sim p$.
    \State Set $b \gets \xi^p_v \in \{0,1\}$.
    \State Sample $\xi^q \sim q(\,\cdot \mid \xi^q_v=b)$.
\Else \Comment{$\sigma^p_v=\sigma^q_v=1$}
    \State Sample $(\xi^p,\xi^q)\sim \pi_{\mathrm{opt}}$.
\EndIf
\State Set $(\tau^p,\widehat{\tau}^p) \gets \Phi_{v,v}(\sigma^p, \xi^p)$ and output $\tau^p$.
\State Set $(\tau^q,\widehat{\tau}^q) \gets \Phi_{v,v}(\sigma^q,\xi^q)$ and output $\tau^q$.
\State \Return $(\tau^p,\tau^q)$.
\end{algorithmic}
\end{algorithm}
We now prove the one step contraction in the wasserstein distance. For two independent sets $\sigma, \sigma'$, let $D_{\sigma, \sigma'} := \{u \in V: \sigma_u \neq \sigma'_u\}$. We claim that the following inequality holds with probability one: 
\begin{equation}
\begin{aligned}
\label{eq:quant:single-site-coupling}
|D_{\tau^p, \tau^q}| - |D_{\sigma^p, \sigma^q}|
\;\le\;&
- \mathbf{1}\!\left\{
    \sigma^p_v = 1 \land \sigma^q_v = 0\right\}
    \mathbf{1}\!\left\{ \forall\, u \in \neighbor{v}:\ \xi^p_u = 0
\right\} \\[4pt]
&- \mathbf{1}\!\left\{
    \sigma^p_v = 0 \land \sigma^q_v = 1\right\}
    \mathbf{1}\!\left\{\forall\, u \in \neighbor{v}:\ \xi^q_u = 0
\right\} \\[4pt]
&+ \mathbf{1}\!\left\{
    \xi^p_v = \xi^q_v = 1 \right\}
    \mathbf{1}\!\left\{\exists\, u \in \neighbor{v}:\ \sigma^p_u \neq \sigma^q_u
\right\}\\[4pt]
&+ \mathbf{1}\!\left\{
    \sigma^p_v = \sigma^q_v = 1 \right\}
    \mathbf{1}\!\left\{\exists\, u \in \neighborCl{v}:\ \xi^p_u \neq \xi^q_u
\right\}.
\end{aligned}
\end{equation}

We consider four cases to prove the inequality \eqref{eq:quant:single-site-coupling}. \\
    \textbf{\emph{Case 1:}} $\sigma^p_v = \sigma^q_v = 0$. Note that the first and second terms on the right-hand side are $0$, so the right-hand side is nonnegative. Since the left-hand side is at most $1$, the only nontrivial case to consider is $|D_{\tau^p, \tau^q}| - |D_{\sigma^p, \sigma^q}| = 1$. In this case, $v \notin D_{\sigma^p, \sigma^q}$ and $v \in D_{\tau^p, \tau^q}$. This can happen if, in exactly one of the chains, the value at $v$ changes to $1$. Thus, since $\sigma^p_v = \sigma^q_v = 0$, we conclude that $\xi^p_v = \xi^q_v = 1$. Also, $\sigma^p$ and $\sigma^q$ must disagree on at least one neighbor of $v$ (otherwise, both of them would accept $1$ at $v$). Thus,
\[
\mathbf{1}\!\left\{\xi^p_v = \xi^q_v = 1 \right\}
\mathbf{1}\!\left\{\exists\, u \in \neighbor{v}:\ \sigma^p_u \neq \sigma^q_u\right\} = 1,
\]
and so the right-hand side is at least $1$.\\
\textbf{\emph{Case 2:}} $\sigma^p_v = 1 \land \sigma^q_v = 0$. Note that, in this case, the left-hand side is at most $0$, and the right-hand side is at least $-1$. Hence, the only nontrivial case to consider is $|D_{\tau^p, \tau^q}| - |D_{\sigma^p, \sigma^q}| = 0$, which implies that $v \in D_{\tau^p, \tau^q}$. We consider two subcases:\\
\textbf{\emph{Subcase 2-1:}} $\xi^p_v = \xi^q_v = 1.$ Note that $\sigma^q$ does not accept $1$ at $v$, so $\sigma^q$ has an occupied neighbor of $v$. This means that there exists $u\in\neighbor{v}$ such that $\sigma^p_u\ne\sigma^q_u$. Hence,
\[
\mathbf{1}\!\left\{\xi^p_v = \xi^q_v = 1 \right\}
\mathbf{1}\!\left\{\exists\, u \in \neighbor{v}:\ \sigma^p_u \neq \sigma^q_u\right\} = 1,
\]
and so the right-hand side is at least zero.\\
\textbf{\emph{Subcase 2-2:}} $\xi^p_v = \xi^q_v = 0.$ In this subcase, $\xi^p$ does not accept $1$ at $v$, and so there exists $u\in\neighbor{v}$ such that $\xi^p_u=1$. Therefore,
\[
\mathbf{1}\!\left\{\sigma^p_v = 1 \land \sigma^q_v = 0\right\}
\mathbf{1}\!\left\{ \forall\, u \in \neighbor{v}:\ \xi^p_u = 0 \right\} = 0,
\]
and the right-hand side is at least zero.\\
\textbf{\emph{Case 3:}} $\sigma^p_v = 0 \land \sigma^q_v = 1$. Symmetric to the previous case.\\
\textbf{\emph{Case 4:}} $\sigma^p_v = \sigma^q_v = 1.$ Similarly to {\emph{Case 1}}, the only nontrivial case to consider is $|D_{\tau^p, \tau^q}| - |D_{\sigma^p, \sigma^q}| = 1$. Note that $v \notin D_{\sigma^p, \sigma^q}$ and $v \in D_{\tau^p, \tau^q}$. 
Hence, in exactly one chain, the value at $v$ changes. This means that there exists $u\in\neighborCl{v}$ such that $\xi^p_u\ne\xi^q_u$ (otherwise, both chains would behave similarly). Consequently,
\[
\mathbf{1}\!\left\{\sigma^p_v = \sigma^q_v = 1 \right\}
\mathbf{1}\!\left\{\exists\, u \in \neighborCl{v}:\ \xi^p_u \neq \xi^q_u \right\} = 1,
\]
and so the right-hand side is at least $1$.
\\\\
So far, we have established the inequality $\eqref{eq:quant:single-site-coupling}$. Now, we compute the expectation of each term in the right hand side of the inequality $\eqref{eq:quant:single-site-coupling}$, which gives us an upper bound on the expectation of the left hand side.\\\\
\textbf{\emph{First and second terms:}} We bound the first term below; the second is handled identically. 

\begin{align*} &\E\!\left[ \mathbf{1}\!\left\{\sigma^p_v = 1 \land \sigma^q_v = 0\right\} \mathbf{1}\!\left\{\forall\, u \in \neighbor{v}:\ \xi^p_u = 0\right\} \right] \\[4pt] ={}& \frac1n\sum_{w\in V} \Pr(\sigma^p_w=1 \land \sigma^q_w=0)\, \Pr\!\left(\forall\,u\in\neighbor{w}:\xi^p_u=0\right)  \\[8pt] \ge{}& \frac1n\sum_{w\in V} \Pr(\sigma^p_w=1 \land \sigma^q_w=0) \left(1-\sum_{u\in\neighbor{w}}\Pr(\xi^p_u=1)\right) \\[8pt] \ge{}& \frac1n\sum_{w\in V} \Pr(\sigma^p_w=1 \land \sigma^q_w=0) \left(1-\frac{\Delta}{3(\Delta+1)}(1-\delta)\right) \\[8pt] \ge{}& \frac{2 + \delta}{3n}\sum_{w\in V} \Pr(\sigma^p_w=1 \land \sigma^q_w=0). \end{align*}

Therefore, the bound on the first and second terms is \begin{align*} &\frac{2 + \delta}{3n}\sum_{w\in V} \Bigl[ \Pr(\sigma^p_w=1 \land \sigma^q_w=0) + \Pr(\sigma^p_w=0 \land \sigma^q_w=1) \Bigr] \\ ={}& \frac{2 + \delta}{3n}\sum_{w\in V} \Pr(w\in D_{\sigma^p,\sigma^q})  = \frac{2 + \delta}{3n}\, \E\!\left[|D_{\sigma^p,\sigma^q}|\right]. \end{align*} \textbf{\emph{Third term:}} Note that, on the event $\neighbor{v}\cap D_{\sigma^p,\sigma^q}\ne\emptyset$, it cannot be the case that $\sigma^p_v=\sigma^q_v=1$. Hence, on this event, the coupling in \Cref{alg:quant:single-site-coupling} uses the first branch and forces $\xi^p_v=\xi^q_v$. Now, we have 

\begin{align*} &\E\!\left[ \mathbf 1\{\xi^p_v=\xi^q_v=1\} \mathbf 1\{\exists\,u\in\neighbor{v}:\sigma^p_u\ne\sigma^q_u\} \right] \\[6pt] &\le \frac1n\sum_{w\in V} m_w(p)\, \Pr\!\left(\neighbor{w}\cap D_{\sigma^p,\sigma^q}\ne\emptyset\right) \\[8pt] &\le \frac{1-\delta}{3(\Delta+1)} \cdot \frac1n\sum_{w\in V} \Pr\!\left( w\in\bigcup_{u\in D_{\sigma^p,\sigma^q}}\neighbor{u} \right) \\[8pt] &= \frac{1-\delta}{3(\Delta+1)}\, \E\!\left[ \frac1n \left| \bigcup_{u\in D_{\sigma^p,\sigma^q}}\neighbor{u} \right| \right] \\[8pt] &\le \frac{1-\delta}{3(\Delta+1)}\, \frac{\Delta}{n}\, \E\!\left[|D_{\sigma^p,\sigma^q}|\right] %\\[8pt] &
\le \frac{1-\delta}{3n}\, \E\!\left[|D_{\sigma^p,\sigma^q}|\right]. \end{align*}

\textbf{\emph{Last term:}} Note that on the event $\{\sigma^p_v=\sigma^q_v=1\}$, the coupling in \Cref{alg:quant:single-site-coupling} samples $(\xi^p,\xi^q)$ from the optimal coupling $\pi_{\mathrm{opt}}$. Hence \[ \begin{aligned} &\E\!\left[ \mathbf 1\{\sigma^p_v=\sigma^q_v=1\} \mathbf 1\{\exists\,u\in\neighborCl{v}:\xi^p_u\ne\xi^q_u\} \right]\\
&= \frac1n\sum_{w\in V} \Pr(\sigma^p_w=\sigma^q_w=1)\, \Pr_{(\xi^p,\xi^q)\sim\pi_{\mathrm{opt}}}\!\left( \neighborCl{w}\cap D_{\xi^p,\xi^q}\ne\emptyset \right) \\[6pt] &\le \frac1n\sum_{w\in V} m_w(p)\, \Pr_{(\xi^p,\xi^q)\sim\pi_{\mathrm{opt}}}\!\left( w\in\bigcup_{u\in D_{\xi^p,\xi^q}}\neighborCl{u} \right) \\[6pt] &\le \frac{1-\delta}{3(\Delta+1)}\cdot \frac1n\sum_{w\in V} \Pr_{(\xi^p,\xi^q)\sim\pi_{\mathrm{opt}}}\!\left( w\in\bigcup_{u\in D_{\xi^p,\xi^q}}\neighborCl{u} \right) \\[6pt] &= \frac{1-\delta}{3(\Delta+1)}\, \E_{(\xi^p,\xi^q)\sim\pi_{\mathrm{opt}}} \left[ \frac1n \left| \bigcup_{u\in D_{\xi^p,\xi^q}}\neighborCl{u} \right| \right] \\[6pt] &\le \frac{1-\delta}{3(\Delta+1)}\cdot \frac{\Delta+1}{n} \E_{(\xi^p,\xi^q)\sim\pi_{\mathrm{opt}}} \left[ |D_{\xi^p,\xi^q}| \right] \\[6pt] &= \frac{1-\delta}{3n}W_1(p,q). \end{aligned} 
\] 
Since $(\sigma^p,\sigma^q)\sim\pi_{\mathrm{opt}}$, we have 
\[ 
W_1(p,q) = \E\!\left[|D_{\sigma^p,\sigma^q}|
\right]. 
\] Therefore, 
\[
\E\!\left[ \mathbf 1\{\sigma^p_v=\sigma^q_v=1\} \mathbf 1\{\exists\,u\in\neighborCl{v}:\xi^p_u\ne\xi^q_u\} \right] \le \frac{1-\delta}{3n} \E\!\left[|D_{\sigma^p,\sigma^q}|\right]. 
\]

Putting together all the bounds, we have:
\[
\E\!\left[\,|D_{\tau^p,\tau^q}| - |D_{\sigma^p,\sigma^q}|\,\right]
\;\le\;
\bigl(- (2 + \delta) + 2(1 -\delta)\bigr)\,\frac{\E\!\left[|D_{\sigma^p,\sigma^q}|\right]}{3n}.
\]

In particular,
\[
\E\!\left[|D_{\tau^p,\tau^q}|\right]
\;\le\;
\E\!\left[|D_{\sigma^p,\sigma^q}|\right]\left(1-\frac{\delta}{n}\right).
\]
Thus,
\[
\E\!\left[|D_{\tau^p,\tau^q}|\right]
\;\le\;
\left(1-\frac{\delta}{n}\right)\E\!\left[|D_{\sigma^p,\sigma^q}|\right].
\]

Therefore,
\[
W_1\!\bigl(\Tss(p),\Tss(q)\bigr)
\;\le\;
\E\!\left[|D_{\tau^p,\tau^q}|\right]
\;\le\;
\left(1-\frac{\delta}{n}\right)\E\!\left[|D_{\sigma^p,\sigma^q}|\right]
\;=\;
\left(1-\frac{\delta}{n}\right)W_1(p,q),
\]
which completes the proof.
\end{proof}

\begin{proof}[Proof of \Cref{thm:quant:single-site-rapid-convergence}]
By \Cref{lem:pre:conservation-laws}, the single-site dynamics preserves the
marginal vector. Hence, for every $s\ge0$,
\[
\vect m(\Tss^s(p))=\vect m(p).
\]
Moreover, $\mu_{G,\vect\lambda}$ is stationary for the single-site dynamics
by \Cref{lem:pre:gibbs-stationary-measures}, and by the choice of
$\vect\lambda$,
\[
\vect m(\mu_{G,\vect\lambda})=\vect m(p).
\]
Thus we may apply
\Cref{lem:quant:single-site-wasserstein-contraction} iteratively to the pair
\((\Tss^s(p),\mu_{G,\vect\lambda})\). This gives
\[
W_1(\Tss^t(p),\mu_{G,\vect\lambda})
\le
\left(1-\frac{\delta}{n}\right)^t
W_1(p,\mu_{G,\vect\lambda}).
\]
Since the Hamming distance is at most \(n\), we have
$
W_1(p,\mu_{G,\vect\lambda})\le n$.
Also, total variation distance is bounded above by the Hamming Wasserstein
distance. Hence
\[
\|\Tss^t(p)-\mu_{G,\vect\lambda}\|_{\TV}
\le
n\left(1-\frac{\delta}{n}\right)^t \le
n\exp\left(-\frac{\delta t}{n}\right).
\]
% The final claim follows from
% \[
% \left(1-\frac{\delta}{n}\right)^t
% \le
% \exp\left(-\frac{\delta t}{n}\right).
% \]
\end{proof}

\subsection{The particle system} \label{subsec:quant:particle-system} 
We consider a system of $N\in\bbN$ \emph{particles}
$\vect\s=\{\s^{(i)},\, i=1,\ldots,N\}$, where 
$\s^{(i)}=(\s^{(i)}_u,\, u\in V)\in\O_G$, for each $i\in[N]$. Equivalently,
$\vect\s\in\O_G^N$ can be seen as an independent set on the graph
$G_N:=G^{\sqcup N}$, the disjoint union of $N$ copies $G(i)$ of $G$. The vertex set of $G_N$ is $V_N=\{(i,v):\, i\in[N],v\in V\}$.
For $i\in[N]$ and a probability measure $\nu$ on $\OmegaG^N$, let $P_i\nu$
denote the $i$-th one-particle marginal of $\nu$, that is,
\[
(P_i\nu)(\eta)
:=
\nu\bigl(\{\vect\sigma\in\OmegaG^N:\sigma^{(i)}=\eta\}\bigr),
\qquad \eta\in\OmegaG.
\]  
%We fix a symmetric $n\times n$ stochastic matrix
% $\cK$. 
 The evolution of the particle system is a Markov chain with state space $\O_G^N$ where the occupation variables $\s^{(i)}_v$ and $\s^{(j)}_w$ are exchanged between vertices $(i,v),(j,w)\in V_N$, with the constraint that the resulting configuration is again a valid independent set of $G_N$. 
% The evolution of the particle system is the Markov chain with state space $\O_G^N$ defined as follows. 
%driven by a
%swapping mechanism. 

To define the Markov chain, we introduce some notation. 
%When the selected pair of particles is
%(i,j)$ and the pair of sites $(v,w)\in V^2$ is sampled, the process attempts to
%swap the occupation values at the sites $(i,v)$ and $(j,w)$. 
Given vertices $(i,v),(j,w)\in V_N$, let
$\widehat{\vect\s}$ be the configuration obtained from $\vect\s$ by the swap of $\s^{(i)}_v$ and $\s^{(j)}_w$,
namely
\[
\widehat{\s}^{(k)}_x
=
\begin{cases}
\s^{(j)}_w, & (k,x)=(i,v),\\
\s^{(i)}_v, & (k,x)=(j,w),\\
\s^{(k)}_x, & \text{otherwise},
\end{cases}
\]
for all $(k,x)\in V_N$, and define
\begin{equation}\label{defswapPS}
   \vect\s^{i,v;j,w}
:=
\begin{cases}
\widehat{\vect\s}, & \widehat{\vect\s}\in\O_G^N,\\
\vect\s, & \widehat{\vect\s}\notin\O_G^N,
\end{cases} 
\end{equation}
with the convention $\vect\s^{i,v;j,w}=\vect\s$ when $(i,v)=(j,w)$. For $i\neq j$
the move exchanges a spin of particle $i$ with a spin of particle $j$, whereas
for $i=j$ it is an internal swap within the single particle $\s^{(i)}$; in
both cases the update is accepted only if the resulting configuration remains a
valid independent set, i.e.\ $\widehat{\vect\s}\in\O_G^N$. The corresponding jump
is
\begin{equation}\label{defkacexch}
    \vect\s\to{\vect\s}^{i,v;j,w}.
\end{equation}
We consider two different evolutions: the mean-field dynamics, where $u,v$ are arbitrary sites, and the single-site dynamics, where $u=v$.

\subsubsection{Mean-field particle system}
Here both pairs $i,j\in[N]$ and $v,w\in[n]$ are picked uniformly at random. 
%If $\vect\s\in\O_G^N$ is the current state at time
%$t\in\bbN$, to sample the new state $\vect\s'$ at time $t+1$ we pick
%uniformly at random a pair of indices $i,j\in[N]$, identifying the two particles  $\s^{(i)},\s^{(j)}$, and select a pair of
%sites $v,w\in [n]$ with probability $\tfrac1n\cK(v,w)$. The state $\vect\s'$ is obtained by exchanging the occupation values at the the vertices $(i,v)$ and $(j,w)$, the move being accepted only if the resulting configuration is still a valid
%independent set. As we will see (cf.\ the mean-field case below), it is sometimes
%useful to allow $i=j$, in which case the swap exchanges spins between two sites
%of the same particle. Once the pair of particles has been selected, the swapping
%dynamics is the same in both models.We are in the setting presented above, with transport kernel $\cK=\frac1n$ and
%allowing exchanges between particles with the same label $i=j$. Since the
%dynamics defined above is a linear Markov chain, we can describe it through its
Thus, the stochastic matrix of the Markov chain is given by %$\vect\sigma,\vect\eta\in\Omega_G^N$, define
\[
\PmfN(\vect\sigma,\vect\eta)
:= \frac{1}{N^2 n^2} \sum_{i,j\in[N]} \sum_{v,w\in V}
\mathbf 1\left\{ \vect\sigma^{i,v;j,w}=\vect\eta \right\}.
\]
The dynamics conserves the total
occupation number
\begin{equation}\label{defmeanfieldconserved}
    M(\vect\s)=\sum_{i=1}^N|\s^{(i)}|,\qquad |\s^{(i)}|=\sum_{v\in V}\s^{(i)}_v.
\end{equation}
Hence, in order to obtain an irreducible chain, it is necessary to fix the number of occupied sites. Equivalently,
we restrict to a state space with a fixed density.
\begin{definition}\label{defOmegaNa}
Given a density $\a\in\Big(0,\frac{1}{\D+1}\Big]$,
\begin{equation}
    \OmegaNa=\left\{ \vect\sigma\in\OmegaG^N: M(\vect\sigma)=\lfloor Nn\a\rfloor \right\}.
\end{equation}
\end{definition}
We note that the restriction $\a\in(0,\frac{1}{\D+1}]$ is optimal, in the sense that it is the largest range ensuring that $\OmegaNa$ is nonempty for every graph of maximum degree $\D$ and every $N\ge 1$. For $\a\le \frac{1}{\D+1}$, the initialization procedure in \Cref{sec:alg-application-mean-field} constructs a configuration $\vect\sigma\in\OmegaNa$; see \Cref{alg:prescribed-density-particle-system-sampler}. Conversely, if $\a>\frac{1}{\D+1}$, then taking $G=K_{\D+1}$ makes $\OmegaNa$ empty for all sufficiently large $N$.

We define $\m_{N,\a}^{\mathrm{mf}}$ as the uniform distribution over $\OmegaNa$, 
\begin{equation}
    \m_{N,\a}^{\mathrm{mf}}=\mathrm{Unif}(\OmegaNa)\,.
\end{equation}
By symmetry of the
stochastic matrix $\PmfN$, the pair $(\PmfN,\m_{N,\a}^{\mathrm{mf}})$
is reversible, for any $\a\in (0,\tfrac{1}{\D+1}]$. 
When the density is sufficiently small, we have the following quantitative convergence
statement, expressed in terms of entropy contraction. 
%As we will see, the fast convergence to %equilibrium established in
%Theorem~\ref{thm-disc-partmf}, together with %the Kac chaos property in
%Lemma~\ref{lem:alg:fixed-density-kac-chaos}, %yields a valid  algorithm for the
%hard-core model when its occupancy function is known; see
%Theorem~\ref{thm:prescribed-density-particle-system-sampler}.
Recall that $\alphac$ is the critical density introduced in \eqref{eq:crit_dens}.
\begin{theorem}\label{thm-disc-partmf}
    Let $G = (V, E)$ be a graph on $n$ vertices with maximum degree $\D\ge 3$. For any
    $\g>0$ and $\d\in(0,1)$, consider a density $\a\in(\g,(1-\d)\a_c(\D))$. Then, there exist constants $r:=r(\g,\d,\D)>0$ and $N_0 :=N_0(\g,\d,\D)$ such that
    for any probability measure $\n\in\cP(\O^{\mathrm{mf}}_{N,\a})$ and any $N \geq N_0$, it holds that
    \begin{equation}\label{eq:ent_contr}
        H(\nu P_{N}^{\mathrm{mf}}\mid \m^{\mathrm{mf}}_{N,\a})
        \le \left(1-\frac{r}{Nn}\right) H(\n\mid \m^{\mathrm{mf}}_{N,\a}).
    \end{equation}
\end{theorem}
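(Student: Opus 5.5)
The plan is to view $\OmegaNa$ as the slice of independent sets of size exactly $M=\lfloor Nn\a\rfloor$ in the graph $G_N=G^{\sqcup N}$, which has $Nn$ vertices and maximum degree $\D$. Under this identification $\m^{\mathrm{mf}}_{N,\a}$ is the uniform distribution on independent sets of $G_N$ of size $M$. The key observation is that $\PmfN$ is, up to holding probability, the \emph{down-up walk} on this slice. A move picks a uniformly random vertex $(i,v)$ and a uniformly random vertex $(j,w)$ of $G_N$ and swaps their occupations when admissible. Conditioned on the swap being nontrivial, this removes an occupied vertex and inserts an unoccupied one, provided the result remains independent. We will compare this with the down-up walk of \cite{jain2023optimalmixingdownupwalk,JainPhamVuong_EntropicSparseLocalization}: from an independent set $I$ of size $M$, delete a uniform element $x\in I$, then add a uniform $y$ chosen from all vertices $y\notin I\setminus\{x\}$ with $(I\setminus\{x\})\cup\{y\}$ independent.

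The results of those works state that for $M/|V(G_N)|\le(1-\d)\a_c(\D)$, with $\D\ge3$ and the density bounded below by $\g$, the down-up walk satisfies entropy contraction with rate $c(\g,\d,\D)/M$. They also require the graph to be large enough, which is where $N_0$ enters, since $|V(G_N)|=Nn$ grows with $N$. Their dependence on the graph is only through $\D$ and the density, so it is uniform over disjoint unions. First, we will check that $M/(Nn)\in(\g/2,(1-\d)\a_c(\D))$ once $N\ge N_0$, which handles the floor.

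Next we compare transition kernels. In $\PmfN$, the probability of proposing the ordered pair (occupied $x$, unoccupied $y$) or its reverse is $2/(Nn)^2$ for each such pair. The down-up walk instead moves from $I$ to $(I\setminus\{x\})\cup\{y\}$ with probability $\frac{1}{M}\cdot\frac{1}{A_x(I)}$, where $A_x(I)$ is the number of admissible insertion sites; note that $A_x(I)\le Nn$. Since $M\le Nn$ as well, each off-diagonal entry of $\PmfN$ is at least $\frac{2M}{Nn}\cdot\frac{1}{Nn}\cdot\frac{1}{M}\ge \frac{2M}{Nn}\,P_{\mathrm{du}}(I,I')\ge \g\, P_{\mathrm{du}}(I,I')$. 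Both chains are reversible with respect to the uniform measure on the slice. Hence $\PmfN=\g P_{\mathrm{du}}+(1-\g)R$, where $R$ is another stochastic matrix reversible with respect to $\m^{\mathrm{mf}}_{N,\a}$. Swaps within a single particle and exchanges of equal bits only contribute to $R$ or to the holding probability, which is harmless.

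Entropy contraction transfers through such a mixture by convexity of relative entropy in the first argument. Since $\nu R$ has entropy at most $H(\nu\mid\m)$ because $R$ preserves $\m$, we get $H(\nu\PmfN\mid\m)\le \g H(\nu P_{\mathrm{du}}\mid\m)+(1-\g)H(\nu\mid\m)\le(1-\g c/M)H(\nu\mid\m)$. Finally, $M\le Nn$, so the rate is at least $r/(Nn)$ with $r=\g c$. The main obstacle is the first step: verifying that the cited down-up entropy contraction applies in exactly this form. This means confirming that it holds for the one-step kernel rather than only as a mixing-time bound, that the rate is uniform in the number of vertices, and that it covers the slice of $G_N$, with the precise normalization of the walk in those papers. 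If their statement is phrased via the modified log-Sobolev inequality or the spectral independence of the fixed-size distribution, I would first try to deduce the discrete one-step contraction from the local-to-global entropic theorem they prove, using that $\PmfN$ dominates a constant multiple of the down-up walk.
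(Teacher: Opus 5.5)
\textbf{The comparison step is reversed.} This is a genuine error in your kernel comparison, though it can be repaired. With the down-up walk as you define it, for $I'=(I\setminus\{x\})\cup\{y\}\neq I$ you have $\PmfN(I,I')=2/(Nn)^2$ and $P_{\mathrm{du}}(I,I')=1/(M\,A_x(I))$. Hence
\[
\frac{\PmfN(I,I')}{P_{\mathrm{du}}(I,I')}=\frac{2M}{Nn}\cdot\frac{A_x(I)}{Nn}.
\]
The bound $A_x(I)\le Nn$ that you invoke shows this ratio is \emph{at most} $2M/(Nn)$. That is, it gives $\PmfN\le \frac{2M}{Nn}P_{\mathrm{du}}$ off the diagonal, which is the opposite of what the decomposition $\PmfN=\gamma P_{\mathrm{du}}+(1-\gamma)R$ needs. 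Your middle inequality would require $A_x(I)\ge Nn$, and this is false as soon as $M\ge 2$. (For the Metropolis variant of the walk, which adds a uniform $y\in V(G_N)$ and rejects inadmissible moves, your ratio $2M/(Nn)$ would be exact, but not for the walk you wrote down.)

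\textbf{The fix.} The missing ingredient is a \emph{lower} bound on the number of admissible insertion sites. This is where the hard constraint and the upper density bound enter. One has $A_x(I)=Nn-\bigl|\bigcup_{z\in I\setminus\{x\}}N^+(z)\bigr|\ge Nn-(\Delta+1)(M-1)$. With $\beta=M/(Nn)<\alphac$, this gives
\[
\frac{A_x(I)}{Nn}\ge 1-(\Delta+1)\alphac=\frac{1}{1+(\Delta+1)\lambdac}.
\]
So the entrywise domination holds with $\gamma'=\gamma/(1+(\Delta+1)\lambdac)$ in place of $\gamma$. You should also verify $R(I,I)\ge 0$, i.e.\ $\PmfN(I,I)\ge\gamma' P_{\mathrm{du}}(I,I)$. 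This holds because both proposed sites are occupied, or both empty, with probability $\beta^2+(1-\beta)^2\ge\frac12>\gamma'$.

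\textbf{Comparison with the paper's route.} With this repair your argument works, and it is a different transfer mechanism from the paper's. The paper takes the log-Sobolev inequality for the down-up walk on the slice of $G_N$, with constant of order $1/(Nn)$, from \cite{jain2023optimalmixingdownupwalk} and its companion work. It transfers this LSI to $\PmfN$ by comparing Dirichlet forms. It then converts the LSI into one-step entropy contraction via \cite[Lemma 2.5]{EntLecNotes} and \cite[Proposition 6]{miclo2008remarques}. That conversion needs $\PmfN$ to be positive semidefinite, which is why the paper checks the holding probability $\ge\frac12$.

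Your mixture-plus-convexity step transfers contraction directly and never needs positivity of $\PmfN$. In exchange, it requires a one-step contraction for $P_{\mathrm{du}}$ itself. You can obtain this from the same LSI by the same Miclo-type argument. Positivity is automatic there, since $P_{\mathrm{du}}=DD^*$ with $D$ the down operator and $D^*$ its adjoint (the up step). Alternatively, it follows directly from entropy contraction of $D$, if that is how the cited result is phrased. Your MLSI fallback would not be enough, because an MLSI only controls continuous-time decay.
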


In \Cref{sec:alg-application-mean-field}, we use this entropy contraction result to prove the $\widetilde O(Nn)$ mixing-time bound for our mean-field particle-system sampler.

\begin{proof}
    The entropy contraction \eqref{eq:ent_contr} can be derived from 
    a logarithmic Sobolev inequality (LSI) for the pair
$(P^{\mathrm{mf}}_N,\m^{\mathrm{mf}}_{N,\a})$ established recently in \cite{jain2023optimalmixingdownupwalk,JainPhamVuong_EntropicSparseLocalization}.

    The LSI is formulated as follows. 
    The Dirichlet form associated to the reversible pair $(P_{N}^{\mathrm{mf}},\m^{\mathrm{mf}}_{N,\a})$ is given by
    \begin{equation}\label{eq:dirichlet-particle}
        \cE_{N,\a}(F,G)
        =
        \frac{1}{2N^2n^2}
        \sum_{i,j=1}^{N}\sum_{v,w\in V}
        \m^{\mathrm{mf}}_{N,\a}\!\left[
            \nabla^{i,v;j,w}F\,\nabla^{i,v;j,w}G
        \right],
    \end{equation}
    for all $F,G:\O^{\mathrm{mf}}_{N,\a}\to\bbR$, where the gradient operator
    $\nabla^{i,v;j,w}$ is defined by
    \begin{equation}\label{eq:undeHM}
        \nabla^{i,v;j,w}F(\vect\s) := F(\vect\s^{i,v;j,w}) - F(\vect\s).
    \end{equation}
    The LSI for the pair $(P^{\mathrm{mf}}_N,\m^{\mathrm{mf}}_{N,\a})$ with constant $\varrho\ge 0$  is the inequality
    \begin{equation}\label{eq:lsiHMa}
        \cE_{N,\a}(\sqrt{F},\sqrt{F}) \geq \varrho\,\Ent_{\m^{\mathrm{mf}}_{N,\a}}[F],
    \end{equation}
    for all
    $F:\O^{\mathrm{mf}}_{N,\a}\to\bbR_{\ge 0}$,
    where we use the notation \(\Ent_{\m^{\mathrm{mf}}_{N,\a}}[F]=\m^{\mathrm{mf}}_{N,\a}[F\log\left(F/\m^{\mathrm{mf}}_{N,\a}[F]\right)]\).
    
Set
\[
\beta:=\frac{\lfloor \a Nn\rfloor}{Nn}.
\]
Since $\a\in(\g,(1-\d)\a_c(\D))$, by taking
$N\ge N_0(\g,\d,\D)$ we may ensure that
\begin{equation}\label{eq:eqbeta}
\beta\in\left(\frac{\g}{2},(1-\d)\a_c(\D)\right).
\end{equation}
The inequality~\eqref{eq:lsiHMa} is established with 
    \begin{equation}\label{eq:r0}
    \varrho =\frac{r_0}{Nn},
    \end{equation}
    with a constant  $r_0=r_0(\g,\d,\D)>0$, 
    by applying 
    \cite[Theorem 2]{jain2023optimalmixingdownupwalk} and
    \cite[Equation 6.3]{JainPhamVuong_EntropicSparseLocalization} in the density window \eqref{eq:eqbeta},
%     \[
% \beta\in\left(\frac{\gamma}{2},(1-\delta)\alpha_c(\Delta)\right),
% \]
    together with a simple comparison of the Dirichlet form $\cE_{N,\a}(\cdot,\cdot)$ with the Dirichlet form of the down-up walk defined
    in \cite{jain2023optimalmixingdownupwalk}.

     Once the statement~\eqref{eq:lsiHMa} is available, the desired contraction \eqref{eq:ent_contr}, with $r=r_0$, follows from
    \cite[Lemma 2.5]{EntLecNotes} and \cite[Proposition 6]{miclo2008remarques}. We note that the argument in \cite[Lemma 2.5]{EntLecNotes} requires  the positive semidefiniteness of the
    stochastic matrix $P_N^{\mathrm{mf}}$.

The latter can be checked by verifying the stronger property that the holding probability satisfies
    \begin{equation}\label{eq:lazy}
        P_{N}^{\mathrm{mf}}(\vect\s,\vect\s)\ge \frac12,\qquad
        \forall\,\vect\s\in \O^{\mathrm{mf}}_{N,\a}.
    \end{equation}
To see this, note that $\b \in(0,1)$. Since exactly $M(\vect\s)=\b Nn$ sites are occupied and
    $Nn-M(\vect\s)=(1-\b)Nn$ are empty, the probability that both selected sites $(i,v),(j,w)$ are
    occupied equals $\b^2$, and the probability that both are empty equals
    $(1-\b)^2$. Hence
    \[
        P_{N}^{\mathrm{mf}}(\vect\s,\vect\s)\ge (1-\b)^2+\b^2 = 1-2\b(1-\b)\ge \frac12,
    \]
    where the last inequality uses $\b(1-\b)\le \tfrac14$,
    $\b\in[0,1]$.
\end{proof}

\subsubsection{Single-site particle system} 
Here, for $N \geq 2$, we pick $i,j\in[N]$ uniformly at random with the constraint $i\neq j$ and then we pick $v=w\in V$ uniformly at random. We use the notation, 
$\vect\sigma^{i,j;v} := \vect\sigma^{i,v;j,v}$, so that the moves of the chain are % and consider only swaps 
of the type
\begin{equation}\label{defswapsinglesite}
    \vect\s\to \vect\sigma^{i,j;v}.
\end{equation}
The   stochastic matrix of the Markov chain is given by
\begin{equation}\label{defPssNr}
     \PssNr(\vect\sigma,\vect\eta) := \frac{1}{N(N-1)n}
     \sum_{\substack{i,j\in[N]\\ i\ne j}} \sum_{v\in V}
     \mathbf 1\left\{ \vect\sigma^{i,j;v}=\vect\eta \right\}.
\end{equation}
As in the mean-field case, the dynamics is conservative. However, here there is a conserved quantity for every $v\in V$. For every
$\vect\s\in\O_G^N$ and every $v\in V$, let
$R_v(\vect\s)=\sum_{i\in[N]}\s_v^{(i)}$ be the number of ones in $\vect\s$ along
the stripe $\{(i,v):i\in[N]\}$. The vector
\begin{equation}\label{defconstonsingle}
    R(\vect\sigma):=(R_v(\vect\sigma))_{v\in V}
\end{equation}
is conserved by the dynamics defined in \eqref{defPssNr}, since all allowed
exchanges are of the type \eqref{defswapsinglesite}. 
%Note the difference with the
%mean-field case, where a single scalar quantity is preserved, namely the total
%occupation number $M(\vect\s)$ of \eqref{defmeanfieldconserved}: here the
%conserved object is the whole vector $R(\vect\s)$. In order to have any chance of
%convergence, 
To obtain an irreducible chain one must therefore restrict
the state space to a fixed value of \eqref{defconstonsingle}.
\begin{definition}
For $\vec r=(r_v)_{v\in V}\in\{0,\ldots,N\}^V$, define
\[
\OmegaNr := \left\{ \vect\sigma\in\OmegaG^N:
R_v(\vect\sigma)=r_v\ \text{for every }v\in V \right\}.
\]
\end{definition}
We note that $\OmegaNr$ is nonempty whenever
\[
(\Delta + 1) \max_{v \in V} r_v \leq N.
\]
Under this condition, in \Cref{sec:alg-application-single-site}, we provide an explicit algorithm for constructing an initial configuration $\vect \sigma \in \OmegaNr$ (see \Cref{alg:initial-particle-configuration}).

By symmetry of the chain $\PssNr$, whenever
$\OmegaNr\neq\emptyset$, the uniform measure
\begin{equation}\label{defunifmeassingle}
    \muNr := \mathrm{Unif}(\OmegaNr)
\end{equation}
is reversible with respect to $\PssNr$. We now prove a rapid mixing bound for the chain $\PssNr$ by comparing it with Glauber dynamics for proper colorings of an auxiliary graph. 

For a graph $H$ and an integer $q\ge 1$, let $\mathcal C_q(H)$ denote the
set of proper $q$-colorings of $H$. The Glauber dynamics on
$\mathcal C_q(H)$ is the Markov chain which is defined as follows: at each step, it chooses a vertex
$x\in V(H)$ uniformly at random and a color $j\in[q]$ uniformly at random,
and recolors $x$ with $j$ if the resulting coloring is proper; otherwise it
stays put. This chain is reversible with respect to the uniform distribution
on $\mathcal C_q(H)$.

\begin{assumption}
\label{assumption:coloring-mixing}
    We will use the following hypothesis about mixing of colorings. It is widely conjectured that Glauber dynamics for proper $q$-colorings of a graph $H$ mixes rapidly whenever $q \geq \Delta(H) + 2$. In this paper, we assume the following form of this conjecture: there exist absolute constants $c_{\mathrm{col}} > 1$ and $\Delta_0 > 0$ such that, for every graph $H$ of maximum degree $\Delta (H) \ge \Delta_0$, the Glauber dynamics on $\mathcal C_q(H)$ mixes in

\[
O
\left(
|V(H)|\log\frac{|V(H)|}{\zeta}
\right)
\]
steps within total variation distance $\zeta$ of the uniform distribution on $\mathcal C_q(H)$ whenever $q \geq c_{\mathrm{col}} \Delta (H)$. 
Note that we assume the constant hidden in big-O is absolute (determined by $c_{\mathrm{col}}$ and $\Delta_0$) and independent of $\Delta (H)$ or $q$.
The classical coupling argument of Jerrum proves such a bound for any fixed $c_{\mathrm{col}} > 2$ and $\Delta_0 = 0$; see \cite{Jer95}.
\end{assumption}

Let $\gamma \in (0, 1)$. Let $\vec r\in \mathbb Z_{\ge 0}^{V}$ satisfy the following assumption

\begin{equation}
    \label{eq:particle-system-vertex-count-assumption}
    \gamma \leq \frac{r_v}{N} \leq \frac{1}{c_{\mathrm{col}}(\Delta + 1)} \qquad \forall v \in V.
\end{equation}

Note that by the discussion above, $\OmegaNr \neq \emptyset$. We associate to $\vec r$ an
auxiliary graph $H_{\vec r}$ as follows. For each $v\in V$, let
\[
C_v:=\set{(v,a):a\in[r_v]},
\]
and define $H_{\vec r}$ to be the graph with vertex set
\[
V(H_{\vec r})=\bigcup_{v\in V}C_v,
\]
where each $C_v$ induces a clique, and for every edge $\{u,v\}\in E$, all
vertices of $C_u$ are adjacent to all vertices of $C_v$.

Let $\Pcol$ denote the Glauber dynamics on $\mathcal C_N(H_{\vec r})$. We
compare $\PssNr$ with the Glauber dynamics for proper $N$-colorings of
$H_{\vec r}$.

For a coloring $c\in\mathcal C_N(H_{\vec r})$, define
\[
\Psi_{\vec r}(c):=(\sigma^{(1)},\ldots,\sigma^{(N)}),
\]
where
\[
\sigma^{(i)}
:=
\set{
v\in V:
\text{there exists }a\in[r_v]\text{ such that }c(v,a)=i
}.
\]

\begin{lemma}
\label{lem:configurations-to-colorings}
Let
$H_{\vec r}$ and $\Psi_{\vec r}$ be the graph and the map defined above.

First, for every $c\in\mathcal C_N(H_{\vec r})$, one has
\[
\Psi_{\vec r}(c)\in\Omega^{\mathrm{ss}}_{N,\vec r}.
\]

Second, for every $\vect\sigma\in\Omega^{\mathrm{ss}}_{N,\vec r}$,
\[
\size{\Psi_{\vec r}^{-1}(\vect\sigma)}
=
\prod_{v\in V} r_v!.
\]

Consequently, if $c$ is sampled uniformly from $\mathcal C_N(H_{\vec r})$,
then $\Psi_{\vec r}(c)$ is distributed according to
$\mu^{\mathrm{ss}}_{N,\vec r}$.
\end{lemma}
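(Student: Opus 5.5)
The plan is to prove the two structural claims directly from the definitions of $H_{\vec r}$ and $\Psi_{\vec r}$, and then deduce the distributional statement by a fiber-counting argument.

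\emph{First claim: $\Psi_{\vec r}(c)\in\OmegaNr$.} Fix a proper coloring $c\in\mathcal C_N(H_{\vec r})$ and set $\vect\sigma=\Psi_{\vec r}(c)$.
\begin{enumerate}
\item Each $\sigma^{(i)}$ is an independent set of $G$. Suppose $\{u,v\}\in E$ with $u,v\in\sigma^{(i)}$. Then there are $a\in[r_u]$ and $b\in[r_v]$ with $c(u,a)=c(v,b)=i$. But $C_u$ and $C_v$ are completely joined in $H_{\vec r}$, so $(u,a)$ and $(v,b)$ are adjacent. This contradicts properness.
\item The stripe counts are correct: $R_v(\vect\sigma)=r_v$ for every $v$. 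Since $C_v$ is a clique, the $r_v$ colors $c(v,1),\dots,c(v,r_v)$ are pairwise distinct. The particles containing $v$ are exactly those indices $i$ appearing among these colors, so exactly $r_v$ particles contain $v$.
\end{enumerate}

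\emph{Second claim: every fiber has size $\prod_{v\in V} r_v!$.} Fix $\vect\sigma\in\OmegaNr$ and let $I_v:=\{i:\sigma^{(i)}_v=1\}$, so that $|I_v|=r_v$. A coloring $c$ lies in $\Psi_{\vec r}^{-1}(\vect\sigma)$ if and only if, for each $v$, the map $a\mapsto c(v,a)$ is a bijection from $[r_v]$ onto $I_v$.
\begin{enumerate}
\item Necessity. The colors on $C_v$ are distinct, as shown above, and by definition of $\Psi_{\vec r}$ their set must equal $I_v$.
\item Sufficiency. Take any family of bijections $\pi_v:[r_v]\to I_v$ and set $c(v,a)=\pi_v(a)$.
\begin{itemize}
\item Edges inside $C_v$ get distinct colors because $\pi_v$ is injective.
\item Consider an edge between $(u,a)$ and $(v,b)$ with $\{u,v\}\in E$. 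Equal colors $i$ would mean $u,v\in\sigma^{(i)}$, which is impossible since $\sigma^{(i)}$ is independent.
\item So $c$ is proper, and by construction $\Psi_{\vec r}(c)=\vect\sigma$.
\end{itemize}
\end{enumerate}
The bijections can be chosen independently for each $v$, giving exactly $\prod_v r_v!$ preimages.

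\emph{Consequence.} By the first claim, $\Psi_{\vec r}$ maps $\mathcal C_N(H_{\vec r})$ into $\OmegaNr$. By the second claim, every fiber has the same size. Hence
\[
\Pr[\Psi_{\vec r}(c)=\vect\sigma]=\frac{\prod_v r_v!}{|\mathcal C_N(H_{\vec r})|}
\]
for every $\vect\sigma\in\OmegaNr$. This value is constant in $\vect\sigma$, so it equals $1/|\OmegaNr|$, and $\Psi_{\vec r}(c)\sim\muNr$.

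The main point to handle carefully is the sufficiency direction: checking that arbitrary per-vertex bijections always yield a proper coloring. This is exactly where the independence of each particle $\sigma^{(i)}$ is used. Otherwise the argument is routine bookkeeping.
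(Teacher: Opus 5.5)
Your proposal is correct and follows essentially the same route as the paper's proof: the clique structure gives distinct colors on each $C_v$ (hence $R_v=r_v$), the complete joins between $C_u$ and $C_v$ give independence of each $\sigma^{(i)}$, and each fiber consists of the choices of bijections from $C_v$ onto $I_v$, which yields equal fiber sizes. The one difference is that you verify the sufficiency direction explicitly, namely that every family of bijections produces a proper coloring because each $\sigma^{(i)}$ is independent, whereas the paper leaves this implicit in the observation that $I_u\cap I_v=\emptyset$ for $\{u,v\}\in E$.
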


\begin{proof}
Fix $c\in\mathcal C_N(H_{\vec r})$, and let
\[
\Psi_{\vec r}(c)=(\sigma^{(1)},\dots,\sigma^{(N)}).
\]
Since $C_v$ is a clique of size $r_v$, the coloring $c$ uses $r_v$ distinct
colors on $C_v$. Therefore $v$ belongs to exactly $r_v$ of the sets
$\sigma^{(1)},\dots,\sigma^{(N)}$, and hence
\[
\sum_{i=1}^{N}\sigma^{(i)}=\vec r.
\]
It remains to show that each $\sigma^{(i)}$ is an independent set. Suppose
that $u,v\in \sigma^{(i)}$ for some $i\in[N]$. Then there exist
$a\in[r_u]$ and $b\in[r_v]$ such that
$
c(u,a)=c(v,b)=i.
$
Since $c$ is proper, $(u,a)$ and $(v,b)$ are not adjacent in $H_{\vec r}$.
In particular, one cannot have $\{u,v\}\in E$. Thus $\sigma^{(i)}$ is an
independent set, and therefore
$
\Psi_{\vec r}(c)\in\Omega^{\mathrm{ss}}_{N,\vec r}.
$

Now fix
$\vect\sigma=(\sigma^{(1)},\dots,\sigma^{(N)})
\in\Omega^{\mathrm{ss}}_{N,\vec r}$.
For each $v\in V$, let
\[
I_v:=\set{i\in[N]:v\in \sigma^{(i)}}.
\]
Note that $\size{I_v}=r_v$ for every $v\in V$. Moreover, if
$\{u,v\}\in E$, then $I_u\cap I_v=\emptyset$, since each $\sigma^{(i)}$ is
an independent set.

A coloring $c$ satisfies $\Psi_{\vec r}(c)=\vect\sigma$ if and only if, for
every $v\in V$, the colors used on $C_v$ are exactly the elements of $I_v$.
Since $C_v$ is a clique, this is equivalent to choosing a bijection from
$C_v$ onto $I_v$. The number of such choices is $r_v!$, independently for
each $v\in V$. Hence
\[
\size{\Psi_{\vec r}^{-1}(\vect\sigma)}
=
\prod_{v\in V} r_v!.
\]

The final claim follows immediately, since every element of
$\Omega^{\mathrm{ss}}_{N,\vec r}$ has the same number of preimages under
$\Psi_{\vec r}$.
\end{proof}

The following theorem proves the chain $\PssNr$ mixes rapidly by comparing it with $\Pcol$.

\begin{theorem}
\label{thm:particle-system-mixing-via-coloring}
Let $\gamma \in (0, 1)$, and let $c_{\mathrm{col}} > 1$ and $\Delta_0 > 0$ be the coloring-mixing constants from \Cref{assumption:coloring-mixing}. Let $G = (V, E)$ be a graph on $n$ vertices with maximum degree $\Delta \geq \Delta_0$. Then there exists $C_0=C_0(\Delta,c_{\mathrm{col}})$ such that,
for every $N\ge C_0$, if $\vec r\in \mathbb Z_{\ge 0}^{V}$ satisfies
 \eqref{eq:particle-system-vertex-count-assumption}, then 
 the particle system chain $\PssNr$ mixes
within total variation distance $\zeta$ of $\muNr$ in
\[
O_{\gamma,\Delta, c_{\mathrm{col}}}
\left(
Nn\log\frac{Nn}{\zeta}
\right)
\]
steps.
\end{theorem}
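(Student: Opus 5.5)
The plan is to show that $\PssNr$ is, up to a constant-factor time change, exactly the image under $\Psi_{\vec r}$ of the coloring Glauber dynamics $\Pcol$ on $\mathcal C_N(H_{\vec r})$. The mixing bound then transfers from \Cref{assumption:coloring-mixing}, and no spectral-gap or entropy argument is needed; this matters because a spectral-gap route would lose an extra factor $\log(1/\min \muNr)\approx Nn$.

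\emph{Step 1: the hypotheses of \Cref{assumption:coloring-mixing} hold for $H_{\vec r}$ with $q=N$.} A vertex $(v,a)$ has $r_v-1$ neighbours inside $C_v$ and $\sum_{u\in\neighbor{v}} r_u$ neighbours outside it. So \eqref{eq:particle-system-vertex-count-assumption} gives $\Delta(H_{\vec r})\le (\Delta+1)\max_v r_v\le N/c_{\mathrm{col}}$, that is, $N\ge c_{\mathrm{col}}\Delta(H_{\vec r})$. In the other direction, $\Delta(H_{\vec r})\ge r_v-1\ge \gamma N-1$, and this is at least $\Delta_0$ once $N\ge C_0$. (If one prefers $C_0$ not to depend on $\gamma$, use the degree $\sum_{u\in\neighbor{v}}r_u$ of a non-isolated vertex instead.) Hence $\Pcol$ mixes to within $\zeta$ in $O(|V(H_{\vec r})|\log(|V(H_{\vec r})|/\zeta))$ steps, where $|V(H_{\vec r})|=\sum_v r_v\in[\gamma Nn, Nn]$.

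\emph{Step 2: lumping.} Recolouring $(v,a)$ from colour $i$ to colour $j\neq i$ is legal exactly when no vertex of $C_v$ or of $C_u$, $u\in\neighbor{v}$, has colour $j$. Under $\Psi_{\vec r}$ this says $\sigma^{(j)}_v=0$ and $\sigma^{(j)}$ has no occupied neighbour of $v$, which is precisely admissibility of the swap $\vect\sigma\to\vect\sigma^{i,j;v}$ with $\sigma^{(i)}_v=1$, $\sigma^{(j)}_v=0$. For fixed $\vect\sigma=\Psi_{\vec r}(c)$, each such admissible swap is produced by exactly one vertex $(v,a)$, namely the one coloured $i$, together with one colour $j$. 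Its probability is therefore $\frac{1}{|V(H_{\vec r})|N}$, independently of which lift $c$ was chosen. So $\Psi_{\vec r}(c_t)$ is itself a Markov chain, with kernel
\[
P_{\mathrm{lump}}(\vect\sigma,\vect\sigma^{i,j;v})=\frac{1}{|V(H_{\vec r})|N}
\]
for every admissible nontrivial swap, all remaining mass being holding. In $\PssNr$ the same swap is generated by the ordered pairs $(i,j)$ and $(j,i)$, so it has probability $\frac{2}{N(N-1)n}$. Setting $\theta:=\frac{2|V(H_{\vec r})|}{(N-1)n}$ we get
\[
\PssNr=(1-\theta)I+\theta P_{\mathrm{lump}}.
\]
Here $\theta\ge 2\gamma$, and $\theta\le \frac{2N}{c_{\mathrm{col}}(\Delta+1)(N-1)}\le 1$ for $N\ge C_0(\Delta,c_{\mathrm{col}})$. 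By \Cref{lem:configurations-to-colorings}, the pushforward of the uniform measure on colorings is $\muNr$. Starting $\PssNr$ from $\vect\sigma$, lift to any $c\in\Psi_{\vec r}^{-1}(\vect\sigma)$; pushing forward does not increase total variation distance, so $P_{\mathrm{lump}}$ mixes within $\zeta$ in $T:=O(Nn\log(Nn/\zeta))$ steps.

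\emph{Step 3: undoing the laziness.} After $t$ steps of $\PssNr$, the number of $P_{\mathrm{lump}}$-moves performed is $\mathrm{Bin}(t,\theta)$. Take $t=\lceil 2T/\theta\rceil\le T/\gamma+1$. A Chernoff bound makes the probability of fewer than $T$ moves at most $e^{-T/8}$, which is below $\zeta$ since $T\ge\log(1/\zeta)$. Total variation distance to stationarity is non-increasing in the number of moves, so the distance is at most $2\zeta$ after $O_{\gamma,\Delta,c_{\mathrm{col}}}(Nn\log(Nn/\zeta))$ steps. The step needing the most care is Step 2: verifying that the lumped transition probabilities really do not depend on the lift $c$, including that colour $i$ appears exactly once in $C_v$ and that the trivial cases $j=i$ and $\sigma^{(j)}_v=1$ only contribute holding. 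The rest is routine.
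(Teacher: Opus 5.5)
Your proposal is correct and follows the paper's route: compare $\PssNr$ with Glauber dynamics on $\mathcal C_N(H_{\vec r})$ through $\Psi_{\vec r}$, and identify it as a lazy version of that chain with laziness factor $\frac{2|V(H_{\vec r})|}{n(N-1)}$. The differences are cosmetic. The paper first lazifies $\Pcol$ into an auxiliary chain $\widehat{\Pcol}$ on colorings and then projects; you project $\Pcol$ directly, checking lumpability explicitly, and then lazify. You also spell out the removal of the laziness with a binomial/Chernoff argument, which the paper leaves implicit.
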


\begin{proof}
For every
$(v,a)\in V(H_{\vec r})$,
\[
\deg_{H_{\vec r}}(v,a)
=
(r_v-1)+\sum_{u\in\neighbor{v}}r_u
\le
\sum_{u\in\neighborCl{v}}r_u.
\]
Now we have
\[
\deg_{H_{\vec r}}(v,a)
\le
(\Delta+1)\max_{u\in V} r_u
\le
\frac{N}{c_{\mathrm{col}}},
\]
where the last inequality follows from
\eqref{eq:particle-system-vertex-count-assumption}. Hence
\[
N\ge c_{\mathrm{col}}\Delta(H_{\vec r}).
\]
Also, $\Delta (H_{\vec r}) \geq \Delta \geq \Delta_0$. Therefore, by \Cref{assumption:coloring-mixing}, $\Pcol$ mixes within total
variation distance $\zeta$ of the uniform distribution on
$\mathcal C_N(H_{\vec r})$ in
\[
O_{c_{\mathrm{col}}, \Delta_0}
\left(
|V(H_{\vec r})|\log\frac{|V(H_{\vec r})|}{\zeta}
\right)
\]
steps.

We now compare $\PssNr$ with $\Pcol$. Let
\[
m_{\vec r}:=|V(H_{\vec r})|=\sum_{v\in V}r_v.
\]
Define an auxiliary coloring chain $\widehat \Pcol$ on
$\mathcal C_N(H_{\vec r})$ as follows. At each step, choose
$v\in V$ uniformly at random and choose an ordered pair of distinct colors
$i,j\in[N]$ uniformly at random. If exactly one of the colors $i,j$ appears
on the clique $C_v$, recolor the unique vertex of $C_v$ carrying the present
color with the absent color, provided the resulting coloring is proper;
otherwise stay put. If either both colors appear on $C_v$ or neither appears
on $C_v$, stay put.

We claim that the projection of $\widehat \Pcol$ under $\Psi_{\vec r}$ is
exactly $\PssNr$. Suppose that $\Psi_{\vec r}(c)=\vect\sigma$. For each
$v\in V$, the colors used on $C_v$ are exactly the indices $k\in[N]$ such
that $v\in\sigma^{(k)}$. Thus, choosing $v,i,j$ in the coloring chain
corresponds to choosing the same $v,i,j$ in the particle system chain. If
exactly one of $i,j$ appears on $C_v$, then exactly one of the particles
$\sigma^{(i)}$ and $\sigma^{(j)}$ contains $v$. Recoloring on $C_v$ replaces
the present color by the absent color, which is the same as swapping the site
$v$ between $\sigma^{(i)}$ and $\sigma^{(j)}$. The recoloring is proper if and
only if the resulting two particle configurations are independent sets.
Therefore the projected update is exactly the update of $\PssNr$.

It remains to compare $\widehat \Pcol$ with $\Pcol$.
Consider a nontrivial recoloring move $c\to c'$ in which a vertex of some
clique $C_v$ with current color $i$ is recolored with color $j$. Under the
standard Glauber chain,
\[
\Pcol(c,c')=\frac{1}{m_{\vec r}N}.
\]
Under $\widehat \Pcol$, the same move is proposed by the two ordered choices
$(i,j)$ and $(j,i)$, and hence
\[
\widehat \Pcol(c,c')
=
\frac{2}{nN(N-1)}.
\]
Define $\beta := \frac{2 m_{\vec r}}{n (N - 1)}$. By \eqref{eq:particle-system-vertex-count-assumption}, we have
\[
\frac{2 m_{\vec r}}{n (N - 1)}\le \frac{2\max_{v\in V}r_v }{N - 1}
\le
\frac{N}{c_{\mathrm{col}}(\Delta+1)(N - 1)}.
\]
Thus if we choose $C_0 = C_0(\Delta, c_{\mathrm{col}})$ sufficiently large, for $N \geq C_0$, 
\[
\beta \leq 1.
\]
Now, note that for off-diagonal transition,
\[
\widehat \Pcol(c,c')
=
\beta\, \Pcol(c,c').
\]
Thus
\[
\widehat \Pcol=(1-\beta)I+\beta \Pcol.
\]

The lower bound $\frac{r_v}{N}\ge \gamma$ also gives
\[
m_{\vec r}=\sum_{v\in V}r_v\ge Nn \gamma.
\]

Thus
$
\beta=\Theta_{\gamma,\Delta,c_{\mathrm{col}}}(1).
$
Consequently, $\widehat \Pcol$ is a constant-factor lazy version of $\Pcol$,
and hence it mixes within total variation distance $\zeta$ of the uniform
distribution on $\mathcal C_N(H_{\vec r})$ in
\[
O_{\gamma,\Delta,c_{\mathrm{col}}}
\left(
m_{\vec r}\log \frac{m_{\vec r}}{\zeta}
\right)
\]
steps.

Since $\PssNr$ is the projection of $\widehat \Pcol$ under $\Psi_{\vec r}$,
total variation distance cannot increase under this projection. Moreover, by
\Cref{lem:configurations-to-colorings}, the uniform distribution on
$\mathcal C_N(H_{\vec r})$ is mapped by $\Psi_{\vec r}$ to $\muNr$.
Therefore $\PssNr$ has stationary distribution $\muNr$ and mixes within total
variation distance $\zeta$ of $\muNr$ in
\[
O_{\gamma,\Delta,c_{\mathrm{col}}}
\left(
m_{\vec r}\log \frac{m_{\vec r}}{\zeta}
\right)
\]
steps. Finally, since
\[
m_{\vec r}=|V(H_{\vec r})|
=
\sum_{v\in V}r_v
=
O_{\gamma, \Delta,c_{\mathrm{col}}}(Nn),
\]
the claimed bound follows.
\end{proof}
In \Cref{sec:alg-application-single-site}, we apply \Cref{thm:particle-system-mixing-via-coloring} to analyze the mixing time of our single-site particle-system sampler.

\subsection{Mean-field entropy contraction through Kac's program}
In this section we consider the continuous time nonlinear dynamics \eqref{eq:pre:continuous-nonlinear-dynamics} in the
mean-field case, i.e.\ with collision kernel $\cQ=\Qmf$, and we prove its
exponential decay in relative entropy for all sub-critical densities. Specifically, we prove the following.
\begin{theorem}[Entropy decay]
\label{thm-entropydecaymf}
Let $G = (V, E)$ be a graph on $n$ vertices with maximum degree $\D \ge 3$. For any
$\g>0$ and $\d\in(0,1)$, and any initial state $p_0$ with density
$\a(p_0) \in (\g, (1-\d)\a_c(\D))$, the solution $p_t$
of the mean-field dynamics \eqref{eq:pre:continuous-nonlinear-dynamics} satisfies
\[
  H(p_t \mid \m_{G,\l}) \;\le\; e^{-r\,t/n}\,H(p_0 \mid \m_{G,\l}),\qquad t\ge0,
\]
for some constant $r = r(\g,\d,\D) > 0$ independent of $n$, where
$\l = \l(p_0)<\l_c(\D)$ is the unique fugacity matching the
density of $p_0$.
\end{theorem}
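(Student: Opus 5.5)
We follow Kac's program in the strong form of \cite{carlen2010entropy,mischler2013kac}, transferring the entropy contraction of \Cref{thm-disc-partmf} from the particle system to the nonlinear flow by letting $N\to\infty$. Let $P_N^{\mathrm{mf}}$ be as in \Cref{thm-disc-partmf}. Consider its continuous-time version, run at rate $N$ so that each particle is updated at rate of order one:
\[
\mathcal L_N:=N\,(P_N^{\mathrm{mf}}-I).
\]
From the discrete contraction \eqref{eq:ent_contr}, or directly from the LSI \eqref{eq:lsiHMa} with $\varrho=r_0/(Nn)$, we obtain
\[
H(\nu e^{t\mathcal L_N}\mid \muNm)\le e^{-r t/n}H(\nu\mid\muNm),
\]
uniformly in $N\ge N_0$, with the same constant $r=r(\gamma,\delta,\Delta)$.

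\textbf{Initial data.} Given $p_0$ with density $\alpha$, take the initial particle law to be
\[
\nu_N:=p_0^{\otimes N}\bigl(\,\cdot\mid M(\vect\sigma)=\lfloor Nn\alpha\rfloor\bigr),
\]
which is supported on $\OmegaNm$. Note that $\muNm$ is exactly $\mu_{G,\lambda}^{\otimes N}$ conditioned on the same event. Since $\mu_{G,\lambda}^{\otimes N}(\sigma)\propto\lambda^{M(\sigma)}$ is constant on that event, the density of $\nu_N$ relative to $\muNm$ is
\[
\frac{\prod_i \bigl(p_0(\sigma^{(i)})/\mu_{G,\lambda}(\sigma^{(i)})\bigr)}{\text{normalization}}.
\]
Hence
\[
\tfrac1N H(\nu_N\mid\muNm)=H(p_0\mid\mu_{G,\lambda})+\tfrac1N\log\frac{\mu_{G,\lambda}^{\otimes N}(M=\lfloor Nn\alpha\rfloor)}{p_0^{\otimes N}(M=\lfloor Nn\alpha\rfloor)}.
\]
Both conditioning probabilities are of order $N^{-1/2}$ by the local CLT for sums of i.i.d.\ lattice variables with common mean $n\alpha$. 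The local CLT estimates used for \Cref{lem:alg:fixed-density-kac-chaos}-type statements can be reused here, and if $p_0$ has a degenerate lattice structure we first perturb it slightly and remove the perturbation by lower semicontinuity at the end. The correction term is therefore $O(\log N/N)\to0$.

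\textbf{Propagation of chaos.} We show that $\nu_N e^{t\mathcal L_N}$ is $p_t$-chaotic, i.e.\ its $k$-particle marginals converge to $p_t^{\otimes k}$ for each fixed $k$ and $t$. This is the standard Kac argument via the BBGKY hierarchy or tree expansion of \cite{caputo2025kac}. Between two collisions a tagged particle interacts with a uniformly chosen partner, and the probability that the collision histories of $k$ particles up to time $t$ overlap is $O(k^2 t^2/N)$. The initial law $\nu_N$ is itself $p_0$-chaotic, again by the local CLT. The only differences from the free model are the conditioning on $M$ and the internal swaps $i=j$, which occur with probability $1/N$ and are negligible.

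\textbf{Passing to the limit.} By the tensorization and lower-semicontinuity property of relative entropy (\cite{carlen2010entropy,mischler2013kac}; via subadditivity over blocks of size $k$ followed by $k\to\infty$), chaoticity of $\nu_Ne^{t\mathcal L_N}$ towards $p_t$ and of $\muNm$ towards $\mu_{G,\lambda}$ (the latter also from the local CLT) give
\[
H(p_t\mid\mu_{G,\lambda})\le\liminf_{N}\tfrac1N H(\nu_N e^{t\mathcal L_N}\mid\muNm).
\]
Combining this with the entropy contraction and the initial-data computation yields
\[
H(p_t\mid\mu_{G,\lambda})\le e^{-rt/n}H(p_0\mid\mu_{G,\lambda}).
\]
This is where the strong form appears: the upper bound $\lim\frac1N H(\nu_N\mid\muNm)=H(p_0\mid\mu_{G,\lambda})$ holds for every $p_0$, because the reference measure is a conditioned product measure. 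This avoids the usual loss of entropic chaos along the flow.

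\textbf{Main obstacle.} The main difficulty is the lower-semicontinuity step, since $\muNm$ is not a product measure. We would handle it by comparing $\tfrac1N H(\cdot\mid\muNm)$ with $\tfrac1N H(\cdot\mid\mu_{G,\lambda}^{\otimes N})$. These differ by $\frac1N\log\mu_{G,\lambda}^{\otimes N}(M=\lfloor Nn\alpha\rfloor)=O(\log N/N)$, which reduces the step to the classical superadditivity of entropy relative to a product measure. A second point requiring care is a uniform-in-$N$ proof of propagation of chaos for a time-dependent family, but finite $n$ and bounded rates make the tree expansion converge for every fixed $t$.
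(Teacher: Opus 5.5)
Your argument is correct but takes a different route from the paper. You run Kac's program along trajectories, whereas the paper runs it at the level of functional inequalities.

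The paper fixes $f\in\cS_\l$ with full support and takes the time-zero observable $F_N=\nu^{\otimes N}(\cdot\mid\O^{\mathrm{mf}}_{N,\a_N})/\m^{\mathrm{mf}}_{N,\a_N}$, which is exactly your $d\nu_N/d\muNm$. It applies the particle-system LSI to $F_N$ and passes to the limit using two static limits: entropic chaos $\frac1N\Ent[F_N]\to\Ent_{\m_{G,\l}}[f]$ and Fisher chaos $\cE_{N,\a_N}(F_N,\log F_N)\to 2\cD_\m(f,f)$. This gives the nonlinear MLSI $\cD_\m(f,f)\ge\frac{2r_0}{n}\Ent_\m[f]$. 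The decay then follows by integrating the H-theorem identity $\frac{d}{dt}H(p_t\mid\m)=-\cD_\m(f_t,f_t)$, with the regularization $f_\e=(1-\e)f+\e$ handling general $f$.

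You need entropic chaos only at time $0$; your initial-data step is the paper's \eqref{eq:ent_chaos}. You never use the dissipation functional or its Fisher-chaos limit. Instead you rely on qualitative propagation of chaos from the chaotic but non-product law $\nu_N$, which is standard here because $\O_G$ is finite and the rates are bounded. The paper's route yields the stronger, instantaneous inequality (the NMLSI) from static limits alone. Yours yields exactly the integrated decay and needs no H-theorem.

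A few details to correct:
\begin{itemize}
\item The identity you write for $\frac1N H(\nu_N\mid\muNm)$ is inexact. Its first term is $\E_{P_1\nu_N}[\log(p_0/\m_{G,\l})]$, not $H(p_0\mid\m_{G,\l})$; it tends to $H(p_0\mid\m_{G,\l})$ only because $P_1\nu_N\to p_0$.
\item Under $N(P_N^{\mathrm{mf}}-I)$ each particle undergoes a $Q$-collision at rate $2$, as index $i$ or as index $j$ (by the mild exchange symmetry). So the particle system at time $t$ tracks $p_{2t}$, not $p_t$; this only halves $r$.
\item The collision-tree overlap probability is $O(k^2e^{Ct}/N)$ rather than $O(k^2t^2/N)$. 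This is harmless for fixed $t$.
\item The step you call the main obstacle is immediate. For exchangeable $\rho$ supported on $\{M=\lfloor Nn\a\rfloor\}$, one has $H(\rho\mid\muNm)=H(\rho\mid\m_{G,\l}^{\otimes N})+\log\m_{G,\l}^{\otimes N}(M=\lfloor Nn\a\rfloor)\ge N\,H(P_1\rho\mid\m_{G,\l})-O(\log N)$, by superadditivity of entropy relative to a product measure. So one-particle marginals and continuity of $H(\cdot\mid\m_{G,\l})$ on the finite simplex suffice, with no block argument.
\end{itemize}
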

\Cref{thm-entropydecaymf} is established by completing Kac's program: we
deduce the entropy decay of the nonlinear mean-field dynamics from a modified
log-Sobolev inequality (MLSI) for the mean-field
particle system, which in turn follows from \Cref{thm-disc-partmf}, and transfer it to
the nonlinear process by taking the limit $N\to\infty$. We start by discussing the functional inequality associated to the entropy decay of the continuous time nonlinear dynamics.
\subsubsection{Nonlinear Modified Log-Sobolev}
For any \(\l>0\), taking  $\m=\m_{G,\l}$, we consider the functional
\begin{equation}
    \cD_\m(f,f)=\frac{1}{4n^2}\sum_{u,v\in V}\sum_{\s,\s',\t,\t'} \m(\s)\m(\s')Q_{u,v}(\s,\s';\t,\t')\left(f(\s)f(\s')-f(\t)f(\t')\right)\log\left(\frac{f(\s)f(\s')}{f(\t)f(\t')}\right),
\end{equation}
for all $f:\O_G\to\bbR_+$. 
This is known as the entropy dissipation functional. Indeed, 
a direct computation using the reversibility of 
$\Qmf$ with respect to $\mu$, see \Cref{lem:pre:gibbs-stationary-measures}, shows that
the solution $p_t$ of
\eqref{eq:pre:continuous-nonlinear-dynamics} with kernel $\Qmf$, for any  initial state
$p_0\in\cP(\O_G)$, satisfies
\begin{equation}\label{h-th-mf}
\frac{d}{dt} H(p_t\mid \mu)
%\Ent_\m[f_t]
=-\cD_\m(f_t,f_t),
\end{equation}
where $f_t=p_t/\m$ is the relative density of $p_t$ with respect to $\m$. 
We refer, e.g.,  to \cite[Lemma 2.3]{caputo2025kac} for a detailed derivation. 

%Here, we use the notation \(\Ent_{\m}[f]=\m\left[f\log \left(f/\log\m[f]\right)\right]\),
%for all \(f:\O_G\to\bbR_+\). In  particular, when \(\m[f]=1\) the functional returns the relative entropy \(\Ent_{\m}[f]=H(f\m\mid \m)\).
Since 
$(x-y)\log(\tfrac{x}{y})\ge0$ for all $x,y\ge0$, we have $\cD_\m(f,f)\ge 0$, for all $f$, so that $H(p_t\mid \mu)$ %$\Ent_\m[f_t]$ 
is monotone non increasing as a function of $t\ge0$. This is the analogue of Boltzmann's  H-theorem in our setting. 

We reformulate
\Cref{thm-entropydecaymf} in terms of the following nonlinear modified
log-Sobolev inequality. We use the notation \[\Ent_{\m}[f]=\m\left[f\log \left(f/\log\m[f]\right)\right],\]
for all \(f:\O_G\to\bbR_+\). In  particular, when \(\m[f]=1\) the functional returns the relative entropy \(\Ent_{\m}[f]=H(f\m\mid \m)\).
\begin{definition}[Nonlinear Modified Log Sobolev (NMLSI)]
For all $\l>0$, let $\cS_\l$ be the set of functions $f:\O_G\to\bbR_+$ with
$\m_{G,\l}[f]=1$ such that $f\m_{G,\l}\in\cP(\O_G)$ has the same density as
$\m_{G,\l}$,
\begin{equation}
    \a\left(f\m_{G,\l}\right)=\a\left(\m_{G,\l}\right).
\end{equation}
The nonlinear dynamics is said to satisfy the
NMLSI with profile function  $\cI:(0,\infty)\to [0,\infty)$ if for all $\l>0$, 
\begin{equation}\label{def:NMLSIine}
    \cD_{\m_{G,\l}}(f,f)\ge \cI(\l)\Ent_{\m_{G,\l}}[f], \quad \forall f\in\cS_\l.
\end{equation}
\end{definition}
\medskip
Since the nonlinear mean-field dynamics conserves the density, any solution $p_t$
satisfies $\a(p_t)=\a(p_0)$ for all $t\ge 0$. As a consequence, for any initial datum $p_0\in\cP(\O_G)$, one has $f_t=p_t/\mu_{G,\l}\in \cS_\l$ for all $t\ge0$, where $\l$ is the unique value of the fugacity matching the density $\a(p_0)$. Therefore, \eqref{h-th-mf} implies that,  if
the NMLSI holds with profile $\cI(\cdot)$, then every 
solution $p_t$ satisfies
\[
H(p_t\mid\m_{G,\l})\le e^{-\cI(\l) t}H(p_0\mid \m_{G,\l}),
\]
where $\l$ is determined by $\a(p_0)=\a(\mu_{G,\l})$; see also \cite[Lemma 4.2]{caputo2025kac} for a  derivation of this statement in a slightly different context.  
In particular, \Cref{thm-entropydecaymf} is established by proving that the nonlinear system satisfies the NMLSI with a profile $\cI$ such that $\cI(\l)\ge r/n$, with a constant $r=r(\g,\d,\D)>0$, for any $\l>0$ such that $\a(\mu_{G,\l})\in(\g,(1-\d)\a_c(\D))$.

\subsubsection{Proof of \Cref{thm-entropydecaymf}}
%here we recall the main steps without proof, referring to the relevant parts
%therein.
Fix $\g>0$, $\d\in(0,1)$ and $\a\in (\g,(1-\d)\a_c(\D))$, and let
$\l>0$ be such that $\a=\a(\m_{G,\l})$.
%for a hard-core model $\m_{G,\l}$ with fugacity
%$\l\in (0,\l_c(\D))$. 
Consider the particle system defined by the chain
$\PmfN$ on the state space $\O^{\rm mf}_{N,\a_N}$, as in \Cref{defOmegaNa}, with
density
\begin{equation}
    \a_{N}=\frac{\lfloor Nn\a\rfloor}{Nn},
\end{equation}
and note that $\a_N\in (\g/2,(1-\d)\a_c(\D))$ for all $N$ large enough. Recall the
unique invariant measure $\m^{\mathrm{mf}}_{N,\a_N}$, and observe that
$\m^{\mathrm{mf}}_{N,\a_N}=\m_{G,\l}^{\otimes N}(\cdot\mid \O^{\mathrm{mf}}_{N,\a_N})$, that is the $N$-fold product of $\m_{G,\l}$ restricted to $\O^{\rm mf}_{N,\a_N}$. Pick
$f\in\cS_\l$ with $f>0$. By definition, $\nu=f\m_{G,\l}\in\cP_+(\O_G)$ with
$\a(\nu)=\a$; define the function $F_N$ on $\O_{N,\a_N}^{\mathrm{mf}}$ by
\begin{equation}\label{defobskac}
    F_N:=\frac{\nu^{\otimes N}(\cdot\mid \O_{N,\a_N}^{\mathrm{mf}})}{\m^{\mathrm{mf}}_{N,\a_N}}.
\end{equation}
Using $(x-y)\log(x/y)\ge 4 (\sqrt x - \sqrt y)^2$ one has
\[
\cE_{N,\a_N}(F_N,\log F_N)\ge 4 \cE_{N,\a_N}(\sqrt F_N,\sqrt F_N)\,.
\]
The LSI \eqref{eq:lsiHMa} with $\varrho=r_0/(Nn)$ from the proof of \Cref{thm-disc-partmf} implies  
\begin{equation}\label{in:MLSIN}
    \cE_{N,\a_N}(F_N,\log F_N)\ge \frac{4r_0}{Nn}\Ent_{\m_{N,\a_N}^{\mathrm{mf}}}[F_N]
    \,,
\end{equation}
with $r_0=r_0(\g/2,\d,\D)>0$ independent of $N,n$ as in \eqref{eq:r0}. 
%In particular, for 
%$N\ge N(\g,\d,\D)$ we apply 
%\eqref{inmlsiproof} with $\tilde\a=\a_N$ and observable $F_N$, that gives
%\begin{equation}\label{in:MLSIN}
 %    \cE_{N,\a_N}(F_N,\log F_N)\ge \frac{r}{Nn}\Ent_{\m_{N,\a_N}^{\mathrm{mf}}}[F_N], \quad \forall N\ge N(\g,\d,\D).
%\end{equation}

From now on, we proceed as
in \cite[Section~4]{caputo2025kac}. Since $\nu=f\m_{G,\l}\in\cP_+(\O_G)$ and $|\a_N-\a|\le \tfrac{1}{N}$, 
by adapting the arguments in
\cite[Proposition 4.9]{caputo2025kac}
one proves the so-called entropic
chaos, that is
\begin{align}\label{eq:ent_chaos}
    &\lim_{N\to \infty} \frac 1 N\,\Ent_{\m^{\mathrm{mf}}_{N,\a_N}} [F_N]= \Ent_{\m_{G,\l}}[f].
\end{align}
Similarly, as in \cite[Proposition 4.10]{caputo2025kac},
one finds the so-called Fisher chaos, that is
\begin{align}\label{eq:fish_chaos}
    &\lim_{N\to\infty}\,{\cE}_{N,\a_N}(F_N, \log F_N)=2\cD_\m(f,f).
\end{align}
Therefore, passing to the limit $N\to\infty$ in~\eqref{in:MLSIN} yields~\eqref{def:NMLSIine}
with
\[
\cI(\l)=\frac{2r_0}{n}\,,
\]
for every $f\in\cS_\l$ with full support in $\O_G$. It remains to extend the estimate to all  $f\in \cS_\l$. To this end, given $f\in\cS_\l$, consider
\[
 f_\e(\sigma)=(1-\e)f(\sigma)+\e, \qquad \e\in (0,1), \; \sigma\in\O_G.
\]
Note that $f_\e\in\cS_\l$ for all \(\e\in (0,1)\), since
$\m_{G,\l}[f_\e]=1$ and
$\a(f_\e\m_{G,\l})=(1-\e)\a(f\m_{G,\l})+\e\,\a(\m_{G,\l})=\a(\m_{G,\l})$. Since $f$ has full support on $\O_G$,
\[
 \cD_{\m_{G,\l}}(f_\e,f_\e)\ge \,\frac{2r_0}{n}\,\Ent_{\m_{G,\l}}[f_\e]\,.
\]
Letting $\e\to 0$, by continuity one has  \[
\cD_{\m_{G,\l}}(f_\e,f_\e)\to \cD_{\m_{G,\l}}(f,f), \quad \Ent_{\m_{G,\l}}[f_\e]\to \Ent_{\m_{G,\l}}[f]\,,
\]
which conclude the proof of the NMLSI with a profile $\cI$ such that $\cI(\l)\ge r/n$, with  $r=2r_0(\g/2,\d,\D)>0$, for any $\l>0$ such that $\a(\mu_{G,\l})\in(\g,(1-\d)\a_c(\D))$.
\qed
\section{Algorithmic Applications}

We now turn to the algorithmic problem of sampling from the hard-core model
when the desired conserved quantities are prescribed in advance. There are
two natural versions of this problem. In the mean-field setting, a target
density $\alpha$ is given, and we want to sample from the unique
uniform-fugacity Gibbs measure with density $\alpha$. In the single-site setting, a target marginal vector $(m_v)_{v\in V}$ is given, and we
want to sample from the unique hard-core Gibbs measure whose vertex marginals
are given by $(m_v)_{v \in V}$.

\begin{remark}\label{rem:sample}
Drawing one sample from a distribution with prescribed vertex marginals
can be done efficiently. Given a graph $G=(V,E)$ and a marginal vector
$\vect m=(m_v)_{v\in V}$ satisfying
\[
\max_{v\in V}m_v\le \frac{1}{\Delta+1},
\]
the following algorithm outputs an independent set $\sigma\in\OmegaG$ with
\[
\Pr[\sigma_v=1]=m_v
\qquad \forall v\in V.
\]

\begin{algorithm}[ht]
\caption{Initialization with prescribed marginals}
\label{alg:alg:prescribed-marginals-initialization}
\begin{algorithmic}[1]
\Require Graph $G=(V,E)$ with maximum degree $\Delta$, and marginals
$\vect m=(m_v)_{v\in V}$ satisfying
$\max_{v\in V}m_v\le 1/(\Delta+1)$.
\Ensure A random independent set $\sigma\in\OmegaG$ with
$\Pr[\sigma_v=1]=m_v$ for all $v\in V$.

\State Find a proper coloring $c:V\to[\Delta+1]$.
\State Choose a color $i\in[\Delta+1]$ uniformly at random.
\State Initialize $\sigma_v\gets 0$ for all $v\in V$.
\ForAll{$v\in c^{-1}(i)$}
    \State Set $\sigma_v\gets 1$ with probability $(\Delta+1)m_v$, independently.
\EndFor
\State \Return $\sigma$.
\end{algorithmic}
\end{algorithm}

Each color class is an independent set, so the output belongs to
$\OmegaG$. Moreover, for every $v\in V$,
\[
\Pr[\sigma_v=1]
=
\frac1{\Delta+1}\cdot(\Delta+1)m_v
=
m_v.
\]
For prescribed density $\alpha$, the same algorithm applied to the constant
marginal vector $m_v=\alpha$ gives an initial distribution with density $\alpha$,
provided $\alpha\le 1/(\Delta+1)$.
\end{remark}

In order to sample from the maximum entropy distributions with the prescribed marginals or density, one possible approach is to use the nonlinear dynamics directly. First, we
choose an initial distribution with the prescribed conserved quantity as in \Cref{rem:sample}:
density $\alpha$ in the mean-field case, or vertex marginals $(m_v)_{v \in V}$ in the
single-site case. Then we run the corresponding nonlinear dynamics until it
is close to its limiting Gibbs measure. The issue is that this dynamics is not
easy to implement efficiently. For example, one can use the derivation tree
construction of \cite{caputo2024nonlineardynamicsisingmodel}: to sample from
the distribution obtained after $T$ nonlinear steps, draw $2^T$ independent
samples from an initial distribution $\mu_0\in\Pdist(\OmegaG)$ with the
prescribed conserved quantity, place them at the leaves of a binary tree of
height $T$, and apply the exchange kernel level by level from the leaves to
the root. The configuration at the root has the law obtained after $T$ steps
of the nonlinear dynamics. However, this method already requires $2^T$
initial samples, and is therefore not efficient.

For this reason, we do not use the nonlinear dynamics itself as the algorithm. 
Instead, we use the particle system introduced in 
\Cref{subsec:quant:particle-system}. In the prescribed-density case, we work on
the fixed total-occupation space $\OmegaNa$ and use the mean-field particle
chain $\PmfN$. In the prescribed-marginal case, we work on the fixed vertex-count
space $\OmegaNr$ and use the single-site particle chain $\PssNr$. In both cases,
we run the corresponding particle-system chain close to stationarity and output
one particle.

\subsection{Sampling with prescribed density} 
\label{sec:alg-application-mean-field}
We now consider the mean-field inverse sampling problem, where the prescribed
conserved quantity is the density $\alpha$. Fix $\gamma, \delta \in (0,1)$. Throughout this subsection we assume that
\begin{equation}
\label{eq:alg:prescribed-density-assumption}
\gamma \leq \alpha \leq (1 - \delta) \alpha_{c}(\Delta).    
\end{equation}

One can show that there exists a unique fugacity $\lambda_\alpha > 0$ for which the density of the uniform fugacity hard-core Gibbs measure $\mu_{G, \lambda_{\alpha}}$ is exactly $\alpha$. Further, $\lambda_{\alpha}$ lies in the regime 
\[
\gamma \leq \lambda_{\alpha} \leq (1 - \delta) \lambda_{c}(\Delta).
\]
See the beginning of \Cref{app:bisection-prescribed-density} for the detailed proof.\\
The goal is to approximately sample from $\mu_{G, \lambda_{\alpha}}$. For simplicity, throughout this subsection we write 
\[
\mu_G^{\alpha} := \mu_{G, \lambda_{\alpha}}.
\]
For sufficiently large $N$, we write 
\[
 M=\floor{Nn\alpha},
\]
and we consider the space $\OmegaNa$ with uniform measure $\muNm$. We first observe that, if $\vect \sigma$ is a uniform draw from $\muNm$, then the law of $\sigma^{(1)}$ is a good approximation for $\mu_G^{\alpha}$. After that, we show that the mean-field particle system chain $\PmfN$
mixes rapidly to $\muNm$.

The next lemma proves the first step. 

\begin{lemma}\label{lem:alg:fixed-density-kac-chaos}
Assume $\Delta \geq 3$, and $\alpha$ satisfies \eqref{eq:alg:prescribed-density-assumption}. Then there exist constants
$C=C(\gamma,\delta,\Delta)>0$ and
$N_0=N_0(\gamma,\delta,\Delta)$ such that, for every $N\ge N_0$,
\[
\left\|
P_1\muNm-\mu_G^{\alpha}
\right\|_{\TV}
\le
\frac{C}{N}.
\]
\end{lemma}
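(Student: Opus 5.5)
The plan is to write $P_1\muNm$ as a tilt of $\mu_G^\alpha$ and control the tilt with a local central limit theorem. Let $\mu=\mu_G^\alpha$, and let $S_k=\sum_{i=1}^k|\sigma^{(i)}|$ with $\sigma^{(i)}$ i.i.d.\ from $\mu$. Since every configuration in $\OmegaNa$ has product weight $\lambda_\alpha^{M}/Z^N$, the measure $\muNm$ equals $\mu^{\otimes N}$ conditioned on $\{S_N=M\}$. Hence, for $\eta\in\OmegaG$,
\[
(P_1\muNm)(\eta)=\mu(\eta)\,\frac{\Pr(S_{N-1}=M-|\eta|)}{\Pr(S_N=M)} .
\]
Write $g(k)=\Pr(S_{N-1}=M-k)$. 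Then $\Pr(S_N=M)=\sum_k\mu(|\sigma|=k)g(k)$, and
\[
\|P_1\muNm-\mu\|_{\TV}=\tfrac12\,\E_{\sigma\sim\mu}\bigl|g(|\sigma|)/\E_\mu g(|\sigma|)-1\bigr| .
\]
So it suffices to show that $g(k)=g(\alpha n)\,(1+O(1/N))$ uniformly over the relevant range of $k$, with a tail contribution that is also $O(1/N)$.

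The key input is a local CLT with a first-order (Edgeworth) correction for $S_{N-1}$. The summand $X=|\sigma|$ has mean $\alpha n$ and variance $s^2=\Var_\mu(|\sigma|)$. I will show $s^2=\Theta(n)$ with constants depending on $\gamma,\delta,\Delta$, for example through the spectral independence and bounded-degree estimates for $\lambda\le(1-\delta)\lambda_c(\Delta)$ used in \cite{jain2023optimalmixingdownupwalk}. The Edgeworth form reads
\[
g(k)=\frac{1}{\sqrt{2\pi(N-1)s^2}}\,e^{-x^2/2}\Bigl(1+\frac{\kappa_3}{6\sqrt{N-1}\,s^3}H_3(x)\Bigr)+o\Bigl(\frac{1}{N\,s}\Bigr),
\]
where $x=(M-k-(N-1)\alpha n)/(s\sqrt{N-1})$. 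Note that $M-(N-1)\alpha n=\alpha n+O(1)$, so $x=(\alpha n-k+O(1))/(s\sqrt N)$.

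For typical $k$ with $|k-\alpha n|=O(s\sqrt{\log N})$, we have $x=O(\sqrt{\log N}/\sqrt N)$. Then $e^{-x^2/2}=1-O(x^2)$, and $x^2$ has $\mu$-mean $O(1/N)$. The Hermite term is $O(\kappa_3 x/(\sqrt N s^3))$; its leading part is linear in $k-\alpha n$, so it cancels in the ratio to first order, and the remainder is $O(1/N)$. Atypical $k$ have $\mu$-probability $O(1/N)$ by a sub-Gaussian concentration bound for $|\sigma|$, which follows from the same spectral independence. Summing the two regimes gives the bound $C/N$.

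The main obstacle is an Edgeworth-type local limit theorem whose error is uniform in $n$, since the summand $|\sigma|$ has range $n$ and the estimate must not depend on the graph size. For aperiodicity, observe that $|\sigma|$ takes the consecutive values $0$ and $1$ with probabilities bounded below in the relevant sense. For the characteristic function of $|\sigma|$, I would first try the zero-freeness of $Z_G$ in a complex neighborhood of $[0,(1-\delta')\lambda_c]$ (Peters--Regts). This gives $\log\E e^{it|\sigma|}$ as $n$ times an analytic function with cumulants $\kappa_j=O(n)$, and hence $|\E e^{it(X-\alpha n)/s}|\le e^{-ct^2}$ for $|t|\le c\sqrt n$. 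The remaining frequencies $c/\sqrt n\le|\theta|\le\pi$ need separate decay $|\E e^{i\theta X}|\le e^{-c n\theta^2}$, to be derived from a conditional-independence decomposition, for example by revealing a large independent set of far-apart vertices whose occupations are conditionally independent Bernoullis. With these two estimates, the standard Fourier inversion argument yields the Edgeworth expansion with error $O(1/(N s))$, which is uniform in $n$ as required.
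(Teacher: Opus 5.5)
Your setup is correct: $\muNm=\mu^{\otimes N}(\cdot\mid S_N=M)$, the tilt formula $(P_1\muNm)(\eta)=\mu(\eta)\,g(|\eta|)/\E_\mu g(|\sigma|)$, and the identity $\|P_1\muNm-\mu\|_{\TV}=\frac12\E_\mu\bigl|g(|\sigma|)/\E_\mu g(|\sigma|)-1\bigr|$ are all fine. The analytic inputs you plan to use (variance $\Theta(n)$, cumulants $O(n)$, Gaussian decay of the characteristic function of $|\sigma|$ on all of $[-\pi,\pi]$) are essentially the ones the paper cites.

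The gap is the remainder of your Edgeworth expansion. You state it as $o(1/(Ns))$, later $O(1/(Ns))$, while the main term of $g(k)$ is of order $1/(s\sqrt N)$. So it is a \emph{relative} error of order $N^{-1/2}$, and nothing in your argument forces its $k$-dependence to cancel. Split $g=A+R$, with $A$ the explicit Gaussian-plus-$H_3$ part. The contribution of $R$ to $\frac12\E_\mu|g/\E_\mu g-1|$ is then only bounded by $\sup_k|R(k)-\E_\mu R|/\E_\mu g=O(N^{-1/2})$. Your treatment of the typical window (``the remainder is $O(1/N)$'') handles only the explicit terms of $A$. As written, the argument therefore yields $\|P_1\muNm-\mu_G^\alpha\|_{\TV}=O(N^{-1/2})$, not $C/N$. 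This is not cosmetic: it would force $N=O(1/\eps^2)$ in the sampler.

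The fix is within reach of your own inputs, but you must expand in $Nn$ rather than $N$. The sum $S_{N-1}$ has variance $\Theta(Nn)$ and all cumulants $O(Nn)$. Your zero-freeness route gives $\bigl|\log\E_\mu e^{i\theta(|\sigma|-\alpha n)}+s^2\theta^2/2+i\kappa_3\theta^3/6\bigr|\le Cn\theta^4$ for small $|\theta|$. Combined with the whole-circle decay, Fourier inversion then produces the first-order expansion with absolute error $O((Nn)^{-3/2})$, i.e.\ relative error $O(1/(Nn))$. With that remainder, your typical/atypical bookkeeping does give $O(1/N)$.

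The paper avoids expanding $g$ altogether:
\begin{itemize}
\item It writes the discrepancy for a test function $f$ as $\mathcal N(f)/\mu^{\otimes N}(S_N=M)$, and bounds the denominator below by $c/\sqrt{Nn}$ via the LCLT.
\item It represents $\mathcal N(f)$ by Fourier inversion through $\Cov_\mu\bigl(f,e^{it(|\sigma|-n\alpha)/v_N}\bigr)$.
\item It subtracts from $f$ its projection onto $|\sigma|-n\alpha$; the effect of that linear part is computed exactly from $\E_{\muNm}|\sigma^{(1)}|=M/N$.
\item For the recentered part, the covariance is $O(t^2/N)$ by a second-order Taylor bound and a fourth-moment estimate, giving $|\mathcal N(f)|=O(\|f\|_\infty/(N\sqrt{Nn}))$.
\end{itemize}
That route only ever measures the $k$-dependence of $g$, which is exactly what the total variation sees, so no remainder has to be compared with the main term.

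A minor point: your aperiodicity remark is inaccurate, since $\Pr_\mu(|\sigma|=0)=1/Z_G(\lambda_\alpha)$ is exponentially small in $n$. Uniform control must come from the whole-circle bound on the characteristic function, which you do plan to prove.
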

\begin{proof}
We write
$
\mu:=\mu_G^{\alpha}.
$
By the bounds on
$\lambda_\alpha$ stated at the beginning of this subsection,
\[
\gamma\le \lambda_\alpha\le (1-\delta)\lambda_c(\Delta).
\]
All constants below depend only on $\gamma,\delta$, and $\Delta$.
Note that
\[
\muNm=\mu^{\otimes N}(\,\cdot \mid \OmegaNm).
\]
By duality of total variation distance, it suffices to show that for every
$f:\OmegaG\to\mathbb R$,
\begin{equation}\label{eq:alg:kac-chaos-goal}
\left|
\E_{\vect\sigma\sim\muNm}\!\left[f(\sigma^{(1)})\right]
-
\E_{\sigma\sim\mu}[f(\sigma)]
\right|
\le
\frac{C}{N}\|f\|_\infty .
\end{equation}

Let
\[
X_N:=M(\vect\sigma)=\sum_{i=1}^N|\sigma^{(i)}|
\qquad
\text{when } \vect\sigma\sim\mu^{\otimes N}.
\]
Set
\[
a_N:=\E_{\vect\sigma\sim\mu^{\otimes N}}[X_N]=Nn\alpha,
\qquad
v_N^2:=\Var_{\vect\sigma\sim\mu^{\otimes N}}(X_N).
\]
Since $\OmegaNm=\{\vect\sigma:M(\vect\sigma)=M\}=\{X_N=M\}$, we have
\begin{equation}\label{eq:alg:kac-chaos-ratio}
\E_{\vect\sigma\sim\muNm}\!\left[f(\sigma^{(1)})\right]
-
\E_{\sigma\sim\mu}[f(\sigma)]
=
\frac{\mathcal N(f)}
{\mu^{\otimes N}(X_N=M)},
\end{equation}
where
\[
\mathcal N(f)
:=
\E_{\vect\sigma\sim\mu^{\otimes N}}\!\left[
f(\sigma^{(1)})\mathbf 1\{X_N=M\}
\right]
-
\E_{\sigma\sim\mu}[f(\sigma)]\,
\mu^{\otimes N}(X_N=M).
\]
We estimate the denominator and the numerator in
\eqref{eq:alg:kac-chaos-ratio} separately.

\medskip
\noindent\textbf{Denominator.}
Let
\[
Y:=|\sigma|
\qquad
\text{when } \sigma\sim\mu,
\]
and set
\[
v_1^2:=\Var_{\sigma\sim\mu}(Y).
\]
Since
\[
(1-\delta)\lambda_c(\Delta)
=
\lambda_c(\Delta)-\delta\lambda_c(\Delta),
\]
the variance estimate
\cite[Lemma~3.2]{jain2021approximatecountingsamplinglocal} applies with
additive gap $\delta\lambda_c(\Delta)$ from the threshold. Since also
$\lambda_\alpha\ge \gamma$, there exist constants $c_0,C_0>0$, depending
only on $\gamma,\delta$, and $\Delta$, such that
\[
c_0 n
\le
v_1^2
\le
C_0 n.
\]
Equivalently, since $v_N^2=Nv_1^2$,
\[
c_0 Nn
\le
v_N^2
\le
C_0 Nn.
\]

Recall that 
$
G_N
$
is the disjoint union of $N$ copies of $G$. Thus $G_N$ has $Nn$ vertices
and maximum degree at most $\Delta$. Moreover, the hard-core Gibbs measure
on $G_N$ with fugacity $\lambda_\alpha$ is exactly $\mu^{\otimes N}$, and
$X_N$ is the size of the independent set sampled from this measure.
Therefore, by the local central limit theorem
\cite[Corollary~13]{jain2023optimalmixingdownupwalk}, applied to $G_N$,
we have
\[
\left|
\mu^{\otimes N}(X_N=M)
-
\frac{1}{\sqrt{2\pi}\,v_N}
\exp\left(
-\frac{(M-a_N)^2}{2v_N^2}
\right)
\right|
\le
C_1(Nn)^{-3/2},
\]
for some constant $C_1>0$.

Since $M=\floor{Nn\alpha}$ and $a_N=Nn\alpha$, we have
\[
|M-a_N|\le 1.
\]
We first lower bound the Gaussian leading term. Using $v_N^2\le C_0Nn$,
we get
\[
\frac{1}{v_N}
\ge
\frac{1}{\sqrt{C_0Nn}}.
\]
Using $v_N^2\ge c_0Nn$ and $|M-a_N|\le 1$, we also get
\[
\frac{(M-a_N)^2}{2v_N^2}
\le
\frac{1}{2c_0Nn}.
\]
Hence, for all sufficiently large $N$,
\[
\exp\left(
-\frac{(M-a_N)^2}{2v_N^2}
\right)
\ge
\frac12.
\]
Therefore,
\[
\frac{1}{\sqrt{2\pi}\,v_N}
\exp\left(
-\frac{(M-a_N)^2}{2v_N^2}
\right)
\ge
\frac{1}{2\sqrt{2\pi C_0}}\frac{1}{\sqrt{Nn}}.
\]
On the other hand,
\[
C_1(Nn)^{-3/2}
=
\frac{C_1}{Nn}\frac{1}{\sqrt{Nn}},
\]
and hence, for all sufficiently large $N$,
\[
C_1(Nn)^{-3/2}
\le
\frac{1}{4\sqrt{2\pi C_0}}\frac{1}{\sqrt{Nn}}.
\]
Combining this with the local central limit theorem gives
\begin{align*}
\mu^{\otimes N}(X_N=M)
&\ge
\frac{1}{\sqrt{2\pi}\,v_N}
\exp\left(
-\frac{(M-a_N)^2}{2v_N^2}
\right)
-
C_1(Nn)^{-3/2} \\
&\ge
\frac{1}{2\sqrt{2\pi C_0}}\frac{1}{\sqrt{Nn}}
-
\frac{1}{4\sqrt{2\pi C_0}}\frac{1}{\sqrt{Nn}} \\
&=
\frac{1}{4\sqrt{2\pi C_0}}\frac{1}{\sqrt{Nn}}.
\end{align*}
Thus, after setting
\[
c_1:=\frac{1}{4\sqrt{2\pi C_0}},
\]
we obtain
\begin{equation}\label{eq:alg:kac-chaos-denominator}
\mu^{\otimes N}(X_N=M)
\ge
\frac{c_1}{\sqrt{Nn}}.
\end{equation}

\medskip
\noindent\textbf{Numerator.}
Define the normalized characteristic function
\[
\psi(t)
:=
\E_{\sigma\sim\mu}\left[
\exp\left(
\frac{it(Y-n\alpha)}{v_1}
\right)
\right].
\]
Using again the additive gap $\delta\lambda_c(\Delta)$ from the threshold,
the characteristic function estimate
\cite[Lemma~3.5]{jain2021approximatecountingsamplinglocal}, applied with
$\lambda=\lambda_\alpha$, gives a constant $c>0$, depending only on
$\delta$ and $\Delta$, such that
\[
|\psi(t)|
\le
\exp\left(
-\frac{c\lambda_\alpha n t^2}{v_1^2}
\right)
\qquad
\forall\,t\in[-\pi v_1,\pi v_1].
\]
Since $\lambda_\alpha\ge \gamma$, setting $c_2:=c\gamma$ gives
\[
|\psi(t)|
\le
\exp\left(
-\frac{c_2 n t^2}{v_1^2}
\right)
\qquad
\forall\,t\in[-\pi v_1,\pi v_1].
\]

We shall estimate the numerator $\mathcal N(f)$ with $f$ replaced by a recentered function $\tilde f$, and then control the recentering term separately. 

To this end, note that for any function $h:\Omega_G\to\mathbb R$, Fourier inversion gives
\begin{equation}\label{eq:alg:kac-chaos-fourier-improved}
\mathcal N(h)
=
\frac{1}{2\pi v_N}
\int_{-\pi v_N}^{\pi v_N}
e^{-it(a_N-M)/v_N}
\psi(t/\sqrt N)^{N-1}
\Cov_{\mu}\left(
h,
\exp\left(
\frac{it(Y-n\alpha)}{v_N}
\right)
\right)
dt .
\end{equation}
For $t\in[-\pi v_N,\pi v_N]$, we have
$t/\sqrt N\in[-\pi v_1,\pi v_1]$. Hence, for all sufficiently large $N$,
\[
|\psi(t/\sqrt N)|^{N-1}
\le
\exp\left(
-\frac{c_2 n t^2}{2v_1^2}
\right).
\]

We now recenter $f$ by removing its component in the direction of the conserved
quantity. Set
\[
\beta_f
:=
\frac{\Cov_\mu(f,Y)}{v_1^2},
\qquad
g(\sigma)
:=
\beta_f\,(|\sigma|-n\alpha),
\qquad
\widetilde f
:=
f-g.
\]
Then
\[
\Cov_\mu(\widetilde f,Y)=0,
\qquad
\E_\mu[g]=0.
\]
Moreover, by Cauchy--Schwarz,
\[
|\beta_f|
\le
\frac{\sqrt{\Var_\mu(f)}}{v_1}
\le
\frac{\|f\|_\infty}{v_1}.
\]
Therefore
\[
\Var_\mu(g)
=
\beta_f^2 v_1^2
\le
\|f\|_\infty^2,
\]
and hence
\[
\Var_\mu(\widetilde f)
\le
2\Var_\mu(f)+2\Var_\mu(g)
\le
C\|f\|_\infty^2.
\]

By the cumulant bound of \cite[Lemma 11]{jain2023optimalmixingdownupwalk}, for every fixed $d \geq 1$,

\[
    |\kappa_d(Y)|\le O_{\D,\d,d}(n), \quad \kappa_d(Y)=\frac{d^d}{dt^d}\log \bbE_{\mu}\left[e^{tY}\right]\Big|_{t=0}.
\]
In particular, taking $d =4$, there exists a constant $C$ such that
\begin{equation}
\label{in:boundcumulants}
    |\kappa_4(Y)|\leq Cn.
\end{equation}
Recall also that $v_1^2 = \Var_{\mu}(Y) \leq C_0n$ for some constant $C_0$. Now, note that
\[
\bbE_{\m}\left[(Y-n\a)^4\right]=\kappa_4(Y)+3\Var_{\mu}(Y)^2 \leq |\kappa_4(Y)| + 3 \Var_{\mu}(Y)^2.
\]
Thus, by applying \eqref{in:boundcumulants} and the variance bound, there exists a constant $C_4$ such that
\begin{equation}\label{eq:alg:kac-chaos-fourth-moment}
\E_\mu\left[(Y-n\alpha)^4\right]
\le
C_4 n^2.
\end{equation}

Since $\Cov_\mu(\widetilde f,Y)=0$, we have
\[
\Cov_\mu\left(
\widetilde f,
\exp\left(
\frac{it(Y-n\alpha)}{v_N}
\right)
\right)
=
\Cov_\mu\left(
\widetilde f,
\exp\left(
\frac{it(Y-n\alpha)}{v_N}
\right)
-
1
-
\frac{it(Y-n\alpha)}{v_N}
\right).
\]
Using
\[
|e^{i\theta}-1-i\theta|
\le
\frac{\theta^2}{2},
\]
Cauchy--Schwarz, and \eqref{eq:alg:kac-chaos-fourth-moment}, we get
\begin{align}
&
\left|
\Cov_\mu\left(
\widetilde f,
\exp\left(
\frac{it(Y-n\alpha)}{v_N}
\right)
\right)
\right|
\notag \\
&\qquad
\le
\sqrt{\Var_\mu(\widetilde f)}
\left(
\E_\mu\left[
\left|
\exp\left(
\frac{it(Y-n\alpha)}{v_N}
\right)
-
1
-
\frac{it(Y-n\alpha)}{v_N}
\right|^2
\right]
\right)^{1/2}
\notag \\
&\qquad
\le
C\|f\|_\infty
\frac{t^2}{v_N^2}
\left(
\E_\mu\left[(Y-n\alpha)^4\right]
\right)^{1/2}
\notag \\
&\qquad
\le
C\|f\|_\infty
\frac{t^2}{N}.
\label{eq:alg:kac-chaos-improved-covariance}
\end{align}
In the last step we used $v_N^2=Nv_1^2$ and $v_1^2\ge c_0 n$. The value of the constant $C$ may change from line to line but it remains independent of $N,n$.

Combining
\eqref{eq:alg:kac-chaos-fourier-improved}
and
\eqref{eq:alg:kac-chaos-improved-covariance}, we obtain
\begin{align}
|\mathcal N(\widetilde f)|
&\le
\frac{1}{2\pi v_N}
\int_{-\pi v_N}^{\pi v_N}
\exp\left(
-\frac{c_2 n t^2}{2v_1^2}
\right)
C\|f\|_\infty
\frac{t^2}{N}
\,dt
\notag \\
&\le
\frac{C\|f\|_\infty}{N v_N}
\notag \\
&\le
\frac{C\|f\|_\infty}{N\sqrt{Nn}}.
\label{eq:alg:kac-chaos-improved-numerator}
\end{align}

Combining
\eqref{eq:alg:kac-chaos-ratio},
\eqref{eq:alg:kac-chaos-denominator}, and
\eqref{eq:alg:kac-chaos-improved-numerator}, gives
\[
\left|
\E_{\vect\sigma\sim\muNm}
\left[
\widetilde f(\sigma^{(1)})
\right]
-
\E_{\sigma\sim\mu}
\left[
\widetilde f(\sigma)
\right]
\right|
\le
\frac{C\|f\|_\infty}{N}.
\]

It remains to estimate the linear correction $g$. Since $\E_\mu[g]=0$ and,
by symmetry under $\muNm$,
\[
\E_{\vect\sigma\sim\muNm}
\left[
|\sigma^{(1)}|
\right]
=
\frac{M}{N},
\]
we have
\begin{align*}
\left|
\E_{\vect\sigma\sim\muNm}
\left[
g(\sigma^{(1)})
\right]
-
\E_{\sigma\sim\mu}[g(\sigma)]
\right|
% &=
% |\beta_f|
% \left|
% \frac{M}{N}
% -
% n\alpha
% \right|  \\
&=
|\beta_f|
\frac{|M-Nn\alpha|}{N}  \\
&\le
\frac{\|f\|_\infty}{v_1}\frac{1}{N}  \\
&\le
\frac{C\|f\|_\infty}{N\sqrt n}.
\end{align*}
Here we used $|M-Nn\alpha|\le 1$ and $v_1^2\ge c_0 n$.

Since $f=\widetilde f+g$, we conclude that
\[
\left|
\E_{\vect\sigma\sim\muNm}
\left[
f(\sigma^{(1)})
\right]
-
\E_{\sigma\sim\mu}
\left[
f(\sigma)
\right]
\right|
\le
\frac{C\|f\|_\infty}{N}.
\]
Taking the supremum over $\|f\|_\infty\le 1$ gives
\[
\left\|
P_1\muNm-\mu_G^{\alpha}
\right\|_{\TV}
\le
\frac{C}{N}.
\]

\end{proof}

It remains to show that we can efficiently sample from $\muNm$ by running the chain $\PmfN$. Recall that the pair $(\PmfN,\m_{N,\a}^{\mathrm{mf}})$
is reversible. The following lemma is a direct implication of \Cref{thm-disc-partmf}.

\begin{lemma}\label{lem:alg:fixed-density-particle-mixing}
Assume $\Delta\ge 3$, and suppose $\alpha$ satisfies
\eqref{eq:alg:prescribed-density-assumption}. Then, there exists a constant $N_0 := N_0(\gamma, \delta, \Delta)$ such that 
for every $\zeta\in(0,1)$ and every $ N \geq N_0$, the chain $\PmfN$ mixes within
total variation distance $\zeta$ of $\muNm$ in
\[
O_{\gamma,\delta,\Delta}\!\left(Nn\log\frac{Nn}{\zeta}\right)
\]
steps.
\end{lemma}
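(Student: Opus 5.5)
The plan is to iterate the one-step entropy contraction of \Cref{thm-disc-partmf} and then convert relative entropy into total variation. Take $N_0$ to be the constant from \Cref{thm-disc-partmf}, applied with the same $\gamma,\delta,\Delta$. The hypothesis \eqref{eq:alg:prescribed-density-assumption} places $\alpha$ in the closed interval $[\gamma,(1-\delta)\alpha_c(\Delta)]$, whereas the theorem asks for the open interval. To handle this, apply the theorem with $\gamma/2$ and $\delta/2$ in place of $\gamma$ and $\delta$; the constants then still depend only on $\gamma,\delta,\Delta$. Let $\nu_0\in\Pdist(\OmegaNa)$ be an arbitrary initial distribution and set $\nu_t:=\nu_0(\PmfN)^t$. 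Since $\PmfN$ preserves $M(\vect\sigma)$, each $\nu_t$ stays in $\Pdist(\OmegaNa)$, so \eqref{eq:ent_contr} can be applied at every step. This gives
\[
H(\nu_t\mid\muNm)\le\Bigl(1-\frac{r}{Nn}\Bigr)^t H(\nu_0\mid\muNm)\le e^{-rt/(Nn)}\,H(\nu_0\mid\muNm).
\]

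Next I bound the initial entropy. Because $\muNm$ is uniform on $\OmegaNa$,
\[
H(\nu_0\mid\muNm)\le\log|\OmegaNa|\le\log|\OmegaG|^N\le Nn\log 2,
\]
using $|\OmegaG|\le 2^n$. Pinsker's inequality then yields
\[
\|\nu_t-\muNm\|_{\TV}\le\sqrt{\tfrac12 H(\nu_t\mid\muNm)}\le\sqrt{\tfrac{\log 2}{2}\,Nn\,e^{-rt/(Nn)}}.
\]
This is at most $\zeta$ once
\[
t\ge \frac{Nn}{r}\log\frac{Nn\log 2}{2\zeta^2}=O_{\gamma,\delta,\Delta}\!\Bigl(Nn\log\frac{Nn}{\zeta}\Bigr),
\]
where the final equality holds because $\log(1/\zeta^2)=2\log(1/\zeta)$ and $r=r(\gamma,\delta,\Delta)$.

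There is no real obstacle here; the substantive work is contained in \Cref{thm-disc-partmf}. The only points to check are that the theorem's hypotheses hold uniformly in the initial distribution, which is clear since $\OmegaNa$ is invariant under the chain, and that its density window covers the endpoints of \eqref{eq:alg:prescribed-density-assumption}. The second point is settled by the shrinking of $\gamma$ and $\delta$ described above.
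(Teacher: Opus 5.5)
Your proposal is correct and matches the paper's proof step for step: apply \Cref{thm-disc-partmf} with $\gamma/2,\delta/2$, iterate the entropy contraction, bound the initial entropy by $\log|\OmegaNa|\le Nn\log 2$, and finish with Pinsker's inequality. One small correction: $N_0$ should come from the $\gamma/2,\delta/2$ version of the theorem, not the $\gamma,\delta$ version you mention in your first sentence.
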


\begin{proof}
By \Cref{thm-disc-partmf}, applied with parameters $\gamma/2$ and $\delta/2$, there exist constants
$r=r(\gamma,\delta,\Delta)>0$ and $N_0=N_0(\gamma,\delta,\Delta)$ such that, for every
$N\ge N_0$ and every probability measure $\nu$ on $\Omega^{\mathrm{mf}}_{N,\alpha}$,
\[
H(\nu P_N^{\mathrm{mf}}\mid \mu^{\mathrm{mf}}_{N,\alpha})
\le
\left(1-\frac{r}{Nn}\right)H(\nu\mid \mu^{\mathrm{mf}}_{N,\alpha}).
\]
Iterating the entropy contraction gives
\[
H(\nu(\PmfNM)^t\mid\mu^{\mathrm{mf}}_{N,\alpha})
\le
\exp\left(-\frac{rt}{Nn}\right)
H(\nu\mid\mu^{\mathrm{mf}}_{N,\alpha}).
\]
Since $\mu^{\mathrm{mf}}_{N,\alpha}$ is uniform on $\OmegaNa$, for every initial distribution $\nu$
we have
\[
H(\nu\mid\mu^{\mathrm{mf}}_{N,\alpha})
\le
\log|\OmegaNa|
\le
Nn\log 2.
\]
Thus
\[
H(\nu(\PmfNM)^t\mid \mu^{\mathrm{mf}}_{N,\alpha})
\le
Nn\log 2\,
\exp\left(-\frac{rt}{Nn}\right).
\]
By Pinsker's inequality,
\[
\left\|
\nu(\PmfNM)^t-\mu^{\mathrm{mf}}_{N,\alpha}
\right\|_{\TV}
\le
\sqrt{
\frac{Nn\log 2}{2}
\exp\left(-\frac{rt}{Nn}\right)
}.
\]
Choosing the implicit constant sufficiently large in
\[
t=
O_{\gamma,\delta,\Delta}
\left(
Nn\log\frac{Nn}{\zeta}
\right)
\]
makes the right-hand side at most $\zeta$. This proves the claimed mixing
bound.
\end{proof}

We now describe the resulting algorithm. By
\Cref{lem:alg:fixed-density-kac-chaos}, if $\vect\sigma$ is sampled from
$\muNm$, then the first particle $\sigma^{(1)}$ has law close to
$\mu_G^{\alpha}$. Thus it remains to approximately sample from $\muNm$. For this, we run the chain $\PmfNM$, which mixes rapidly to $\muNm$ by
\Cref{lem:alg:fixed-density-particle-mixing}, and then output the first
particle.

We first explain how to initialize the chain. Since
\[
\alpha\le (1-\delta)\alphac < \frac{1}{\Delta+1},
\]
we have
\[
M=\floor{Nn\alpha}<\frac{Nn}{\Delta+1}.
\]
Find a proper $(\Delta + 1)$-coloring of $G$ by the greedy algorithm, and use
the same coloring on each of the $N$ copies of $G$ in $G_N$. Then some color
class of $G_N$ has size at least $Nn/(\Delta+1)$. Thus, choosing any $M$
vertices of this color class gives an independent set of $G_N$ of size $M$,
or equivalently an initial configuration in $\OmegaNm$.

\begin{algorithm}[ht]
\caption{Prescribed-Density Particle System Sampler}
\label{alg:prescribed-density-particle-system-sampler}
\begin{algorithmic}[1]
    \Require Graph $G=(V,E)$ with maximum degree $\Delta$, target density
    $\alpha$ satisfying the assumptions above, and accuracy $\eps\in(0,1)$.
    \Ensure A random independent set whose law is within total variation
    distance $\eps$ from $\mu_G^{\alpha}$.

    \State Choose $N=O_{\gamma,\delta,\Delta}(\frac{1}{\eps})$ sufficiently large.
    \State Set $M:=\floor{Nn\alpha}$.
    \State Find a proper $(\Delta+1)$-coloring of $G$ by the greedy algorithm,
    and use the same coloring on each of the $N$ copies of $G$ in $G_N$.
    \State Choose a color class of $G_N$ of size at least $Nn/(\Delta+1)$,
    and choose any $M$ vertices from this color class.
    \State Let $\vect\sigma_0\in\OmegaNm$ be the corresponding initial
    particle configuration.
    \State Run the particle system chain $\PmfNM$ from $\vect\sigma_0$ for
    \[
    O_{\gamma,\delta,\Delta}
    \left(
    Nn\log\frac{Nn}{\eps}
    \right)
    \]
    steps.
    \State \Return $\sigma^{(1)}$, where
    $
    \vect\sigma=(\sigma^{(1)},\ldots,\sigma^{(N)})
    $
    is the resulting particle configuration.
\end{algorithmic}
\end{algorithm}

\begin{theorem}
\label{thm:prescribed-density-particle-system-sampler}
Let $\gamma,\delta\in(0,1)$. Let $G=(V,E)$ be a graph on $n$ vertices with maximum degree $\Delta \geq 3$, and let the target
density $\alpha$ satisfy
\[
\gamma \le \alpha \le (1-\delta)\alpha_c(\Delta).
\]
Then, for every $\eps\in(0,1)$,
\Cref{alg:prescribed-density-particle-system-sampler} outputs a random
independent set whose law is within total variation distance $\eps$ from
$\mu_G^{\alpha}$. Moreover, the running time is
\[
\widetilde O_{\gamma,\delta,\Delta}
\left(
\frac{n}{\eps}
\right).
\]
\end{theorem}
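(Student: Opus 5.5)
The plan is to combine \Cref{lem:alg:fixed-density-kac-chaos} and \Cref{lem:alg:fixed-density-particle-mixing} through the triangle inequality, with the error budget split evenly between them. Let $\nu_t$ denote the law of the particle configuration after $t$ steps of $\PmfNM$ started from $\vect\sigma_0$. The output distribution is then $\mu_{\mathrm{alg}}=P_1\nu_t$. Since projecting onto the first particle is a deterministic map, it cannot increase total variation distance, so
\[
\|\mu_{\mathrm{alg}}-\mu_G^{\alpha}\|_{\TV}\le \|\nu_t-\muNm\|_{\TV}+\|P_1\muNm-\mu_G^{\alpha}\|_{\TV}.
\]

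For the second term, I choose $N=\max\{N_0,\lceil 2C/\eps\rceil\}$, where $C$ and $N_0$ come from \Cref{lem:alg:fixed-density-kac-chaos}; this choice is also large enough for the $N_0$ in \Cref{lem:alg:fixed-density-particle-mixing}. Then \Cref{lem:alg:fixed-density-kac-chaos} bounds the second term by $\eps/2$. For the first term, I apply \Cref{lem:alg:fixed-density-particle-mixing} with $\zeta=\eps/2$. After $t=O_{\gamma,\delta,\Delta}(Nn\log(Nn/\eps))$ steps, this gives $\|\nu_t-\muNm\|_{\TV}\le\eps/2$, regardless of the initial state. I need to confirm that $\vect\sigma_0$ really lies in $\OmegaNm$. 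This follows from the greedy-coloring argument given just before the algorithm: we have $M<Nn/(\Delta+1)$, and some color class has at least $Nn/(\Delta+1)$ vertices, so $M$ of them can be chosen.

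For the running time, there are three costs. The greedy coloring and the initialization take $O(Nn)$ time. Each step of $\PmfNM$ takes $O(\Delta)=O(1)$ time: sample $(i,v)$ and $(j,w)$, then check admissibility of the swap by inspecting the neighborhoods of $v$ in particle $j$ and of $w$ in particle $i$. The chain is run for $O(Nn\log(Nn/\eps))$ steps. Substituting $N=O(1/\eps)$ gives a total of $O\big(\frac{n}{\eps}\log\frac{n}{\eps^2}\big)=\widetilde O_{\gamma,\delta,\Delta}(n/\eps)$.

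There is no real obstacle in this argument. The only care needed is the bookkeeping: checking that the hypotheses of both lemmas hold for the single chosen $N$, and that the per-step cost of the chain is constant when the particles are stored as adjacency-indexed bit arrays.
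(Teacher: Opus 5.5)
Your proposal is correct and follows the paper's proof essentially verbatim. It uses the same $\eps/2$ split via the triangle inequality and the contraction of total variation under projection to the first particle, takes $N=O(1/\eps)$ from \Cref{lem:alg:fixed-density-kac-chaos}, and runs $O(Nn\log(Nn/\eps))$ steps from \Cref{lem:alg:fixed-density-particle-mixing}, each costing $O_\Delta(1)$. The one nit is that the threshold $N_0$ of the mixing lemma is not automatically dominated by that of the Kac-chaos lemma, so you should take $N\ge\max\{N_1,N_2,2C/\eps\}$, as the paper does, rather than asserting that your choice suffices for both.
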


\begin{proof}
Let $C$ and $N_1$ be the constants from
\Cref{lem:alg:fixed-density-kac-chaos}, and let $N_2$ be the constant from
\Cref{lem:alg:fixed-density-particle-mixing}. We choose $N$ sufficiently
large so that
\[
N\ge
\max\left\{
N_1,N_2,\frac{2C}{\eps}
\right\}.
\]
In particular,
\[
N=O_{\gamma,\delta,\Delta}(\frac{1}{\eps}).
\]

Let $\widehat\mu$ be the law of the particle configuration after running
$\PmfNM$ for the number of steps specified in
\Cref{alg:prescribed-density-particle-system-sampler}. By
\Cref{lem:alg:fixed-density-particle-mixing}, choosing the implicit constant
in the running time sufficiently large gives
\[
\left\|
\widehat\mu-\muNm
\right\|_{\TV}
\le
\frac{\eps}{2}.
\]
Let $\nu$ denote the law of the output $\sigma^{(1)}$. Since total variation
distance cannot increase under projection to the first particle,
\[
\left\|
\nu-P_1\muNm
\right\|_{\TV}
=
\left\|
P_1\widehat\mu-P_1\muNm
\right\|_{\TV}
\le
\frac{\eps}{2}.
\]
On the other hand, by \Cref{lem:alg:fixed-density-kac-chaos},
\[
\left\|
P_1\muNm-\mu_G^{\alpha}
\right\|_{\TV}
\le
\frac{C}{N}
\le
\frac{\eps}{2}.
\]
Therefore,
\[
\|\nu-\mu_G^{\alpha}\|_{\TV}
\le
\left\|
\nu-P_1\muNm
\right\|_{\TV}
+
\left\|
P_1\muNm-\mu_G^{\alpha}
\right\|_{\TV}
\le
\eps.
\]

It remains to bound the running time. The initialization takes
$O_\Delta(Nn)$ time. The chain is run for
\[
O_{\gamma,\delta,\Delta}
\left(
Nn\log\frac{Nn}{\eps}
\right)
\]
steps, and each step can be implemented in $O_\Delta(1)$ time. Since
$N=O_{\gamma,\delta,\Delta}(\frac{1}{\eps})$, the total running time is
\[
O_{\gamma,\delta,\Delta}
\left(
\frac{n}{\eps}
\log\frac{n}{\eps}
\right)
=
\widetilde O_{\gamma,\delta,\Delta}
\left(
\frac{n}{\eps}
\right).
\]
This completes the proof.
\end{proof}

% \begin{remark}
%     There is also a classical approach to answer to the same algorithmic problem: first  estimate the fugacity $\lambda_{\alpha}$ and then sample from $\mu_{G, \lambda_{\alpha}}$ using existing samplers for the hard-core model. We present this approach in \Cref{app:bisection-prescribed-density}.
% \end{remark}

\begin{remark}
    The same approach can be used to generate $k$ independent samples from $\mu_{G}^{\alpha}$. Let $G'$ be the disjoint union of $k$ copies of $G$. The Gibbs measure on $G'$ with density $\alpha$ is
\[
    \mu_{G'}^\alpha = (\mu_G^\alpha)^{\otimes k}.
\] 
We can apply \Cref{alg:prescribed-density-particle-system-sampler} to $G'$, which has $kn$ vertices and the same maximum degree as $G$, and produce a sample whose distribution is within total variation distance $\eps$ from $(\mu_{G}^{\alpha})^{\otimes k}$ in time $$\widetilde O_{\gamma, \delta, \Delta}\left(\frac{kn}{\eps}\right).$$ 
For comparison, there is also a classical approach to address the same algorithmic problem: first  estimate the fugacity $\lambda_{\alpha}$ and then sample from $\mu_{G, \lambda_{\alpha}}$ using existing samplers for the hard-core model. We present this approach in \Cref{app:bisection-prescribed-density}. As shown there, the bisection
approach requires time
\[
\widetilde{O}_{\gamma,\delta,\Delta}\left(\frac{n}{\varepsilon^2}\right)
\]
for one sample ($k=1$). Applied to $G'$, for all $k\ge 1$, it requires time
\[
\widetilde{O}_{\gamma,\delta,\Delta}\left(\frac{kn}{\eps^2}\right)
\]
to approximate $(\mu_G^\alpha)^{\otimes k}$. Thus, the particle-system
approach improves the dependence on $\eps$ for both one sample and
$k$ independent samples.
\end{remark}

\subsection{Hardness of prescribed-density sampling} 
\label{sec:hardness-prescribed-density}
To complete our understanding of prescribed-density sampling problem, we show that the threshold $\alphac$ is also a computational barrier for this problem. 

Throughout this section we use the notion of \emph{polynomial time approximate sampler}: a polynomial time approximate sampler for $\mu_{G}^{\alpha}$ is an algorithm that, given a graph $G$ and an accuracy parameter $\eps$, in time $\mathrm{poly}(|G|, 1/\eps)$ outputs a random independent set whose law is within $\eps$ total variation distance of $\mu_{G}^{\alpha}$. 

The following is the hardness result we prove in this section. 

\begin{theorem}
    \label{thm:hardness-prescribed-density}
    Fix $\Delta \geq 3$ and $\alpha \in (\alphac, \frac{1}{\Delta + 1})$. Unless $\mathrm{NP} = \mathrm{RP}$, there is no polynomial time approximate sampler for $\mu_{G}^{\alpha}$ on graphs with maximum degree $\Delta$. 
\end{theorem}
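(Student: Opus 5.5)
The plan is to reduce from the hardness of sampling hard-core configurations at supercritical fugacity, following the approach of \cite{davies2023approximatelycountingindependentsets}. The key point is that a sampler for $\mu_G^\alpha$ on a suitably augmented graph yields a sampler for $\mu_{H,\lambda}$ on an arbitrary bounded-degree graph $H$ at a fixed fugacity $\lambda>\lambdac$. By Sly and Sly--Sun (hardness of approximating $Z_H(\lambda)$ for $\lambda>\lambdac$ on $\Delta$-regular graphs, unless $\mathrm{NP}=\mathrm{RP}$), together with the standard equivalence of approximate sampling and counting for self-reducible problems, no such sampler exists. This gives the theorem.

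The construction is as follows. Given $\alpha\in(\alphac,\frac1{\Delta+1})$, let $\lambda^*$ be the fugacity at which the occupancy ratio of $K_{\Delta+1}$ equals $\alpha$, i.e.\ $\frac{\lambda^*}{1+(\Delta+1)\lambda^*}=\alpha$. Since $\lambda\mapsto \lambda/(1+(\Delta+1)\lambda)$ is increasing with limit $\frac1{\Delta+1}$, the inequality $\alpha>\alphac$ forces $\lambda^*>\lambdac$. Given an instance $H$ on $n_H$ vertices of maximum degree $\Delta$, form $G=H\sqcup L$, where $L$ is a disjoint union of $k$ copies of $K_{\Delta+1}$ with $k=\poly(n_H)$ large. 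The density of $\mu_{G,\lambda}$ is $\frac{n_H\alpha_H(\lambda)+k(\Delta+1)\alpha_K(\lambda)}{n_H+k(\Delta+1)}$. Because $\alpha_H(\lambda)\in[0,1]$, the unique $\lambda_\alpha$ with density $\alpha$ satisfies $|\alpha_K(\lambda_\alpha)-\alpha|=O(n_H/k)$. Since $\alpha_K$ has derivative bounded below near $\lambda^*$, this gives $|\lambda_\alpha-\lambda^*|=O(n_H/k)$. The marginal of $\mu_G^\alpha=\mu_{H,\lambda_\alpha}\otimes\mu_{L,\lambda_\alpha}$ on $H$ is exactly $\mu_{H,\lambda_\alpha}$. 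So a sampler for $\mu_G^\alpha$ at accuracy $\eps$ gives a sample from $\mu_{H,\lambda_\alpha}$ within $\eps$. A Lipschitz estimate $\|\mu_{H,\lambda}-\mu_{H,\lambda'}\|_{\TV}\le n_H|\log\lambda-\log\lambda'|$ then makes this $O(n_H^2/k)+\eps$-close to $\mu_{H,\lambda^*}$. Choosing $k=n_H^3/\eps$ keeps everything polynomial.

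The main obstacle is the final transfer. Hardness results are stated for a fixed $\lambda>\lambdac$ and for approximate counting, so I need sampling at one fixed fugacity $\lambda^*$ to be hard. I would handle this by invoking the hardness of sampling at any fixed $\lambda>\lambdac$, which follows from Sly--Sun through the standard sampling-to-counting reduction (self-reducibility with pinned vertices, implemented by attaching gadgets that keep maximum degree $\Delta$), exactly as done in \cite{davies2023approximatelycountingindependentsets}.

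There are two further details. First, the reduction needs self-reducible instances, so the pinned graphs should also be embedded into the construction of $G$ above; alternatively, I can cite directly the sampling hardness used in \cite{davies2023approximatelycountingindependentsets}. Second, I must check that the perturbation $\lambda_\alpha\ne\lambda^*$ is harmless. One route is to argue that the hardness holds uniformly on a small interval around $\lambda^*$, which stays strictly above $\lambdac$. The other is to rely on the TV-Lipschitz bound, which I would verify via $\frac{d}{d\lambda}\log\mu_{H,\lambda}(\sigma)=(|\sigma|-\E|\sigma|)/\lambda$.
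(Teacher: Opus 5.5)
Your proposal is correct and uses the same reduction as the paper. You pad the instance with polynomially many disjoint copies of $K_{\Delta+1}$, take the fugacity $\lambda^*=\alpha/(1-(\Delta+1)\alpha)>\lambdac$ at which $K_{\Delta+1}$ has occupancy ratio exactly $\alpha$, project the output onto the original graph, and conclude via \cite{SlySun14}.

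Where you differ is in controlling the mismatch between the fugacity $\lambda_\alpha$ that realizes density $\alpha$ on the padded graph and $\lambda^*$.

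The paper works on the whole padded graph (its $H=G\sqcup MK_{\Delta+1}$). It uses minimality of the occupancy ratio at $K_{\Delta+1}$ to get $\alpha\le\hat\alpha:=\alpha(\mu_{H,\lambda})\le\alpha+n/|V(H)|$. It then verifies variance bounds of order $|V(H)|$, in both directions, for every density in $[\alpha,\hat\alpha]$; the lower bound comes from the clique components. Finally it applies \Cref{lem:hardness-density-to-tv-approx} to get $\|\mu_H^\alpha-\mu_{H,\lambda}\|_{\TV}=O(n/\sqrt{|V(H)|})$ before projecting. This forces about $n^2/\eps^2$ cliques.

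You project first instead. The product structure makes the marginal exactly $\mu_{H,\lambda_\alpha}$ (in your notation). You read off $|\lambda_\alpha-\lambda^*|=O(n_H/k)$ from the exact averaging identity and the explicit inverse of $\lambda\mapsto\lambda/(1+(\Delta+1)\lambda)$. You then finish with the trivial Lipschitz bound in $\log\lambda$ on the small graph only.

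Your route is more elementary: it needs no variance lower bound, no minimality of $K_{\Delta+1}$, and no auxiliary lemma. It also gives a better padding size, $k=O(n_H^2/\eps)$; your choice $n_H^3/\eps$ is more than needed. The paper's route yields the stronger fact that the entire padded measure is close to $\mu_{H,\lambda}$, which the reduction does not actually use.

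Your ``main obstacle'' is also milder than you fear. The sampler you derive works for every graph of maximum degree $\Delta$. Pinning a vertex in the hard-core model is just deleting it (or it together with its neighbours), so the standard sampling-to-counting self-reduction stays inside that class. No gadgets or special embeddings are needed; the paper simply cites \cite{SlySun14} for hardness of approximate sampling at $\lambda>\lambdac$.
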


This hardness result should be compared with the known hardness of approximately counting independent sets of size $\floor{\alpha n}$ above $\alphac$ \cite{davies2023approximatelycountingindependentsets}. Our argument is a direct adaptation of that of \cite{davies2023approximatelycountingindependentsets}. The main difference is that our target is $\mu_{G}^{\alpha}$, not a uniform distribution on fixed-size independent sets. For a suitable choice of $\lambda$, we reduce the hard problem of approximately sampling from $\mu_{G, \lambda}$ on graphs with maximum degree $\Delta$ to our problem \cite{SlySun14}.  As in \cite{davies2023approximatelycountingindependentsets}, we add many copies of cliques $K_{\Delta + 1}$ to $G$ in order to bring down the density of the resulting graph $H$ close to $\alpha$. Then, by a variance comparison we show that approximately sampling from $\mu_{H}^{\alpha}$ gives an approximate sample from $\mu_{G, \lambda}$.

\begin{lemma}
    \label{lem:hardness-density-to-tv-approx}
    Let $G$ be a graph on $n$ vertices and $\hat \alpha \geq  \alpha > 0$. Assume there exist constants $c, C$ such that for every $\beta \in [\alpha, \hat \alpha]$, 
    \[
    cn \leq \Var_{\sigma \sim \mu_{G}^{\beta}}(|\sigma|) \leq Cn.
    \]
    Then, we have
    \[
    \|\mu_{G}^{\alpha} - \mu_{G}^{\hat \alpha}\|_{\TV} \leq \frac{\sqrt{Cn}}{2c} |\hat \alpha - \alpha|.
    \]
\end{lemma}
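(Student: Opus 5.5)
We plan to parametrize the family $\mu_G^{\beta}$ by the fugacity and control how far $\lambda$ moves as the density moves from $\alpha$ to $\hat\alpha$. Write $\theta=\log\lambda$ and $\mu_\theta:=\mu_{G,e^\theta}$. Let $\theta_0,\theta_1$ be the log-fugacities with $\alpha(\mu_{\theta_0})=\alpha$ and $\alpha(\mu_{\theta_1})=\hat\alpha$. Both exist and are unique by strict monotonicity of $\lambda\mapsto\alpha(\mu_{G,\lambda})$ (Proposition~\ref{prop:qual:mean-field-convergence}), and $\theta_0\le\theta_1$.

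The relation between density and $\theta$ is
\[
\frac{d}{d\theta}\,n\alpha(\mu_\theta)=\Var_{\mu_\theta}(|\sigma|).
\]
Along the segment $[\theta_0,\theta_1]$ every density lies in $[\alpha,\hat\alpha]$, so the hypothesis gives $\Var_{\mu_\theta}(|\sigma|)\ge cn$ there. Integrating, $n(\hat\alpha-\alpha)\ge cn(\theta_1-\theta_0)$, hence
\[
\theta_1-\theta_0\le \frac{\hat\alpha-\alpha}{c}.
\]

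Next we bound the total variation distance between $\mu_{\theta_0}$ and $\mu_{\theta_1}$ by a path argument. For any $f$ with $\|f\|_\infty\le 1$,
\[
\frac{d}{d\theta}\E_{\mu_\theta}[f]=\Cov_{\mu_\theta}(f,|\sigma|).
\]
Centering $f$ at its mean gives $|\Cov|\le \|f-\E f\|_\infty\,\E|\,|\sigma|-\E|\sigma|\,|$. Since $\|\mu-\nu\|_{\TV}$ is half the supremum over $\|f\|_\infty\le1$, it is cleaner to take $f=\mathbf 1_A$. Then $f-\tfrac12$ has sup norm $\tfrac12$, and
\[
|\Cov(\mathbf 1_A,|\sigma|)|=\bigl|\E\bigl[(\mathbf 1_A-\tfrac12)(|\sigma|-\E|\sigma|)\bigr]\bigr|\le \tfrac12\sqrt{\Var(|\sigma|)}\le \tfrac12\sqrt{Cn}
\]
by Cauchy--Schwarz and the upper variance hypothesis. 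Integrating over $[\theta_0,\theta_1]$,
\[
|\mu_{\theta_1}(A)-\mu_{\theta_0}(A)|\le \tfrac12\sqrt{Cn}\,(\theta_1-\theta_0)\le \frac{\sqrt{Cn}}{2c}\,|\hat\alpha-\alpha|.
\]
Taking the supremum over events $A$ gives the claim.

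There is no real obstacle. The only points to get right are applying the variance bounds along the whole path, which is justified since the density is monotone in $\theta$, and obtaining the constant $\tfrac12$ through the event formulation of total variation.
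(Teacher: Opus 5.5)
Your proof is correct. Its first half matches the paper's: both turn the density gap into a log-fugacity gap using $\frac{d}{d\theta}\,n\alpha(\mu_\theta)=\Var_{\mu_\theta}(|\sigma|)\ge cn$ on $[\theta_0,\theta_1]$. The paper does this with the mean value theorem and you do it by integration.

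The step from $\theta$-distance to total variation is done differently.
\begin{itemize}
\item The paper writes the relative entropy $D_{\mathrm{KL}}(\mu_{\theta_0}\,\|\,\mu_{\theta_1})$ as the Bregman divergence of $A(\theta)=\log Z_G(\theta)$. The bound $A''\le Cn$ on the segment gives $D_{\mathrm{KL}}\le \frac{Cn}{2}(\theta_1-\theta_0)^2$, and Pinsker's inequality finishes.
\item You instead differentiate $\mu_\theta(A)$ along the exponential-family path. You then bound $|\Cov_{\mu_\theta}(\mathbf 1_A,|\sigma|)|\le\frac12\sqrt{\Var_{\mu_\theta}(|\sigma|)}$ by Cauchy--Schwarz, after centering $\mathbf 1_A$ at $\frac12$.
\end{itemize}
Both routes give exactly the same factor $\frac{\sqrt{Cn}}{2}$ per unit of $\theta$.

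Your argument is more elementary and works in total variation throughout, with no Pinsker and no Bregman identity. The paper's argument gives the stronger relative-entropy estimate as a by-product, and it reuses the smoothness-to-TV conversion already proved in the bisection appendix.

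One small imprecision: strict monotonicity gives only uniqueness of $\theta_0,\theta_1$, not existence. Existence is presupposed by the statement, since $\mu_G^{\alpha}$ and $\mu_G^{\hat\alpha}$ must be defined. Alternatively, it follows from continuity of $\theta\mapsto\alpha(\mu_\theta)$ together with its range $(0,i_{\max}/n)$.
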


We provide the proof of \Cref{lem:hardness-density-to-tv-approx} in \Cref{app:hardness-missing-proofs}.

\begin{lemma}
    \label{lem:hardness-clique-construction}
    Fix $\Delta \geq 3$ and $\alpha \in (\alphac, \frac{1}{\Delta + 1})$, and let 
    \[
    \lambda := \frac{\alpha}{1 - \alpha (\Delta + 1)}.
    \]
    If there is a polynomial time approximate sampler for $\mu_{G}^{\alpha}$ on graphs with maximum degree $\Delta$, then there is a polynomial time approximate sampler for $\mu_{G, \lambda}$ on graphs with maximum degree $\Delta$. 
\end{lemma}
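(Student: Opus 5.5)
Given a graph $G$ on $n$ vertices with maximum degree $\Delta$ and accuracy $\eps$, the plan is to build $H$ as the disjoint union of $G$ and $k$ copies of $K_{\Delta+1}$. Then $H$ still has maximum degree $\Delta$, and $\mu_{H,\lambda'} = \mu_{G,\lambda'} \otimes \mu_{K_{\Delta+1},\lambda'}^{\otimes k}$ for every $\lambda'$. We run the assumed sampler for $\mu_H^{\alpha}$ with accuracy $\eps/2$ and output the restriction of the sample to the copy of $G$.

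The key point is that the fugacity $\lambda_H$ matching density $\alpha$ on $H$ is close to $\lambda$ once $k$ is large. The choice $\lambda = \frac{\alpha}{1-\alpha(\Delta+1)}$ is exactly the fugacity at which $K_{\Delta+1}$ has occupancy ratio $\frac{\lambda}{1+(\Delta+1)\lambda} = \alpha$. Since $\alpha > \alphac$ and $x\mapsto \frac{x}{1+(\Delta+1)x}$ is increasing, we get $\lambda > \lambdac$, which is the nonuniqueness regime where sampling $\mu_{G,\lambda}$ is hard~\cite{SlySun14}. Writing $N_H = n + k(\Delta+1)$, the density of $H$ at fugacity $\lambda$ is
\[
\alpha_H(\lambda) = \frac{n\,\alpha_G(\lambda) + k(\Delta+1)\alpha}{N_H},
\qquad\text{so}\qquad
|\alpha_H(\lambda) - \alpha| \le \frac{n}{N_H}.
\]
Hence $\mu_H^{\alpha}$ and $\mu_H^{\hat\alpha}$, where $\hat\alpha := \alpha_H(\lambda)$ and $\mu_H^{\hat\alpha} = \mu_{H,\lambda}$, are Gibbs measures whose densities differ by at most $n/N_H$.

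We then apply \Cref{lem:hardness-density-to-tv-approx} on $H$ with $\alpha$ and $\hat\alpha$. This requires $c N_H \le \Var_{\mu_H^{\beta}}(|\sigma|) \le C N_H$ for all $\beta$ between $\alpha$ and $\hat\alpha$.
\begin{itemize}
\item The upper bound holds because the variance splits over components. Each clique contributes $O(1)$, and $\Var_{\mu_{G,\lambda'}}(|\sigma|) \le n^2$, which is $O(N_H)$ once $k \gtrsim n^2$.
\item The lower bound holds because each clique contributes a variance bounded below uniformly for fugacities in a compact range around $\lambda$.
\end{itemize}
Both bounds need the matching fugacities to stay in a compact interval $[\lambda/2, 2\lambda]$. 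This follows from monotonicity: at fugacities $\lambda/2$ and $2\lambda$, the clique contribution moves the density of $H$ by a constant amount away from $\alpha$, while $G$ perturbs it by only $n/N_H$.

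The lemma then gives
\[
\|\mu_H^{\alpha} - \mu_{H,\lambda}\|_{\TV} = O\!\left(\frac{\sqrt{N_H}\, n}{N_H}\right) = O\!\left(\frac{n}{\sqrt{N_H}}\right).
\]
Choosing $k = \Theta(n^2/\eps^2)$, which is polynomial, makes this at most $\eps/2$. Projecting to $G$ does not increase TV distance and maps $\mu_{H,\lambda}$ to $\mu_{G,\lambda}$, so the output is within $\eps$ of $\mu_{G,\lambda}$, in time $\poly(N_H, 1/\eps) = \poly(n, 1/\eps)$.

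The main obstacle is the uniform variance bounds on the interval $[\alpha,\hat\alpha]$ (or $[\hat\alpha,\alpha]$; the lemma is symmetric in the roles). In particular, one must control the fugacity range so that the constants $c, C$ do not depend on $G$. If the crude $n^2$ bound on $G$'s variance is too lossy, I would simply increase $k$ further; the argument only needs $k$ polynomial in $n$ and $1/\eps$.
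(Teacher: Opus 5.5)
Your proposal is correct and follows the paper's proof essentially step by step. The shared steps are the construction $H = G \sqcup k K_{\Delta+1}$, the bound $|\hat\alpha-\alpha| \le n/|V(H)|$, the crude variance upper bound once $k \gtrsim n^2$, a clique-driven variance lower bound fed into \Cref{lem:hardness-density-to-tv-approx}, the choice $k = \Theta(n^2/\eps^2)$, and projection onto $G$. The only difference is cosmetic: you localize the matching fugacities to $[\lambda/2, 2\lambda]$ by monotonicity, while the paper bounds the clique occupancy at $\lambda_\beta$ directly through the convex-combination identity for $\beta$ (using that $K_{\Delta+1}$ minimizes the occupancy ratio, which also gives $\hat\alpha \ge \alpha$), and the two are equivalent.
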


\begin{proof}
Assume $\mathcal{O}_{\alpha}$ is an oracle that, given a graph $H$ with
maximum degree $\Delta$ and $\eps>0$, in time
$\mathrm{poly}(|H|,1/\eps)$ outputs a random independent set whose law is
within $\eps$ total variation distance of $\mu_H^{\alpha}$.

Let $G=(V,E)$ be a graph with maximum degree $\Delta$, and write $n=|V|$. We construct a graph $H$ by adding $M$ disjoint copies of complete graphs on $\Delta + 1$ vertices to $G$, i.e., 
\[
H:=G\sqcup M K_{\Delta + 1},
\]
where $M$ will be chosen later as a polynomial in $n$ and $1/\eps$. Note that the graph $H$ also has maximum degree
$\Delta$. 

Note also that $\alpha (\mu_{K_{\Delta + 1}, \lambda}) = \alpha$. 
Let
\[
\hat\alpha:=\alpha(\mu_{H,\lambda}).
\]
Since $H$ is a disjoint union, and the lowest density of $\mu_{G', \lambda}$ occurs when $G' = K_{\Delta + 1}$, we conclude $\hat \alpha \geq \alpha$. Also, 
we have
\[
\hat\alpha
=
\frac{n}{|V(H)|}\alpha(\mu_{G,\lambda})
+
\frac{M(\Delta + 1)}{|V(H)|}\alpha.
\]
Therefore,
\[
\hat\alpha
\le
\alpha + \frac{n}{|V(H)|}.
\]

We next verify that the variance assumption in
\Cref{lem:hardness-density-to-tv-approx} holds for all $\beta \in [\alpha, \hat \alpha]$. Let $\lambda_\beta$ be the fugacity such that
\[
\mu_H^\beta=\mu_{H,\lambda_\beta}.
\]
Since $H$ is a disjoint union,
\[
\Var_{\sigma\sim\mu_H^\beta}(|\sigma|)
=
\Var_{\sigma\sim\mu_{G,\lambda_\beta}}(|\sigma|)
+
M\Var_{\sigma\sim\mu_{K_{\Delta +1},\lambda_\beta}}(|\sigma|).
\]

We first prove the upper bound. Since $0\le |\sigma|\le n$ on $G$,
\[
\Var_{\sigma\sim\mu_{G,\lambda_\beta}}(|\sigma|)
\le
\frac{n^2}{4}.
\]
Also, on $K_{\Delta + 1}$, the size of an independent set is either $0$ or $1$, so
\[
\Var_{\sigma\sim\mu_{K_{\Delta + 1},\lambda_\beta}}(|\sigma|)
\le
\frac14.
\]
Therefore,
\[
\Var_{\sigma\sim\mu_H^\beta}(|\sigma|)
\le
\frac{n^2}{4}+\frac{M}{4}.
\]
If $M\ge n^2$, then
\[
\Var_{\sigma\sim\mu_H^\beta}(|\sigma|)
\le
\frac{M}{2}
\le
\frac{|V(H)|}{2}.
\]
This gives the desired upper bound with $C=1/2$.

We now prove the lower bound. Since
\[
\beta
=
\frac{n}{|V(H)|}\alpha(\mu_{G,\lambda_\beta})
+
\frac{M(\Delta + 1)}{|V(H)|}\alpha(\mu_{K_{\Delta + 1}, \lambda_{\beta}}),
\]
with a similar argument as above, we have
\[
\alpha(\mu_{K_{\Delta + 1}, \lambda_{\beta}}) \leq \beta \leq \alpha(\mu_{K_{\Delta + 1}, \lambda_{\beta}}) + \frac{n}{|V(H)|}.
\]
Moreover, since $\beta$ lies between $\alpha$ and $\hat\alpha$, and
$|\hat\alpha-\alpha|\le \frac{n}{|V(H)|}$, we get
\[
|\beta-\alpha|\le \frac{n}{|V(H)|}.
\]
Hence
\[
|\alpha(\mu_{K_{\Delta + 1}, \lambda_{\beta}})-\alpha|
\le
\frac{2n}{|V(H)|}.
\]
Since $\alpha\in(\alphac, \frac{1}{\Delta + 1})$ is fixed, by taking $M\ge A_{\alpha,\Delta} n$
for a sufficiently large constant $A_{\alpha,\Delta}$, we can ensure that
\[
\alpha(\mu_{K_{\Delta + 1}, \lambda_{\beta}})\in
\left[
\frac{\alpha}{2},
\frac{\alpha+\frac{1}{\Delta +1}}{2}
\right].
\]
On $K_{\Delta + 1}$, the random variable $|\sigma|$ is Bernoulli with parameter
\[
x_\beta
=
\Pr_{\sigma \sim \mu_{K_{\D+1},\lambda_\beta}}[|\sigma|=1]
=
(\Delta + 1) \alpha(\mu_{K_{\Delta + 1}, \lambda_{\beta}}).
\]
The previous display implies that $x_\beta$ is bounded away from both $0$
and $1$ by constants depending only on $\alpha$ and $\Delta$. Therefore, there
exists a constant $b=b(\alpha,\Delta)>0$ such that
\[
\Var_{\sigma\sim\mu_{K_{\Delta + 1},\lambda_\beta}}(|\sigma|)
=
x_\beta(1-x_\beta)
\ge b.
\]
Consequently,
\[
\Var_{\sigma\sim\mu_H^\beta}(|\sigma|)
\ge
Mb.
\]
If $M\ge n$, then
\[
|V(H)|=n+M(\Delta+1) \le M(\Delta + 2),
\]
and hence
\[
\Var_{\sigma\sim\mu_H^\beta}(|\sigma|)
\ge
\frac{b}{\Delta + 2}|V(H)|.
\]
Thus the variance lower bound in
\Cref{lem:hardness-density-to-tv-approx} holds with
\[
c=\frac{b}{\Delta + 2}.
\]

Combining the upper and lower bounds, \Cref{lem:hardness-density-to-tv-approx}
gives
\[
\|\mu_H^\alpha-\mu_H^{\hat\alpha}\|_{\TV}
\le
\frac{\sqrt{C|V(H)|}}{2c}|\hat\alpha-\alpha|
\le
\frac{\sqrt{C|V(H)|}}{2c}\cdot \frac{n}{|V(H)|}
=
\frac{\sqrt C}{2c}\frac{n}{\sqrt{ |V(H)|}}.
\]
But by definition of $\hat\alpha$, we have
\[
\mu_H^{\hat\alpha}=\mu_{H,\lambda}.
\]
Choose
\[
M=\left\lceil A_{\alpha,\Delta}\frac{n^2}{\eps^2}\right\rceil
\]
for a sufficiently large constant $A_{\alpha,\Delta}$. Then
\[
\|\mu_H^\alpha-\mu_{H,\lambda}\|_{\TV}
\le
\frac{\eps}{2}.
\]

Now run the oracle $\mathcal O_\alpha$ on $H$ with accuracy $\eps/2$, and
let $\nu$ be the law of its output. Then
\[
\|\nu-\mu_H^\alpha\|_{\TV}\le \frac{\eps}{2}.
\]
We output the restriction of this independent set to the original vertex set
$V(G)$. Let $P_G$ denote this projection. Since total variation distance
cannot increase under projection,
\[
\|P_G\nu-P_G\mu_H^\alpha\|_{\TV}\le \frac{\eps}{2}.
\]
Also,
\[
\|P_G\mu_H^\alpha-P_G\mu_{H,\lambda}\|_{\TV}
\le
\|\mu_H^\alpha-\mu_{H,\lambda}\|_{\TV}
\le
\frac{\eps}{2}.
\]
Finally, because $H=G\sqcup M K_{\Delta + 1}$ is a disjoint union,
\[
P_G\mu_{H,\lambda}=\mu_{G,\lambda}.
\]
Therefore,
\[
\|P_G\nu-\mu_{G,\lambda}\|_{\TV}
\le
\eps.
\]

Note that by the choice of $M$, $|H|$ is polynomial in $|G|$, 
and the oracle runs in time $\mathrm{poly}(|H|,1/\eps)$. Thus the whole
procedure runs in time $\mathrm{poly}(|G|,1/\eps)$.
\end{proof}

\begin{proof}[Proof of \Cref{thm:hardness-prescribed-density}]
    Set \[
    \lambda := \frac{\alpha}{1 - \alpha (\Delta + 1)}.
    \]
    Since 
    \[
    \alphac =     \frac{\lambda_c(\Delta)}{1+(\Delta+1)\lambda_c(\Delta)},
    \]
    and the map $x \mapsto \frac{x}{1 + (\Delta + 1)x}$ is strictly increasing on $(0, \infty)$, the assumption $\alpha > \alphac$ implies that 
    \[
    \lambda > \lambdac.
    \]
    Assume for contradiction that there is a polynomial time approximate sampler for $\mu_{G}^{\alpha}$ on graphs with maximum degree $\Delta$.
    By \Cref{lem:hardness-clique-construction}, this gives a polynomial time approximate sampler for $\mu_{G, \lambda}$ on graphs with maximum degree $\Delta$. But hard-core approximate sampling at fugacity $\lambda > \lambdac$ on $\Delta$-regular graphs is known to be hard \cite{SlySun14}. Thus, unless $\mathrm{NP} = \mathrm{RP}$, there is no polynomial approximate sampler for $\mu_{G}^{\alpha}$ on graphs with maximum degree $\Delta$. 
\end{proof}
\subsection{Sampling with prescribed vertex marginals}
\label{sec:alg-application-single-site}

We now focus on the single-site inverse sampling problem. Let
$\vect m=(m_v)_{v\in V}$ be a target marginal vector, and let
$\vect\lambda_{\vect m}\in(0,\infty)^V$ be the unique fugacity vector such
that \footnote{See \Cref{app:marginal-polytope} for a discussion of existence and uniqueness of the fugacity vector realizing a prescribed marginal vector.}
\[
\Pr_{\sigma\sim\mu_{G,\vect\lambda_{\vect m}}}[\sigma_v=1]=m_v
\qquad \forall v\in V.
\]
For brevity, throughout this subsection we write
\[
\mu_G^{\vect m}:=\mu_{G,\vect\lambda_{\vect m}}.
\]

The particle system approach is as follows. We choose $N$ sufficiently large,
set
\[
r_v=\floor{Nm_v}
\qquad \forall v\in V,
\]
and consider the space $\OmegaNr$ with uniform
measure $\muNr$. We then run the single-site particle chain $\PssNr$ close to
its stationary distribution $\muNr$ and output the first particle. Thus the law of the output is approximately 
$P_1\muNr$.

The analysis has two parts. First, we prove that $P_1\muNr$ is close to the
target measure $\mu_G^{\vect m}$. Second, we show how to sample efficiently
from $\muNr$ by running the particle chain $\PssNr$.

To prove the first step, we use a relative entropy argument, along with concentration inequalities. One could also proceed as in the mean-field case, by using a direct estimation via the local central limit theorem (\Cref{lem:alg:fixed-density-kac-chaos}). In this case, however, one has to deal with an $n$-dimensional event and the Fourier transform argument used there appears to lead to a substantially worse dependence of $N$ on $n$.\footnote{On the other hand, we note that a relative entropy argument, along with concentration inequalities, could also be used in the mean-field case, but it gives a worse dependence on $N$ than the one-dimensional local central limit theorem.}

The next lemmas provide the proof of the first step. 
\begin{lemma}\label{lem:conditioning-one-marginal}
With the above notation, assume $\OmegaNr \neq \emptyset$. Then we have
\[
D_{\mathrm{KL}}\!\left(P_1 \muNr \,\middle\|\, \mu_G^{\vect m}\right)
\le
\frac{1}{N}\log\frac{1}{\bigl(\mu_G^{\vect m}\bigr)^{\otimes N}(\OmegaNr)}.
\]
Consequently,
\[
\left\|
P_1\muNr-\mu_G^{\vect m}
\right\|_{\TV}
\le
\sqrt{
\frac{1}{2N}\log\frac{1}{\bigl(\mu_G^{\vect m}\bigr)^{\otimes N}(\OmegaNr)}
}.
\]
\end{lemma}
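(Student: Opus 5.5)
We will prove the relative-entropy bound through the standard conditioning inequality, combined with subadditivity of entropy for exchangeable measures. Write $\mu:=\mu_G^{\vect m}$ and $A:=\OmegaNr$, so that $\muNr=\mu^{\otimes N}(\,\cdot\mid A)$. This identity holds because $\mu^{\otimes N}$ assigns to $\vect\sigma$ the weight $\prod_i\prod_{v\in\sigma^{(i)}}\lambda_v/Z^N=\prod_v\lambda_v^{R_v(\vect\sigma)}/Z^N$. That weight is constant on $A$, since $R(\vect\sigma)=\vec r$ there. Hence the conditional law is uniform on $A$, and it is well defined because $A\neq\emptyset$ gives $\mu^{\otimes N}(A)>0$.

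The first step is the elementary identity
\[
D_{\mathrm{KL}}\bigl(\mu^{\otimes N}(\cdot\mid A)\,\big\|\,\mu^{\otimes N}\bigr)=\log\frac{1}{\mu^{\otimes N}(A)}.
\]
It follows directly from the definition: the density of the conditional measure with respect to $\mu^{\otimes N}$ is $\mathbf 1_A/\mu^{\otimes N}(A)$, and integrating its logarithm against the conditional measure gives the constant value.

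The second step is the tensorization inequality for relative entropy against a product reference measure. For any $\nu$ on $\OmegaG^N$,
\[
D_{\mathrm{KL}}(\nu\,\|\,\mu^{\otimes N})\ \ge\ \sum_{i=1}^N D_{\mathrm{KL}}(P_i\nu\,\|\,\mu).
\]
We prove it via the chain rule. Write $D(\nu\|\mu^{\otimes N})=\sum_i \E\bigl[D(\nu_{i\mid <i}\|\mu)\bigr]$, where $\nu_{i\mid<i}$ is the conditional law of the $i$-th coordinate given the previous ones. Then apply convexity of $D(\cdot\|\mu)$ (Jensen) to each term, which gives $\E[D(\nu_{i\mid<i}\|\mu)]\ge D(P_i\nu\|\mu)$. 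Equivalently, one can note that $D(\nu\|\otimes_i P_i\nu)\ge 0$ and that $D(\nu\|\mu^{\otimes N})=D(\nu\|\otimes P_i\nu)+\sum_i D(P_i\nu\|\mu)$.

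The third step uses exchangeability. The set $A$ is invariant under permutations of the particles, and $\mu^{\otimes N}$ is exchangeable, so $\muNr$ is exchangeable and $P_i\muNr=P_1\muNr$ for all $i$. Combining the three steps gives $N\,D_{\mathrm{KL}}(P_1\muNr\|\mu)\le\log(1/\mu^{\otimes N}(A))$, which is the first claim. The total variation bound then follows from Pinsker's inequality, $\|p-q\|_{\TV}\le\sqrt{D_{\mathrm{KL}}(p\|q)/2}$.

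The only step requiring any care is the tensorization inequality. It is standard, and the paper may simply cite it; the chain-rule argument above is the one we would write out. Everything else is routine bookkeeping.
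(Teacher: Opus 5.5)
Your proposal is correct and follows the paper's proof essentially step by step: you identify the uniform measure on the fixed-count space with the conditioned product measure, compute its relative entropy as $\log(1/\mu^{\otimes N}(A))$, lower bound this via the chain rule and convexity by $\sum_i D_{\mathrm{KL}}(P_i\nu\,\|\,\mu)$, and finish with Pinsker. Your explicit appeal to exchangeability to replace each $P_i$ by $P_1$ spells out a step the paper uses without comment.
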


\begin{proof}
First, note that for $\vect\sigma \in \OmegaNr$ and
$\vect\lambda=\vect\lambda_{\vect m}$, we have
\[
\mu_{G,\vect\lambda}^{\otimes N}(\vect\sigma)
\propto \prod_{i \in [N]} \mu_{G,\vect\lambda}(\sigma^{(i)})
\propto \prod_{i \in [N]} \prod_{v \in V} \lambda_{v}^{\sigma_v^{(i)}}
=
\prod_{v \in V} \lambda_{v}^{\sum_{i \in [N]}\sigma_v^{(i)}}
=
\prod_{v \in V} \lambda_v^{r_v}.
\]
Hence, all elements of $\OmegaNr$ have the same measure under
$\bigl(\mu_G^{\vect m}\bigr)^{\otimes N}$, which means that
\[
\muNr = \bigl(\mu_G^{\vect m}\bigr)^{\otimes N}\!\left(\,\cdot\, \middle|\, \OmegaNr\right).
\]
We now have
\[
\muNr(\vect\sigma)
=
\frac{\bigl(\mu_G^{\vect m}\bigr)^{\otimes N}(\vect\sigma)}
{\bigl(\mu_G^{\vect m}\bigr)^{\otimes N}(\OmegaNr)}
\,\mathbf{1}(\vect\sigma\in\OmegaNr),
\qquad
\vect\sigma\in \OmegaG^{N}.
\]
Hence, for every $\vect\sigma\in\OmegaNr$,
\[
\frac{\muNr(\vect\sigma)}
{\bigl(\mu_G^{\vect m}\bigr)^{\otimes N}(\vect\sigma)}
=
\frac{1}{\bigl(\mu_G^{\vect m}\bigr)^{\otimes N}(\OmegaNr)}.
\]
Therefore,
\begin{equation}
\label{eq:entropy-conditioned-law}
D_{\mathrm{KL}}(\muNr\,\|\,\bigl(\mu_G^{\vect m}\bigr)^{\otimes N})
=
\sum_{\vect\sigma\in \OmegaNr}
\muNr(\vect\sigma)
\log\frac{1}{\bigl(\mu_G^{\vect m}\bigr)^{\otimes N}(\OmegaNr)}
=
\log\frac{1}{\bigl(\mu_G^{\vect m}\bigr)^{\otimes N}(\OmegaNr)}.
\end{equation}

Now, by the chain rule for relative entropy,
\[
D_{\mathrm{KL}}(\muNr\,\|\,\bigl(\mu_G^{\vect m}\bigr)^{\otimes N})
=
\sum_{i=1}^{N}
\E_{\vect\sigma \sim \muNr}\!\left[
D_{\mathrm{KL}}
\bigl(
\muNr(\sigma^{(i)}\mid \sigma^{(1)},\dots,\sigma^{(i-1)})
\,\|\,\mu_G^{\vect m}
\bigr)
\right].
\]
By convexity of relative entropy in the first argument,
\[
\E_{\vect\sigma \sim \muNr}\!\left[
D_{\mathrm{KL}}
\bigl(
\muNr(\sigma^{(i)}\mid \sigma^{(1)},\dots,\sigma^{(i-1)})
\,\|\,\mu_G^{\vect m}
\bigr)
\right]
\ge
D_{\mathrm{KL}}(P_i\muNr\,\|\,\mu_G^{\vect m}).
\]
Thus,
\[
D_{\mathrm{KL}}(\muNr\,\|\,\bigl(\mu_G^{\vect m}\bigr)^{\otimes N})
\ge
\sum_{i=1}^{N}D_{\mathrm{KL}}(P_i\muNr\,\|\,\mu_G^{\vect m})
=
N \cdot D_{\mathrm{KL}}(P_1\muNr\,\|\,\mu_G^{\vect m}).
\]

Combining this with \eqref{eq:entropy-conditioned-law} gives
\[
D_{\mathrm{KL}}(P_1\muNr\,\|\,\mu_G^{\vect m})
\le
\frac{1}{N}\log\frac{1}{\bigl(\mu_G^{\vect m}\bigr)^{\otimes N}(\OmegaNr)}.
\]

Finally, by Pinsker's inequality,
\[
\|P_1\muNr -\mu_G^{\vect m}\|_{\TV}
\le
\sqrt{
\frac{1}{2}
D_{\mathrm{KL}}(P_1\muNr\,\|\,\mu_G^{\vect m})
},
\]
and therefore
\[
\|P_1\muNr-\mu_G^{\vect m}\|_{\TV}
\le
\sqrt{
\frac{1}{2N}\log\frac{1}{\bigl(\mu_G^{\vect m}\bigr)^{\otimes N}(\OmegaNr)}
}.
\]
\end{proof}

By \Cref{lem:conditioning-one-marginal}, to find an upper bound on the total
variation distance, it suffices to provide a lower bound on the probability
$\bigl(\mu_G^{\vect m}\bigr)^{\otimes N}(\OmegaNr)$. The next two lemmas establish such
a lower bound for the case where $\vec r$ is an integer vector such that
$\|\vec r-N\vect m\|_{\infty}$ is $O(\sqrt{N\log n})$.

\begin{lemma}
\label{lem:column-size-counting-comparison}
Assume that
\[
\gamma \le m_u
\le \frac{1-\delta}{\Delta+1} \qquad \forall u \in V
\]
for some constants $\gamma>0$ and $\delta\in(0,1)$. Then there exists
$
C = C(\gamma, \delta, \Delta)
$
such that for all $N\ge C\log(4n)$, the following hold.

First, for every $\vec s\in \mathbb Z_{\geq 0}^V$ satisfying
\[
\|\vec s-N\vect m\|_{\infty}\le \sqrt{\frac N2 \log(4n)},
\]
one has
$
\Omega^{\mathrm{ss}}_{N,\vec s}\neq \emptyset.
$

Second, let $\vec r,\vec r'\in \mathbb Z^{V}_{\geq 0}$ satisfy
\[
\|\vec r-N\vect m\|_{\infty}\le \sqrt{\frac N2 \log(4n)},
\qquad
\|\vec r'-N\vect m\|_{\infty}\le \sqrt{\frac N2 \log(4n)},
\]
and suppose that for some $v\in V$,
\[
r'_v=r_v+1,
\qquad
r'_u=r_u \quad \text{for all }u\in V\setminus\{v\}.
\]
Then
\[
\frac{\gamma}{2}
\le
\frac{\size{\Omega^{\mathrm{ss}}_{N,\vec r}}}
{\size{\Omega^{\mathrm{ss}}_{N,\vec r'}}}
\le
\frac{2}{\delta}.
\]
\end{lemma}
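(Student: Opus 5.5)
The plan is a direct counting argument by double counting, with no probabilistic input. Throughout, the assumption $N\ge C\log(4n)$ is used only to make the fluctuation term $\sqrt{\tfrac N2\log(4n)}$ at most a small constant fraction of $N$. Concretely, choosing $C=C(\gamma,\delta,\Delta)$ large enough gives
\[
\sqrt{\tfrac N2\log(4n)}\le \eta N,\qquad \eta:=\min\Bigl\{\tfrac{\gamma}{2},\,\tfrac{\delta}{2(\Delta+1)}\Bigr\}.
\]

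\textbf{First part.} I use the nonemptiness criterion recorded before \Cref{alg:initial-particle-configuration}: $\Omega^{\mathrm{ss}}_{N,\vec s}\ne\emptyset$ whenever $(\Delta+1)\max_v s_v\le N$. For any admissible $\vec s$,
\[
s_v\le Nm_v+\eta N\le \frac{N(1-\delta)}{\Delta+1}+\frac{N\delta}{2(\Delta+1)}\le\frac{N}{\Delta+1},
\]
so the criterion applies.

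\textbf{Second part.} I would set up a bipartite relation between $\Omega^{\mathrm{ss}}_{N,\vec r'}$ and $\Omega^{\mathrm{ss}}_{N,\vec r}$. Relate $\vect\sigma'\in\Omega^{\mathrm{ss}}_{N,\vec r'}$ to $\vect\sigma\in\Omega^{\mathrm{ss}}_{N,\vec r}$ if $\vect\sigma$ is obtained from $\vect\sigma'$ by deleting $v$ from one particle $\sigma'^{(i)}$ that contains it. The resulting configuration is still a tuple of independent sets, and its count vector is exactly $\vec r$.

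Each $\vect\sigma'$ has exactly $r'_v=r_v+1$ related partners, one per particle containing $v$. Conversely, $\vect\sigma$ is related to the configurations obtained by adding $v$ to a particle $i$ with $\sigma^{(i)}_{\neighborCl{v}}=0$. This is the key observation: the number of such particles is a deterministic function of $\vec r$. Within each configuration the sets $\{i:\sigma^{(i)}_u=1\}$ for $u\in\neighborCl{v}$ are pairwise disjoint. Indeed, two neighbors of $v$ may lie in the same particle only if they are nonadjacent, so I must be careful here. The sets for $u=v$ and for $u\in\neighbor{v}$ are disjoint, but two neighbors of $v$ can share a particle. So the count of free particles is not exactly constant. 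Instead it lies between
\[
N-\sum_{u\in\neighborCl{v}}r_u \qquad\text{and}\qquad N-r_v .
\]
Double counting the relation then gives
\[
|\Omega^{\mathrm{ss}}_{N,\vec r'}|\,(r_v+1)=\sum_{\vect\sigma\in\Omega^{\mathrm{ss}}_{N,\vec r}}A(\vect\sigma),
\]
where the number $A(\vect\sigma)$ of free particles satisfies
\[
N-\sum_{u\in\neighborCl{v}}r_u\le A(\vect\sigma)\le N .
\]

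\textbf{Conclusion from the bounds.} For the upper end, $\sum_{u\in\neighborCl{v}}r_u\le (\Delta+1)\bigl(\tfrac{N(1-\delta)}{\Delta+1}+\eta N\bigr)\le N(1-\delta/2)$. Hence
\[
\tfrac{N\delta}{2}\le A(\vect\sigma)\le N .
\]
Also $N\gamma/2\le Nm_v-\eta N\le r_v+1\le N$. Therefore
\[
\frac{|\Omega^{\mathrm{ss}}_{N,\vec r}|}{|\Omega^{\mathrm{ss}}_{N,\vec r'}|}\in\Bigl[\frac{N\gamma/2}{N},\ \frac{N}{N\delta/2}\Bigr]=\Bigl[\frac{\gamma}{2},\frac{2}{\delta}\Bigr].
\]
The main obstacle is the correct handling of the free-particle count, whose bounds must hold uniformly over $\vect\sigma$. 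Once that is in place, the rest is bookkeeping with the choice of $C$, including the floor on $r'_v$ near the lower edge of the window.
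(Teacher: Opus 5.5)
Your proposal is correct and follows the paper's proof essentially verbatim: the same bipartite deletion/insertion relation between $\Omega^{\mathrm{ss}}_{N,\vec r}$ and $\Omega^{\mathrm{ss}}_{N,\vec r'}$, the exact degree $r'_v\ge\gamma N/2$ on one side, the union-bound degree $N-\sum_{u\in\neighborCl{v}}r_u\ge\delta N/2$ on the other, and double counting. For the first part the paper redoes the $(\Delta+1)$-coloring construction instead of citing the criterion $(\Delta+1)\max_v s_v\le N$, but that criterion rests on the same construction. In a write-up, delete the abandoned claim that the number of free particles is a deterministic function of $\vec r$, since only the two-sided bound is used.
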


\begin{proof}
Let $C=C(\gamma, \delta, \Delta)$ be sufficiently large so that for every
$N \geq C\log(4n)$ one has
\[
2(\Delta+1)\sqrt{\frac N2\log(4n)}\le \delta N
\qquad\text{and}\qquad
\sqrt{\frac N2 \log(4n)} \le \frac{\gamma}{2}N.
\]

We now prove the first statement. Fix $\vec s\in\mathbb Z_{\geq 0}^V$ satisfying
$\|\vec s-N\vect m\|_{\infty}\le \sqrt{\frac N2\log(4n)}$.
We show that $\Omega^{\mathrm{ss}}_{N,\vec s}\neq\emptyset$. Indeed, since $G$ has maximum degree at most $\Delta$, it admits a proper
$(\Delta+1)$-coloring. Let
$
V=V_1\cup\cdots\cup V_{\Delta+1}
$
be the corresponding partition of $V$ into independent sets. For each
$j\in[\Delta+1]$, define
\[
M_j:=\max_{u\in V_j} s_u,
\]
with the convention that $M_j = 0$ if $V_j = \emptyset$.
Since $\|\vec s-N\vect m\|_{\infty}\le \sqrt{\frac N2\log(4n)}$, we have for every
$u\in V$,
\[
s_u\le Nm_u+\sqrt{\frac N2\log(4n)}
\le \frac{1-\delta}{\Delta+1}N+\sqrt{\frac N2\log(4n)}.
\]
Hence
\[
M_j\le \frac{1-\delta}{\Delta+1}N+\sqrt{\frac N2\log(4n)}
\qquad
\text{for all }j\in[\Delta+1].
\]
Therefore,
\[
\sum_{j=1}^{\Delta+1} M_j
\le
(1-\delta)N+(\Delta+1)\sqrt{\frac N2\log(4n)}
\le N.
\]

Now, partition $[N]$ into pairwise disjoint sets
$
I_1,\dots,I_{\Delta+1},I_0
$
such that $\size{I_j}=M_j$ for each $j\in[\Delta+1]$. For each
$j\in[\Delta+1]$ and each $u\in V_j$, since $s_u\le M_j$, we may choose a set
\[
J_u\subseteq I_j
\qquad\text{with}\qquad
\size{J_u}=s_u.
\]
Now, for each $t\in I_j$, define
\[
\sigma^{(t)}:=\set{u\in V_j: t\in J_u},
\]
and for each $t\in I_0$, set $\sigma^{(t)}:=\emptyset$.

Since each $V_j$ is an independent set, every $\sigma^{(t)}$ is an independent
set. Moreover, for each $u\in V_j$, the vertex $u$ belongs to exactly the sets
$\sigma^{(t)}$ with $t\in J_u$, and hence appears exactly $s_u$ times among
$\sigma^{(1)},\dots,\sigma^{(N)}$. Therefore
\[
\sum_{t=1}^{N}\sigma^{(t)}=\vec s,
\]
which shows that
$
(\sigma^{(1)},\dots,\sigma^{(N)})\in\Omega^{\mathrm{ss}}_{N,\vec s}.
$
Hence $\Omega^{\mathrm{ss}}_{N,\vec s}\neq\emptyset$.

Note that the first part of the lemma shows that the ratio we want to bound in
the second part is well-defined, and thus now we can prove the second statement.

Let
\[
L:=\Omega^{\mathrm{ss}}_{N,\vec r},
\qquad
R:=\Omega^{\mathrm{ss}}_{N,\vec r'},
\]
and define a bipartite graph $H=(L\cup R,F)$ as follows.
For $\vect\sigma=(\sigma^{(1)},\dots,\sigma^{(N)})\in L$ and
$\vect\tau=(\tau^{(1)},\dots,\tau^{(N)})\in R$, we place an edge between
$\vect\sigma$ and $\vect\tau$ if there exists $j\in[N]$ such that
\[
\tau^{(j)}=\sigma^{(j)}\cup\{v\},
\qquad
\tau^{(k)}=\sigma^{(k)} \quad \text{for all } k\neq j,
\]
and $\sigma^{(j)}\cup\{v\}$ is an independent set. Equivalently,
$\vect\sigma$ is obtained from $\vect\tau$ by deleting $v$ from exactly one
coordinate $j$ for which $v\in\tau^{(j)}$. Clearly, every vertex of $H$ has degree at most $N$.

We first bound the degree of vertices in $R$ from below. Fix
$\vect\tau\in R$. Then
$\deg_H(\vect\tau) = r'_v$,
since each coordinate $i\in[N]$ with $v\in \tau^{(i)}$ yields one neighbor in
$L$, obtained by deleting $v$ from $\tau^{(i)}$. Therefore,
\[
\deg_H(\vect\tau)
=
r'_v
\ge
Nm_v-\sqrt{\frac N2 \log(4n)}
\ge
\gamma N-\sqrt{\frac N2 \log(4n)}
\ge
\frac{\gamma}{2}N.
\]

Next, we bound the degree of vertices in $L$ from below. Fix
$\vect\sigma\in L$, and let
\[
M
:=
\sum_{i\in[N]} \size{\sigma^{(i)}_{\neighborCl{v}}}.
\]
Let
\[
B:=\set{i\in[N] : \sigma^{(i)}_{\neighborCl{v}}\neq\emptyset }.
\]
Note that if $i\notin B$, then
$v$ can be added to $\sigma^{(i)}$ while preserving independence. Thus
\[
\deg_H(\vect\sigma) = N-\size{B}.
\]
Since each $i\in B$ contributes at least $1$ to the sum defining $M$, we have
$\size{B}\le M$, and therefore
\[
\deg_H(\vect\sigma) \ge N-M.
\]

On the other hand, since $\vect\sigma\in\Omega^{\mathrm{ss}}_{N,\vec r}$,
\[
M
=
\sum_{u\in \neighborCl{v}} r_u.
\]
Using $\size{\neighborCl{v}}\le \Delta+1$ and the bound
$r_u\le Nm_u+\sqrt{\frac N2 \log(4n)}$ for every $u\in V$, we obtain
\[
M
\le
(\Delta+1)\left(\max_{u\in V} m_u \cdot N + \sqrt{\frac N2 \log(4n)}\right)
\le
(\Delta+1)\left(\frac{1-\delta}{\Delta+1}N + \sqrt{\frac N2 \log(4n)}\right).
\]
Hence
\[
M
\le
(1-\delta)N+(\Delta+1)\sqrt{\frac N2 \log(4n)},
\]
and so
\[
\deg_H(\vect\sigma)
\ge
N-M
\ge
\delta N-(\Delta+1)\sqrt{\frac N2 \log(4n)}
\ge
\frac{\delta}{2}N.
\]

We now double-count the edges of $H$. Since every vertex has degree at most $N$,
while every vertex in $L$ has degree at least $\frac{\delta}{2}N$ and every
vertex in $R$ has degree at least $\frac{\gamma}{2}N$, we have
\[
\frac{\delta}{2}N\,\size{L}
\le
\size{F}
\le
N\,\size{R},
\]
and
\[
\frac{\gamma}{2}N\,\size{R}
\le
\size{F}
\le
N\,\size{L}.
\]
Therefore,
\[
\frac{\size{L}}{\size{R}}
\le
\frac{2}{\delta}
\qquad\text{and}\qquad
\frac{\size{L}}{\size{R}}
\ge
\frac{\gamma}{2}.
\]
Since $L=\Omega^{\mathrm{ss}}_{N,\vec r}$ and
$R=\Omega^{\mathrm{ss}}_{N,\vec r'}$, this gives
\[
\frac{\gamma}{2}
\le
\frac{\size{\Omega^{\mathrm{ss}}_{N,\vec r}}}
{\size{\Omega^{\mathrm{ss}}_{N,\vec r'}}}
\le
\frac{2}{\delta},
\]
as claimed.
\end{proof}

\begin{lemma}
\label{lem:min-box-mass}
Assume that
\[
\gamma \le m_u \leq \frac{1-\delta}{\Delta+1} \qquad \forall u \in V
\]
for some constants $\gamma>0$ and $\delta\in(0,1)$. Then there exists
$
C=C(\gamma, \delta, \Delta)
$
such that for all $N\ge C\log(4n)$, if we define
\[
L:=\sqrt{\frac N2 \log(4n)}
\qquad\text{and}\qquad
\mathcal{B}:=\set{\vec r\in\mathbb Z^V:\|\vec r-N\vect m\|_{\infty}\le L},
\]
then
\[
\min_{\vec r\in\mathcal{B}}
\bigl(\mu_G^{\vect m}\bigr)^{\otimes N}(\Omega^{\mathrm{ss}}_{N,\vec r})
\ge
\frac{1}{2}(2L+1)^{-n}
\left(\frac{\delta\gamma}{2}\right)^{2nL}.
\]
\end{lemma}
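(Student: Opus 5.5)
The plan has three steps. First, I show that the random count vector $R(\vect\sigma)$ lands in $\mathcal B$ with probability at least $1/2$. Second, I use \Cref{lem:column-size-counting-comparison} to compare the masses of the boxes $\Omega^{\mathrm{ss}}_{N,\vec r}$ for different $\vec r\in\mathcal B$. Third, I pigeonhole over the at most $(2L+1)^n$ integer points of $\mathcal B$.

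\textbf{Step 1 (concentration).} Under $\bigl(\mu_G^{\vect m}\bigr)^{\otimes N}$, for each fixed $v$ the count $R_v(\vect\sigma)=\sum_i\sigma^{(i)}_v$ is a sum of $N$ i.i.d.\ Bernoulli$(m_v)$ variables with mean $Nm_v$. Hoeffding's inequality gives
\[
\Pr\bigl(|R_v-Nm_v|>L\bigr)\le 2e^{-2L^2/N}=2e^{-\log(4n)}=\frac{1}{2n}.
\]
A union bound over the $n$ vertices then shows $\Pr(R(\vect\sigma)\in\mathcal B)\ge 1/2$. Since $\mathcal B$ has at most $(2L+1)^n$ integer points, some $\vec r^{\,*}\in\mathcal B$ satisfies
\[
\bigl(\mu_G^{\vect m}\bigr)^{\otimes N}(\Omega^{\mathrm{ss}}_{N,\vec r^{\,*}})\ge \tfrac12(2L+1)^{-n}.
\]

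\textbf{Step 2 (comparing nearby boxes).} By the computation in \Cref{lem:conditioning-one-marginal}, every element of $\Omega^{\mathrm{ss}}_{N,\vec r}$ has the same mass, namely $\prod_v\lambda_v^{r_v}/Z^N$. Hence
\[
\bigl(\mu_G^{\vect m}\bigr)^{\otimes N}(\Omega^{\mathrm{ss}}_{N,\vec r})=|\Omega^{\mathrm{ss}}_{N,\vec r}|\prod_v\lambda_v^{r_v}/Z^N .
\]
Now consider a step $\vec r\to\vec r'=\vec r+e_v$ that stays inside $\mathcal B$. The ratio of masses is $\lambda_v\,|\Omega_{\vec r'}|/|\Omega_{\vec r}|$. \Cref{lem:column-size-counting-comparison} bounds the cardinality ratio between $\delta/2$ and $2/\gamma$, but the factor $\lambda_v$ remains and must also be controlled.

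\textbf{Main obstacle.} Controlling this $\lambda_v$ factor is the step I expect to be hardest. My first idea is to use the double-counting structure of \Cref{lem:column-size-counting-comparison} directly with weights, because the weighted edge count already accounts for $\lambda_v$. Under the product measure, adding $v$ to a coordinate $j$ with $\sigma^{(j)}_{N^+(v)}=\emptyset$ multiplies the weight by exactly $\lambda_v$. So in the bipartite graph $H$ of that lemma, each edge carries weight ratio $\lambda_v$, and the double count gives
\[
\frac{\delta}{2}N\,\mu(L)\cdot\lambda_v\le \sum_{\text{edges}}\text{weight}\le N\,\mu(R),
\]
together with the analogous reverse inequality. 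The resulting mass ratios, roughly between $\delta\lambda_v/2$ and $2\lambda_v/\gamma$, still contain $\lambda_v$.

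To avoid this, I would bound $\lambda_v$ directly. Since $m_v\ge\gamma$ and
\[
m_v=\lambda_v\Pr(\sigma_{N^+(v)}=0)/(1+\lambda_v)\le \lambda_v,
\]
we get $\lambda_v\ge\gamma$. In the other direction,
\[
\frac{m_v}{\Pr(\sigma_{N^+(v)}=0,\sigma_v=0)}=\lambda_v,
\]
and $\Pr(\sigma_{N^+(v)}=0)\ge\delta$ as in Step 1 of \Cref{lem:qual:single-site-irreducibility}, so $\lambda_v\le m_v/(\delta-m_v)$, which is $O(1/\delta)$. This gives two-sided constant bounds on each single-step mass ratio, but the constants may not match $\delta\gamma/2$ exactly. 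If a finer argument is needed, I would bound the conditional probability from the weighted graph directly. In every case the goal is a per-step factor of at least $\delta\gamma/2$ in both directions.

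\textbf{Step 3 (paths).} Any $\vec r\in\mathcal B$ can be connected to $\vec r^{\,*}$ by a monotone lattice path that stays inside the box $\mathcal B$, which is convex coordinatewise. The path has at most $\|\vec r-\vec r^{\,*}\|_1\le 2nL$ unit steps. Multiplying the per-step factors $\ge\delta\gamma/2$ along the path and combining with Step 1 yields the claimed bound.
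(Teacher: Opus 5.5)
Your plan is the paper's argument in mirror image, and Steps 1 and 3 are fine. The paper takes the lightest box $\vec r_*$ and sums the upper per-step factor $(2/(\delta\gamma))^{2nL}$ over $\mathcal B$. You instead pigeonhole to the heaviest box and propagate the lower per-step factor outward; the two are equivalent.

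The step that needs repair is the one you flagged as the obstacle. Your bound $\lambda_v\le m_v/(\delta-m_v)$ subtracts $m_v$ twice. The event $\{\sigma_{N^{+}(v)}=0\}$ already contains $\sigma_v=0$, and the union bound gives directly $\Pr(\sigma_{N^{+}(v)}=0)\ge 1-\sum_{u\in N^{+}(v)}m_u\ge\delta$. As written, your bound is vacuous whenever $m_v\ge\delta$, which the hypotheses allow. For example, take $\Delta=3$, $\delta=0.1$, $m_v=0.2\le 0.9/4$; then $\delta-m_v<0$.

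The root cause is that your first identity, with the factor $1/(1+\lambda_v)$, is the open-neighbourhood formula $m_v=\frac{\lambda_v}{1+\lambda_v}\Pr(\sigma_{N(v)}=0)$, and you mixed it with the closed-neighbourhood one. The clean identity is $m_v=\lambda_v\Pr(\sigma_{N^{+}(v)}=0)$. It gives both bounds at once: $\gamma\le m_v\le\lambda_v$ and $\lambda_v\le m_v/\delta\le 1/\delta$.

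With $\gamma\le\lambda_v\le 1/\delta$ in hand, your hedge that ``the constants may not match'' is unfounded. \Cref{lem:column-size-counting-comparison} gives $|\Omega^{\mathrm{ss}}_{N,\vec r'}|/|\Omega^{\mathrm{ss}}_{N,\vec r}|\in[\delta/2,\,2/\gamma]$. Hence the mass ratio $\lambda_v\,|\Omega^{\mathrm{ss}}_{N,\vec r'}|/|\Omega^{\mathrm{ss}}_{N,\vec r}|$ lies exactly in $[\delta\gamma/2,\,2/(\delta\gamma)]$. So the per-step factor is precisely $\delta\gamma/2$ in both directions, and no weighted double counting or finer argument is needed. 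Since $\delta\gamma/2<1$, paths of length at most $2nL$ inside $\mathcal B$ then give the stated bound.
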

\begin{proof}
Let $C=C(\gamma, \delta, \Delta)$ be as in
\Cref{lem:column-size-counting-comparison}, and fix
$N\ge C\log(4n)$. For $\vec r\in \mathcal B$,
we first show that
\[
\sum_{\vec r\in\mathcal B} \bigl(\mu_G^{\vect m}\bigr)^{\otimes N}(\Omega^{\mathrm{ss}}_{N,\vec r})\ge \frac12.
\]
Let $\vect\sigma=(\sigma^{(1)},\dots,\sigma^{(N)})\sim \bigl(\mu_G^{\vect m}\bigr)^{\otimes N}$, and set
$
\vec R:=\sum_{i=1}^N \sigma^{(i)}.
$
Fix $v\in V$. Then
$
R_v=\sum_{i=1}^N \sigma_v^{(i)},
$
where $\sigma_v^{(1)},\dots,\sigma_v^{(N)}$ are independent $\{0,1\}$-valued random variables with
$
\E[\sigma_v^{(i)}]=m_v.
$
Hence, by Hoeffding's inequality,
\[
\bigl(\mu_G^{\vect m}\bigr)^{\otimes N}\!\left(|R_v-Nm_v|>L\right)
\le
2\exp\!\left(-\frac{2L^2}{N}\right)
=
2e^{-\log(4n)}
=
\frac{1}{2n}.
\]
Applying the union bound over all $v\in V$, we obtain
\[
\bigl(\mu_G^{\vect m}\bigr)^{\otimes N}\!\left(\|\vec R-N\vect m\|_\infty>L\right)
\le
\frac12.
\]
Therefore,
\[
\sum_{\vec r\in\mathcal B} \bigl(\mu_G^{\vect m}\bigr)^{\otimes N}(\Omega^{\mathrm{ss}}_{N,\vec r})
=
\bigl(\mu_G^{\vect m}\bigr)^{\otimes N}\!\left(\|\vec R-N\vect m\|_\infty\le L\right)
\ge
\frac12.
\]

We next compare the probabilities of neighboring vectors. Let $\vec r,\vec r'\in\mathcal B$ satisfy
\[
r'_v=r_v+1,
\qquad
r'_u=r_u \quad \text{for all }u\in V\setminus\{v\},
\]
for some $v\in V$. Let $\vect\lambda=\vect\lambda_{\vect m}$. Since all elements of
$\Omega^{\mathrm{ss}}_{N,\vec r}$ have the same measure under
$\bigl(\mu_G^{\vect m}\bigr)^{\otimes N}$, we have
\[
\frac{
\bigl(\mu_G^{\vect m}\bigr)^{\otimes N}(\Omega^{\mathrm{ss}}_{N,\vec r})
}{
\bigl(\mu_G^{\vect m}\bigr)^{\otimes N}(\Omega^{\mathrm{ss}}_{N,\vec r'})
}
=
\frac{|\Omega^{\mathrm{ss}}_{N,\vec r}|}{|\Omega^{\mathrm{ss}}_{N,\vec r'}|}
\prod_{u\in V}\lambda_u^{r_u-r_u'}
=
\frac{|\Omega^{\mathrm{ss}}_{N,\vec r}|}{|\Omega^{\mathrm{ss}}_{N,\vec r'}|}\cdot \frac{1}{\lambda_v}.
\]
By \Cref{lem:column-size-counting-comparison},
\[
\frac{\gamma}{2}
\le
\frac{|\Omega^{\mathrm{ss}}_{N,\vec r}|}{|\Omega^{\mathrm{ss}}_{N,\vec r'}|}
\le
\frac{2}{\delta}.
\]

We now bound the fugacities. For each $v\in V$, one has
\[
m_v
=
\lambda_v\,
\mu_G^{\vect m}\!\left(\sigma_u=0 \text{ for all }u\in \neighborCl{v}\right).
\]
Hence
$
\lambda_v\ge m_v\ge \gamma.
$
Also, by the union bound,
\[
\mu_G^{\vect m}\!\left(\sigma_u=0 \text{ for all }u\in \neighborCl{v}\right)
\ge
1-\sum_{u\in \neighborCl{v}} m_u
\ge
1-(\Delta+1)\cdot\frac{1-\delta}{\Delta+1}
=
\delta,
\]
and therefore
\[
\lambda_v
=
\frac{m_v}{\mu_G^{\vect m}(\sigma_u=0 \text{ for all }u\in \neighborCl{v})}
\le
\frac{1}{\delta}.
\]
Thus
$
\gamma\le \lambda_v\le \frac{1}{\delta},
$
which implies
$$
\frac{\delta\gamma}{2}
\le
\frac{
\bigl(\mu_G^{\vect m}\bigr)^{\otimes N}(\Omega^{\mathrm{ss}}_{N,\vec r})
}{
\bigl(\mu_G^{\vect m}\bigr)^{\otimes N}(\Omega^{\mathrm{ss}}_{N,\vec r'})
}
\le
\frac{2}{\delta\gamma}.
$$

Now let $\vec r_*\in\mathcal B$ be such that
\[
\bigl(\mu_G^{\vect m}\bigr)^{\otimes N}(\Omega^{\mathrm{ss}}_{N,\vec r_*})
=
\min_{\vec r\in\mathcal B}
\bigl(\mu_G^{\vect m}\bigr)^{\otimes N}(\Omega^{\mathrm{ss}}_{N,\vec r}).
\]
Fix any $\vec s\in\mathcal B$. There exists a path
$
\vec r_*=\vec r^{(0)},\vec r^{(1)},\dots,\vec r^{(m)}=\vec s
$
in $\mathcal B$ such that consecutive points differ in exactly one coordinate by $\pm 1$, and
$
m=\|\vec s-\vec r_*\|_1.
$
Applying the previous bound along this path gives
\[
\bigl(\mu_G^{\vect m}\bigr)^{\otimes N}(\Omega^{\mathrm{ss}}_{N,\vec s})
\le
\left(\frac{2}{\delta\gamma}\right)^m
\bigl(\mu_G^{\vect m}\bigr)^{\otimes N}(\Omega^{\mathrm{ss}}_{N,\vec r_*})
=
\left(\frac{2}{\delta\gamma}\right)^{\|\vec s-\vec r_*\|_1}
\bigl(\mu_G^{\vect m}\bigr)^{\otimes N}(\Omega^{\mathrm{ss}}_{N,\vec r_*}).
\]
Since $\vec s,\vec r_*\in\mathcal B$,
\[
\|\vec s-\vec r_*\|_1
\le
\sum_{u\in V}\bigl(|s_u-Nm_u|+|r_{*,u}-Nm_u|\bigr)
\le
2nL.
\]
Hence
\[
\bigl(\mu_G^{\vect m}\bigr)^{\otimes N}(\Omega^{\mathrm{ss}}_{N,\vec s})
\le
\left(\frac{2}{\delta\gamma}\right)^{2nL}
\bigl(\mu_G^{\vect m}\bigr)^{\otimes N}(\Omega^{\mathrm{ss}}_{N,\vec r_*}).
\]

Also, for each $u\in V$, the coordinate $r_u$ ranges over an interval of length $2L$, and hence has at most $2L+1$ possible integer values. Therefore
$
|\mathcal B|\le (2L+1)^n.
$
Summing over $\vec s\in\mathcal B$, we obtain
\[
\frac12
\le
\sum_{\vec s\in\mathcal B}
\bigl(\mu_G^{\vect m}\bigr)^{\otimes N}(\Omega^{\mathrm{ss}}_{N,\vec s})
\le
|\mathcal B|
\left(\frac{2}{\delta\gamma}\right)^{2nL}
\bigl(\mu_G^{\vect m}\bigr)^{\otimes N}(\Omega^{\mathrm{ss}}_{N,\vec r_*})
\le
(2L+1)^n
\left(\frac{2}{\delta\gamma}\right)^{2nL}
\bigl(\mu_G^{\vect m}\bigr)^{\otimes N}(\Omega^{\mathrm{ss}}_{N,\vec r_*}).
\]
Rearranging yields
\[
\bigl(\mu_G^{\vect m}\bigr)^{\otimes N}(\Omega^{\mathrm{ss}}_{N,\vec r_*})
\ge
\frac12(2L+1)^{-n}
\left(\frac{\delta\gamma}{2}\right)^{2nL}.
\]
we conclude that
\[
\min_{\vec r\in\mathcal B}
\bigl(\mu_G^{\vect m}\bigr)^{\otimes N}(\Omega^{\mathrm{ss}}_{N,\vec r})
=
\bigl(\mu_G^{\vect m}\bigr)^{\otimes N}(\Omega^{\mathrm{ss}}_{N,\vec r_*})
\ge
\frac12(2L+1)^{-n}
\left(\frac{\delta\gamma}{2}\right)^{2nL}.
\]
This proves the lemma.
\end{proof}

\begin{corollary}
\label{cor:particle-marginal-approximation}
By \Cref{lem:conditioning-one-marginal,lem:column-size-counting-comparison,lem:min-box-mass}, for every
$\vec r\in\mathcal B$,
\[
\|P_1\mu^{\mathrm{ss}}_{N,\vec r}-\mu_G^{\vect m}\|_{\TV}
\le
\sqrt{
\frac{1}{2N}
\left(
\log 2
+
n\log(2L+1)
+
2nL\log\frac{2}{\delta\gamma}
\right)
},
\]
where
\[
L=\sqrt{\frac N2\log(4n)}.
\]
In particular, there exists $C=C(\gamma, \delta, \Delta)$ such that if
\[
N\ge C\frac{n^2\log (4n)}{\eps^4},
\]
then
\[
\|P_1\mu^{\mathrm{ss}}_{N,\vec r}-\mu_G^{\vect m}\|_{\TV}\le \eps
\qquad \forall \vec r\in\mathcal B.
\]
Moreover, one has $\vec r=\floor{N\vect m}\in\mathcal B$, and hence
\[
\|P_1\mu^{\mathrm{ss}}_{N,\vec r}-\mu_G^{\vect m}\|_{\TV}\le \eps.
\]
\end{corollary}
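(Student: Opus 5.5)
The corollary follows by chaining the three preceding lemmas and then checking two elementary facts about $N$ and $\vec r$.

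\textbf{Step 1: the displayed bound.} Fix $\vec r\in\mathcal B$ and assume $N\ge C\log(4n)$, so that \Cref{lem:column-size-counting-comparison,lem:min-box-mass} apply. The first part of \Cref{lem:column-size-counting-comparison} gives $\OmegaNr\neq\emptyset$, so \Cref{lem:conditioning-one-marginal} applies and
\[
\|P_1\muNr-\mu_G^{\vect m}\|_{\TV}\le\sqrt{\tfrac{1}{2N}\log\tfrac{1}{(\mu_G^{\vect m})^{\otimes N}(\OmegaNr)}}.
\]
Insert the lower bound of \Cref{lem:min-box-mass}. Taking logarithms turns it into
\[
\log\tfrac{1}{(\mu_G^{\vect m})^{\otimes N}(\OmegaNr)}\le \log 2+n\log(2L+1)+2nL\log\tfrac{2}{\delta\gamma},
\]
which is exactly the displayed bound.

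\textbf{Step 2: choosing $N$.} Since $L=\sqrt{(N/2)\log(4n)}$ and $L\ge1$, we have $\log(2L+1)\le\log(3L)\le L$ once $L$ is moderately large. So the bracket is at most $C' nL$ with $C'=C'(\gamma,\delta)$; the term $\log 2$ is absorbed because $nL\ge1$.

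The squared TV bound is then at most
\[
\frac{C'nL}{2N}=\frac{C'n}{2}\sqrt{\frac{\log(4n)}{2N}}.
\]
Requiring this to be at most $\eps^2$ amounts to $N\ge C''n^2\log(4n)/\eps^4$.

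Enlarge the constant so that this choice also guarantees $N\ge C\log(4n)$, which the lemmas need, and makes $L$ large enough for the logarithm estimate above. The case of small $L$ can alternatively be handled by using $\log(2L+1)\le 2L$, which holds for all $L\ge0$.

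\textbf{Step 3: $\floor{N\vect m}\in\mathcal B$.} Each coordinate satisfies $|\floor{Nm_v}-Nm_v|<1\le L$, using $L\ge 1$, which holds since $N\ge 2/\log 4$. It is a nonnegative integer vector, so it lies in $\mathcal B$, and Step 2 applies to it.

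The only step needing care is the bookkeeping in Step 2: absorbing $n\log(2L+1)$ into $nL$, and making sure the constant $C$ dominates the requirement of \Cref{lem:column-size-counting-comparison}. Neither is a real obstacle.
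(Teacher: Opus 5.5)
Your proposal is correct and takes the same route the paper intends: the paper gives no separate proof, only the chain of \Cref{lem:conditioning-one-marginal,lem:min-box-mass}, after which one bounds the bracket by $O_{\gamma,\delta}(nL)$ and solves $nL/N\lesssim\eps^2$ for $N$. The one detail you pass over is that a general $\vec r\in\mathcal B$ is nonnegative, which the first part of \Cref{lem:column-size-counting-comparison} requires; this follows from $L\le \gamma N/2$, or directly from the strictly positive lower bound in \Cref{lem:min-box-mass}, which already gives $\OmegaNr\neq\emptyset$.
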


We now consider sampling from $\mu^{\mathrm{ss}}_{N,\vec r}$ by running
the particle-system chain $\PssNr$. Let $\gamma \in (0, 1)$, and let $c_{\mathrm{col}} > 1$ and $\Delta_0$ be the coloring-mixing constants in \Cref{assumption:coloring-mixing}. Assume that
\begin{equation}
    \label{eq:alg:single-site-marginal-assumption}
    \Delta \geq \Delta_0, \qquad 
    \gamma \leq m_v \leq \frac{1}{c_{\mathrm{col}}(\Delta + 1)} \qquad \forall v \in V. 
\end{equation}
Note that
\[
\frac{r_v}{N} = \frac{\floor{Nm_v}}{N} \leq \frac{1}{c_{\mathrm{col}}(\Delta + 1)}.
\]
Also, if $N \geq \frac{2}{\gamma}$
\[
\frac{r_v}{N} = \frac{\floor{Nm_v}}{N} \geq m_v - \frac1N \geq \frac{\gamma}{2}.
\]
Thus, for $N \geq \frac{2}{\gamma}$, $\vec r$ satisfies the assumption in \eqref{eq:particle-system-vertex-count-assumption} with parameter $\frac{\gamma}{2}$. Thus, applying \Cref{thm:particle-system-mixing-via-coloring} with parameter $\frac{\gamma}{2}$, we conclude that for every $ N \geq N_0(\Delta, c_{\mathrm{col}})$, $\PssNr$ mixes within total variation distance $\zeta$ of $\muNr$ in 
\[
O_{\gamma,\Delta, c_{\mathrm{col}}}
\left(
Nn\log\frac{Nn}{\zeta}
\right)
\]
steps.

We are now ready to state the resulting algorithm. First, we give a
deterministic procedure for constructing an initial configuration in
$\Omega^{\mathrm{ss}}_{N,\vec r}$. This will be used as the starting point
for the particle-system chain in the final sampling algorithm. \Cref{alg:initial-particle-configuration} constructs an element of
$\Omega^{\mathrm{ss}}_{N,\vec r}$ whenever
\[
(\Delta + 1)\max_{v\in V}r_v \leq N.
\]
In our application, this condition holds for $r_v=\floor{Nm_v}$ by
\eqref{eq:alg:single-site-marginal-assumption}. Note that this is exactly the
constructive procedure used in the proof of
\Cref{lem:column-size-counting-comparison} to show that
$\Omega^{\mathrm{ss}}_{N,\vec r}\neq\emptyset$.

\begin{algorithm}[ht]
\caption{Constructing an Initial Particle Configuration}
\label{alg:initial-particle-configuration}
\begin{algorithmic}[1]
\Require Graph $G=(V,E)$ with maximum degree $\Delta$, an integer vector
$\vec r\in\mathbb Z_{\ge 0}^{V}$  satisfying
$\max_{v\in V}r_v\le \frac{N}{\Delta + 1}$, and an integer $N$.
\Ensure A configuration $\vect\sigma_0\in\Omega^{\mathrm{ss}}_{N,\vec r}$.

\State Find a proper $(\Delta+1)$-coloring of $G$, and let
\[
V=V_1\cup\cdots\cup V_{\Delta+1}
\]
be the corresponding partition into independent sets.

\State Set
\[
M_k:=\max_{v\in V_k} r_v
\quad \text{with } M_k=0 \text{ if } V_k=\emptyset \qquad \forall k \in [\Delta + 1]
\]

\State Partition $[N]$ into pairwise disjoint sets
\[
I_1,\ldots,I_{\Delta+1},I_0
\]
such that $\size{I_k}=M_k$ for every $k\in[\Delta+1]$.

\State For every $k\in[\Delta+1]$ and every $v\in V_k$, choose
\[
J_v\subseteq I_k
\qquad\text{with}\qquad
\size{J_v}=r_v.
\]

\State Define
\[
\sigma^{(t)}:=\set{v\in V:t\in J_v}
\qquad
\text{for every }t\in[N].
\]

\State \Return $\vect\sigma_0=(\sigma^{(1)},\ldots,\sigma^{(N)})$.

\end{algorithmic}
\end{algorithm}

\begin{algorithm}[ht]
\caption{Particle System Sampler}
\label{alg:prescribed-marginal-particle-system-sampler}
\begin{algorithmic}[1]
\Require Graph $G=(V,E)$ with maximum degree $\Delta$, target marginals
$\vect m=(m_v)_{v\in V}$ satisfying
\eqref{eq:alg:single-site-marginal-assumption}, and accuracy
$\eps\in(0,1)$.
\Ensure A random independent set whose law is within total variation distance
$\eps$ from $\mu_G^{\vect m}$.

\State Choose
\[
N
\ge
C\frac{n^2\log (4n)}{\eps^4},
\]
where $C=C(\gamma,\Delta,c_{\mathrm{col}})$ is a sufficiently large
constant.

\State Set
\[
r_v:=\floor{Nm_v}
\qquad
\text{for every }v\in V.
\]

\State Use \Cref{alg:initial-particle-configuration} to construct an initial
configuration
\[
\vect\sigma_0\in\OmegaNr.
\]

\State Run the particle system chain $\PssNr$ from $\vect\sigma_0$ for
\[
O_{\gamma,\Delta, c_{\mathrm{col}}}
\left(
Nn\log\frac{Nn}{\eps}
\right)
\]
steps.

\State \Return $\sigma^{(1)}$, where
\[
\vect\sigma=(\sigma^{(1)},\ldots,\sigma^{(N)})
\]
is the resulting particle configuration.

\end{algorithmic}
\end{algorithm}

\begin{theorem}
\label{thm:particle-system-sampler}
Let $\gamma \in (0, 1)$, and let $c_{\mathrm{col}} > 1$ and $\Delta_0$ be the coloring-mixing constants in \Cref{assumption:coloring-mixing}. Let $G = (V, E)$ be a graph on $n$ vertices with maximum degree $\Delta \geq \Delta_0$, and let $\vect m$ be a marginal vector satisfying 
\[
\gamma \leq m_v \leq \frac{1}{c_{\mathrm{col}}(\Delta + 1)} \qquad \forall v \in V.
\]
Then, for every $\eps\in(0,1)$, 
\Cref{alg:prescribed-marginal-particle-system-sampler} outputs a random independent set whose law
is within total variation distance $\eps$ from $\mu_G^{\vect m}$. Moreover,
the running time is
\[
\widetilde O_{\gamma,\Delta, c_{\mathrm{col}}}
\left(
\frac{n^3}{\eps^4}
\right).
\]
\end{theorem}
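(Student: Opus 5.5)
The proof follows the same two-step template as \Cref{thm:prescribed-density-particle-system-sampler}. We bound the distance of the output law to $P_1\muNr$ using the mixing of $\PssNr$, and the distance of $P_1\muNr$ to $\mu_G^{\vect m}$ using \Cref{cor:particle-marginal-approximation}. We then combine the two bounds by the triangle inequality. Throughout, $r_v=\floor{Nm_v}$ and $N\ge C n^2\log(4n)/\eps^4$ with $C$ large.

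\textbf{Step 1 (chaos estimate).} Since $c_{\mathrm{col}}>1$, the hypothesis gives $m_v\le \frac{1}{c_{\mathrm{col}}(\Delta+1)}=\frac{1-\delta}{\Delta+1}$ with $\delta:=1-1/c_{\mathrm{col}}\in(0,1)$, a constant. Thus \Cref{lem:column-size-counting-comparison,lem:min-box-mass} apply with parameters $\gamma,\delta$. Their requirement $N\ge C'\log(4n)$ is implied by our choice of $N$ once $C$ is large, since $\eps<1$. We also check $\vec r\in\mathcal B$, which holds because $|r_v-Nm_v|\le 1\le L$. Applying \Cref{cor:particle-marginal-approximation} with accuracy $\eps/2$ (this changes $C$ only by a constant factor) then gives $\|P_1\muNr-\mu_G^{\vect m}\|_{\TV}\le\eps/2$.

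\textbf{Step 2 (mixing).} For $N\ge 2/\gamma$, the discussion preceding \Cref{alg:initial-particle-configuration} shows that $\vec r$ satisfies \eqref{eq:particle-system-vertex-count-assumption} with parameter $\gamma/2$. We also need $N\ge C_0(\Delta,c_{\mathrm{col}})$. Both requirements follow from the choice of $N$, after enlarging $C$ if necessary; the dependence on $\Delta$ is absorbed into the constants. Then \Cref{thm:particle-system-mixing-via-coloring} with $\zeta=\eps/2$ shows that, after $O(Nn\log(Nn/\eps))$ steps, the law $\widehat\mu$ of the configuration satisfies $\|\widehat\mu-\muNr\|_{\TV}\le\eps/2$. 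This holds from any starting state, in particular from the state $\vect\sigma_0$ produced by \Cref{alg:initial-particle-configuration}. That initialization is valid because $(\Delta+1)\max_v r_v\le N/c_{\mathrm{col}}\le N$. Projecting to the first particle does not increase total variation distance, so $\|P_1\widehat\mu-P_1\muNr\|_{\TV}\le\eps/2$. The triangle inequality with Step 1 gives the $\eps$ accuracy.

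\textbf{Step 3 (running time).} Computing a greedy $(\Delta+1)$-coloring and building $\vect\sigma_0$ takes $O(Nn)$ time. Each step of $\PssNr$ picks $i\ne j$ and $v$, then checks admissibility of the swap. This check examines $\sigma^{(i)}$ and $\sigma^{(j)}$ on $\neighborCl{v}$, which costs $O(\Delta)$ time. The total is therefore $O(Nn\log(Nn/\eps))$. Substituting $N=\Theta(n^2\log n/\eps^4)$ gives $\widetilde O(n^3/\eps^4)$.

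The main (mild) obstacle is bookkeeping of constants. We must ensure the single choice of $C$ simultaneously satisfies the hypotheses of \Cref{lem:column-size-counting-comparison}, \Cref{cor:particle-marginal-approximation}, and \Cref{thm:particle-system-mixing-via-coloring}. We must also verify that $\delta$ derived from $c_{\mathrm{col}}$ is a legitimate constant, so that all hidden constants depend only on $\gamma,\Delta,c_{\mathrm{col}}$.
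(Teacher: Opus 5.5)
Your proposal is correct and follows essentially the same argument as the paper: the chaos estimate from \Cref{cor:particle-marginal-approximation} with $\delta=1-1/c_{\mathrm{col}}$ and accuracy $\eps/2$, the mixing bound from \Cref{thm:particle-system-mixing-via-coloring} with $\zeta=\eps/2$, then projection to the first particle and the triangle inequality, and finally the $O(Nn\log(Nn/\eps))$ running-time count with $N=\Theta(n^2\log n/\eps^4)$. Your extra bookkeeping fills in details the paper leaves implicit: checking $\vec r\in\mathcal B$, $N\ge 2/\gamma$ and $N\ge C_0$, the validity of the initialization, and the $O(\Delta)$ cost per step.
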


\begin{proof}
By choosing the constant $C=C(\gamma,\Delta,c_{\mathrm{col}})$
sufficiently large, \Cref{cor:particle-marginal-approximation}, applied with
$\delta= 1 - \frac{1}{c_{\mathrm{col}}}$ and accuracy $\eps/2$, gives
\[
\left\|
P_1\muNr-\mu_G^{\vect m}
\right\|_{\TV}
\le
\frac{\eps}{2}.
\]
Also, by \Cref{thm:particle-system-mixing-via-coloring}, after running
$\PssNr$ for
\[
O_{\gamma,\Delta, c_{\mathrm{col}}}
\left(
Nn\log\frac{Nn}{\eps}
\right)
\]
steps, the law $\widehat\mu$ of the resulting particle configuration satisfies
\[
\left\|
\widehat\mu-\muNr
\right\|_{\TV}
\le
\frac{\eps}{2}.
\]
Let $\nu$ denote the law of the output $\sigma^{(1)}$. Since total variation
distance cannot increase under projection to the first particle,
\[
\left\|
\nu-P_1\muNr
\right\|_{\TV}
=
\left\|
P_1\widehat\mu-P_1\muNr
\right\|_{\TV}
\le
\frac{\eps}{2}.
\]
Therefore,
\[
\|\nu-\mu_G^{\vect m}\|_{\TV}
\le
\left\|
\nu-P_1\muNr
\right\|_{\TV}
+
\left\|
P_1\muNr-\mu_G^{\vect m}
\right\|_{\TV}
\le
\eps.
\]

Finally, by the choice of $N$,
\[
N=
O_{\gamma,\Delta, c_{\mathrm{col}}}
\left(
\frac{n^2\log n}{\eps^4}
\right).
\]
The initial construction takes $O_{\Delta}(Nn)$ time, and the running time of
the chain is
\[
O_{\gamma,\Delta, c_{\mathrm{col}}}
\left(
Nn\log\frac{Nn}{\eps}
\right).
\]
Thus the total running time is
\[
\widetilde O_{\gamma,\Delta, c_{\mathrm{col}}}
\left(
\frac{n^3}{\eps^4}
\right).
\]
\end{proof}

\begin{remark}
    There is also a classical approach to the same algorithmic problem: first, estimate the fugacity vector $\vect \lambda_{m}$, and then sample from $\mu_{G, \vect \lambda_{m}}$ using existing samplers for the hard-core model. We present this approach in \Cref{app:gradient-descent-prescribed-marginals}.
\end{remark}

\section{Discussion and Open Questions}\label{sec:open}
In this paper, we have advanced the understanding of nonlinear
exchange dynamics for spin systems by deriving sharp convergence
rate results for the hard-core model using novel techniques.
We have also translated these results into efficient algorithms for sampling from the hard-core model with prescribed density or prescribed marginals that are simpler to implement and more direct than algorithms derived from traditional learning approaches.

Our results leave open several interesting questions for further research, some of which we enumerate below.
\begin{enumerate}
\item %We conjecture that, by analogy with linear Markov chains, exponential convergence to stationarity should hold for general irreducible nonlinear exchange dynamics, at least at some finite rate (which may be even exponentially small). Currently, this has to be established on an ad hoc basis for individual cases, such as the Ising model~\cite{caputo2024nonlineardynamicsisingmodel} and the hard-core model in our \Cref{thm:quant:mean-field-rapid-convergence,thm:quant:single-site-rapid-convergence}.
By analogy with linear Markov chains, where irreducibility implies exponential convergence to stationarity (albeit possibly with a very bad rate constant), it is natural to conjecture that exponential convergence actually holds throughout the entire irreducibility region covered by \Cref{prop:qual:mean-field-convergence,prop:qual:single-site-convergence}.  
This would amount to an exponential ergodicity property for the nonlinear evolution. Currently, however, the arguments underlying the proof of \Cref{prop:qual:mean-field-convergence,prop:qual:single-site-convergence} yield convergence to equilibrium but do not provide any quantitative rate, while the exponential decay  has to be established on an ad hoc basis for individual cases, such as in the high-temperature Ising model \cite{caputo2024nonlineardynamicsisingmodel} and, under low-density assumptions, in our \Cref{thm:quant:mean-field-rapid-convergence,thm:quant:single-site-rapid-convergence,thm-entropydecaymf}.

    \item Our convergence rate result for the single-site nonlinear dynamics in~\Cref{thm:quant:single-site-rapid-convergence} is valid only for densities up to the threshold $\frac{1}{3(\D+1)}$; however, we conjecture that the same rate of convergence should hold throughout the regime for which we prove convergence in \Cref{prop:qual:single-site-convergence}, 
    namely up to the threshold $\frac{1}{\D+1}$. 
    Similarly, we conjecture that our algorithmic result in \Cref{thm:particle-system-sampler} also extends to the threshold $\frac{1}{\D+1}$.
    As mentioned in \Cref{rem:coloptbound}, this threshold is achieved under the widely believed conjecture that the Glauber dynamics for $q$-colorings mixes rapidly for $q \geq \Delta + 2$. It remains an interesting open question to resolve our conjecture directly, without relying on the coloring conjecture. Specifically, the reduction to colorings yields a highly structured graph, and it would be sufficient to show that the Glauber dynamics mixes rapidly on such graphs.

    \item We derived our sampling algorithms by first defining and analyzing an appropriate nonlinear dynamics, which cannot be directly implemented, and then constructing an associated particle system that asymptotically approximates the nonlinear dynamics and that can be efficiently implemented. However, this approach is a little unsatisfactory because the analysis of the particle system proceeds independently of that of the nonlinear dynamics, and does not ultimately make use of our rapid convergence results for those dynamics. It would be interesting to establish a more direct efficient implementation of the nonlinear dynamics that inherits its convergence rate.
\end{enumerate}

\bibliographystyle{alpha}
\bibliography{bib}

\appendix

\section{The marginal polytope}
\label[appsec]{app:marginal-polytope}

Let
\[
\mathcal{P}_{\mathrm{IS}}
:=
\operatorname{conv}\{\sigma:\sigma\in\OmegaG\}
\subseteq \mathbb R^V
\]
be the independent set  polytope. Equivalently,
$\mathcal{P}_{\mathrm{IS}}$ is the set of all feasible one vertex marginal vectors of
probability measures supported on $\OmegaG$. Also, we write $\mathcal{P}_{\mathrm{IS}}^\circ$ for its interior.

We first note that the marginal assumption used in \Cref{prop:qual:single-site-convergence} implies that
the prescribed vector belongs to the interior of this polytope.

\begin{lemma}\label{lem:app:small-marginals-in-polytope}
Let $\vect m=(m_v)_{v\in V}$ satisfy
\[
0\le m_v\le \frac1{\Delta+1}
\qquad \forall v\in V.
\]
Then $\vect m\in \mathcal{P}_{\mathrm{IS}}$. Moreover, if for some
$\gamma,\delta>0$,
\[
\gamma\le m_v
\le \frac{1-\delta}{\Delta+1} \qquad \forall v \in V,
\]
then $\vect m\in\mathcal{P}_{\mathrm{IS}}^\circ$.
\end{lemma}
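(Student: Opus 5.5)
For the first claim I would use the construction of \Cref{rem:sample}. Fix a proper coloring $c:V\to[\Delta+1]$, which exists by greedy coloring. Pick a color $i$ uniformly at random. Then include each $v\in c^{-1}(i)$ independently with probability $(\Delta+1)m_v$; this lies in $[0,1]$ by assumption. Each color class is independent, so this is a probability measure on $\OmegaG$. Its marginal vector is exactly $\vect m$. Hence $\vect m$ is a convex combination of indicator vectors of independent sets, that is, $\vect m\in\mathcal P_{\mathrm{IS}}$.

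For the interior claim I would show that a small $\ell_\infty$-ball around $\vect m$ is contained in $\mathcal P_{\mathrm{IS}}$. Set $\rho:=\min\{\gamma,\frac{\delta}{\Delta+1}\}>0$. Take any $\vect x\in\mathbb R^V$ with $\|\vect x-\vect m\|_\infty<\rho$. Each coordinate then satisfies
\[
0<\gamma-\rho<x_v<\frac{1-\delta}{\Delta+1}+\frac{\delta}{\Delta+1}=\frac{1}{\Delta+1}.
\]
So $\vect x$ meets the hypothesis of the first part, and $\vect x\in\mathcal P_{\mathrm{IS}}$. The open ball of radius $\rho$ is therefore inside the polytope, and $\vect m$ is an interior point. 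This also shows the polytope is full-dimensional, so "interior" here means the topological interior in $\mathbb R^V$; no relative-interior subtlety arises.

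There is no real obstacle. The only point to check is that both constraints survive the perturbation. The lower bound $\gamma$ keeps the coordinates nonnegative. This is the sole reason the assumption $m_v\ge\gamma$ is needed, since $\mathcal P_{\mathrm{IS}}\subseteq[0,1]^V$ and a zero coordinate would sit on the boundary. The slack $\delta$ keeps the coordinates at most $1/(\Delta+1)$.
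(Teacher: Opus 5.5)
Your proof is correct and is essentially the paper's argument: the coloring construction of \Cref{rem:sample} gives the first part, and an $\ell^\infty$-ball around $\vect m$ lands in the hypotheses of the first part. The paper uses the closed ball of radius $\frac12\min\{\gamma,\frac{\delta}{\Delta+1}\}$ where you use the open ball of radius $\rho$, and there is one cosmetic slip: when $\rho=\gamma$ one has $\gamma-\rho=0$, so your chain should read $0\le\gamma-\rho<x_v$, which still gives the needed $x_v\ge 0$.
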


\begin{proof}
The first statement follows from the initialization construction in
\Cref{alg:alg:prescribed-marginals-initialization}, which gives a
probability distribution on $\OmegaG$ with one-vertex marginals
$\vect m$ whenever $\max_v m_v\le 1/(\Delta+1)$.

For the second statement, choose
\[
\eta
:=
\frac12
\min\left\{
\gamma,\frac{\delta}{\Delta+1}
\right\}.
\]
If $\|\vect m-\vect m'\|_\infty\le\eta$, then
\[
0\le m'_v\le \frac1{\Delta+1}
\qquad \forall v\in V.
\]
Hence every such $\vect m'$ belongs to $\mathcal{P}_{\mathrm{IS}}$ by the first part.
Thus an $\ell^\infty$-neighborhood of $\vect m$ is contained in
$\mathcal{P}_{\mathrm{IS}}$, and so $\vect m\in\mathcal{P}_{\mathrm{IS}}^\circ$.
\end{proof}

By the standard canonical parameter correspondence for finite exponential families, for $\vect m\in\mathcal P_{\mathrm{IS}}^\circ$, there is a unique fugacity vector $\vect\lambda\in(0,\infty)^V$ whose hard-core measure has marginal vector $\vect m$; see, e.g., \cite{WainwrightJordan2008}. Thus, \Cref{lem:app:small-marginals-in-polytope} verifies that the marginal assumptions used in the main text indeed correspond to a well-defined hard-core Gibbs measure $\mu_{G}^{\vect m}$.

\section{Gradient descent for prescribed marginals}
\label[appsec]{app:gradient-descent-prescribed-marginals}

In this section, we answer the same algorithmic question addressed in Section \ref{sec:alg-application-single-site}, this time using an optimization
approach. The method we present is an adaptation of the approach of \cite{DiakonikolasKaneStewartSun2021}.

Note that the hard-core model can be viewed as a canonical
exponential family. In general, an exponential family has the form
\[
P_{\vect \theta}(x)
=
\exp\!\left(\langle \vect \theta,T(x)\rangle-A(\theta)\right),
\]
where $T(x)$ is the sufficient statistic and $A(\theta)$ is the log-partition
function. For the hard-core model, the sufficient statistic is simply the
independent-set vector $\sigma$. In the notation of the preliminaries, for
$\vect\theta\in\mathbb R^V$,
\[
\mu_{G,\vect\theta}(\sigma)
=
\exp\left(
\langle\vect\theta,\sigma\rangle-\AG(\vect\theta)
\right),
\qquad
\sigma\in\OmegaG.
\]
Equivalently, $\theta_v=\log\lambda_v$, where $\lambda_v$ is the fugacity of
$v$.

Let $
\mathcal{P}_{\mathrm{IS}}
$
be the independent set polytope introduced in \Cref{app:marginal-polytope}. In \Cref{app:marginal-polytope} we noted that, for every
$\vect m\in\mathcal{P}_{\mathrm{IS}}^{\circ}$, there is a unique finite canonical parameter
$\vect\theta_{\vect m}\in\mathbb R^V$ such that
\[
\E_{\mu_{G,\vect\theta_{\vect m}}}[\sigma]=\vect m.
\]
This is the usual backward map from mean parameters to canonical parameters;
see, for example,
\cite{WainwrightJordan2008,BGS14}.

The problem of finding the fugacities corresponding to a prescribed marginal
vector $\vect m$ is therefore the problem of computing this backward map. The
standard dual objective for this problem is
\[
F_{\vect m}(\vect\theta)
:=
\AG(\vect\theta)-\langle \vect\theta,\vect m\rangle .
\]
Indeed, the Fenchel conjugate of $\AG$ is
\[
\AG^*(\vect m)
:=
\sup_{\vect\theta\in\mathbb R^V}
\left\{
\langle \vect m,\vect\theta\rangle-\AG(\vect\theta)
\right\}.
\]
For $\vect m\in\mathcal{P}_{\mathrm{IS}}^\circ$, the supremum is attained at the unique
parameter $\vect\theta_{\vect m}$ satisfying
\[
\nabla\AG(\vect\theta_{\vect m})=\vect m.
\]
Thus, for the target marginal vector $\vect m$, the desired log-fugacity vector
is characterized by
\[
\vect\theta_{\vect m}
\in
\argmax_{\vect\theta\in\mathbb R^V}
\left\{
\langle \vect m,\vect\theta\rangle-\AG(\vect\theta)
\right\}.
\]
Equivalently,
\[
\vect\theta_{\vect m}
\in
\argmin_{\vect\theta\in\mathbb R^V}
\left\{
\AG(\vect\theta)-\langle \vect\theta,\vect m\rangle
\right\}
=
\argmin_{\vect\theta\in\mathbb R^V}F_{\vect m}(\vect\theta).
\]
Thus minimizing $F_{\vect m}$ is exactly the convex-dual formulation of
recovering the canonical parameter whose Gibbs measure has marginals
$\vect m$. Projected gradient descent for this type of exponential-family
objective, including with approximate gradients, has been used in related
works; see, for example,
\cite{WainwrightJordan2008,BGS14,DiakonikolasKaneStewartSun2021}.

We now compute the derivative of $F_{\vect m}$. For every $v\in V$,
\[
\frac{\partial}{\partial \theta_v}\AG(\vect\theta)
=
\E_{\mu_{G,\vect\theta}}[\sigma_v].
\]
Hence
\[
\nabla F_{\vect m}(\vect\theta)
=
\E_{\mu_{G,\vect\theta}}[\sigma]-\vect m.
\]
Therefore, the minimizer $\vect\theta_*=\vect\theta_{\vect m}$ of
$F_{\vect m}$ satisfies
\[
\E_{\mu_{G,\vect\theta_*}}[\sigma]=\vect m.
\]
Thus gradient descent on $F_{\vect m}$ has a simple interpretation: if the
current marginal of a vertex $v$ is larger than the target value $m_v$, then
the update decreases the corresponding log-fugacity $\theta_v$; if it is
smaller than $m_v$, then the update increases $\theta_v$. In this way, the
algorithm iteratively adjusts the fugacities to reduce the discrepancy between
the current marginals and the target marginals.

Moreover,
\[
\nabla^2F_{\vect m}(\vect\theta)
=
\Cov_{\mu_{G,\vect\theta}}(\sigma).
\]
Consequently, a uniform lower bound on the minimum eigenvalue of
$\Cov_{\mu_{G,\vect\theta}}(\sigma)$ gives strong convexity, while a uniform
upper bound on the maximum eigenvalue gives smoothness. Once these two bounds
are available, we can apply projected gradient descent to $F_{\vect m}$. In
our setting, the gradient need not be computed exactly; instead, we estimate
$\E_{\mu_{G,\vect\theta}}[\sigma]$ by approximately sampling from
$\mu_{G,\vect\theta}$.

Throughout this section, we assume that
\begin{equation}
\label{eq:app:gd-marginal-assumption}
\gamma
\le
m_v
\le
(1-\delta)\alphac \qquad \forall v \in V,
\end{equation}
where $\gamma,\delta>0$ are fixed constants. The lower bound in
\eqref{eq:app:gd-marginal-assumption} will be used to obtain a uniform lower
bound on the minimum eigenvalue of the covariance matrix, and hence strong
convexity. Note that since $\alphac<1/(\Delta+1)$,
\eqref{eq:app:gd-marginal-assumption} implies that there exists
$\delta'>0$ such that
\[
\max_{v\in V}m_v
\le
\frac{1-\delta'}{\Delta+1}.
\]
Hence, by \Cref{lem:app:small-marginals-in-polytope}, the assumption implies
$\vect m\in\mathcal{P}_{\mathrm{IS}}^\circ$.

We now state the consequence of
\eqref{eq:app:gd-marginal-assumption} for the fugacities. Let
$\mu=\mu_{G,\vect\lambda_{\vect m}}$. For every $v\in V$, we have
\[
m_v
=
(\lambda_{\vect m})_v\,
\mu\!\left(
\sigma_u=0 \text{ for all }u\in\neighborCl{v}
\right).
\]
Therefore, since the probability above is at most one,
\[
(\lambda_{\vect m})_v\ge m_v\ge \gamma.
\]
On the other hand, by the union bound,
\[
\mu\!\left(
\sigma_u=0 \text{ for all }u\in\neighborCl{v}
\right)
\ge
1-\sum_{u\in\neighborCl{v}}m_u
\ge
1-(\Delta+1)\alphac.
\]
Hence
\[
(\lambda_{\vect m})_v
=
\frac{m_v}
{\mu(\sigma_u=0 \text{ for all }u\in\neighborCl{v})}
\le
\frac{(1-\delta)\alphac}
{1-(\Delta+1)\alphac}.
\]
Since
\[
\lambdac
=
\frac{\alphac}
{1-(\Delta+1)\alphac},
\]
we conclude that
\[
(\lambda_{\vect m})_v
\le
(1-\delta)\lambdac
\qquad
\text{for all }v\in V.
\]
Thus the desired fugacity vector lies uniformly below the uniqueness
threshold. In particular, if we define
\[
\lambda_{\max}:=(1-\delta)\lambdac,
\]
then
\[
\gamma
\le
(\lambda_{\vect m})_v
\le
\lambda_{\max}
\qquad
\text{for all }v\in V.
\]
Equivalently, the desired log-fugacity vector
$\vect\theta_*=\vect\theta_{\vect m}$ belongs to the box
\[
\mathcal D
:=
\left\{
\vect\theta\in\mathbb R^V:
\log\gamma
\le
\theta_v
\le
\log\lambda_{\max}
\text{ for all }v\in V
\right\}.
\]
We will run projected gradient descent on this set $\mathcal D$.

The next lemma gives the strong convexity of $F_{\vect m}$ on
$\mathcal D$. The proof uses the same conditioning idea as in
\cite{davies2023approximatelycountingindependentsets}: after choosing a large independent set, we
condition on its complement, so that the remaining available vertices behave
as independent Bernoulli random variables.

\begin{lemma}\label{lem:app:gd-strong-convexity-hard-core}
For every $\vect\theta\in\mathcal D$, one has
\[
\Cov_{\mu_{G,\vect\theta}}(\sigma)
\succeq
\ell I,
\]
where
\[
\ell
:=
\frac{\gamma}{(\Delta+1)(1+\lambda_{\max})^{\Delta+2}}.
\]
Consequently, $F_{\vect m}$ is $\ell$-strongly convex on $\mathcal D$.
\end{lemma}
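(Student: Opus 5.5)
We will show that for every unit vector $x\in\mathbb R^V$ one has $\Var_{\mu_{G,\vect\theta}}(\langle x,\sigma\rangle)\ge \ell$. The tool is the law of total variance: conditioning on a suitable $\sigma$-algebra $\mathcal F$ gives $\Var(\langle x,\sigma\rangle)\ge \E[\Var(\langle x,\sigma\rangle\mid\mathcal F)]$. The conditioning is chosen so that the conditional law is a product of independent Bernoullis, as in \cite{davies2023approximatelycountingindependentsets}.

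\textbf{Step 1 (choice of independent set).} Fix a proper $(\Delta+1)$-coloring of $G$ with color classes $V_1,\dots,V_{\Delta+1}$. Since $\sum_j\sum_{v\in V_j}x_v^2=1$, some class $I:=V_j$ satisfies $\sum_{v\in I}x_v^2\ge \frac1{\Delta+1}$.

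\textbf{Step 2 (conditioning).} Condition on $\sigma_{V\setminus I}$. Because $I$ is independent, the conditional law of $\sigma_I$ is a product measure. For each $v\in I$:
\begin{itemize}
\item if some neighbor of $v$ is occupied, then $\sigma_v=0$ deterministically;
\item otherwise $\sigma_v$ is Bernoulli with parameter $\frac{\lambda_v}{1+\lambda_v}$, where $\lambda_v=e^{\theta_v}$.
\end{itemize}
The term $\sum_{u\notin I}x_u\sigma_u$ is fixed by the conditioning, so
\[
\Var(\langle x,\sigma\rangle\mid\sigma_{V\setminus I})=\sum_{v\in I}x_v^2\,\mathbf 1\{\sigma_{\neighbor{v}}=0\}\frac{\lambda_v}{(1+\lambda_v)^2}.
\]
Taking expectations gives
\[
\Var(\langle x,\sigma\rangle)\ge \sum_{v\in I}x_v^2\,\Pr(\sigma_{\neighbor{v}}=0)\,\frac{\lambda_v}{(1+\lambda_v)^2}.
\]

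\textbf{Step 3 (uniform bounds on $\mathcal D$).} On $\mathcal D$ we have $\gamma\le\lambda_v\le\lambda_{\max}$, so
\[
\frac{\lambda_v}{(1+\lambda_v)^2}\ge \frac{\gamma}{(1+\lambda_{\max})^2}.
\]
For the vacancy probability we use a standard hard-core fact: for any vertex $u$ and any boundary condition, $\Pr(\sigma_u=0\mid\cdot)\ge \frac1{1+\lambda_u}$. Revealing the at most $\Delta$ neighbors of $v$ one at a time and chaining these conditional bounds yields
\[
\Pr(\sigma_{\neighbor{v}}=0)\ge (1+\lambda_{\max})^{-\Delta}.
\]

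\textbf{Conclusion.} Combining Steps 1–3,
\[
\Var(\langle x,\sigma\rangle)\ge \frac{1}{\Delta+1}\cdot\frac{\gamma}{(1+\lambda_{\max})^{\Delta+2}}=\ell.
\]
Hence $\Cov_{\mu_{G,\vect\theta}}(\sigma)\succeq \ell I$. Since $\nabla^2F_{\vect m}=\Cov_{\mu_{G,\vect\theta}}(\sigma)$, this gives $\ell$-strong convexity of $F_{\vect m}$ on the convex set $\mathcal D$.

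\textbf{Main obstacle.} The only delicate point is the vacancy bound in Step 3. The per-vertex conditional bound $\Pr(\sigma_u=0\mid\cdot)\ge(1+\lambda_u)^{-1}$ must hold under arbitrary conditioning, including conditioning on the previously revealed neighbors being vacant, so that chaining is valid. This holds because, given everything else, $\sigma_u$ is either forced to $0$ or is Bernoulli$\bigl(\frac{\lambda_u}{1+\lambda_u}\bigr)$; averaging over the remaining configuration therefore preserves the bound.
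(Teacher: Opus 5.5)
Your proof is correct and follows the paper's argument essentially step for step. Both proofs choose the colour class of a proper $(\Delta+1)$-colouring that carries at least a $\frac{1}{\Delta+1}$ fraction of $\|x\|_2^2$, apply the law of total variance conditioned on the complement so that this class becomes a product of (possibly frozen) Bernoullis, and lower-bound the vacancy probability of $\neighbor{v}$ by revealing the neighbours one at a time. You also state explicitly why the single-site bound $\Pr(\sigma_u=0\mid\cdot)\ge(1+\lambda_u)^{-1}$ may be chained, namely that each conditioning event depends only on the other coordinates; the paper uses this point without spelling it out.
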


\begin{proof}
Fix $\vect\theta\in\mathcal D$, and write $\lambda_v=e^{\theta_v}$. Then
\[
\gamma\le \lambda_v\le \lambda_{\max}
\qquad
\forall v\in V.
\]
It is enough to show that for every $a\in\mathbb R^V$,
\[
\Var_{\mu_{G,\vect\theta}}\!\left(\sum_{v\in V}a_v\sigma_v\right)
\ge
\ell\|a\|_2^2.
\]

Since $G$ has maximum degree at most $\Delta$, we can color it using
$\Delta+1$ colors. Let
$
V=V_1\cup\ldots\cup V_{\Delta+1}
$
be the corresponding partition of $V$ into independent sets. Choose
$i\in[\Delta+1]$ such that
\[
\sum_{v\in V_i}a_v^2
\ge
\frac{1}{\Delta+1}\|a\|_2^2.
\]
By the law of total variance,
\[
\Var_{\mu_{G,\vect\theta}}\!\left(\sum_{v\in V}a_v\sigma_v\right)
\ge
\E_{\mu_{G,\vect\theta}}\!\left[
\Var_{\mu_{G,\vect\theta}}\!\left(
\sum_{v\in V_i}a_v\sigma_v
\,\middle|\,
\sigma_{V\setminus V_i}
\right)
\right].
\]
Since $V_i$ is an independent set, once $\sigma_{V\setminus V_i}$ is fixed,
the variables $(\sigma_v)_{v\in V_i}$ are conditionally independent. Moreover,
for each $v\in V_i$, if some neighbor of $v$ is occupied, then $\sigma_v=0$
deterministically; otherwise $\sigma_v$ is Bernoulli with parameter
$\lambda_v/(1+\lambda_v)$. Therefore,
\[
\Var_{\mu_{G,\vect\theta}}\!\left(
\sigma_v
\,\middle|\,
\sigma_{V\setminus V_i}
\right)
=
\mathbf 1\{\sigma_u=0\text{ for all }u\in\neighbor{v}\}
\frac{\lambda_v}{(1+\lambda_v)^2}.
\]
Hence
\[
\begin{aligned}
&\E_{\mu_{G,\vect\theta}}\!\left[
\Var_{\mu_{G,\vect\theta}}\!\left(
\sum_{v\in V_i}a_v\sigma_v
\,\middle|\,
\sigma_{V\setminus V_i}
\right)
\right] \\
&\qquad =
\sum_{v\in V_i}
a_v^2\,
\mu_{G,\vect\theta}\!\left(\sigma_u=0\text{ for all }u\in\neighbor{v}\right)
\frac{\lambda_v}{(1+\lambda_v)^2}.
\end{aligned}
\]

We now lower bound the probability that all neighbors of $v$ are unoccupied.
Reveal the vertices of $\neighbor{v}$ one at a time. At each step, regardless
of the previously revealed values, the conditional probability that the next
vertex is unoccupied is at least $1/(1+\lambda_{\max})$. Indeed, if all its
neighbors are unoccupied, its conditional probability of being occupied is at
most $\lambda_{\max}/(1+\lambda_{\max})$, and otherwise it is forced to be
unoccupied. Therefore,
\[
\mu_{G,\vect\theta}\!\left(\sigma_u=0\text{ for all }u\in\neighbor{v}\right)
\ge
\frac{1}{(1+\lambda_{\max})^\Delta}.
\]
Also,
\[
\frac{\lambda_v}{(1+\lambda_v)^2}
\ge
\frac{\gamma}{(1+\lambda_{\max})^2}.
\]
Thus,
\[
\mu_{G,\vect\theta}\!\left(\sigma_u=0\text{ for all }u\in\neighbor{v}\right)
\frac{\lambda_v}{(1+\lambda_v)^2}
\ge
\frac{\gamma}{(1+\lambda_{\max})^{\Delta+2}}.
\]
Combining the previous estimates gives
\[
\begin{aligned}
\Var_{\mu_{G,\vect\theta}}\!\left(\sum_{v\in V}a_v\sigma_v\right)
&\ge
\frac{\gamma}{(1+\lambda_{\max})^{\Delta+2}}
\sum_{v\in V_i}a_v^2 \\
&\ge
\frac{\gamma}{(\Delta+1)(1+\lambda_{\max})^{\Delta+2}}
\|a\|_2^2.
\end{aligned}
\]
Since this holds for every $a\in\mathbb R^V$, we conclude that
\[
\Cov_{\mu_{G,\vect\theta}}(\sigma)
\succeq
\frac{\gamma}{(\Delta+1)(1+\lambda_{\max})^{\Delta+2}}I.
\]
Finally, since
\[
\nabla^2F_{\vect m}(\vect\theta)=\Cov_{\mu_{G,\vect\theta}}(\sigma),
\]
the same bound implies that $F_{\vect m}$ is $\ell$-strongly convex on
$\mathcal D$.
\end{proof}

We now prove the corresponding smoothness bound. We use the complete
spectral independence for the hard-core model below the uniqueness threshold
established in \cite{ChenFengYinZhang2021}. Since every
$\vect\theta\in\mathcal D$ satisfies
\[
e^{\theta_v}\le \lambda_{\max}=(1-\delta)\lambdac
\qquad\text{for all }v\in V,
\]
the measure $\mu_{G,\vect\theta}$ can be viewed as the hard-core model with
uniform fugacity $\lambda_{\max}$, followed by local fields
\[
\phi_v:=\frac{e^{\theta_v}}{\lambda_{\max}}\le 1.
\]
Thus complete spectral independence applies uniformly throughout
$\mathcal D$.

\begin{lemma}\label{lem:app:gd-smoothness-hard-core}
Assume $\Delta \geq 3$. Then there exists a constant
$
L=L(\Delta,\delta)
$
such that for every $\vect\theta\in\mathcal D$, one has
\[
\Cov_{\mu_{G,\vect\theta}}(\sigma)\preceq L I.
\]
Consequently, $F_{\vect m}$ is $L$-smooth on $\mathcal D$.
\end{lemma}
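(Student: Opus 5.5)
The plan is to bound the covariance matrix through complete spectral independence. Fix $\vect\theta\in\mathcal D$ and write $\mu=\mu_{G,\vect\theta}$. As noted just before the statement, $\mu$ is the hard-core model at uniform fugacity $\lambda_{\max}=(1-\delta)\lambdac$, tilted by local fields $\phi_v=e^{\theta_v}/\lambda_{\max}\le 1$. By \cite{ChenFengYinZhang2021}, the hard-core model at fugacity $(1-\delta)\lambdac$ is $\eta$-completely spectrally independent with $\eta=\eta(\Delta,\delta)$. Concretely, for every pinning and every choice of fields $\phi\le 1$, the pairwise influence matrix $\Psi_\mu$ satisfies $\lambda_{\max}(\Psi_\mu)\le\eta$. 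Here $\Psi_\mu(u,v)=\Pr(\sigma_v=1\mid\sigma_u=1)-\Pr(\sigma_v=1\mid\sigma_u=0)$, with diagonal entries equal to $1$. We use only the unpinned statement for the tilted measure $\mu$ itself.

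The key step is the standard identity relating covariance and influence. With $D=\mathrm{diag}\bigl(\Var_\mu(\sigma_u)\bigr)_{u\in V}$, we have $\Cov_\mu(\sigma_u,\sigma_v)=\Var_\mu(\sigma_u)\,\Psi_\mu(u,v)$, that is, $\Cov_\mu(\sigma)=D\Psi_\mu$. This holds because $\Cov(\sigma_u,\sigma_v)=m_u(1-m_u)\bigl(\Pr(\sigma_v=1\mid\sigma_u=1)-\Pr(\sigma_v=1\mid\sigma_u=0)\bigr)$. We may assume $\Var_\mu(\sigma_u)>0$ for all $u$, which holds since every $\lambda_v>0$. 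Then $\Psi_\mu=D^{-1}\Cov_\mu$ is similar to the symmetric matrix $D^{-1/2}\Cov_\mu D^{-1/2}$, so its eigenvalues are real and its largest eigenvalue is at most $\eta$. This gives $D^{-1/2}\Cov_\mu D^{-1/2}\preceq \eta I$, and therefore
\[
\Cov_\mu(\sigma)\preceq \eta D\preceq \tfrac{\eta}{4}I,
\]
using $\Var_\mu(\sigma_u)\le 1/4$. So we can take $L=\eta(\Delta,\delta)/4$, and $\nabla^2F_{\vect m}=\Cov_\mu(\sigma)\preceq LI$ gives $L$-smoothness on $\mathcal D$.

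The main obstacle is matching the hypotheses of \cite{ChenFengYinZhang2021} exactly. We need their notion of complete spectral independence to cover arbitrary local fields $\phi_v\in(0,1]$, and we need the bound on the largest eigenvalue of the influence matrix in the version with the diagonal included. If their bound instead controls the spectral radius, or the matrix with zero diagonal, the constant changes by at most an additive $1$. If only $\ell_\infty$ row-sum bounds on influences are available, I would fall back on the alternative route through the similarity above: the spectral radius of $\Psi_\mu$ is at most $\|\Psi_\mu\|_{\infty\to\infty}$, which the uniqueness-regime correlation decay bounds by a constant depending on $\Delta$ and $\delta$. That again gives $L=O_{\Delta,\delta}(1)$.
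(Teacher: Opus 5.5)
Your proposal is correct and is essentially the paper's proof. Both arguments write $\Cov_{\mu_{G,\vect\theta}}(\sigma)$ as the diagonal matrix of variances times the influence matrix, then use its similarity to a symmetric matrix to turn the complete spectral independence bound for the tilted hard-core measure into $\Cov_{\mu_{G,\vect\theta}}(\sigma)\preceq c\,\mathrm{diag}(\Var(\sigma_v))$, and finish with $\Var(\sigma_v)\le 1/4$. The only difference is bookkeeping: the paper uses $I+J$ with zero-diagonal $J$ and bounds $\rho_{\mathrm{sp}}(|J|)$, giving $L=(1+C_{\mathrm{SI}})/4$, which is exactly the additive-$1$ shift you anticipated for the unit-diagonal convention.
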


\begin{proof}
Fix $\vect\theta\in\mathcal D$, and for $u\neq v$, define the signed
influence
\[
J_{u,v}
:=
\mu_{G,\vect\theta}(\sigma_v=1\mid \sigma_u=1)
-
\mu_{G,\vect\theta}(\sigma_v=1\mid \sigma_u=0),
\]
and set $J_{v,v}=0$. By complete spectral independence of the hard-core
model below the uniqueness threshold
\cite[Lemma~8.4]{ChenFengYinZhang2021}, there exists
$
C_{\mathrm{SI}}=C_{\mathrm{SI}}(\Delta,\delta)
$
such that
$
\rho_{\mathrm{sp}}(|J|)\le C_{\mathrm{SI}},
$
where $|J|$ denotes the entrywise absolute value of $J$, and
$\rho_{\mathrm{sp}}$ denotes spectral radius.

Let
\[
\mathcal V_{\vect\theta}
:=
\operatorname{diag}\bigl(
\Var_{\mu_{G,\vect\theta}}(\sigma_v):v\in V
\bigr).
\]
For $u\neq v$, we have
\[
\Cov_{\mu_{G,\vect\theta}}(\sigma_u,\sigma_v)
=
\Var_{\mu_{G,\vect\theta}}(\sigma_u)
\left[
\mu_{G,\vect\theta}(\sigma_v=1\mid \sigma_u=1)
-
\mu_{G,\vect\theta}(\sigma_v=1\mid \sigma_u=0)
\right].
\]
This means that
\[
\Cov_{\mu_{G,\vect\theta}}(\sigma)
=
\mathcal V_{\vect\theta}(I+J),
\]
where $I$ is the identity matrix. Equivalently,
\[
\mathcal V_{\vect\theta}^{-1/2}
\Cov_{\mu_{G,\vect\theta}}(\sigma)
\mathcal V_{\vect\theta}^{-1/2}
=
\mathcal V_{\vect\theta}^{1/2}
(I+J)
\mathcal V_{\vect\theta}^{-1/2}.
\]
The matrix on the left is symmetric, and it is similar to $I+J$. Hence its
eigenvalues are the same as the eigenvalues of $I+J$. Also, note that
$
\rho_{\mathrm{sp}}(J)
\le
\rho_{\mathrm{sp}}(|J|)
\le
C_{\mathrm{SI}}.
$
Therefore, the max eigen value of 
$
\mathcal V_{\vect\theta}^{-1/2}
\Cov_{\mu_{G,\vect\theta}}(\sigma)
\mathcal V_{\vect\theta}^{-1/2}
$
is at most
$
1+C_{\mathrm{SI}}.
$
Thus
\[
\mathcal V_{\vect\theta}^{-1/2}
\Cov_{\mu_{G,\vect\theta}}(\sigma)
\mathcal V_{\vect\theta}^{-1/2}
\preceq
(1+C_{\mathrm{SI}})I.
\]
Multiplying on the left and right by
$\mathcal V_{\vect\theta}^{1/2}$ gives
\[
\Cov_{\mu_{G,\vect\theta}}(\sigma)
\preceq
(1+C_{\mathrm{SI}})\mathcal V_{\vect\theta}.
\]
Finally, since each $\sigma_v$ is $\{0,1\}$-valued,
$
\Var_{\mu_{G,\vect\theta}}(\sigma_v)\le \frac14,
$
and hence
\[
\Cov_{\mu_{G,\vect\theta}}(\sigma)
\preceq
\frac{1+C_{\mathrm{SI}}}{4}I.
\]
Taking
$
L:=\frac{1+C_{\mathrm{SI}}}{4}
$
gives
$
\Cov_{\mu_{G,\vect\theta}}(\sigma)\preceq L I.
$
Since
$
\nabla^2F_{\vect m}(\vect\theta)
=
\Cov_{\mu_{G,\vect\theta}}(\sigma),
$
we conclude that $F_{\vect m}$ is $L$-smooth on $\mathcal D$.
\end{proof}

We now apply projected gradient descent on $\mathcal D$. The projection onto
$\mathcal D$ is explicit. For $z\in\mathbb R^V$,
\[
\Pi_{\mathcal D}(z)_v
=
\min\left\{
\log\lambda_{\max},
\max\{\log\gamma,z_v\}
\right\}.
\]
Thus there is no projection error in our application.

We consider the projected gradient descent iterates
\[
\vect\theta^{(t+1)}
=
\Pi_{\mathcal D}
\left(
\vect\theta^{(t)}-\frac1L\widehat g^{(t)}
\right),
\]
where $\widehat g^{(t)}$ is an approximation to
$\nabla F_{\vect m}(\vect\theta^{(t)})$. We use the following guarantee, which
is the specialization of
\cite[Algorithm~3 and Claim~D.2]{DiakonikolasKaneStewartSun2021}
to our setting.

\begin{lemma}\label{lem:app:gd-pgd-hard-core}
Let $\vect\theta_{\vect m}$ be the unique minimizer of $F_{\vect m}$ over
$\mathcal D$. Suppose that
\[
\left\|
\widehat g^{(t)}-\nabla F_{\vect m}(\vect\theta^{(t)})
\right\|_2
\le
\eta_1
\qquad
\text{for all }t.
\]
Then after
$
T
=
O\left(
\frac{L}{\ell}
\log\frac{\operatorname{diam}(\mathcal D)}{\eta}
\right)
$
iterations, one has
\[
\left\|
\vect\theta^{(T)}-\vect\theta_{\vect m}
\right\|_2
\le
\eta
+
\frac{\eta_1/L}{1-\sqrt{1-\ell/L}}.
\]
In particular, since
$
1-\sqrt{1-\ell/L}\ge \frac{\ell}{2L},
$
we have
\[
\left\|
\vect\theta^{(T)}-\vect\theta_{\vect m}
\right\|_2
\le
\eta+\frac{2\eta_1}{\ell}.
\]
\end{lemma}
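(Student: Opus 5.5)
We propose to prove \Cref{lem:app:gd-pgd-hard-core} by the standard contraction argument for projected gradient descent on a strongly convex and smooth function, with an additive error term for the inexact gradient. The ingredients are \Cref{lem:app:gd-strong-convexity-hard-core} and \Cref{lem:app:gd-smoothness-hard-core}, which give $\ell I\preceq\nabla^2F_{\vect m}\preceq LI$ on the convex box $\mathcal D$. We will also use that $\vect\theta_{\vect m}\in\mathcal D$ (shown above via the fugacity bounds $\gamma\le(\lambda_{\vect m})_v\le\lambda_{\max}$). Since $\nabla F_{\vect m}(\vect\theta_{\vect m})=0$, the minimizer is a fixed point of the exact projected step:
\[
\vect\theta_{\vect m}=\Pi_{\mathcal D}\bigl(\vect\theta_{\vect m}-\tfrac1L\nabla F_{\vect m}(\vect\theta_{\vect m})\bigr).
\]

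\textbf{Step 1: exact-step contraction.} We show that the exact map $S(\vect\theta):=\Pi_{\mathcal D}(\vect\theta-\frac1L\nabla F_{\vect m}(\vect\theta))$ satisfies
\[
\|S(\vect\theta)-\vect\theta_{\vect m}\|_2\le\sqrt{1-\ell/L}\,\|\vect\theta-\vect\theta_{\vect m}\|_2 \qquad \text{for } \vect\theta\in\mathcal D.
\]
Since $\Pi_{\mathcal D}$ is $1$-Lipschitz, it suffices to bound $\|\vect\theta-\vect\theta_{\vect m}-\frac1L(\nabla F_{\vect m}(\vect\theta)-\nabla F_{\vect m}(\vect\theta_{\vect m}))\|_2^2$. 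We expand the square. Strong convexity gives $\langle\nabla F(\vect\theta)-\nabla F(\vect\theta_{\vect m}),\vect\theta-\vect\theta_{\vect m}\rangle\ge\ell\|\vect\theta-\vect\theta_{\vect m}\|^2$. Co-coercivity from $L$-smoothness gives $\|\nabla F(\vect\theta)-\nabla F(\vect\theta_{\vect m})\|^2\le L\langle\nabla F(\vect\theta)-\nabla F(\vect\theta_{\vect m}),\vect\theta-\vect\theta_{\vect m}\rangle$. Together these bound the square by $(1-\ell/L)\|\vect\theta-\vect\theta_{\vect m}\|^2$.

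Alternatively, write the gradient difference as $A(\vect\theta-\vect\theta_{\vect m})$ with $A=\int_0^1\nabla^2F$ along the segment. The segment stays in $\mathcal D$ by convexity, so $\ell I\preceq A\preceq LI$ and $\|I-A/L\|\le1-\ell/L\le\sqrt{1-\ell/L}$. The only care needed in either route is that the Hessian bounds hold only on $\mathcal D$. Co-coercivity normally assumes global smoothness, so the integral-Hessian route is the safer one, and it uses only segments inside $\mathcal D$.

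\textbf{Step 2: inexact step.} Since the projection is nonexpansive, $\|\vect\theta^{(t+1)}-S(\vect\theta^{(t)})\|_2\le\frac1L\|\widehat g^{(t)}-\nabla F_{\vect m}(\vect\theta^{(t)})\|_2\le\eta_1/L$. Setting $\rho=\sqrt{1-\ell/L}$ and $e_t=\|\vect\theta^{(t)}-\vect\theta_{\vect m}\|_2$, this gives the recursion $e_{t+1}\le\rho e_t+\eta_1/L$. Unrolling yields
\[
e_T\le\rho^Te_0+\frac{\eta_1/L}{1-\rho}.
\]
We have $e_0\le\operatorname{diam}(\mathcal D)$. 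Since $\rho^T\le e^{-T\ell/(2L)}$, choosing $T=\lceil\frac{2L}{\ell}\log\frac{\operatorname{diam}(\mathcal D)}{\eta}\rceil$ makes $\rho^Te_0\le\eta$, which is the first claim.

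\textbf{Step 3: simplified bound.} We use $\sqrt{1-x}\le1-x/2$ for $x\in[0,1]$, which gives $1-\rho\ge\ell/(2L)$ and hence the bound $\eta+2\eta_1/\ell$. We also need $\ell\le L$, which follows from the two Hessian bounds. No step is a real obstacle. The one point to state explicitly is that all iterates and the minimizer lie in $\mathcal D$, so the Hessian bounds apply throughout.
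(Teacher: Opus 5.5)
Your proof is correct and takes essentially the paper's approach: the paper gives no argument of its own and cites \cite{DiakonikolasKaneStewartSun2021} (Algorithm~3 and Claim~D.2), whose bound $\eta+\frac{\eta_1/L}{1-\sqrt{1-\ell/L}}$ is exactly the recursion $e_{t+1}\le\sqrt{1-\ell/L}\,e_t+\eta_1/L$ that you derive and unroll. Your choice of the integral-Hessian route is the right one, since the bounds $\ell I\preceq\nabla^2F_{\vect m}\preceq LI$ are only available on $\mathcal D$ and all iterates, the minimizer, and the connecting segments stay in $\mathcal D$; that route even gives the sharper contraction factor $1-\ell/L$.
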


We now explain how accurate the output parameter needs to be. For any
$\widehat{\vect\theta}\in\mathcal D$, we have
\[
\begin{aligned}
D_{\mathrm{KL}}
\left(
\mu_{G,\vect\theta_{\vect m}}
\,\middle\|\,
\mu_{G,\widehat{\vect\theta}}
\right)
&=
\AG(\widehat{\vect\theta})-\AG(\vect\theta_{\vect m})
-
\left\langle
\nabla\AG(\vect\theta_{\vect m}),
\widehat{\vect\theta}-\vect\theta_{\vect m}
\right\rangle .
\end{aligned}
\]
Since $\AG$ is $L$-smooth on $\mathcal D$, it follows that
\[
D_{\mathrm{KL}}
\left(
\mu_{G,\vect\theta_{\vect m}}
\,\middle\|\,
\mu_{G,\widehat{\vect\theta}}
\right)
\le
\frac{L}{2}
\left\|
\widehat{\vect\theta}-\vect\theta_{\vect m}
\right\|_2^2.
\]
Therefore, by Pinsker's inequality,
\[
\left\|
\mu_{G,\widehat{\vect\theta}}
-
\mu_{G,\vect\theta_{\vect m}}
\right\|_{\TV}
\le
\frac{\sqrt L}{2}
\left\|
\widehat{\vect\theta}-\vect\theta_{\vect m}
\right\|_2.
\]
Note that our algorithm first runs projected gradient descent with approximate
gradients to find an estimate $\widehat{\vect\theta}$. At iteration $t$, it
estimates the gradient by sampling from $\mu_{G,\vect\theta^{(t)}}$. The
algorithm then uses Glauber dynamics to generate an approximate sample from
$\mu_{G,\widehat{\vect\theta}}$. Hence there are three sources of error in the
output: the event that some gradient estimate is inaccurate, the parameter
estimation error
$
\left\|\widehat{\vect\theta}-\vect\theta_{\vect m}\right\|_2,
$
and the final sampling error from approximately sampling
$\mu_{G,\widehat{\vect\theta}}$. The following lemma, assuming that all gradient estimates are accurate,
chooses the gradient accuracy and number of iterations so that the combined
contribution of the parameter error and the final sampling error is at most
$\eps/2$.

\begin{lemma}\label{lem:app:gd-deterministic-pgd-success}
Let $\eps>0$, and set
\[
\eta:=\frac{\eps}{4\sqrt L},
\qquad
\eta_1:=\frac{\ell\eps}{8\sqrt L}.
\]
Suppose that for every $t=0,\ldots,T-1$,
$
\left\|
\widehat g^{(t)}-\nabla F_{\vect m}(\vect\theta^{(t)})
\right\|_2
\le
\eta_1,
$
where
$
T
=
O\left(
\frac{L}{\ell}
\log\frac{\operatorname{diam}(\mathcal D)}{\eta}
\right).
$
Then the output $\widehat{\vect\theta}:=\vect\theta^{(T)}$ satisfies
\[
\left\|
\widehat{\vect\theta}-\vect\theta_{\vect m}
\right\|_2
\le
\frac{\eps}{2\sqrt L}.
\]
Consequently,
\[
\left\|
\mu_{G,\widehat{\vect\theta}}-\mu_G^{\vect m}
\right\|_{\TV}
\le
\frac{\eps}{4}.
\]
If, in addition, the final sample is generated from a distribution within
total variation distance $\eps/4$ from $\mu_{G,\widehat{\vect\theta}}$, then
the final output distribution is within total variation distance $\eps/2$
from $\mu_G^{\vect m}$.
\end{lemma}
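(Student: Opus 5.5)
The argument is a direct combination of \Cref{lem:app:gd-pgd-hard-core} with the Pinsker/smoothness estimate displayed just before the statement, followed by the triangle inequality.

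\textbf{Step 1 (parameter error).} All gradient estimates are assumed accurate to within $\eta_1$, so \Cref{lem:app:gd-pgd-hard-core} applies with $T=O\bigl(\frac{L}{\ell}\log\frac{\operatorname{diam}(\mathcal D)}{\eta}\bigr)$ iterations. It gives
\[
\|\widehat{\vect\theta}-\vect\theta_{\vect m}\|_2\le \eta+\frac{2\eta_1}{\ell}.
\]
Substituting $\eta=\frac{\eps}{4\sqrt L}$ and $\eta_1=\frac{\ell\eps}{8\sqrt L}$ gives $\frac{2\eta_1}{\ell}=\frac{\eps}{4\sqrt L}$. The total is therefore at most $\frac{\eps}{2\sqrt L}$, which is the first claim.

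\textbf{Step 2 (distance between Gibbs measures).} Both $\widehat{\vect\theta}$ and $\vect\theta_{\vect m}$ lie in $\mathcal D$: the former because each iterate is a projection onto $\mathcal D$, the latter by the fugacity bounds derived earlier. The segment joining them stays in $\mathcal D$ because $\mathcal D$ is a box. Hence the $L$-smoothness from \Cref{lem:app:gd-smoothness-hard-core} holds along the segment, and the Bregman/KL identity bounds $D_{\mathrm{KL}}$ by $\frac L2\|\widehat{\vect\theta}-\vect\theta_{\vect m}\|_2^2$. Pinsker's inequality then gives
\[
\|\mu_{G,\widehat{\vect\theta}}-\mu_{G,\vect\theta_{\vect m}}\|_{\TV}\le \frac{\sqrt L}{2}\cdot\frac{\eps}{2\sqrt L}=\frac{\eps}{4}.
\]
Since $\mu_{G,\vect\theta_{\vect m}}=\mu_G^{\vect m}$, this is the second claim.

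\textbf{Step 3 (final sample).} If the output law $\nu$ satisfies $\|\nu-\mu_{G,\widehat{\vect\theta}}\|_{\TV}\le\eps/4$, the triangle inequality gives $\|\nu-\mu_G^{\vect m}\|_{\TV}\le\eps/2$.

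The only point needing care is Step 2. The smoothness bound must hold on the whole segment between the two parameters, not just at the endpoints, and this is what the convexity of $\mathcal D$ together with $\vect\theta_{\vect m}\in\mathcal D$ secures. Everything else is arithmetic.
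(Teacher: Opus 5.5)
Your proposal is correct and follows the paper's proof essentially step for step. \Cref{lem:app:gd-pgd-hard-core} with the given choices of $\eta,\eta_1$ yields $\|\widehat{\vect\theta}-\vect\theta_{\vect m}\|_2\le \eps/(2\sqrt L)$, the smoothness-plus-Pinsker bound turns this into the $\eps/4$ total variation estimate, and the triangle inequality finishes. Your observation that the segment between $\widehat{\vect\theta}$ and $\vect\theta_{\vect m}$ stays inside the box $\mathcal D$, so that the Hessian bound of \Cref{lem:app:gd-smoothness-hard-core} applies along it, makes explicit what the paper leaves implicit when it says $\AG$ is $L$-smooth on $\mathcal D$.
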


\begin{proof}
By \Cref{lem:app:gd-pgd-hard-core}, after $T$ iterations we have
\[
\left\|
\vect\theta^{(T)}-\vect\theta_{\vect m}
\right\|_2
\le
\eta+\frac{2\eta_1}{\ell}.
\]
By our choice of $\eta$ and $\eta_1$,
\[
\eta+\frac{2\eta_1}{\ell}
=
\frac{\eps}{4\sqrt L}
+
\frac{\eps}{4\sqrt L}
=
\frac{\eps}{2\sqrt L}.
\]
Thus
\[
\left\|
\widehat{\vect\theta}-\vect\theta_{\vect m}
\right\|_2
\le
\frac{\eps}{2\sqrt L}.
\]
Using the total variation bound proved above, we get
\[
\left\|
\mu_{G,\widehat{\vect\theta}}-\mu_{G,\vect\theta_{\vect m}}
\right\|_{\TV}
\le
\frac{\sqrt L}{2}
\left\|
\widehat{\vect\theta}-\vect\theta_{\vect m}
\right\|_2
\le
\frac{\eps}{4}.
\]
Since $\mu_{G,\vect\theta_{\vect m}}=\mu_G^{\vect m}$, this gives
\[
\left\|
\mu_{G,\widehat{\vect\theta}}-\mu_G^{\vect m}
\right\|_{\TV}
\le
\frac{\eps}{4}.
\]
Finally, if the last Glauber-dynamics sample is drawn from a distribution
within total variation distance $\eps/4$ from
$\mu_{G,\widehat{\vect\theta}}$, then by the triangle inequality the final
output distribution is within total variation distance $\eps/2$ from
$\mu_G^{\vect m}$.
\end{proof}

Throughout the gradient descent we need
to estimate the gradient. Recall that
\[
\nabla F_{\vect m}(\vect\theta)
=
\E_{\mu_{G,\vect\theta}}[\sigma]-\vect m.
\]
The estimation of $\E_{\mu_{G,\vect\theta}}[\sigma]$ will be done by sampling
from $\mu_{G,\vect\theta}$ multiple times and then taking the average.
Therefore, we need to make sure that we can sample from
$\mu_{G,\vect\theta}$ for every $\vect\theta\in\mathcal D$.

Since every $\vect\theta\in\mathcal D$ satisfies
\[
e^{\theta_v}
\le
\lambda_{\max}
=
(1-\delta)\lambdac
\qquad
\text{for all }v\in V,
\]
the fugacity vector from which we need to sample is uniformly below the
uniqueness threshold, with slack $\delta$. 
%By the rapid-mixing theorem for Glauber dynamics for the hard-core model below the uniqueness threshold, 
Therefore, the Glauber dynamics for $\mu_{G,\vect\theta}$ mixes to total variation distance $\zeta$ in
\[
O_{\Delta,\delta}\left(n\log\frac{n}{\zeta}\right)
\]
steps \cite{CLV23, AJKPV22, CFYZ22, CE25}. We note that, although the results in these works are presented only for uniform fugacity below the critical threshold with slack \(\delta\), they can be easily extended to non-uniform fugacity vectors as well. This extension assumes that each entry is at most \((1-\delta)\lambdac\), consistent with our case.
Therefore, for every sampling accuracy $\zeta>0$, we can sample from a distribution within total variation distance $\zeta$ from
$\mu_{G,\vect\theta}$ in
\[
O_{\Delta,\delta}\left(n\log\frac{n}{\zeta}\right)
\]
steps of Glauber dynamics.

The following lemma shows that if the number of samples that are taken to
estimate the gradient at each iteration is sufficiently large, then the event
that all the gradient estimates are accurate occurs with probability at least $ 1- \frac{\eps}{2}$. Thus, together with \Cref{lem:app:gd-deterministic-pgd-success}, it shows that our algorithm outputs an independent set whose law is within $\eps$ variation distance of the target measure.

\begin{lemma}\label{lem:app:gd-gradient-estimation-success} 
Let $\eps\in(0,1)$ and $\eta_1>0$. At each step
$t=0,\ldots,T-1$, suppose that we generate independent samples
$
\sigma^{(t,1)},\ldots,\sigma^{(t,M)}
$
such that each sample has law within total variation distance
$
\frac{\eta_1}{2\sqrt n}
$
from $\mu_{G,\vect\theta^{(t)}}$. Define
\[
\widehat g^{(t)}
:=
\frac1M\sum_{i=1}^{M}\sigma^{(t,i)}-\vect m.
\]
If
$
M
\ge
\frac{2n}{\eta_1^2}\log\frac{4nT}{\eps},
$
then
\[
\Pr\left(
\left\|
\widehat g^{(t)}-\nabla F_{\vect m}(\vect\theta^{(t)})
\right\|_2
\le
\eta_1
\text{ for all }t=0,\ldots,T-1
\right)
\ge
1-\frac{\eps}{2}.
\]
\end{lemma}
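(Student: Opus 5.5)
The plan is to combine a triangle inequality between the ideal empirical mean and the actual one with a coordinatewise Hoeffding bound and a union bound over the $n$ coordinates and $T$ iterations. Fix $t$ and condition on $\vect\theta^{(t)}$. Since the samples at step $t$ are drawn after $\vect\theta^{(t)}$ is determined, they are conditionally independent given it. Let $\nu_t$ denote their common law, so that $\|\nu_t-\mu_{G,\vect\theta^{(t)}}\|_{\TV}\le \frac{\eta_1}{2\sqrt n}$. Write
\[
\widehat g^{(t)}-\nabla F_{\vect m}(\vect\theta^{(t)})
=
\Bigl(\tfrac1M\textstyle\sum_i\sigma^{(t,i)}-\E_{\nu_t}[\sigma]\Bigr)
+
\bigl(\E_{\nu_t}[\sigma]-\E_{\mu_{G,\vect\theta^{(t)}}}[\sigma]\bigr).
\]

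\emph{Bias term.} Each coordinate $\sigma_v$ takes values in $\{0,1\}$, so $|\E_{\nu_t}[\sigma_v]-\E_{\mu}[\sigma_v]|\le \|\nu_t-\mu\|_{\TV}\le \frac{\eta_1}{2\sqrt n}$. Taking the $\ell_2$ norm over the $n$ coordinates gives a bias of at most $\eta_1/2$.

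\emph{Fluctuation term.} For each $v$, Hoeffding's inequality gives
\[
\Pr\Bigl(\bigl|\tfrac1M\textstyle\sum_i\sigma^{(t,i)}_v-\E_{\nu_t}[\sigma_v]\bigr|>\tfrac{\eta_1}{2\sqrt n}\Bigr)\le 2\exp\!\Bigl(-\tfrac{M\eta_1^2}{2n}\Bigr).
\]
With $M\ge \frac{2n}{\eta_1^2}\log\frac{4nT}{\eps}$, this is at most $\frac{\eps}{2nT}$. On the complement of these bad events for all $v$, the $\ell_2$ norm of the fluctuation is at most $\sqrt n\cdot\frac{\eta_1}{2\sqrt n}=\eta_1/2$. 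Combined with the bias bound, this gives total error at most $\eta_1$.

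A union bound over the $n$ coordinates and $T$ steps bounds the total failure probability by $nT\cdot\frac{\eps}{2nT}=\frac\eps2$. The conditional bound at step $t$ holds for every realization of $\vect\theta^{(t)}$, so the union bound applies to the unconditional probabilities.

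The only delicate point is the adaptivity: $\vect\theta^{(t)}$ depends on earlier samples. This is handled by conditioning on the past, as above, since the Hoeffding bound is uniform in $\vect\theta^{(t)}$. The constants then work out exactly as stated.
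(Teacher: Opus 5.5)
Your proposal is correct and follows essentially the same route as the paper: after conditioning on the past, you bound the bias coordinatewise via the total variation bound, apply Hoeffding per coordinate at scale $\eta_1/(2\sqrt n)$, and take a union bound over the $n$ coordinates and $T$ steps. One small adjustment: the statement does not say the $M$ samples share a common law $\nu_t$, so you should center at the averaged mean $\frac1M\sum_i\E[\sigma^{(t,i)}_v]$, as the paper does. Your argument then goes through verbatim, since Hoeffding needs only independence.
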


\begin{proof}
For each $t$, let
\[
\mathcal E_t
:=
\left\{
\left\|
\widehat g^{(t)}-\nabla F_{\vect m}(\vect\theta^{(t)})
\right\|_2
\le
\eta_1
\right\}.
\]
We show that
$
\Pr(\mathcal E_t^c)\le \frac{\eps}{2T}
$
for every $t$, and then take a union bound.

Fix $t$, and condition on all randomness before step $t$. Then
$\vect\theta^{(t)}$ is fixed. Write
\[
q_v^{(t)}
:=
\E_{\mu_{G,\vect\theta^{(t)}}}[\sigma_v],
\qquad v\in V.
\]
For each $v\in V$, set
\[
\overline q_v^{(t)}
:=
\frac1M\sum_{i=1}^{M}\E[\sigma_v^{(t,i)}],
\]
where the expectation is with respect to the actual law of the samples used at
step $t$. Since each sample is within total variation distance
$\eta_1/(2\sqrt n)$ from $\mu_{G,\vect\theta^{(t)}}$, we have
\[
\left|
\overline q_v^{(t)}-q_v^{(t)}
\right|
\le
\frac{\eta_1}{2\sqrt n}.
\]
Moreover, the random variables
$\sigma_v^{(t,1)},\ldots,\sigma_v^{(t,M)}$ are independent and take values in
$\{0,1\}$. Hence, by Hoeffding's inequality,
\[
\Pr\left(
\left|
\frac1M\sum_{i=1}^{M}\sigma_v^{(t,i)}
-
\overline q_v^{(t)}
\right|
>
\frac{\eta_1}{2\sqrt n}
\right)
\le
2\exp\left(-\frac{M\eta_1^2}{2n}\right).
\]
By the union bound over $v\in V$, with probability at least
\[
1-2n\exp\left(-\frac{M\eta_1^2}{2n}\right),
\]
we have, for every $v\in V$,
\[
\left|
\frac1M\sum_{i=1}^{M}\sigma_v^{(t,i)}
-
q_v^{(t)}
\right|
\le
\frac{\eta_1}{\sqrt n}.
\]
On this event,
\[
\left\|
\frac1M\sum_{i=1}^{M}\sigma^{(t,i)}
-
\E_{\mu_{G,\vect\theta^{(t)}}}[\sigma]
\right\|_2
\le
\eta_1.
\]
Since
\[
\nabla F_{\vect m}(\vect\theta^{(t)})
=
\E_{\mu_{G,\vect\theta^{(t)}}}[\sigma]-\vect m,
\]
this implies the event $\mathcal E_t$. Therefore,
\[
\Pr(\mathcal E_t^c)
\le
2n\exp\left(-\frac{M\eta_1^2}{2n}\right).
\]
By the choice of $M$,
\[
2n\exp\left(-\frac{M\eta_1^2}{2n}\right)
\le
\frac{\eps}{2T}.
\]
Finally, taking a union bound over $t=0,\ldots,T-1$ gives
\[
\Pr\left(\bigcap_{t=0}^{T-1}\mathcal E_t\right)
\ge
1-\frac{\eps}{2}.
\]
This proves the lemma.
\end{proof}

We now state the resulting algorithm.

\begin{algorithm}[ht]
\caption{Sampling from the hard-core model with prescribed marginals}
\label{alg:app:gd-sampler}
\begin{algorithmic}[1]
\Require Graph $G=(V,E)$ with maximum degree $\Delta \geq 3$, target marginals
$\vect m=(m_v)_{v\in V}$ satisfying
\eqref{eq:app:gd-marginal-assumption} with parameters
$\gamma,\delta>0$, and accuracy $\eps\in(0,1)$.
\Ensure A random independent set whose law is within total variation distance
$\eps$ from $\mu_G^{\vect m}$.

\State Set
\[
\lambda_{\max}:=(1-\delta)\lambdac.
\]

\State Set
\[
\mathcal D
:=
\left\{
\vect\theta\in\mathbb R^V:
\log\gamma
\le
\theta_v
\le
\log\lambda_{\max}
\text{ for all }v\in V
\right\}.
\]

\State Set $\ell,L$ to be the constants introduced in
\Cref{lem:app:gd-strong-convexity-hard-core} and
\Cref{lem:app:gd-smoothness-hard-core}, respectively.

\State Define
\[
\eta:=\frac{\eps}{4\sqrt L},
\qquad
\eta_1:=\frac{\ell\eps}{8\sqrt L}.
\]

\State Set
\[
T
=
O\left(
\frac{L}{\ell}
\log\frac{\operatorname{diam}(\mathcal D)}{\eta}
\right).
\]

\State Choose an integer $M$ satisfying
\[
M
\ge
\frac{2n}{\eta_1^2}
\log\frac{4nT}{\eps}.
\]

\State Choose an arbitrary initial point $\vect\theta^{(0)}\in\mathcal D$.

\For{$t=0,\ldots,T-1$}
    \State Independently generate samples
    \[
    \sigma^{(t,1)},\ldots,\sigma^{(t,M)}
    \]
    such that each sample has law within total variation distance at most
    $\eta_1/(2\sqrt n)$ from $\mu_{G,\vect\theta^{(t)}}$.

    \State Define
    \[
    \widehat g^{(t)}
    :=
    \frac1M\sum_{i=1}^{M}\sigma^{(t,i)}-\vect m.
    \]

    \State Update
    \[
    \vect\theta^{(t+1)}
    =
    \Pi_{\mathcal D}
    \left(
    \vect\theta^{(t)}-\frac1L\widehat g^{(t)}
    \right).
    \]
\EndFor

\State Run Glauber dynamics for $\mu_{G,\vect\theta^{(T)}}$ to total
variation distance $\eps/4$.

\State \Return the resulting independent set.
\end{algorithmic}
\end{algorithm}

\begin{theorem}\label{thm:app:gd-sampler}
Let $\delta, \gamma \in (0, 1)$. Let $G = (V, E)$ be a graph on $n$ vertices with maximum degree $\Delta \geq 3$, and let $\vect m$ be a marginal vector  satisfying 
\[
\gamma \leq m_v \leq (1 - \delta) \alphac \qquad \forall v \in V.
\] 
Then, for every $\eps \in (0, 1)$, 
\Cref{alg:app:gd-sampler} outputs a random independent set whose law is
within total variation distance $\eps$ from $\mu_G^{\vect m}$. Moreover, its
running time is
\[
\widetilde O_{\Delta,\delta,\gamma}
\left(
\frac{n^2}{\eps^2}
\right).
\]
\end{theorem}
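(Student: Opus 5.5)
The proof combines \Cref{lem:app:gd-deterministic-pgd-success} and \Cref{lem:app:gd-gradient-estimation-success}, followed by a run-time count. The correctness argument goes as follows. Let $\mathcal E$ be the event that every gradient estimate satisfies $\|\widehat g^{(t)}-\nabla F_{\vect m}(\vect\theta^{(t)})\|_2\le\eta_1$.

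\begin{itemize}
\item \textbf{Inputs to the lemmas.} The samples in the inner loop are produced by Glauber dynamics for $\mu_{G,\vect\theta^{(t)}}$ with $\vect\theta^{(t)}\in\mathcal D$, run to accuracy $\eta_1/(2\sqrt n)$. This is possible because every entry of $e^{\vect\theta}$ is at most $(1-\delta)\lambdac$, so the cited rapid-mixing results apply. By \Cref{lem:app:gd-gradient-estimation-success}, with our choice of $M$ we get $\Pr(\mathcal E)\ge 1-\eps/2$.
\item \textbf{On $\mathcal E$.} \Cref{lem:app:gd-deterministic-pgd-success} applies, with strong convexity and smoothness from \Cref{lem:app:gd-strong-convexity-hard-core,lem:app:gd-smoothness-hard-core}. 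Since the final Glauber run is to accuracy $\eps/4$, the conditional output law is within $\eps/2$ of $\mu_G^{\vect m}$.
\item \textbf{Combining.} We need one care point here. The conditional law given $\mathcal E$ is a mixture over the realised $\widehat{\vect\theta}$, and each component is within $\eps/2$ by the lemma, so convexity of TV gives the bound for the mixture. Unconditioning costs at most $\Pr(\mathcal E^c)\le\eps/2$, via the coupling bound $\|\nu-\nu(\cdot\mid\mathcal E)\|_{\TV}\le\Pr(\mathcal E^c)$. The total error is therefore at most $\eps$.
\end{itemize}

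For the running time, $\ell$ and $L$ are constants depending on $\Delta,\delta,\gamma$. The domain has $\operatorname{diam}(\mathcal D)\le\sqrt n\,\log(\lambda_{\max}/\gamma)=O(\sqrt n)$, so
\[
T=O(\log(n/\eps)).
\]
With $\eta_1=\Theta(\eps)$ we get
\[
M=O\!\left(\tfrac{n}{\eps^2}\log\tfrac{nT}{\eps}\right)=\widetilde O(n/\eps^2).
\]
Each sample costs $O(n\log(n^{3/2}/\eps))=\widetilde O(n)$ Glauber steps, and each step costs $O(\Delta)$. Averaging $M$ vectors of length $n$ costs $O(Mn)$ per iteration, and projection costs $O(n)$. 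The total is
\[
T\cdot M\cdot\widetilde O(n)+\widetilde O(n)=\widetilde O(n^2/\eps^2).
\]

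The main obstacle is the conditioning subtlety in combining the two lemmas. The iterates $\vect\theta^{(t)}$ are random and depend on earlier samples. This is handled because \Cref{lem:app:gd-gradient-estimation-success} conditions on the past at each step. I would also check that every iterate stays in $\mathcal D$, which holds thanks to the projection, so the mixing bounds and the convexity constants apply uniformly. The rest is routine bookkeeping.
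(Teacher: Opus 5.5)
Your proposal is correct and follows the paper's proof essentially verbatim. It uses the same event $\mathcal E$, the same decomposition $\|\nu-\mu_G^{\vect m}\|_{\TV}\le\Pr(\mathcal E^c)+\|\nu(\cdot\mid\mathcal E)-\mu_G^{\vect m}\|_{\TV}$ built from \Cref{lem:app:gd-gradient-estimation-success,lem:app:gd-deterministic-pgd-success}, and the same $T\cdot M\cdot\widetilde O(n)$ running-time count, with your mixture-over-$\widehat{\vect\theta}$ remark spelling out a step the paper leaves implicit.
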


\begin{proof}
Let
\[
\mathcal E
:=
\left\{
\left\|
\widehat g^{(t)}-\nabla F_{\vect m}(\vect\theta^{(t)})
\right\|_2
\le
\eta_1
\text{ for all }t=0,\ldots,T-1
\right\}.
\]
By \Cref{lem:app:gd-gradient-estimation-success}, the choice of $M$ in
\Cref{alg:app:gd-sampler} implies that
\[
\Pr(\mathcal E)\ge 1-\frac{\eps}{2}.
\]
On the event $\mathcal E$, \Cref{lem:app:gd-deterministic-pgd-success}
implies that the law of the final output is within total variation distance
$\eps/2$ from $\mu_G^{\vect m}$. Therefore, if $\nu$ denotes the law of the
output, then
\[
\|\nu-\mu_G^{\vect m}\|_{\TV}
\le
\Pr(\mathcal E^c)
+
\|\nu(\,\cdot\,\mid \mathcal E)-\mu_G^{\vect m}\|_{\TV}
\le
\frac{\eps}{2}+\frac{\eps}{2}
=
\eps.
\]

The running time bound follows from the choices of $T$ and $M$ in
\Cref{alg:app:gd-sampler}, together with the
$O_{\Delta,\delta}(n\log(n/\zeta))$ mixing bound for Glauber dynamics below
the uniqueness threshold to within $\zeta$ total variation distance. Indeed,
$T$ is logarithmic in $n$ and $1/\eps$ up to constants depending on
$\Delta,\delta,\gamma$, and
\[
M
=
\widetilde O_{\Delta,\delta,\gamma}
\left(
\frac{n}{\eps^2}
\right).
\]
Each approximate sample costs
$\widetilde O_{\Delta,\delta,\gamma}(n)$ Glauber updates. Hence the total
running time is
\[
\widetilde O_{\Delta,\delta,\gamma}
\left(
\frac{n^2}{\eps^2}
\right).
\]
\end{proof}

\section{Bisection for prescribed density}
\label[appsec]{app:bisection-prescribed-density}

In this section we use the bisection method to answer the same algorithmic question addressed in Section \ref{sec:alg-application-mean-field}:
given a target density $\alpha^*$, approximately sample from the uniform
fugacity hard-core Gibbs measure with density $\alpha^*$. Our bisection method is an adaptation of the binary search approach used in \cite{davies2023approximatelycountingindependentsets} for sampling from the uniform distribution on fixed-size independent sets. 

Following the canonical parametrization of exponential families, we use
$\theta:=\log\lambda$ and write
\[
Z_G(\theta)
:=
\sum_{\sigma\in\OmegaG} e^{\theta|\sigma|},
\qquad
\mu_{G,\theta}(\sigma)
=
\frac{e^{\theta|\sigma|}}{Z_G(\theta)}
\qquad
\forall \sigma\in\OmegaG.
\]
Note that
\[
(\log Z_G(\theta))'
=
\E_{\sigma\sim\mu_{G,\theta}}[|\sigma|],
\qquad
(\log Z_G(\theta))''
=
\Var_{\sigma\sim\mu_{G,\theta}}(|\sigma|).
\]
Since $\Var_{\sigma\sim\mu_{G,\theta}}(|\sigma|)>0$, the map
$\theta\mapsto \alpha(\mu_{G,\theta})$ is strictly increasing. Moreover, it is
a bijection from $(-\infty,\infty)$ to $(0,i_{\max}/n)$, where $i_{\max}$ is
the size of a maximum independent set in $G$. Therefore, for every
$\alpha\in(0,i_{\max}/n)$, there is a unique $\theta(\alpha)$ such that the
density of $\mu_{G,\theta(\alpha)}$ is exactly $\alpha$.

Consider the function
\[
q(\theta):=\E_{\sigma\sim\mu_{G,\theta}}[|\sigma|].
\]
Since
\[
q'(\theta)
=
\Var_{\sigma\sim\mu_{G,\theta}}(|\sigma|)>0,
\]
the map $\theta\mapsto q(\theta)$ is strictly increasing. Hence the unique
parameter $\theta_*$ satisfying
\[
q(\theta_*)=n\alpha^*
\]
can be approximated by the bisection method. Thus, we first approximate
$\theta_*$, and then approximately sample from the Gibbs measure
$\mu_{G,\widehat\theta}$, where $\widehat\theta$ is the output of the
bisection procedure.

Throughout this section, we assume that
\begin{equation}
\label{eq:app:bisection-density-assumption}
\gamma
\le
\alpha^*
\le
(1-\delta)\alphac,
\end{equation}
where $\gamma,\delta>0$ are fixed constants. We need to find the range of
fugacities corresponding to densities satisfying
\eqref{eq:app:bisection-density-assumption}. For any such density $\alpha$, let
$\lambda(\alpha)$ be the corresponding fugacity. Then
\[
n\alpha
=
\sum_{v\in V}
\mu_{G,\lambda(\alpha)}(\sigma_v=1)
=
\sum_{v\in V}
\lambda(\alpha)\,
\mu_{G,\lambda(\alpha)}
\left(
\sigma_u=0 \text{ for all } u\in\neighborCl{v}
\right)
\le
n\lambda(\alpha).
\]
Therefore,
\[
\lambda(\alpha)\ge \alpha\ge \gamma.
\]
On the other hand, by the union bound,
\[
\begin{aligned}
n\alpha
&=
\lambda(\alpha)
\sum_{v\in V}
\mu_{G,\lambda(\alpha)}
\left(
\sigma_u=0 \text{ for all } u\in\neighborCl{v}
\right) \\
&\ge
\lambda(\alpha)
\sum_{v\in V}
\left(
1-\sum_{u\in\neighborCl{v}}
\mu_{G,\lambda(\alpha)}(\sigma_u=1)
\right) \\
%&=
%\lambda(\alpha)
%\left(
%n-\sum_{v\in %V}\sum_{u\in\neighborCl{v}}
%\mu_{G,\lambda(\alpha)}(\sigma_u=1)
%\right) \\
&\ge
\lambda(\alpha)
\left(
n-(\Delta+1)\sum_{u\in V}
\mu_{G,\lambda(\alpha)}(\sigma_u=1)
\right) \\
%&=
%\lambda(\alpha)n\bigl(1-(\Delta+1)\alpha\bigr) \\
&\ge
\lambda(\alpha)n\bigl(1-(\Delta+1)\alphac\bigr).
\end{aligned}
\]
Hence, using $\alpha\le (1-\delta)\alphac$, we get
\[
\lambda(\alpha)
\le
\frac{(1-\delta)\alphac}{1-(\Delta+1)\alphac}
=
(1-\delta)\lambdac.
\]
Therefore, defining
\[
\lambda_{\max}:=(1-\delta)\lambdac,
\]
the corresponding range for the parameter $\theta$ is the interval
\[
\mathcal I
:=
[\log\gamma,\log\lambda_{\max}].
\]

We will apply the bisection method on $\mathcal I$. In
\Cref{lem:app:bisection-strong-convexity,lem:app:bisection-smoothness}, we
show lower and upper bounds on $q'(\theta)$. The lower bound ensures that
approximate signs of $q(\theta)-n\alpha^*$ are reliable away from $\theta_*$,
while the upper bound is used to convert parameter error into total variation
error.

\begin{lemma}\label{lem:app:bisection-strong-convexity}
For every $\theta\in\mathcal I$, one has
\[
\Var_{\mu_{G,\theta}}(|\sigma|)
\ge
\ell n,
\]
where
\[
\ell
:=
\frac{\gamma}{(\Delta+1)(1+\lambda_{\max})^{\Delta+2}}.
\]
Consequently, $q'(\theta)$ is lower bounded by $\ell n$ on $\mathcal I$.
\end{lemma}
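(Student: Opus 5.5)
The plan is to reduce the claim to \Cref{lem:app:gd-strong-convexity-hard-core}, which already gives a uniform lower bound on the covariance matrix of the non-uniform hard-core model on the box $\mathcal D$.

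For $\theta\in\mathcal I$, the uniform-fugacity measure $\mu_{G,\theta}$ coincides with $\mu_{G,\vect\theta}$ for the constant vector $\vect\theta=(\theta,\dots,\theta)$. Each entry of this vector satisfies $\log\gamma\le\theta\le\log\lambda_{\max}$, so $\vect\theta$ lies in the box $\mathcal D$ defined in \Cref{app:gradient-descent-prescribed-marginals}. The box there uses the same $\gamma$ and the same $\lambda_{\max}=(1-\delta)\lambdac$.

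Next, write $|\sigma|=\sum_{v}a_v\sigma_v$ with $a=\mathbf 1$, the all-ones vector, so that $\|a\|_2^2=n$. Then
\[
\Var_{\mu_{G,\theta}}(|\sigma|)=\mathbf 1^{\transpose}\Cov_{\mu_{G,\vect\theta}}(\sigma)\mathbf 1\ge \ell\|\mathbf 1\|_2^2=\ell n,
\]
with the same $\ell$ as in \Cref{lem:app:gd-strong-convexity-hard-core}. Note that the proof of that lemma used only the entrywise bounds $\gamma\le\lambda_v\le\lambda_{\max}$, not the prescribed marginals $\vect m$, so it applies verbatim.

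If one prefers a self-contained argument, the steps of that proof can be repeated directly. Take a proper $(\Delta+1)$-colouring of $G$ and pick a largest colour class $V_i$, so that $|V_i|\ge n/(\Delta+1)$. By the law of total variance, conditioning on $\sigma_{V\setminus V_i}$ leaves the variables $\sigma_v$, $v\in V_i$, conditionally independent. Each such $\sigma_v$ has conditional variance
\[
\mathbf 1\{\sigma_{\neighbor{v}}=0\}\,\frac{\lambda}{(1+\lambda)^2}.
\]
Revealing the neighbours of $v$ one at a time gives $\Pr(\sigma_{\neighbor{v}}=0)\ge(1+\lambda_{\max})^{-\Delta}$. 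Also, $\lambda/(1+\lambda)^2\ge\gamma/(1+\lambda_{\max})^2$ for $\lambda\in[\gamma,\lambda_{\max}]$. Multiplying these bounds and summing over $v\in V_i$ yields the claimed $\ell n$.

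The final assertion follows from the identity $q'(\theta)=\Var_{\mu_{G,\theta}}(|\sigma|)$, which is stated earlier in this appendix. There is no real obstacle in this argument. The only point to check is that the constant-vector embedding lands in $\mathcal D$, which is immediate from the definition of $\mathcal I$.
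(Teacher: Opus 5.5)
Your proposal is correct and follows the paper's proof exactly: embed $\theta$ as the constant vector in the box $\mathcal D$, apply the bound $\Cov_{\mu_{G,\vect\theta}}(\sigma)\succeq \ell I$ from the gradient-descent appendix, and test it against the all-ones vector. Your optional self-contained rederivation via the largest colour class is also sound, though it is not needed.
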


\begin{proof}
Let $\vect\theta:=(\theta)_{v\in V}$. By the definition of $\mathcal I$, we
have $\vect\theta\in\mathcal D$, where $\mathcal D$ is the box used in
\Cref{app:gradient-descent-prescribed-marginals}. Also,
$\mu_{G,\theta}=\mu_{G,\vect\theta}$. By
\Cref{lem:app:gd-strong-convexity-hard-core}, we have
\[
\Cov_{\mu_{G,\vect\theta}}(\sigma)\succeq \ell I.
\]
Taking the all-ones vector $\mathbf 1\in\mathbb R^V$, this gives
\[
\Var_{\mu_{G,\theta}}(|\sigma|)
=
\mathbf 1^{\transpose}
\Cov_{\mu_{G,\vect\theta}}(\sigma)
\mathbf 1
\ge
\ell \|\mathbf 1\|_2^2
=
\ell n.
\]
Since $q'(\theta)=\Var_{\mu_{G,\theta}}(|\sigma|)$, this proves the claim.
\end{proof}

\begin{lemma}\label{lem:app:bisection-smoothness}
There exists a constant
\[
L=L(\Delta,\delta)
\]
such that for every $\theta\in\mathcal I$, one has
\[
\Var_{\mu_{G,\theta}}(|\sigma|)
\le
Ln.
\]
Consequently, $q'(\theta)$ is upper bounded by $Ln$ on $\mathcal I$.
\end{lemma}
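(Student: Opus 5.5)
The plan is to mirror \Cref{lem:app:bisection-strong-convexity} exactly, replacing the strong-convexity input with the smoothness input \Cref{lem:app:gd-smoothness-hard-core}. Fix $\theta\in\mathcal I$ and form the constant vector $\vect\theta:=(\theta)_{v\in V}$. Since $\log\gamma\le\theta\le\log\lambda_{\max}$, we have $\vect\theta\in\mathcal D$, the box from \Cref{app:gradient-descent-prescribed-marginals}. Moreover $\mu_{G,\theta}=\mu_{G,\vect\theta}$, because $\langle\vect\theta,\sigma\rangle=\theta|\sigma|$.

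Next, invoke \Cref{lem:app:gd-smoothness-hard-core}. Under $\Delta\ge3$, which is the standing setting in which $\lambdac$ is defined, it provides $L=L(\Delta,\delta)$ with $\Cov_{\mu_{G,\vect\theta}}(\sigma)\preceq LI$ uniformly on $\mathcal D$. Testing this matrix inequality against the all-ones vector $\mathbf 1$ gives
\[
\Var_{\mu_{G,\theta}}(|\sigma|)=\mathbf 1^{\transpose}\Cov_{\mu_{G,\vect\theta}}(\sigma)\mathbf 1\le L\|\mathbf 1\|_2^2=Ln.
\]
Since $q'(\theta)=\Var_{\mu_{G,\theta}}(|\sigma|)$, this yields $q'(\theta)\le Ln$ on $\mathcal I$.

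There is no real obstacle here; the only point to check is that the constant $L$ is independent of $\gamma$. This holds because the spectral independence bound in \Cref{lem:app:gd-smoothness-hard-core} depends only on the upper fugacity bound $\lambda_{\max}=(1-\delta)\lambdac$. Indeed, that bound comes from viewing $\mu_{G,\vect\theta}$ as the uniform-fugacity $\lambda_{\max}$ model with local fields at most $1$. Consequently the lemma holds with $L=L(\Delta,\delta)$ as stated.
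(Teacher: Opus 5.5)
Your proposal is correct and is essentially identical to the paper's proof. The paper also embeds $\theta$ as the constant vector in the box $\mathcal D$ and invokes \Cref{lem:app:gd-smoothness-hard-core}. It then tests $\Cov_{\mu_{G,\vect\theta}}(\sigma)\preceq LI$ against the all-ones vector to obtain $\Var_{\mu_{G,\theta}}(|\sigma|)\le Ln$.
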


\begin{proof}
Let $\vect\theta:=(\theta)_{v\in V}$. By the definition of $\mathcal I$, we
have $\vect\theta\in\mathcal D$, where $\mathcal D$ is the box used in
\Cref{app:gradient-descent-prescribed-marginals}. Also,
$\mu_{G,\theta}=\mu_{G,\vect\theta}$. By
\Cref{lem:app:gd-smoothness-hard-core}, there exists a constant
$L=L(\Delta,\delta)$ such that
\[
\Cov_{\mu_{G,\vect\theta}}(\sigma)\preceq LI.
\]
Taking the all-ones vector $\mathbf 1\in\mathbb R^V$, this gives
\[
\Var_{\mu_{G,\theta}}(|\sigma|)
=
\mathbf 1^{\transpose}
\Cov_{\mu_{G,\vect\theta}}(\sigma)
\mathbf 1
\le
L\|\mathbf 1\|_2^2
=
Ln.
\]
Since $q'(\theta)=\Var_{\mu_{G,\theta}}(|\sigma|)$, this proves the claim.
\end{proof}

The following lemma gives the number of bisection steps needed to approximate
$\theta_*$, assuming access to approximate evaluations of $q$ on
$\mathcal I$. Recall that the bisection method proceeds as follows. Start with the interval $\mathcal I$. At each
step, we query the midpoint $\theta_t$ of the current interval and estimate
$q(\theta_t)$ by $\widehat q_t$. If
\[
|\widehat q_t-n\alpha^*|
\]
is small, then we stop and output $\theta_t$. Otherwise, since $q$ is
increasing, we keep the left half if $\widehat q_t>n\alpha^*$, and keep the
right half if $\widehat q_t<n\alpha^*$. The lower bound on $q'$ ensures that
stopping in the first case already gives a good approximation to $\theta_*$,
while in the second case the sign is reliable.

\begin{lemma}\label{lem:app:bisection-bisection}
Assume that at each bisection step $t$, with midpoint $\theta_t$, we have an
estimate $\widehat q_t$ satisfying
\[
|\widehat q_t-q(\theta_t)|\le \eta_1.
\]
Consider the following bisection procedure. If
\[
|\widehat q_t-n\alpha^*|\le \eta_1,
\]
then stop and output $\widehat\theta=\theta_t$. Otherwise, keep the left or
right half according to the sign of $\widehat q_t-n\alpha^*$. If the procedure
does not stop early, then after
\[
T
=
\left\lceil
\log_2\frac{\operatorname{diam}(\mathcal I)}{\eta}
\right\rceil
\]
bisection steps, output any point $\widehat\theta$ from the final interval.
Then the output satisfies
\[
|\widehat\theta-\theta_*|
\le
\max\left\{\eta,\frac{2\eta_1}{\ell n}\right\}.
\]
\end{lemma}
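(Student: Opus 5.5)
The plan is a case split on whether the procedure stops early. The key tool is the mean value theorem combined with \Cref{lem:app:bisection-strong-convexity}. Since $q'\ge \ell n$ on $\mathcal I$, for any $\theta,\theta'\in\mathcal I$ we have $|q(\theta)-q(\theta')|\ge \ell n|\theta-\theta'|$. We also use $\theta_*\in\mathcal I$, which follows from the fugacity range computed before the lemma, together with $q(\theta_*)=n\alpha^*$.

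\emph{Early stop.} Suppose the procedure stops at step $t$. Then $|\widehat q_t-n\alpha^*|\le\eta_1$, and by the estimate accuracy,
\[
|q(\theta_t)-q(\theta_*)|\le |q(\theta_t)-\widehat q_t|+|\widehat q_t-n\alpha^*|\le 2\eta_1 .
\]
The mean value bound then gives $|\theta_t-\theta_*|\le 2\eta_1/(\ell n)$.

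\emph{No early stop.} We prove by induction the invariant that $\theta_*$ lies in the current interval. The base case is $\theta_*\in\mathcal I$. For the inductive step, at a non-stopping step we have $|\widehat q_t-n\alpha^*|>\eta_1$. Suppose, for instance, that $\widehat q_t>n\alpha^*+\eta_1$. Then
\[
q(\theta_t)\ge \widehat q_t-\eta_1>n\alpha^*=q(\theta_*).
\]
Strict monotonicity of $q$ gives $\theta_t>\theta_*$, so keeping the left half retains $\theta_*$. The opposite sign case is symmetric.

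After $T=\lceil\log_2(\operatorname{diam}(\mathcal I)/\eta)\rceil$ halvings, the final interval has length at most $\operatorname{diam}(\mathcal I)2^{-T}\le\eta$. It contains both $\theta_*$ and $\widehat\theta$, so $|\widehat\theta-\theta_*|\le\eta$.

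Combining the two cases yields the bound $\max\{\eta,2\eta_1/(\ell n)\}$. The only delicate point is the sign-reliability step in the no-early-stop case. There, the stopping threshold $\eta_1$ matching the estimation error $\eta_1$ is exactly what makes the sign of $\widehat q_t-n\alpha^*$ agree with the sign of $q(\theta_t)-q(\theta_*)$. The remaining steps are routine.
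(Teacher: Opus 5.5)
Your proposal is correct and follows the paper's proof essentially step for step. In the early-stop case you use the triangle inequality together with $q'\ge \ell n$ on $\mathcal I$, and in the no-early-stop case you use sign reliability to keep $\theta_*$ in the interval, which shrinks to length at most $\eta$; your explicit remark that $\theta_*\in\mathcal I$ spells out a point the paper leaves implicit.
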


\begin{proof}
Suppose first that the algorithm stops at some step $t$. Then
\[
|\widehat q_t-n\alpha^*|\le \eta_1.
\]
Since $|\widehat q_t-q(\theta_t)|\le \eta_1$, we get
\[
|q(\theta_t)-n\alpha^*|
\le
|q(\theta_t)-\widehat q_t|
+
|\widehat q_t-n\alpha^*|
\le
2\eta_1.
\]
By \Cref{lem:app:bisection-strong-convexity}, we have
$q'(\theta)\ge \ell n$ on $\mathcal I$. Since
$q(\theta_*)=n\alpha^*$, this implies
\[
\ell n|\theta_t-\theta_*|
\le
|q(\theta_t)-q(\theta_*)|
=
|q(\theta_t)-n\alpha^*|
\le
2\eta_1.
\]
Therefore,
\[
|\theta_t-\theta_*|\le \frac{2\eta_1}{\ell n}.
\]
Since $\widehat\theta=\theta_t$, this gives the desired bound in the case
where the algorithm stops early.

It remains to consider the case where the algorithm does not stop during the
first $T$ steps. Then, at every step $t<T$,
\[
|\widehat q_t-n\alpha^*|>\eta_1.
\]
Since $|\widehat q_t-q(\theta_t)|\le \eta_1$, the quantities
$\widehat q_t-n\alpha^*$ and $q(\theta_t)-n\alpha^*$ have the same sign. Hence
every bisection decision is correct, and $\theta_*$ remains inside the
maintained interval throughout the procedure.

After
\[
T
=
\left\lceil
\log_2\frac{\operatorname{diam}(\mathcal I)}{\eta}
\right\rceil
\]
steps, the final interval has length at most $\eta$. Since $\theta_*$ belongs
to this interval, any output $\widehat\theta$ from the final interval satisfies
\[
|\widehat\theta-\theta_*|\le \eta.
\]

Combining the two cases, we obtain
\[
|\widehat\theta-\theta_*|
\le
\max\left\{\eta,\frac{2\eta_1}{\ell n}\right\}.
\]
This completes the proof.
\end{proof}

For $\theta\in\mathcal I$, let
\[
A(\theta):=\log Z_G(\theta).
\]
Then
\[
D_{\mathrm{KL}}
\left(
\mu_{G,\theta_*}
\,\middle\|\,
\mu_{G,\theta}
\right)
=
A(\theta)-A(\theta_*)-A'(\theta_*)(\theta-\theta_*).
\]
By \Cref{lem:app:bisection-smoothness}, the function $A$ is $Ln$-smooth on
$\mathcal I$, and thus
\[
D_{\mathrm{KL}}
\left(
\mu_{G,\theta_*}
\,\middle\|\,
\mu_{G,\theta}
\right)
\le
\frac{Ln}{2}(\theta-\theta_*)^2.
\]
Therefore, by Pinsker's inequality,
\[
\left\|
\mu_{G,\theta_*}-\mu_{G,\theta}
\right\|_{\TV}
\le
\frac{\sqrt{Ln}}{2}|\theta-\theta_*|.
\]
Consequently, to make the error coming from parameter estimation at most
$\eps/4$ in total variation, it is enough to ensure that
\[
|\theta-\theta_*|
\le
\frac{\eps}{2\sqrt{Ln}}.
\]
If we then sample from a distribution within total variation distance
$\eps/4$ from $\mu_{G,\theta}$, the final output distribution is within
$\eps/2$ of $\mu_{G,\theta_*}$.

The following, assuming
that the estimates of the expected size $q(\theta_t)$ used in the bisection
procedure are accurate, chooses the estimation accuracy and number of
bisection steps so that the combined contribution of the parameter error and
the final sampling error is at most $\eps/2$. The remaining $\eps/2$ will
be used to account for the probability that one of the expected-size
estimates fails.

\begin{lemma}\label{lem:app:bisection-deterministic-success}
Let $\eps>0$, and set
\[
\eta:=\frac{\eps}{2\sqrt{Ln}},
\qquad
\eta_1:=\frac{\ell n\eps}{4\sqrt{Ln}}.
\]
Suppose that at every bisection step $t$, with midpoint $\theta_t$, we have
an estimate $\widehat q_t$ satisfying
\[
\left|
\widehat q_t-q(\theta_t)
\right|
\le
\eta_1.
\]
Run the bisection procedure of \Cref{lem:app:bisection-bisection} for at most
\[
T
=
\left\lceil
\log_2\frac{\operatorname{diam}(\mathcal I)}{\eta}
\right\rceil
\]
steps, and let $\widehat\theta$ be its output. Then
\[
\left|
\widehat\theta-\theta_*
\right|
\le
\frac{\eps}{2\sqrt{Ln}}.
\]
Consequently,
\[
\left\|
\mu_{G,\widehat\theta}-\mu_{G,\theta_*}
\right\|_{\TV}
\le
\frac{\eps}{4}.
\]
If, in addition, the final sample is generated from a distribution within
total variation distance $\eps/4$ from $\mu_{G,\widehat\theta}$, then the
final output distribution is within total variation distance $\eps/2$ from
$\mu_{G,\theta_*}$.
\end{lemma}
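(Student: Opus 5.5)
The plan is to derive this directly from \Cref{lem:app:bisection-bisection}, then chain it with the KL/Pinsker estimate proved just above the statement and a triangle inequality.

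\textbf{Parameter error.} By \Cref{lem:app:bisection-bisection}, whether or not the procedure stops early, the output satisfies
\[
|\widehat\theta-\theta_*|\le\max\left\{\eta,\frac{2\eta_1}{\ell n}\right\}.
\]
With the chosen values, $\eta=\frac{\eps}{2\sqrt{Ln}}$. Substituting $\eta_1=\frac{\ell n\eps}{4\sqrt{Ln}}$ gives
\[
\frac{2\eta_1}{\ell n}=\frac{2}{\ell n}\cdot\frac{\ell n\eps}{4\sqrt{Ln}}=\frac{\eps}{2\sqrt{Ln}}.
\]
So both quantities in the maximum equal $\frac{\eps}{2\sqrt{Ln}}$, which is the first claim. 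The only condition to check is that every queried midpoint $\theta_t$ and the output $\widehat\theta$ lie in $\mathcal I$. This holds because bisection only ever works with subintervals of $\mathcal I$, and $\theta_*\in\mathcal I$ by the fugacity bounds established earlier in this appendix. Hence the lower bound $q'\ge \ell n$ from \Cref{lem:app:bisection-strong-convexity} applies wherever \Cref{lem:app:bisection-bisection} uses it.

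\textbf{Total variation bound.} Both $\widehat\theta$ and $\theta_*$ lie in $\mathcal I$, so the $Ln$-smoothness of $A$ on $\mathcal I$ (\Cref{lem:app:bisection-smoothness}) applies along the segment between them. Together with Pinsker's inequality this gives
\[
\|\mu_{G,\widehat\theta}-\mu_{G,\theta_*}\|_{\TV}\le\frac{\sqrt{Ln}}{2}|\widehat\theta-\theta_*|\le\frac{\sqrt{Ln}}{2}\cdot\frac{\eps}{2\sqrt{Ln}}=\frac{\eps}{4}.
\]

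\textbf{Final sample.} If the final sample's law $\nu$ satisfies $\|\nu-\mu_{G,\widehat\theta}\|_{\TV}\le\eps/4$, the triangle inequality gives $\|\nu-\mu_{G,\theta_*}\|_{\TV}\le\eps/2$.

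There is no real obstacle here; the argument is bookkeeping. The one point that needs care is the domain issue in the first step: all points, and the segment between $\widehat\theta$ and $\theta_*$, must stay inside $\mathcal I$ so that both the strong-convexity and smoothness bounds are valid.
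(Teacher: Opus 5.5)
Your proposal is correct and follows the paper's proof essentially verbatim. Like the paper, you apply \Cref{lem:app:bisection-bisection}, check that both terms of the maximum equal $\frac{\eps}{2\sqrt{Ln}}$, convert to total variation via the $Ln$-smoothness and Pinsker bound, and finish with the triangle inequality. Your explicit check that $\theta_t$, $\widehat\theta$, and $\theta_*$ all lie in $\mathcal I$ is a small point of care the paper leaves implicit.
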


\begin{proof}
By \Cref{lem:app:bisection-bisection}, the output $\widehat\theta$ satisfies
\[
|\widehat\theta-\theta_*|
\le
\max\left\{\eta,\frac{2\eta_1}{\ell n}\right\}.
\]
With our choices of $\eta$ and $\eta_1$, we have
\[
\eta
=
\frac{\eps}{2\sqrt{Ln}},
\qquad
\frac{2\eta_1}{\ell n}
=
\frac{\eps}{2\sqrt{Ln}}.
\]
Therefore,
\[
|\widehat\theta-\theta_*|
\le
\frac{\eps}{2\sqrt{Ln}}.
\]

By the smoothness discussion above,
\[
\left\|
\mu_{G,\widehat\theta}-\mu_{G,\theta_*}
\right\|_{\TV}
\le
\frac{\sqrt{Ln}}{2}\cdot
\frac{\eps}{2\sqrt{Ln}}
=
\frac{\eps}{4}.
\]

Finally, if the distribution used to generate the final sample is within
total variation distance $\eps/4$ from $\mu_{G,\widehat\theta}$, then by the
triangle inequality its distance from $\mu_{G,\theta_*}$ is at most
$
\frac{\eps}{4}
+
\frac{\eps}{4}
=
\frac{\eps}{2}.
$
This completes the proof.
\end{proof}

We first compute an approximation $\widehat\theta\in\mathcal I$ to
$\theta_*$ using the bisection method, and then sample from the hard-core
model with fugacity $e^{\widehat\theta}$. Note that, since
\[
q(\theta)=\E_{\mu_{G,\theta}}[|\sigma|],
\]
throughout the bisection process we also need to sample from
$\mu_{G,\theta}$ in order to estimate $\E_{\mu_{G,\theta}}[|\sigma|]$.

Since $\theta\in\mathcal I$, we have
\[
e^\theta\le \lambda_{\max}=(1-\delta)\lambdac.
\]
Thus the fugacity is uniformly below the uniqueness threshold, with slack
$\delta$. 
Hence, by the rapid-mixing theorem for Glauber dynamics for the hard-core model below the uniqueness threshold \cite{CLV23, AJKPV22, CFYZ22, CE25}, for every $\zeta>0$ we can
sample from a distribution within total variation distance $\zeta$ from
$\mu_{G,\theta}$ in
\[
O_{\Delta,\delta}\left(n\log\frac{n}{\zeta}\right)
\]
steps of Glauber dynamics.

The following lemma shows that by taking sufficiently many approximate samples from $\mu_{G, \theta_t}$ at each bisection step, the estimates of the expected size $q(\theta_t)$ are accurate throughout the bisection procedure with probability at least $1 - \frac{\eps}{2}$. Together with \Cref{lem:app:bisection-deterministic-success}, this proves that the algorithm outputs an independent set whose law is within total variation distance $\eps$ of the target measure.

\begin{lemma}\label{lem:app:bisection-density-estimation-success}
Let $\eps\in(0,1)$ and $\eta_1>0$. Set
\[
\zeta:=\min\left\{\frac{\eta_1}{2n},\frac{L}{n}\right\}.
\]
At each bisection step $t=0,\ldots,T-1$, suppose that we generate independent
samples
\[
\sigma^{(t,1)},\ldots,\sigma^{(t,M)}
\]
such that each sample has law within total variation distance $\zeta$ from
$\mu_{G,\theta_t}$. Write $M=BK$, and partition the samples into $K$ blocks
of size $B$. For each $j=1,\ldots,K$, define
\[
Y_j^{(t)}
:=
\frac1B\sum_{i=(j-1)B+1}^{jB}|\sigma^{(t,i)}|.
\]
Let
\[
\widehat q_t
:=
\operatorname{median}\{Y_1^{(t)},\ldots,Y_K^{(t)}\}.
\]
If
\[
B\ge \frac{32Ln}{\eta_1^2},
\qquad
K\ge 8\log\frac{2T}{\eps},
\]
then
\[
\Pr\left(
\left|
\widehat q_t-q(\theta_t)
\right|
\le
\eta_1
\text{ for all }t=0,\ldots,T-1
\right)
\ge
1-\frac{\eps}{2}.
\]
\end{lemma}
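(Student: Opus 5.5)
The proof is a median-of-means argument: a bias bound, Chebyshev within each block, Chernoff across blocks, and a union bound over bisection steps. Fix a step $t$ and condition on everything before it, so $\theta_t$ is deterministic. Let $\nu_t$ be the actual law of each sample at step $t$; the samples are independent and each is within total variation $\zeta$ of $\mu_{G,\theta_t}$. Set $\bar q_t := \E_{\sigma\sim\nu_t}[|\sigma|]$, or the average of these expectations if the laws differ across samples.

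\textbf{Bias.} Since $0\le|\sigma|\le n$, we have $|\bar q_t - q(\theta_t)| \le n\zeta \le \eta_1/2$.

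\textbf{Variance.} Write $\Var_{\nu_t}(|\sigma|) \le \E_{\nu_t}[(|\sigma|-q(\theta_t))^2]$. The function $(|\sigma|-q(\theta_t))^2$ is bounded by $n^2$, so
\[
\E_{\nu_t}[(|\sigma|-q(\theta_t))^2] \le \Var_{\mu_{G,\theta_t}}(|\sigma|) + n^2\zeta.
\]
By \Cref{lem:app:bisection-smoothness} the first term is at most $Ln$, and the choice $\zeta\le L/n$ gives $n^2\zeta\le Ln$. Hence every sample has variance at most $2Ln$ around its own mean, and each block average $Y_j^{(t)}$ has variance at most $2Ln/B \le \eta_1^2/16$.

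\textbf{One block.} By Chebyshev, $\Pr(|Y_j^{(t)}-\bar q_t| > \eta_1/2) \le \frac{2Ln/B}{\eta_1^2/4} \le \frac14$. With the bias bound, block $j$ is \emph{good}, meaning $|Y_j^{(t)}-q(\theta_t)|\le\eta_1$, with probability at least $3/4$.

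\textbf{Median.} The blocks are independent. The median fails only if at least $K/2$ blocks are bad. The number of bad blocks is a sum of independent indicators with mean at most $K/4$, so Hoeffding gives a failure probability of at most $\exp(-2K(1/4)^2) = e^{-K/8}$. The condition $K\ge 8\log(2T/\eps)$ makes this at most $\eps/(2T)$.

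\textbf{Union bound.} Summing over the $T$ steps $t=0,\ldots,T-1$ gives total failure probability at most $\eps/2$. If the procedure stops early, fewer steps are taken and the bound only improves.

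The only delicate step is the variance bound. The samples come from $\nu_t$ rather than $\mu_{G,\theta_t}$, so the variance bound of \Cref{lem:app:bisection-smoothness} must be transferred through the total variation error. This transfer is exactly what the $L/n$ term in $\zeta$ is for, and it is why we center at $q(\theta_t)$ rather than at $\bar q_t$. Independence of the blocks is immediate from independence of the samples given $\theta_t$.
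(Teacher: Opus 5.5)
Your proposal is correct and follows the paper's proof essentially step for step. The shared steps are: the bias bound $n\zeta\le\eta_1/2$; the variance transfer $\E[(X-q(\theta_t))^2]\le Ln+n^2\zeta\le 2Ln$, which uses $\zeta\le L/n$; Chebyshev, making each block good with probability at least $3/4$; Hoeffding, bounding the median's failure probability by $e^{-K/8}\le\eps/(2T)$; and a union bound over the $T$ bisection steps.
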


\begin{proof}
For each $t$, let
\[
\mathcal E_t
:=
\left\{
\left|
\widehat q_t-q(\theta_t)
\right|
\le
\eta_1
\right\}.
\]
We show that
\[
\Pr(\mathcal E_t^c)\le \frac{\eps}{2T}
\]
for every $t$, and then take a union bound.

Fix $t$, and condition on all randomness before step $t$. Then $\theta_t$ is
fixed. Write
\[
q_t
:=
q(\theta_t)
=
\E_{\mu_{G,\theta_t}}[|\sigma|].
\]
By \Cref{lem:app:bisection-smoothness},
\[
\Var_{\mu_{G,\theta_t}}(|\sigma|)\le Ln.
\]

Fix a block $j$, and set
\[
\overline q_j^{(t)}
:=
\E[Y_j^{(t)}],
\]
where the expectation is with respect to the actual laws of the samples used
at step $t$. Since each sample is within total variation distance $\zeta$
from $\mu_{G,\theta_t}$, and since $0\le |\sigma|\le n$, we have
\[
\left|
\overline q_j^{(t)}-q_t
\right|
\le
n\zeta
\le
\frac{\eta_1}{2}.
\]

Moreover, if $X$ denotes one of the random variables
$|\sigma^{(t,i)}|$ in the block, then
\[
\Var(X)
\le
\E\left[(X-q_t)^2\right]
\le
\Var_{\mu_{G,\theta_t}}(|\sigma|)
+
n^2\zeta
\le
2Ln,
\]
where we used $\zeta\le L/n$. Since the samples in the block are independent,
\[
\Var(Y_j^{(t)})\le \frac{2Ln}{B}.
\]
Therefore, by Chebyshev's inequality,
\[
\Pr\left(
\left|
Y_j^{(t)}-\overline q_j^{(t)}
\right|
>
\frac{\eta_1}{2}
\right)
\le
\frac{8Ln}{B\eta_1^2}
\le
\frac14.
\]
It follows that
\[
\Pr\left(
\left|
Y_j^{(t)}-q_t
\right|
>
\eta_1
\right)
\le
\frac14.
\]

Thus each block is good, meaning
\[
\left|
Y_j^{(t)}-q_t
\right|
\le
\eta_1,
\]
with probability at least $3/4$. Since the blocks are independent, Hoeffding's
inequality gives
\[
\Pr\left(
\left|
\widehat q_t-q_t
\right|
>
\eta_1
\right)
\le
\exp\left(-\frac{K}{8}\right).
\]
By the choice of $K$,
\[
\exp\left(-\frac{K}{8}\right)
\le
\frac{\eps}{2T}.
\]
Therefore,
\[
\Pr(\mathcal E_t^c)
\le
\frac{\eps}{2T}.
\]
Finally, taking a union bound over $t=0,\ldots,T-1$ gives
\[
\Pr\left(\bigcap_{t=0}^{T-1}\mathcal E_t\right)
\ge
1-\frac{\eps}{2}.
\]
This proves the lemma.
\end{proof}

We now state the resulting algorithm.

\begin{algorithm}[ht]
\small
\caption{Bisection sampler for the hard-core model with prescribed density}
\label{alg:app:bisection-sampler}
\begin{algorithmic}[1]
\Require Graph $G=(V,E)$ with maximum degree $\Delta \geq 3$, target density
$\alpha^*$ satisfying \eqref{eq:app:bisection-density-assumption} with
parameters $\gamma,\delta>0$, and accuracy $\eps\in(0,1)$.
\Ensure A random independent set whose law is within total variation distance
$\eps$ from $\mu_{G,\theta_*}$, where $q(\theta_*)=n\alpha^*$.

\State Set $\lambda_{\max}:=(1-\delta)\lambdac$ and
$\mathcal I:=[\log\gamma,\log\lambda_{\max}]$.

\State Let $\ell,L$ be the constants from
\Cref{lem:app:bisection-strong-convexity,lem:app:bisection-smoothness}.

\State Set
\[
\eta:=\frac{\eps}{2\sqrt{Ln}},
\qquad
\eta_1:=\frac{\ell n\eps}{4\sqrt{Ln}},
\qquad
\zeta:=\min\left\{\frac{\eta_1}{2n},\frac{L}{n}\right\}.
\]

\State Set
\[
T
:=
\left\lceil
\log_2\frac{\operatorname{diam}(\mathcal I)}{\eta}
\right\rceil.
\]

\State Choose integers $B,K$ with
\[
B\ge \frac{32Ln}{\eta_1^2},
\qquad
K\ge 8\log\frac{2T}{\eps},
\]
and set $M:=BK$.

\State Initialize $a_0:=\log\gamma$ and $b_0:=\log\lambda_{\max}$.

\For{$t=0,\ldots,T-1$}
    \State Set $\theta_t:=(a_t+b_t)/2$.

    \State Generate independent samples
    $\sigma^{(t,1)},\ldots,\sigma^{(t,M)}$, each within total variation
    distance $\zeta$ from $\mu_{G,\theta_t}$.

    \State Partition the samples into $K$ blocks of size $B$, and define
    \[
    Y_j^{(t)}
    :=
    \frac1B\sum_{i=(j-1)B+1}^{jB}|\sigma^{(t,i)}|,
    \qquad j=1,\ldots,K.
    \]

    \State Set
    \[
    \widehat q_t
    :=
    \operatorname{median}\{Y_1^{(t)},\ldots,Y_K^{(t)}\}.
    \]

    \If{$|\widehat q_t-n\alpha^*|\le \eta_1$}
        \State Set $\widehat\theta:=\theta_t$ and terminate the loop.
    \ElsIf{$\widehat q_t>n\alpha^*$}
        \State Set $a_{t+1}:=a_t$ and $b_{t+1}:=\theta_t$.
    \Else
        \State Set $a_{t+1}:=\theta_t$ and $b_{t+1}:=b_t$.
    \EndIf
\EndFor

\If{the loop was not terminated early}
    \State Set $\widehat\theta:=(a_T+b_T)/2$.
\EndIf

\State Run Glauber dynamics for $\mu_{G,\widehat\theta}$ to total variation
distance $\eps/4$.

\State \Return the resulting independent set.
\end{algorithmic}
\end{algorithm}

\begin{theorem}\label{thm:app:bisection-sampler}
Let $\gamma, \delta \in (0, 1)$. Let $G = (V, E)$ be a graph on $n$ vertices with maximum degree $\Delta \geq 3$, and let
$\alpha^*$ be a density satisfying
\[
\gamma \leq \alpha^* \leq (1 - \delta)\alphac.
\] 
Then, for every $\eps \in (0, 1)$,  \Cref{alg:app:bisection-sampler} outputs a random independent set whose
law is within total variation distance $\eps$ from $\mu_{G,\theta_*}$, where $q(\theta_*)=n\alpha^*$.
Moreover, its running time is
\[
\widetilde O_{\Delta,\delta,\gamma}
\left(
\frac{n}{\eps^2}
\right).
\]
\end{theorem}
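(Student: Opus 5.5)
The proof will combine the lemmas of this appendix, following the template of the proof of \Cref{thm:app:gd-sampler}.

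\textbf{Correctness.} Let $\mathcal E$ be the event that $|\widehat q_t-q(\theta_t)|\le\eta_1$ for every bisection step $t$ that is actually executed. If the loop terminates early, we simply stop checking estimates after that step; this only helps, so $\mathcal E$ is well defined for a run of random length. By \Cref{lem:app:bisection-density-estimation-success}, with the algorithm's choices of $\zeta$, $B$, $K$, we have $\Pr(\mathcal E)\ge 1-\eps/2$. There is one subtlety: the lemma conditions on the past and fixes $\theta_t$. Here $\theta_t$ is determined by the previous outcomes, so the per-step failure bound $\eps/(2T)$ holds conditionally, and a union bound over at most $T$ steps applies. On $\mathcal E$, \Cref{lem:app:bisection-deterministic-success} applies with the algorithm's $\eta,\eta_1$. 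This requires checking that the fallback output $(a_T+b_T)/2$ lies in the final interval, which holds by construction. It also requires that $\theta_*\in\mathcal I$, which follows from the fugacity range computed before \Cref{lem:app:bisection-strong-convexity}. The lemma then gives that the output law, conditioned on $\mathcal E$, is within $\eps/2$ of $\mu_{G,\theta_*}$, using that the final Glauber run achieves accuracy $\eps/4$. Finally, $\|\nu-\mu_{G,\theta_*}\|_{\TV}\le\Pr(\mathcal E^c)+\|\nu(\cdot\mid\mathcal E)-\mu_{G,\theta_*}\|_{\TV}\le\eps$. To justify this, write $\nu$ as a mixture of $\nu(\cdot\mid\mathcal E)$ and $\nu(\cdot\mid\mathcal E^c)$, and note that total variation is jointly convex.

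\textbf{Running time.} Since $\operatorname{diam}(\mathcal I)=O_{\gamma,\delta,\Delta}(1)$ and $\eta=\Theta(\eps/\sqrt n)$, we get $T=O(\log(n/\eps))$ and $K=O(\log(T/\eps))$. The key computation is $\eta_1=\ell n\eps/(4\sqrt{Ln})=\Theta(\eps\sqrt n)$. Hence $B\ge 32Ln/\eta_1^2=\Theta(1/\eps^2)$, independent of $n$, and so $M=BK=\widetilde O(1/\eps^2)$ samples are needed per step. Each sample uses Glauber dynamics at fugacity $e^{\theta_t}\le(1-\delta)\lambdac$ to accuracy $\zeta=\Theta(\eps/\sqrt n)$, at cost $O_{\Delta,\delta}(n\log(n/\zeta))=\widetilde O(n)$ steps, each implementable in $O_\Delta(1)$ time. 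The total is $T\cdot M\cdot\widetilde O(n)=\widetilde O(n/\eps^2)$. The final sample costs a further $\widetilde O(n)$.

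The main point needing care is the $n$-dependence of $B$. The cancellation between the $\Theta(n)$ variance bound of \Cref{lem:app:bisection-smoothness} and $\eta_1^2=\Theta(\eps^2 n)$, which comes from the $\ell n$ lower bound of \Cref{lem:app:bisection-strong-convexity}, is exactly what yields linear rather than quadratic dependence on $n$. I would also verify the adaptive-conditioning issue in the union bound.
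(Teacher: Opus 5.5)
Your proposal is correct and follows essentially the same route as the paper. It bounds the failure event $\mathcal E$ via \Cref{lem:app:bisection-density-estimation-success}, applies \Cref{lem:app:bisection-deterministic-success} on $\mathcal E$, uses the mixture bound $\Pr(\mathcal E^c)+\|\nu(\cdot\mid\mathcal E)-\mu_{G,\theta_*}\|_{\TV}$, and counts $T\cdot BK\cdot\widetilde O(n)$ with $B=\Theta(1/\eps^2)$ coming from $\eta_1=\Theta(\eps\sqrt n)$. One small slip: $\zeta=\min\{\eta_1/(2n),L/n\}$ is $\Theta(\min\{\eps/\sqrt n,1/n\})$ rather than $\Theta(\eps/\sqrt n)$, but since $\zeta$ only enters through $\log(n/\zeta)=O(\log(n/\eps))$, this does not affect the running-time bound.
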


\begin{proof}
Let $\mathcal E$ be the event that every estimate actually used by the
bisection procedure satisfies
\[
|\widehat q_t-q(\theta_t)|\le \eta_1.
\]
By \Cref{lem:app:bisection-density-estimation-success}, the choice of $M$ in
\Cref{alg:app:bisection-sampler} implies that
\[
\Pr(\mathcal E)\ge 1-\frac{\eps}{2}.
\]
On the event $\mathcal E$, the deterministic bisection guarantee in
\Cref{lem:app:bisection-deterministic-success} implies that the law of the
final output is within total variation distance $\eps/2$ from
$\mu_{G,\theta_*}$. Therefore, if $\nu$ denotes the law of the output, then
\[
\|\nu-\mu_{G,\theta_*}\|_{\TV}
\le
\Pr(\mathcal E^c)
+
\|\nu(\,\cdot\,\mid \mathcal E)-\mu_{G,\theta_*}\|_{\TV}
\le
\frac{\eps}{2}+\frac{\eps}{2}
=
\eps.
\]

The running time bound follows from the choices of $T,M,B,K$ in
\Cref{alg:app:bisection-sampler}, together with the
$O_{\Delta,\delta}(n\log(n/\zeta))$ mixing bound for Glauber dynamics below
the uniqueness threshold to within $\zeta$ total variation distance. Indeed,
$T$ and $K$ are logarithmic in $n$ and $1/\eps$, up to constants depending on
$\Delta,\delta,\gamma$, and
\[
B
=
O_{\Delta,\delta,\gamma}\left(\frac{1}{\eps^2}\right).
\]
Thus
\[
M=BK
=
\widetilde O_{\Delta,\delta,\gamma}
\left(
\frac{1}{\eps^2}
\right).
\]
Each approximate sample costs
$\widetilde O_{\Delta,\delta,\gamma}(n)$ Glauber updates. Hence the total
running time is
\[
\widetilde O_{\Delta,\delta,\gamma}
\left(
\frac{n}{\eps^2}
\right).
\]
\end{proof}

\section{Missing proofs from \texorpdfstring{\Cref{sec:hardness-prescribed-density}}{Section 5.2}}
\label[appsec]{app:hardness-missing-proofs}
\begin{proof}[Proof of \Cref{lem:hardness-density-to-tv-approx}]
   Similar to \Cref{app:bisection-prescribed-density}, we use the log-fugacity coordinate $\theta = \log \lambda$. We write $\theta_\alpha$ for the log-fugacity parameter corresponding to $\alpha$.
   Recall from \Cref{app:bisection-prescribed-density} that
   \[
   A(\theta_\alpha) := \log Z_{G}(\theta_{\alpha}), \qquad A'(\theta_{\alpha}) = n\alpha, \qquad A''(\theta_{\alpha}) = \Var_{\sigma \sim \mu_{G}^{\alpha}}(|\sigma|).
   \]
   Also, the interval $[\alpha, \hat \alpha]$ corresponds to an interval $[\theta_{\alpha}, \theta_{\hat \alpha}]$ for the log-fugacity parameters. 
   Note that the upper bound  assumption on the variance for densities in $[\alpha, \hat \alpha]$ shows that $A$ is $Cn$-smooth on $[\theta_{\alpha}, \hat \theta_{\alpha}]$. Thus, as we proved in \Cref{app:bisection-prescribed-density}, we have
   \[
   \|\mu_{G}^{\alpha} - \mu_{G}^{\hat \alpha}\|_{\TV} \leq \frac{\sqrt{Cn}}{2}|\theta_{\alpha} - \theta_{\hat \alpha}|.
   \]
   Moreover, note that by the lower bound assumption on the variance and the mean value theorem, there exists a density $\beta \in [\alpha, \hat \alpha]$ such that
   \[
   |n\hat \alpha - n\alpha| = \Var_{\sigma \sim \mu_{G}^{\beta}}(|\sigma|) |\theta_{\alpha} - \theta_{\hat \alpha}|\geq cn|\theta_{\alpha} - \theta_{\hat \alpha}|.
   \]
   Thus, we conclude that
   \[
   \|\mu_{G}^{\alpha} - \mu_{G}^{\hat \alpha}\|_{\TV} \leq \frac{\sqrt{Cn}}{2c}|\alpha - \hat \alpha|.
   \]
\end{proof}
\end{document}